\pgfplotsset{compat=1.5}
\newtheorem{theorem}{Theorem}[section]
\newtheorem{lemma}[theorem]{Lemma}
\newtheorem{definition}[theorem]{Definition}
\newtheorem{fact}[theorem]{Fact}
\newenvironment{proofof}[1]{\begin{trivlist} \item {\bf Proof
#1:~~}}
  {\qed\end{trivlist}}
\newcommand{\namedref}[2]{\hyperref[#2]{#1~\ref*{#2}}}
\newcommand{\thmlab}[1]{\label{thm:#1}}
\newcommand{\thmref}[1]{\namedref{Theorem}{thm:#1}}
\newcommand{\lemlab}[1]{\label{lem:#1}}
\newcommand{\lemref}[1]{\namedref{Lemma}{lem:#1}}
\newcommand{\seclab}[1]{\label{sec:#1}}
\newcommand{\secref}[1]{\namedref{Section}{sec:#1}}
\newcommand{\applab}[1]{\label{app:#1}}
\newcommand{\appref}[1]{\namedref{Appendix}{app:#1}}
\newcommand{\factlab}[1]{\label{fact:#1}}
\newcommand{\factref}[1]{\namedref{Fact}{fact:#1}}
\newcommand{\figlab}[1]{\label{fig:#1}}
\newcommand{\figref}[1]{\namedref{Figure}{fig:#1}}
\newcommand{\alglab}[1]{\label{alg:#1}}
\renewcommand{\algref}[1]{\namedref{Algorithm}{alg:#1}}
\newcommand{\deflab}[1]{\label{def:#1}}
\newcommand{\defref}[1]{\namedref{Definition}{def:#1}}
\def \DistAugEq    {\mdef{\mathsf{DistAugEq}}}
\newcommand{\PPr}[1]{\ensuremath{\mathbf{Pr}\left[#1\right]}}
\newcommand{\PPPr}[2]{\ensuremath{\underset{#1}{\mathbf{Pr}}\left[#2\right]}}
\newcommand{\Ex}[1]{\ensuremath{\mathbb{E}\left[#1\right]}}
\renewcommand{\O}[1]{\ensuremath{\mathcal{O}\left(#1\right)}}
\newcommand{\tO}[1]{\ensuremath{\tilde{\mathcal{O}}\left(#1\right)}}
\newcommand{\eps}{\varepsilon}
\def \Proj    {\mdef{\text{Proj}}}
\def \BuildNet    {\mdef{\textsc{BuildNet}}}
\def \dist    {\mdef{\text{dist}}}
\def \cost    {\mdef{\text{cost}}}
\def \OPT    {\mdef{\text{OPT}}}
\def \frakM    {\mdef{\mathfrak{M}}}
\def \calA    {\mdef{\mathcal{A}}}
\def \calB    {\mdef{\mathcal{B}}}
\def \calD    {\mdef{\mathcal{D}}}
\def \calE    {\mdef{\mathcal{E}}}
\def \calF    {\mdef{\mathcal{F}}}
\def \calG    {\mdef{\mathcal{G}}}
\def \calH    {\mdef{\mathcal{H}}}
\def \calI    {\mdef{\mathcal{I}}}
\def \calM    {\mdef{\mathcal{M}}}
\def \calP    {\mdef{\mathcal{P}}}
\def \calS    {\mdef{\mathcal{S}}}
\newcommand{\mdef}[1]{{\ensuremath{#1}}\xspace}  
\DeclareMathOperator*{\median}{median}
\DeclareMathOperator*{\argmin}{argmin}
\DeclareMathOperator*{\polylog}{polylog}
\DeclareMathOperator*{\poly}{poly}
\DeclareMathOperator*{\supp}{supp}
\DeclareMathOperator*{\EMD}{EMD}
\DeclareMathOperator*{\WASSZZ}{WASSZZ}
\DeclareMathOperator*{\WASSD}{WASSD}
\DeclareMathOperator*{\Cost}{cost}
\DeclareMathOperator*{\Var}{Var}
\newcommand{\ignore}[1]{}
\newif\ifnotes\notestrue 
\newcommand{\samson}[1]{\textcolor{blue}{{\bf (Samson:} {#1}{\bf ) }} \marginpar{\tiny\bf
             \begin{minipage}[t]{0.5in}
               \raggedright S:
            \end{minipage}}}
\newcommand{\vincent}[1]{\textcolor{purple}{{\bf (Vincent:} {#1}{\bf ) }} \marginpar{\tiny\bf
             \begin{minipage}[t]{0.5in}
               \raggedright V:
            \end{minipage}}} 
\newcommand{\samson}[1]{}
\newcommand{\vincent}[1]{}
\renewcommand*{\@fnsymbol}[1]{\textcolor{mahogany}{\ensuremath{\ifcase#1\or *\or \dagger\or \ddagger\or
 \mathsection\or \triangledown\or \mathparagraph\or \|\or **\or \dagger\dagger
   \or \ddagger\ddagger \else\@ctrerr\fi}}}
\providecommand{\email}[1]{\href{mailto:#1}{\nolinkurl{#1}\xspace}}
\definecolor{mahogany}{rgb}{0.75, 0.25, 0.0}
\definecolor{darkblue}{rgb}{0.0, 0.0, 0.55}
\definecolor{darkpastelgreen}{rgb}{0.01, 0.75, 0.24}
\definecolor{darkgreen}{rgb}{0.0, 0.2, 0.13}
\definecolor{darkgoldenrod}{rgb}{0.72, 0.53, 0.04}
\definecolor{darkred}{rgb}{0.55, 0.0, 0.0}
\definecolor{forestgreenweb}{rgb}{0.13, 0.55, 0.13}
\definecolor{greencss}{rgb}{0.0, 0.5, 0.0}
\definecolor{bleudefrance}{rgb}{0.19, 0.55, 0.91}
\begin{document}
\pagestyle{pg}

\allowdisplaybreaks


\title{Streaming Euclidean $k$-median and $k$-means with $o(\log n)$ Space}
\author{
Vincent Cohen-Addad\thanks{Google Research. E-mail: \email{vcohenad@gmail.com}.} 
\and
David P. Woodruff\thanks{Carnegie Mellon University. E-mail: \email{dwoodruf@andrew.cmu.edu}. 
Supported in part by a Simons Investigator Award.
}
\and
Samson Zhou\thanks{Texas A\&M University. E-mail: \email{samsonzhou@gmail.com}. 
Work done in part while at Rice University and UC Berkeley. 
}
}
\date{\today}

\maketitle

\begin{abstract}
We consider the classic Euclidean $k$-median and $k$-means objective on data streams, where the goal is to provide a $(1+\varepsilon)$-approximation to the optimal $k$-median or $k$-means solution, while using as little memory as possible. Over the last 20 years, clustering in data streams has received a tremendous amount of attention and has been the test-bed for a large variety of new techniques, including coresets, the merge-and-reduce framework, bicriteria approximation, sensitivity sampling, and so on. Despite this intense effort to obtain smaller sketches for these problems, all known techniques require storing at least $\Omega(\log(n\Delta))$ words of memory, where $n$ is the size of the input and $\Delta$ is the aspect ratio. A natural question is if one can beat this logarithmic dependence on $n$ and $\Delta$. In this paper, we break this barrier by first giving an insertion-only streaming algorithm that achieves a $(1+\varepsilon)$-approximation to the more general $(k,z)$-clustering problem, using $\tilde{\mathcal{O}}\left(\frac{dk}{\varepsilon^2}\right)\cdot(2^{z\log z})\cdot\min\left(\frac{1}{\varepsilon^z},k\right)\cdot\text{poly}(\log\log(n\Delta))$ words of memory. Our techniques can also be used to achieve two-pass algorithms for $k$-median and $k$-means clustering on dynamic streams using $\tilde{\mathcal{O}}\left(\frac{1}{\varepsilon^2}\right)\cdot\text{poly}(d,k,\log\log(n\Delta))$ words of memory. 
\end{abstract}


\section{Introduction}
Clustering is the problem of partitioning an input dataset to identify and extract important underlying structural information, so that ideally, points in the same cluster have similar properties and points in different clusters have dissimilar properties. 
Various formulations of clustering have become important cornerstones in combinatorial optimization, computational geometry, computer vision, data science, and machine learning, while classic formulations of clustering such as the $k$-median and $k$-means problems have been studied since the 1950s~\cite{steinhaus1956division,macqueen1967classification}. 

More generally, $k$-median and $k$-means are specific parameterizations of the $(k,z)$-clustering problem, which is one of the most commonly studied center-based clustering objectives, where the input data elements belong to a metric space, every cluster is induced by a corresponding center point, and the cost of each data element is a fixed function of the distance between the data point and the corresponding cluster center. 
Formally, the input is a set $X$ of $n$ points $x_1,\ldots,x_n$ along with a distance function $\dist$, a cluster parameter $k>0$, and a positive integer exponent $z>0$. The $(k,z)$-clustering objective is defined to be
\[\min_{C\subset\mathbb{R}^d, |C|=k}\sum_{i=1}^n\min_{c\in C}\dist(x_i,c)^z.\]
When $z=1$ and $z=2$, the problem is known as $k$-median clustering and $k$-means clustering, respectively. 

\paragraph{Clustering in the streaming model.} 
As modern datasets have significantly increased in size, often consisting of hundreds of millions of points, attention has shifted to large-scale computational models that require neither the storage of the dataset nor multiple passes over the data, both of which could require expensive memory or runtime usage.  
Perhaps the simplest of these models is the insertion-only streaming model, where in our setting, the points $x_1,\ldots,x_n$ of $X$ arrive sequentially, and the goal is to output a clustering whose cost is a  $(1+\eps)$-multiplicative approximation of the optimal clustering of $X$ while using space sublinear in $n$, for some input accuracy parameter $\eps>0$.  
Although it is not possible to explicitly output the label for each point in the dataset using sublinear space, streaming algorithms often instead implicitly define the clusters by outputting the center of each cluster, so that each point is assigned to the cluster with the closest center. 

Most streaming algorithms for $(k,z)$-clustering follow the same template. 
Over the course of the stream, they first construct a coreset, which is a small weighted subset of points from the original dataset (see \defref{def:coreset} for a formal definition).  
After the stream terminates, they then compute an optimal or near-optimal $(k,z)$-clustering of the coreset to produce the cluster centers. 

In an offline setting, where the input dataset $X$ is fully accessible at all times, the best known coreset constructions sample and store $\tO{\frac{k}{\eps^2}}\cdot\min\left(k,\frac{1}{\eps^z}\right)$\footnote{Here we use $\tO{f\left(n,d,k,\Delta,\frac{1}{\eps}\right)}$ to denote $\O{f\left(n,d,k,\Delta,\frac{1}{\eps}\right)}\cdot\polylog\left(f\left(n,d,k,\Delta,\frac{1}{\eps}\right)\right)$.} weighted points of the input dataset~\cite{Cohen-AddadLSS22}. 
Although the sampled points could have weight $\poly(n,k,d,\Delta)$, where $\Delta$ is the aspect ratio, these weights can be represented by $\O{1}$ words of space, where a single word is $\Theta(\log(nd\Delta))$ bits of space, i.e., the space necessary to represent a single coordinate of an input point along with the corresponding weight. 
Thus it follows that there exists a succinct representation of the input dataset with size \emph{independent} of the number $n$ of input points.  

Therefore, a natural question to ask is:
\begin{quote}
\emph{Is it possible to perform approximate $(k,z)$-clustering in the streaming model using $o_{k,d,\eps}(\log n)$ words of memory?}
\end{quote}
More specifically, the best offline coreset constructions use $\tO{\frac{kd}{\eps^2}\min\left(k,\frac{1}{\eps^z}\right)\log(n\Delta)}$ bits of space to store the weighted subset of the input points. 
\begin{quote}
\emph{Can $(k,z)$-clustering be performed in the streaming model with the same space complexity?}
\end{quote}

At first glance, the answer might seem unlikely. 
The simplest and perhaps most standard approach for insertion-only streams is the merge-and-reduce framework, which first partitions the stream into a number of blocks, creating a coreset on-the-fly for each block. 
The framework can be viewed as a binary tree, where each node contains a coreset of the merger of the coresets contained at the children nodes, and the root node is a coreset for the entire stream. 
However, the main pitfall is that because each step incurs distortion through the construction of a new coreset, the coresets for each block require accuracy $\left(1+\O{\frac{\eps}{\log n}}\right)$, which results in space $\tO{\frac{kd}{\eps^2}\min\left(k,\frac{1}{\eps^z}\right)\log(n\Delta)\log^3 n}$. 
Moreover, all known streaming algorithms implicitly leverage analysis that requires correctness at all times of the stream, thereby incurring polylogarithmic dependencies in the number $n$ of input points. 

On the other hand, we only ask for correctness at the end of the stream, so we do not inherently require correctness at all times. 
Moreover, there exist special structured problems in the streaming model where correctness at all times does not require additional polylogarithmic overhead~\cite{BravermanCIW16,BlasiokDN17,BravermanCINWW17,Blasiok18}. 

\subsection{Our Contributions}
In this paper, we answer the question in the affirmative for insertion-only streams and in the negative for insertion-deletion streams. 
As a warm-up, we first give a simple algorithm for $(k,z)$-clustering on insertion-only streams that uses $o(\log n)$ words of space and nearly matches the same space complexity as the best offline coreset constructions. 

\begin{mdframed}[backgroundcolor=lightgray!40,topline=false,rightline=false,leftline=false,bottomline=false,innertopmargin=-4pt]
\begin{restatable}{theorem}{thmmain}
\thmlab{thm:main}
Given an accuracy parameter $\eps\in(0,1)$, an integer $k>0$ for the number of clusters, a constant $z\ge 1$, and a data stream consisting of points $X\subseteq[\Delta]^d$ with $X=\{x_1,\ldots,x_n\}$, there exists a one-pass streaming algorithm that uses $\tO{\frac{dk}{\varepsilon^2}}\cdot(2^{z\log z})\cdot\min\left(\frac{1}{\varepsilon^z},k\right)\cdot\poly(\log\log(n\Delta))$ words of memory and outputs a $(1+\eps)$-approximation to $(k,z)$-clustering at all times in the stream.    
\end{restatable}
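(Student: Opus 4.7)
The plan is to build an online $(1+\eps)$-coreset of the stream via sensitivity sampling, where sensitivities are estimated on-the-fly from a streaming bicriteria solution, and to argue that the ``at all times'' guarantee can be obtained with only $\poly(\log\log(n\Delta))$ overhead rather than the $\log^{O(1)}(n\Delta)$ blowup that the merge-and-reduce framework produces.

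First, I would maintain throughout the stream a bicriteria $(O(1), O(k))$-approximation $\hat{C}_t$ for $(k,z)$-clustering of the prefix $X_t = \{x_1,\ldots,x_t\}$, using a streaming Meyerson / online-facility-location style sketch parameterized by $O(\log\log(n\Delta))$ geometric guesses of the current optimum; this subroutine fits in $\tO{dk}\cdot\poly(\log\log(n\Delta))$ words. For each arriving point $x_t$, I compute an online sensitivity upper bound
\[
s_t \;=\; \frac{\dist(x_t, \hat{C}_t)^z}{\cost_z(X_t, \hat{C}_t)} \;+\; \frac{1}{|\text{cluster}_{\hat{C}_t}(x_t) \cap X_t|},
\]
and include $x_t$ in the coreset $\Omega$ independently with probability $p_t\propto s_t/\eps^2$, assigning weight $1/p_t$ upon inclusion so that $\Omega$ is an unbiased estimator of $\cost_z(X_t,C)$ for every candidate center set $C$.

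Second, I would apply the Feldman--Langberg framework as sharpened in~\cite{Cohen-AddadLSS22} to argue that $\Omega$ is a $(1+\eps)$-coreset. This reduces correctness to (a) bounding $\sum_t s_t$ and (b) bounding the shattering/VC dimension of the family of cost functions $\{x\mapsto \dist(x,C)^z : C\subset\mathbb{R}^d,\,|C|=k\}$. For (a), I would partition the stream into epochs between cost-doublings of $\hat{C}_t$ and invoke the offline sensitivity bound within each epoch, yielding $\sum_t s_t=\tO{k}\cdot \min(k,1/\eps^z)\cdot 2^{z\log z}$; combined with the standard $\tO{d}$ dimension bound from (b), this gives the stated coreset size.

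The main obstacle is achieving the ``at all times'' guarantee with only $\poly(\log\log(n\Delta))$ overhead. A naive union bound over $n$ time steps, or even over the $O(\log(n\Delta))$ cost-doublings, already exceeds the budget. I would address this by placing checkpoints at cost scales of the form $2^{2^i}$, so there are only $O(\log\log(n\Delta))$ of them: a single union bound over these checkpoints yields correctness there, and a monotonicity argument on the growth of both the online coreset and the optimum extends correctness to intermediate times within the $(1+\eps)$ error budget. The delicate step is showing that a single sensitivity estimate remains a valid upper bound throughout an entire (doubly-exponential-scale) epoch as $\hat{C}_t$ evolves; I would handle this by slightly over-sampling at the beginning of each epoch so that the distortion accumulated as the bicriteria solution updates within the epoch is absorbed into the $(1+\eps)$ budget.
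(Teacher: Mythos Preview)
Your proposal has a genuine gap at the step where you claim $\sum_t s_t=\tO{k}\cdot\min(k,1/\eps^z)\cdot 2^{z\log z}$. This bound is false for online sensitivities. Even with a \emph{fixed} bicriteria solution, both terms in your $s_t$ carry logarithmic dependence on $n$: the first term $\sum_t \dist(x_t,\hat C)^z/\cost_z(X_t,\hat C)$ is a sum of the form $\sum_t a_t/\sum_{i\le t}a_i=\Theta(\log(\text{final cost}))$, and the second term $\sum_t 1/|\text{cluster}(x_t)\cap X_t|$ is a sum of harmonic series over $k$ growing clusters, i.e.\ $\Theta(k\log n)$. The paper in fact proves $\sum_t\sigma_t=O(k\log^2(n\Delta))$ (\thmref{thm:total:online:sens}), and this logarithmic factor is unavoidable in the online setting. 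Hence direct online sensitivity sampling already stores $\Omega(\log(n\Delta))$ weighted points and cannot meet the stated space bound. Your doubly-exponential checkpoint idea does not rescue this: the samples still have to be stored, and moreover a ``monotonicity'' argument cannot propagate $(1+\eps)$-accuracy across an epoch in which the optimum grows by a factor of $2^{2^i}$.

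The idea you are missing is the paper's two-layer scheme. Online sensitivity sampling is \emph{not} used to produce the final coreset; it is used to generate a short \emph{implicit} substream $\calS'$ of weighted points, of length $\poly(k,d,1/\eps,\log(n\Delta))$. Merge-and-reduce with the offline coreset of~\cite{Cohen-AddadLSS22} is then run on $\calS'$, and because $|\calS'|$ is only polylogarithmic in $n$, the merge-and-reduce overhead is $\polylog|\calS'|=\poly(\log\log(n\Delta))$, which is exactly where the claimed bound comes from. The bootstrapping is that the sensitivity of $x_t$ is estimated not from a separate Meyerson sketch but from the merge-and-reduce coreset $C_{t-1}$ itself; conditioned on $C_{t-1}$ being a $(1+\eps)$-coreset of $X_{t-1}$, the estimate is within a constant factor of the true online sensitivity, so $\calS'$ is a valid online-sensitivity sample and hence (by \lemref{thm:online:sens:correctness}) a coreset of $X_t$, and therefore so is $C_t$. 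Induction over $t$ then gives correctness at all times with only the $\poly(\log\log(n\Delta))$ overhead from running merge-and-reduce on a polylog-length stream.
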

\end{mdframed}

\thmref{thm:main} is the first result to achieve $(k,z)$-clustering on insertion-only streams using $o(\log n)$ words of space. 
That is, even for constant $d$, $k$, $\eps$, and $z$, existing works use $\poly(\log n)$ words of space, while \thmref{thm:main} achieves $\poly(\log\log n)$ words of space and in fact, its guarantees apply for all times in the stream. 
Moreover, we match or improve previous algorithms, e.g.,~\cite{Har-PeledM04,Har-PeledK07,Chen09,FeldmanL11,BravermanFLR19}, for insertion-only streams in terms of dependencies on $d$, $k$, and $\frac{1}{\eps}$ (see \figref{fig:summary} for a comprehensive summary). 
Furthermore, we match the best known \emph{offline} coreset constructions~\cite{Cohen-AddadLSS22} up to lower-order terms in the bit complexity. 



Despite its simplicity, we shall use \thmref{thm:main} as an opportunity to present several important structural properties that we will build upon. 

\paragraph{Online sensitivity sampling.}
First, we define the online sensitivities for $(k,z)$-clustering as follows:
\begin{definition}[Online sensitivity for $(k,z)$-clustering]
\deflab{def:online:sens}
Let $x_1,\ldots,x_n$ be a sequence of points with metric $\dist$ and let $X_t=x_1,\ldots,x_t$ for all $t\in[n]$. 
The \emph{online sensitivity} of $x_t$, denoted $\sigma_t$, is
\[\sigma_t:=\max_{C:|C|\le k}\frac{\Cost(x_t,C)}{\Cost(X_t,C)}=\max_{C:|C|\le k}\frac{\dist(x_t,C)^z}{\sum_{i=1}^t\dist(x_i,C)^z}.\]
\end{definition}
Sampling elements of the input with probability proportional to approximations of their sensitivities and online sensitivities, i.e., sensitivity sampling, and its variants have commonly been used in many other applications~\cite{DasguptaDHKM09,VaradarajanX12,CohenP15,CohenMM17,BravermanDMMUWZ20,HuangV20,Cohen-AddadSS21,MeyerMMWZ22,Cohen-AddadLSS22,MeyerMMWZ23,WoodruffY23}, because it is one of the most intuitive algorithms for acquiring a representative subset of the input -- each element of the input is sampled with probability proportional to a quantity that informally measures how ``important'' the element is. 

The sample complexity of sensitivity sampling is proportional to the total sensitivity, that is, the sum of the sensitivities of each element. 
Although in some cases analyzing the total sensitivity can be quite involved~\cite{DasguptaDHKM09,VaradarajanX12,CohenP15,CohenMM17,BravermanDMMUWZ20,WoodruffY23}, we give a simple proof upper bounding the sum of the online sensitivities:

\begin{mdframed}[backgroundcolor=lightgray!40,topline=false,rightline=false,leftline=false,bottomline=false,innertopmargin=-4pt]
\begin{restatable}{theorem}{thmtotalsens}(Upper bound on sum of online sensitivities)
\thmlab{thm:total:online:sens}
Let $X=\{x_1,\ldots,x_n\}\subset[\Delta]^d$ and for each point $x_t$ with $t\in[n]$, let $\sigma_t$ denote its online sensitivity for $(k,z)$-clustering for any $z\ge 1$. 
Then
\[\sum_{t=1}^n \sigma_t=\O{2^{2z}k\log^2(nd\Delta)}.\]
\end{restatable}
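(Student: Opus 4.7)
The plan is to bound $\sum_t \sigma_t$ by partitioning the stream into $O(z \log(nd\Delta))$ epochs such that within each epoch some fixed $(k,z)$-clustering is a $2$-approximation for every prefix, and then applying a Feldman--Langberg-style decomposition step-by-step. Concretely, since $X \subseteq [\Delta]^d$, the optimum $\OPT(X_t)$ (when positive) lies in $[1, n(d\Delta^2)^{z/2}]$, so partitioning by $\OPT(X_t) \in [2^{j-1}, 2^j)$ yields $J = O(z\log(nd\Delta))$ epochs $E_j$. Letting $t_j$ be the last step of $E_j$ and $C^*_j$ an optimal $(k,z)$-clustering of $X_{t_j}$, monotonicity of cost in the dataset gives $\cost(X_t, C^*_j) \le \cost(X_{t_j},C^*_j) = \OPT(X_{t_j}) < 2\,\OPT(X_t)$ for all $t \in E_j$, so $C^*_j$ is a $2$-approximation on every prefix in the epoch. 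Steps with $\OPT(X_t)=0$ have all identical points, so $\sigma_t = 1/t$ and they contribute $O(\log n)$ in total.

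For each $t \in E_j$, let $C_t$ (with $|C_t|\le k$) attain the supremum in $\sigma_t$, let $c^*_t$ be the closest center in $C^*_j$ to $x_t$, and let $S_t \subseteq X_t$ be the cluster of $c^*_t$ in $C^*_j$ restricted to $X_t$. The weak triangle inequality $\dist(x_t, C_t)^z \le 2^{z-1}(\dist(x_t, c^*_t)^z + \dist(c^*_t, C_t)^z)$ bounds the numerator, and a case split on whether at least half of $S_t$ lies within $\dist(c^*_t, C_t)/2$ of $c^*_t$ (the reverse triangle inequality gives $\sum_{i \in S_t}\dist(x_i, C_t)^z = \Omega(|S_t|\dist(c^*_t,C_t)^z/2^z)$ in one case; $\cost(S_t, c^*_t)$ alone handles the other, combined with $\cost(X_t, C_t) \ge \cost(X_t, C^*_j)/2$ from the previous paragraph) bounds the denominator, yielding
\[ \sigma_t \le 2^{O(z)}\left(\frac{\dist(x_t, c^*_t)^z}{\cost(X_t, C^*_j)} + \frac{1}{|S_t|}\right). \]
Summing the first term over $E_j$, numerators accumulate to at most $\cost(X_{t_j}, C^*_j) < 2^j$ while each denominator is at least $\OPT(X_t) \ge 2^{j-1}$, contributing $O(1)$ per epoch. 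Summing the second term, partitioning by the center $c^*_t \in C^*_j$ and telescoping as a harmonic sum over each of the $k$ clusters of $C^*_j$ gives $O(k \log n)$ per epoch. The per-epoch total is therefore $2^{O(z)} k \log n$, and summing over the $J = O(z\log(nd\Delta))$ epochs yields $O(2^{2z} k \log^2(nd\Delta))$ after absorbing the factor $z$ into $2^{O(z)}$.

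The main technical obstacle lies in the denominator lower bound in the per-step argument, specifically when the adversarial clustering $C_t$ places its centers far from the reference $C^*_j$. The close/far dichotomy on $S_t$, combined with the $2$-approximation guarantee of $C^*_j$ for every $X_t$ within the epoch, is the crucial combinatorial ingredient, and it is what permits a purely existential argument where $C^*_j$ need not be computed online. The cost-dyadic epoch structure is what converts a per-step bound into a total bound with only a $\log^2$ dependence rather than a per-step bound times $n$.
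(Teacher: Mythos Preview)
Your plan matches the paper's proof almost exactly: partition the stream into $O(\log(nd\Delta))$ epochs where $\OPT$ at most doubles, fix an optimal clustering at the end of each epoch (the paper's $K_{t_i}$ is your $C^*_j$), split $\sigma_t$ via the generalized triangle inequality into a ``movement'' term $\dist(x_t,c^*_t)^z/\cost(X_t,C^*_j)$ and a ``center'' term controlled by $1/|S_t|$, and then sum each within an epoch to get $O(1)$ and $O(k\log n)$ respectively.

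One small slip: when $\OPT(X_t)=0$ and $k>1$, the points need not all be identical---there may be up to $k$ distinct values---so $\sigma_t$ is not $1/t$ but rather $1/(\text{current multiplicity of }x_t)$, which still sums to $O(k\log n)$ and is absorbed in the final bound. Your close/far dichotomy for bounding $\dist(c^*_t,C_t)^z/\cost(X_t,C_t)$ is correct but slightly heavier than the paper's route, which simply writes $|S_t|\,\dist(c^*_t,C_t)^z=\sum_{p\in S_t}\dist(c^*_t,C_t)^z\le 2^{z-1}\sum_{p\in S_t}\bigl(\dist(p,c^*_t)^z+\dist(p,C_t)^z\bigr)$ and bounds each sum by a constant times $\cost(X_t,C_t)$ using the $2$-approximation property of $C^*_j$, avoiding any case split.
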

\end{mdframed}

We leverage \thmref{thm:total:online:sens} to analyze the space complexity of online sensitivity sampling in \thmref{thm:online:sens}, which provides a separate but sub-optimal approach for $(k,z)$-clustering on insertion-only streams. 
However, we utilize online sensitivity sampling as a crucial subroutine en route to achieving our main result in \thmref{thm:main}. 
We summarize these results in \figref{fig:summary}. 

\begin{figure}[!htb]
\centering
{
\tabulinesep=1.1mm
\begin{tabu}{|c|c|}\hline
Streaming algorithm & Words of Memory \\\hline\hline
\cite{Har-PeledK07}, $z\in\{1,2\}$ & $\tO{\frac{dk^{1+z}}{\eps^{\O{d}}}\log^{d+z} n}$ \\\hline
\cite{Har-PeledM04}, $z\in\{1,2\}$ &
$\tO{\frac{dk}{\eps^d}\log^{2d+2} n}$ \\\hline
\cite{Chen09}, $z\in\{1,2\}$ &
$\tO{\frac{d^2k^2}{\eps^2}\log^8 n}$ \\\hline
\cite{FeldmanL11}, $z\in\{1,2\}$ &
$\tO{\frac{d^2k}{\eps^{2z}}\log^{1+2z} n}$ \\\hline
Sensitivity and rejection sampling~\cite{BravermanFLR19} & $\tO{\frac{d^2k^2}{\eps^2}\log n}$ \\\hline
Online sensitivity sampling, i.e., \thmref{thm:online:sens} & 
$\tO{\frac{d^2k^2}{\eps^2}\log n}$ \\\hline
Merge-and-reduce with coreset of~\cite{Cohen-AddadLSS22} & $\tO{\frac{dk}{\eps^2}\log^4 n}\cdot\min\left(\frac{1}{\varepsilon^z},k\right)$ \\\hline\hline
This work, i.e., \thmref{thm:main} & $\tO{\frac{dk}{\eps^2}}\cdot\min\left(\frac{1}{\eps^z},k\right)\cdot\text{poly}(\log\log n)$ \\\hline
\end{tabu}
}
\caption{Table of $(k,z)$-clustering algorithms on insertion-only streams. We summarize existing results with $z=\O{1}$ and $\Delta=\poly(n)$ for the purpose of presentation.}
\figlab{fig:summary}
\end{figure}
\paragraph{Upper bounds, lower bounds and separations for dynamic streams.}
A natural follow-up question to ask is whether our results extend to dynamic streams, where points may be inserted and deleted in updates to the stream. 
We first show that a streaming algorithm that uses a single pass on a insertion-deletion stream cannot even provide a $2$-approximation to the \emph{cost} of an optimal $(k,z)$-clustering using $o(\log n)$ words of space. 
Note that by comparison, our algorithm in \thmref{thm:main} actually provides a strong coreset at the end of an insertion-only stream and thus not only outputs a set of $k$ near-optimal centers, but also a $(1+\O{\eps})$-approximation to the cost induced by those $k$ centers and by extension, a $(1+\eps)$-approximation to the optimal $(k,z)$-clustering cost. 
Since the following statement, \thmref{thm:dynamic:lb}, shows that any algorithm that provides a $2$-approximation on dynamic streams requires $\Omega(\log^2 n)$ bits, i.e., $\Omega(\log n)$ words of space, then we cannot expect to achieve such a result for dynamic streams.

\begin{mdframed}[backgroundcolor=lightgray!40,topline=false,rightline=false,leftline=false,bottomline=false,innertopmargin=-4pt]
\begin{restatable}{theorem}{thmdynamiclb}
\thmlab{thm:dynamic:lb}
Let $z$ be a constant. 
Then even for $k=1$, any algorithm that with probability at least $\frac{2}{3}$, simultaneously outputs a $2$-approximation to the optimal $(k,z)$-clustering cost at all times of a dynamic stream of length $n$ for points in $\{0,1\}^d$ with $d=\Omega(\log n)$ must use $\Omega(\log^2 n)$ bits of space.  
\end{restatable}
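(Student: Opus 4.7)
The plan is to reduce from a chained version of the AUGMENTED-INDEX communication problem, leveraging the ``at all times'' correctness requirement to amplify the standard $\Omega(\log n)$-bit lower bound to $\Omega(\log^2 n)$ bits. Recall AUGMENTED-INDEX on $N$-bit strings has one-way communication complexity $\Omega(N)$. I compose $L=\Theta(\log n)$ independent AUGMENTED-INDEX instances of size $N=\Theta(\log n)$ each, so that an information-complexity argument ultimately forces the streaming algorithm's state to simultaneously encode $\Omega(N)$ bits of information about each of the $L$ independent instances.

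The encoding uses an \emph{anchored basis gadget}: I first insert $M$ copies of the origin $\mathbf{0}\in\{0,1\}^d$ for $M$ large enough that, after any subsequent insertions and deletions performed in the reduction, the unique optimal $1$-center remains within $o(1/\mathrm{poly}(n))$ of the origin. Under this condition, the $(1,z)$-clustering cost is $(1\pm o(1))\sum_p\|p\|_2^z$ summed over the non-anchor points, linearizing the cost as a sum of per-point contributions. For each bit index $(\ell,j)\in[L]\times[N]$ I reserve a distinct point $e_{\ell,j}\in\{0,1\}^d$ of a common norm (e.g.\ weight-$2$ binary vectors, of which $\binom{d}{2}=\Omega(\log^2 n)\ge LN$ exist since $d=\Omega(\log n)$). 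Bit $y^{(\ell)}_j=1$ is encoded by inserting $4^j$ copies of $e_{\ell,j}$, and $y^{(\ell)}_j=0$ inserts nothing. The base-$4$ scaling inside each phase guarantees a multiplicative gap strictly greater than $2$ between the contribution of the highest set bit and the sum of all smaller ones combined, which a $2$-approximate cost query can detect.

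In the chained protocol, player $\ell$ simulates the streaming algorithm on the phase-$\ell$ insertions and passes the resulting state forward. To extract $y^{(\ell)}_{j^*}$, the querying player uses its AUGMENTED-INDEX suffix knowledge to issue deletions canceling every $4^{j'}$-copy gadget with $j'>j^*$ and $y^{(\ell)}_{j'}=1$; the remaining phase-$\ell$ cost is then dominated by $4^{j^*}y^{(\ell)}_{j^*}$ to within a factor of $3$. Contributions from the other phases are handled by having the chained protocol perform the extractions in the order that lets the querier either know or have already subtracted them (for example, extracting phases in reverse order, using previously extracted bits to correct the cost). The ``at-all-times'' guarantee is crucial: it permits each player to query the streaming algorithm exactly at the moment after its deletions. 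A stream-length check shows that $\sum_\ell\sum_j O(4^j)=O(L\cdot 4^N)=O(n)$ for $N=\tfrac13\log n$ and $L=\Theta(\log n)$, so the whole reduction fits inside a dynamic stream of length $n$.

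The main technical obstacle is the direct-sum amplification: justifying that the $L$ phases genuinely force the state of the (single, state-maintaining) streaming algorithm to grow as $\Omega(LN)$ rather than $\Omega(N)$. This follows from an information-complexity argument. Fix a snapshot time after all phases' insertions. For each $\ell$, conditioning on $y^{(-\ell)}$ and applying the AUGMENTED-INDEX lower bound to the distribution of the state (which, together with the suffix and target index, must recover $y^{(\ell)}_{j^*}$ for any $j^*\in[N]$), we get $I(y^{(\ell)};S\mid y^{(-\ell)})=\Omega(N)$. Because the $y^{(\ell)}$'s are independent, the chain rule gives $H(S)\ge\sum_\ell I(y^{(\ell)};S\mid y^{(-\ell)})=\Omega(LN)=\Omega(\log^2 n)$, hence the space is at least $\Omega(\log^2 n)$ bits. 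The anchored encoding and cross-phase cancellation scheme are precisely what is needed to ensure this per-phase information is not collapsed by interference between phases.
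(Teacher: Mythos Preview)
Your reduction has two genuine gaps that prevent the direct-sum amplification from going through.

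\textbf{The information inequality you invoke is false.} You claim that because the $y^{(\ell)}$ are independent, the chain rule gives $H(S)\ge\sum_\ell I(y^{(\ell)};S\mid y^{(-\ell)})$. This is not a valid identity. The chain rule only yields $H(S)\ge I(S;y)=\sum_\ell I(S;y^{(\ell)}\mid y^{(<\ell)})$, conditioning on the \emph{prefix} rather than the full complement. Conditioning on more independent coordinates can only \emph{increase} each mutual-information term, so $\sum_\ell I(y^{(\ell)};S\mid y^{(-\ell)})$ can exceed $H(S)$; take $L=2$, $y^{(1)},y^{(2)}$ uniform bits, and $S=y^{(1)}\oplus y^{(2)}$, where the sum is $2$ but $H(S)=1$. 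To salvage a direct sum you must lower-bound $I(S;y^{(\ell)}\mid y^{(<\ell)})$, which requires embedding a single Augmented-Index instance with $y^{(<\ell)}$ public and $y^{(>\ell)}$ sampled \emph{privately by Alice}. But then Bob does not know $y^{(>\ell)}$ and cannot delete those phases' gadgets before his query, so he cannot isolate phase $\ell$'s contribution at all. This is a circular dependency between the decoding procedure and the conditioning structure.

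\textbf{Additive correction is incompatible with multiplicative approximation.} Even granting Bob full knowledge of $y^{(-\ell)}$, your proposal to ``subtract known contributions'' does not work: if the other phases contribute cost $C_{\mathrm{other}}$ and phase $\ell$ contributes $C_\ell$, a $2$-approximation $\widehat C$ of $C_\ell+C_{\mathrm{other}}$ gives $\widehat C-C_{\mathrm{other}}\in[\tfrac12 C_\ell-\tfrac12 C_{\mathrm{other}},\,2C_\ell+C_{\mathrm{other}}]$, which is useless once $C_{\mathrm{other}}\gtrsim C_\ell$. You could try to separate the phases by using geometrically growing multiplicity scales across phases, but then the total stream length becomes $4^{\Theta(LN)}$, forcing $LN=O(\log n)$ and killing the $\Omega(\log^2 n)$ bound.

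The paper sidesteps both issues by reducing from \emph{distributional Augmented Equality} $\DistAugEq_{N,m}$ with $m=\Theta(\log n)$ and $\log N=\Theta(\log n)$, whose single-instance information cost is already $\Omega(m\log N)=\Omega(\log^2 n)$ but only under success probability $1-1/\poly(N)$. Alice inserts $(100^z\log^2 N)^j$ copies of the binary encoding of $A_j$ and Bob of $B_j$; after Bob deletes the suffix levels, the $1$-median cost is large iff $A_i\neq B_i$ (the level-$i$ points then differ), with a built-in factor-$4$ gap and no cross-instance subtraction needed. The required $1-1/\poly(N)$ success probability is then obtained by concatenating $\Theta(\sqrt n)$ independent instances in one length-$n$ stream and using the ``at all times'' guarantee to find an informative instance on which the algorithm conditionally succeeds with probability $1-O(1/\sqrt n)$.
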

\end{mdframed} 

At first glance, \thmref{thm:dynamic:lb} may not seem like a separation from \thmref{thm:main} because the former requires the assumption that $d=\Omega(\log n)$ and the latter has a linear dependence on $d$. 
However, we show that if the goal is just to output an estimation to the \emph{cost} of the optimal $(k,z)$-clustering at the end of an insertion-only stream, then our upper bounds do not require a linear dependence on $d$: 

\begin{mdframed}[backgroundcolor=lightgray!40,topline=false,rightline=false,leftline=false,bottomline=false,innertopmargin=-4pt]
\begin{restatable}{theorem}{thmclustercost}
\thmlab{thm:cluster:cost}
Given an accuracy parameter $\eps\in(0,1)$, an integer $k>0$ for the number of clusters, a constant $z>0$, and a data stream consisting of points $X\subseteq[\Delta]^d$ with $X=\{x_1,\ldots,x_n\}$, there exists a one-pass streaming algorithm that uses $\tO{\frac{k}{\eps^2}}\min\left(\frac{1}{\eps^z},k\right)\cdot\poly(\log\log(n\Delta))$ words of memory and outputs a $(1+\eps)$-approximation to the cost of the optimal $(k,z)$-clustering at all times in the stream. 
\end{restatable}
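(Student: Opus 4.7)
The plan is to eliminate the $d$ factor in the leading term of \thmref{thm:main} by composing it with a cost-preserving dimension reduction for $(k,z)$-clustering. Concretely, fix an oblivious random linear map $\Pi\in\mathbb{R}^{d'\times d}$ of Johnson--Lindenstrauss type with target dimension $d'=\tO{\eps^{-2}\log k}$, chosen so that with high probability over $\Pi$ the optimal $(k,z)$-clustering cost of $\Pi X=\{\Pi x_1,\ldots,\Pi x_n\}$ in $\mathbb{R}^{d'}$ is within a $(1+\eps)$ factor of the optimum for $X$ in $\mathbb{R}^d$; this is the content of the Makarychev--Makarychev--Razenshteyn-style dimension reductions for $k$-means and their $(k,z)$ extensions. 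Because $\Pi$ is linear, for each incoming point $x_t$ we can compute $y_t=\Pi x_t$ on the fly and feed the transformed stream $y_1,y_2,\ldots$ into the algorithm of \thmref{thm:main} run in ambient dimension $d'$, then read off the sub-algorithm's reported coreset cost as our estimate. Since $d'$ is only polylogarithmic in $k$ and $1/\eps$, plugging $d'$ in place of $d$ into the bound of \thmref{thm:main} and absorbing the $\log k$ factor into $\tO{\cdot}$ yields $\tO{k/\eps^2}\cdot\min(1/\eps^z,k)\cdot\poly(\log\log(n\Delta))$ words, as required, and rescaling $\eps$ by a constant restores the $(1+\eps)$ approximation guarantee.

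To avoid storing $\Pi$ explicitly, which by itself would cost $\Theta(d d')$ words, I would seed it through a Nisan-style pseudorandom generator against the space-bounded tester defined by the sub-algorithm, or equivalently use a sparse, hash-based JL transform whose description length is $\tO{d'}$ and whose entries are generated on demand as each coordinate of $x_t$ is read; in either case the seed fits comfortably in the target space budget. The projected coordinates of $y_t$ are real-valued, so I would quantize each to an additive granularity of $1/\poly(n\Delta)$, which perturbs every clustering cost of interest by at most a $(1+\eps)$ factor while keeping the post-projection aspect ratio $\Delta'=\poly(n\Delta)$ and hence $\poly(\log\log(n\Delta'))=\poly(\log\log(n\Delta))$. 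Because the theorem only demands an estimate of the cost rather than a set of centers in the original space, no terminal embedding is needed and we never have to invert $\Pi$, so the $d$-dependence disappears entirely from the leading term.

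The main obstacle will be ensuring that the dimension-reduction guarantee holds at every prefix of the stream, not only at the end, since \thmref{thm:main} provides an ``at all times'' estimate of the cost. A naive union bound over the $n$ prefixes is too expensive: boosting the failure probability to $1/\poly(n)$ would force $d'=\tilde{\Omega}(\log n/\eps^2)$ and reintroduce the very $\log n$ we are trying to avoid. To get around this, I would observe that the true cost $\OPT((k,z),X_t)$ changes by more than a $(1+\eps)$ factor only at $\tO{k\log(n\Delta)/\eps}$ ``critical'' times, a fact which follows from the total online sensitivity bound of \thmref{thm:total:online:sens} (each non-negligible change in the optimum must be caused by a point with nontrivial online sensitivity). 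Union bounding the MMR-style guarantee only over this polylogarithmic collection of critical scales keeps $d'$ at $\tO{\eps^{-2}\log k}$ and introduces an extra $\log\log(n\Delta)$ factor that is absorbed into the $\poly(\log\log(n\Delta))$ slack already present in the bound, yielding the ``at all times'' guarantee claimed in \thmref{thm:cluster:cost}.
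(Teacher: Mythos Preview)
Your proposal is correct and follows the same approach as the paper: project via the Makarychev--Makarychev--Razenshteyn dimension reduction (\thmref{thm:jl:mmr}) and then run the algorithm of \thmref{thm:main} in the reduced dimension $d'$; the paper's proof is essentially this two-line sketch, with $\delta=1/\poly(1/\eps,\log(nd\Delta))$. You are actually more explicit than the paper on two points it leaves tacit---the critical-times union bound that justifies the ``at all times'' guarantee without paying $\delta=1/\poly(n)$, and the need to represent $\Pi$ with a short seed rather than as a dense $d'\times d$ matrix---and the paper's choice of $\delta$ is consistent with exactly the argument you give. One minor correction: the bound on the number of critical times follows directly from monotonicity of $\OPT(X_t)$ in an insertion-only stream together with the bounded aspect ratio (so at most $O(\frac{z}{\eps}\log(nd\Delta))$ multiplicative $(1+\eps)$-jumps), and does not require invoking \thmref{thm:total:online:sens}.
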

\end{mdframed}
Combined with \thmref{thm:dynamic:lb}, \thmref{thm:cluster:cost} provides a separation for the $(k,z)$-clustering problem between insertion-only and dynamic streams. 

We also show that any one-pass dynamic streaming algorithm that computes a constant factor-approximation to the clustering cost from a weighted sample of the input points must use $\Omega(\log^2 n)$ bits of space.
\begin{mdframed}[backgroundcolor=lightgray!40,topline=false,rightline=false,leftline=false,bottomline=false,innertopmargin=-4pt]
\begin{restatable}{theorem}{thmonepassdynlb}
\thmlab{thm:one:pass:dyn:lb}
Any one-pass dynamic streaming algorithm that computes a $2$-approximation to the clustering cost from a weighted sample of the input points with probability at least $\frac{2}{3}$ must use $\Omega(\log^2 n)$ bits of space. 
\end{restatable}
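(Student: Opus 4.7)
The plan is to reduce from the $L_0$-sampling problem in dynamic streams, for which Jowhari, Saglam, and Tardos established an $\Omega(\log^2 n)$ space lower bound. Equivalently, one may reduce from the one-way Universal Relation problem, which has communication complexity $\Omega(\log^2 n)$ for recovering any element of the symmetric difference of two sets.

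First I would embed the $L_0$-sampling instance into a stream of point insertions and deletions. Set $d = \Theta(\log n)$ and choose $n$ distinct points $p_1, \ldots, p_n \in \{0,1\}^d$ whose pairwise Hamming distances are all $\Omega(\log n)$; such a set exists by a standard probabilistic argument. Fix a single anchor point $q$ at the origin. Given an $L_0$-sampling instance on $x \in \{0,1\}^n$ with dynamic updates, construct a point stream that first inserts $q$ and then, for each update $+e_i$ (respectively $-e_i$), inserts (resp.\ deletes) the point $p_i$. After the stream terminates, the surviving multiset is exactly $\{q\} \cup \{p_i : x_i = 1\}$.

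Next I would restrict attention to the hard sampling instances with $\|x\|_0 = 1$ and let $i^\star$ denote the unique index with $x_{i^\star} = 1$. In this case, the optimal $k=1$ clustering cost is $\Theta(\log n)$, since it is determined by the Hamming distance between $q$ and $p_{i^\star}$. On the other hand, any weighted sample of the surviving input points that omits $p_{i^\star}$ consists solely of copies of $q$, and hence its optimal $k=1$ clustering cost is $0$, which cannot be a $2$-approximation. Therefore any $2$-approximation computed from a weighted sample is forced to include the point $p_{i^\star}$, and reading off its index solves the underlying $L_0$-sampling instance.

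Finally, I would convert the streaming algorithm into an $L_0$-sampler with the same space and success probability, so that the Jowhari--Saglam--Tardos bound yields the desired $\Omega(\log^2 n)$ lower bound. The main obstacle is justifying that the $L_0$-sampling lower bound continues to hold under the restriction $\|x\|_0 = 1$; a clean workaround is to stack multiple well-separated anchors and argue that, for any $\|x\|_0 \geq 1$, the $2$-approximation guarantee still forces the weighted sample to contain at least one $p_i$ whose index lies in the support of $x$, which then serves as a valid $L_0$-sample.
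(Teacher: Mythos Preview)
Your route is different from the paper's. The paper reduces directly from AugmentedIndex with $m=\Theta(\log n)$ coordinates over an alphabet of size $n^{\Theta(1)}$: Alice places one point per coordinate in geometrically separated annuli, Bob (who knows the suffix) deletes the outer annuli, and the paper argues that any weighted sample achieving a $2$-approximation must contain the surviving outermost point, which encodes $v_j$. You instead invoke the $\Omega(\log^2 n)$ lower bound for $L_0$-sampling (equivalently, support-finding or Universal Relation) as a black box, with a flat construction: one anchor $q$ and the encoded points $p_i$ at comparable distances. Your endgame is cleaner --- a weighted sample containing only copies of $q$ has $k=1$ clustering cost exactly $0$, so any $2$-approximation of a positive cost must output some $p_i$, which is a valid support-finding answer. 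The paper's argument in turn is self-contained and does not need to cite the sampling lower bound. Since AugmentedIndex is the standard way to prove the $L_0$-sampling bound, the two approaches are the same barrier viewed at different levels of abstraction.

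Two fixes you should make. First, drop the detour through $\|x\|_0=1$: that case is solved by $1$-sparse recovery in $O(\log n)$ bits, so no lower bound lives there. Your final sentence already has the right argument, and it needs no ``multiple anchors'': for \emph{any} nonempty support the sample must contain some $p_i$ in the support (else its optimal cost is $0$), which solves support-finding directly. Second, make sure you invoke a lower bound that holds in the strict-turnstile model (nonnegative final counts), since the clustering problem is only posed over nonnegative multiplicities; the AugmentedIndex-based proof, or the $\mathrm{UR}^{\subset}$ version of Kapralov, Nelson, Pachocki, Wang, Woodruff, and Yahyazadeh, gives this, whereas a naive reduction from symmetric Universal Relation via ``Alice inserts, Bob deletes'' can leave negative counts on Bob's side.
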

\end{mdframed}

On the other hand, we show that the $\Omega(\log^2 n)$ space barrier can be broken on dynamic streams if we permit algorithms an additional pass over the stream. 
\begin{mdframed}[backgroundcolor=lightgray!40,topline=false,rightline=false,leftline=false,bottomline=false,innertopmargin=-4pt]
\begin{theorem}
\thmlab{thm:dynamic:two:main}
There exists a two-pass dynamic streaming algorithm that outputs a $(1+\eps)$-coreset for $k$-median and $k$-means clustering, with probability at least $\frac{2}{3}$, and uses $\tO{\frac{1}{\eps^2}}\cdot\poly(d,k,\log\log(n\Delta))$ words of space.
\end{theorem}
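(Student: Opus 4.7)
The plan is a two-pass strategy. In the first pass I would compute a constant-factor approximation to the optimal clustering, producing a set $C$ of $O(k)$ centers and a constant-factor estimate $\widehat{\OPT}$ of the optimal cost. In the second pass, using $C$ and $\widehat{\OPT}$ as a ``scaffold,'' I would perform consistent sensitivity sampling over the stream in a way that handles insertions and deletions coherently, and output the resulting weighted sample as the coreset.

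For the first pass, my plan is to combine a random-shift quadtree (or HST embedding) with sparse-recovery/heavy-hitter sketches on the stream updates. Because we only need a constant-factor (not $(1+\eps)$) cost approximation, we can afford to examine only $O(\log\log(n\Delta))$ strategically chosen scales in the hierarchy rather than all $\Theta(\log(n\Delta))$ scales; at each such scale only $\poly(d,k)$ heavy cells need tracking for a constant-factor $k$-median/$k$-means approximation. Together with aggressive bit-level truncation, this aims to fit the first pass within $\poly(d,k,\log\log(n\Delta))$ words.

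For the second pass, we pre-fix a random hash $h$ so that every point identifier $x$ has a persistent threshold $h(x) \in [0,1]$. When an insertion of $x$ arrives, I compute its approximate sensitivity
\[
\tilde{s}_x \;\asymp\; \frac{\dist(x,C)^z}{\widehat{\OPT}} \;+\; \frac{1}{|\text{cluster}_C(x)|},
\]
and retain $x$ in the stored sample with weight $1/(T\tilde{s}_x)$ if $h(x) \le T\tilde{s}_x$, where $T = \tilde{\Theta}(k/\eps^2)$; when a deletion of $x$ arrives, I simply remove $x$ from the sample. Because $h$ is consistent across matched insertion/deletion pairs, the final stored sample is identical to what offline sensitivity sampling would produce on the final active set. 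Standard sensitivity-sampling coreset guarantees (Feldman--Langberg, and the refinement of Cohen-Addad--Saulpic--Schwiegelshohn) then certify that the sample is a $(1+\eps)$-coreset of size $\tO{k/\eps^2}$ with probability at least $\frac{2}{3}$, each point taking $O(d)$ words.

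The hard part is the first pass: achieving a constant-factor dynamic-stream clustering in only $\poly(d,k,\log\log(n\Delta))$ words. All standard dynamic-stream clustering algorithms I know pay $\polylog(n\Delta)$ because they sketch each of $\Theta(\log(n\Delta))$ quadtree levels at $\polylog(n\Delta)$ bits per register in order to cancel updates correctly under deletions. Breaking this barrier will require heavily leveraging that the first pass only needs a constant-factor guarantee, using coarsened geometric scales and importing compressed representations of the online-sensitivity landscape developed for the insertion-only analysis into the dynamic setting.
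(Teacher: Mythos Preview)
Your proposal has the right high-level two-pass skeleton, but the actual content of both passes is either missing or incorrect, and the paper fills these in with specific ideas you do not have.

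\textbf{First pass.} You explicitly flag this as ``the hard part'' and do not give a working mechanism; your suggestion of ``examining only $O(\log\log(n\Delta))$ quadtree scales'' is not what the paper does and, as stated, does not work. The paper's first pass is an \emph{embedding of the $k$-sparse optimal transport problem into $L_1$} via a randomly shifted quadtree, followed by a standard $L_1$ sketch. The crucial point you are missing is an entropy/doubling-dimension argument showing that when one of the two measures is $k$-sparse, the dilation of this embedding is only $O(d(\log k+\log\log\Delta))$ rather than the usual $O(d\log\Delta)$; combined with Indyk-style $L_1$ sketching and a net over $k$-sparse measures of doubling dimension $O(dk\log\log\Delta)$, this is exactly what buys the $\poly(d,k,\log\log(n\Delta))$ space. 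For $k$-means ($z=2$) a single quadtree provably cannot embed squared distances with small distortion, and the paper instead develops a new \emph{bicriteria} Wasserstein-$z$ embedding (adding $O(k)$ auxiliary centers near offending hyperplanes) with distortion $O(d^{1+0.5z}\log^{z-1}\Delta)$.

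\textbf{Second pass.} Your ``store on insertion, remove on deletion'' scheme is not space-bounded: the sensitivities $\tilde s_x$ are computed with respect to the \emph{final} dataset, so a transient point mapped to a final cluster of size $n_j$ gets sampling probability $\Theta(T/n_j)$; if the stream inserts and later deletes $m_j\gg n_j$ such points, your stored sample has expected size $\Theta(T m_j/n_j)$ at the peak, which can be $\poly(n)$. The paper instead hashes universe elements into a \emph{sparse recovery} linear sketch, whose size is fixed regardless of the intermediate state, and recovers the sampled surviving points only at the end. For $k$-means there is a further issue: the $\log^{z-1}\Delta$ distortion from the Wasserstein embedding forces $\Omega(\log\Delta)$ samples, which at $\Theta(d\log\Delta)$ bits each would again be $\Omega(\log^2\Delta)$ bits; the paper compresses each sampled point to $O(d\log(\log\Delta/\eps))$ bits by storing only the identity of its nearest approximate center plus a coordinate-wise rounded offset.
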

\end{mdframed}
We remark that \thmref{thm:dynamic:two:main} can be generalized to $(k,z)$-clustering for all $z\in[1,2]$.

\paragraph{Wasserstein-$z$ bicriteria embedding.}
Along the way, we introduce a quadtree embedding technique that achieves a structured bicriteria approximation to $(k,z)$-clustering and is key to our two-pass dynamic streaming algorithm. 
Although quadtree embeddings are popular for $k$-median clustering in big data models, generalizations to $(k,z)$-clustering are not known for a single quadtree and in fact, there are simple examples on a line that show that the expected distortion between the squared distances of a set of $n$ points, i.e., $z=2$, and the estimated distance by a quadtree is $\Omega(n)$, e.g., see \appref{app:quadtree:means:bad}. 
Recently, \cite{Cohen-AddadLNSS20} overcame this barrier in the offline setting by considering multiple quadtrees and taking the minimum estimated distance for pairs of points across the quadtrees, but this approach does not seem to work for the streaming setting, because we will no longer be able to estimate these distances after embedding the quadtree into $L_1$. 

We instead develop a quadtree embedding technique for $(k,z)$-clustering that uses a bicriteria approximation, which often suffices for downstream applications. 
Indeed, we crucially use our embedding as a subroutine toward our $(1+\eps)$-approximate streaming algorithm for $(k,z)$-clustering in \thmref{thm:main}. 
Our embedding has implications to the Wasserstein-$z$ distance, which is a distance between probability measures and corresponds to the important earth mover distance for $z=1$. 
We use $\WASSZZ$ to denote the $z$-th power of the Wasserstein-$z$ distance and formally recall its definition in \secref{sec:prelims}. 
Our embedding then has the following guarantees:

\begin{mdframed}[backgroundcolor=lightgray!40,topline=false,rightline=false,leftline=false,bottomline=false,innertopmargin=-4pt]
\begin{restatable}{theorem}{thmquadtreebicrit}
\thmlab{thm:quadtree:bicrit}
Let $\mu,\nu\in\mathbb{R}^{[\Delta]^d}$ be probability measures such that $\nu$ has support at most $k$ on a set $C\subset[\Delta]^d$.  
Then there exists a quadtree embedding $W_s$ parameterized by a random shift parameter $s$ and explicit mappings $\psi,\phi$ such that with probability at least $0.99$, $\psi(\nu)$ is a probability mass with support at most $\O{k}$ on the set $\phi(C)$ and
\[\|W_s(\mu-\psi(\nu))\|_1\le\O{d^{1+0.5z}\log^{z-1}\Delta}\cdot\WASSZZ(\mu,\nu).\]
\end{restatable}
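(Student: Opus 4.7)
The plan is to build a randomly shifted axis-aligned quadtree $\calT$ on $[0,\Delta]^d$ with $L:=\log\Delta$ levels, where a level-$\ell$ cell has side length $2^\ell$ and the shift $s\in[0,\Delta)^d$ is drawn uniformly at random. The embedding $W_s$ will be linear in its input measure, with one coordinate for each cell $v$: if $v$ sits at level $\ell$, then the coordinate for $v$ on input $\rho$ is $\alpha_\ell\cdot\rho(v)$, where $\alpha_\ell:=(\sqrt{d}\cdot 2^\ell)^{z}$ tracks the $z$-th power of the cell diameter. To define $\phi$ and $\psi$, I would snap each center $c\in C$ to the unique lattice point $\phi(c)$ obtained by rounding $c+s$ coordinatewise (and shifting back), so that $\dist(c,\phi(c))\le\sqrt d$ and $|\phi(C)|\le k$, and I would set $\psi(\nu):=\sum_{c\in C}\nu(\{c\})\,\delta_{\phi(c)}$, which is a probability measure with support at most $k$ on the set $\phi(C)$.

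The core of the proof is bounding $\E_s[\|W_s(\mu-\psi(\nu))\|_1]$. Fix a near-optimal transport plan $\tilde\pi$ between $\mu$ and $\psi(\nu)$; since $\phi$ perturbs each center by at most $\sqrt d$, by Minkowski the cost of $\tilde\pi$ is at most $\WASSZZ(\mu,\nu)+\O{d^{z/2}}$, which can be absorbed into the stated distortion. For every cell $v$ the cell difference $|(\mu-\psi(\nu))(v)|$ is at most the total $\tilde\pi$-mass whose endpoints straddle $v$'s boundary, so
\[\E_s\bigl[\|W_s(\mu-\psi(\nu))\|_1\bigr]\le\sum_{\ell=0}^{L}\alpha_\ell\cdot 2\sum_{(p,q)\in\supp\tilde\pi}\tilde\pi(p,q)\cdot\Pr_s\bigl[p,q\text{ lie in distinct level-}\ell\text{ cells}\bigr].\]
A coordinatewise union bound over the $d$ independent one-dimensional shifts gives the clean estimate $\Pr_s[\text{sep at level }\ell]\le\min\bigl(1,\,d\cdot\dist(p,q)/2^\ell\bigr)$.

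Plugging this in, I would split the level sum at the critical level $\ell^*(p,q):=\lceil\log_2(d\cdot\dist(p,q))\rceil$. The fine regime $\ell\le\ell^*$ telescopes as a geometric series in $2^{z}$, giving per-pair contribution $\O{d^{z/2}\dist(p,q)^z}$. The coarse regime $\ell>\ell^*$ is the delicate one: per-pair the contribution $d\cdot\dist(p,q)\sum_{\ell>\ell^*}(\sqrt d\cdot 2^\ell)^{z-1}$ is a geometric sum with ratio $2^{z-1}\ge 1$, which if left unchecked grows polynomially in $\Delta$. Here I would use the bicriteria structure: because $\psi(\nu)$ is concentrated on at most $k$ points, the total variation $\sum_v|(\mu-\psi(\nu))(v)|$ at any level is bounded by $2$, so each coarse-level contribution is capped by $2\alpha_\ell$ against only the $O(k)$ occupied cells. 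Interleaving this level-by-level cap with the per-pair transport estimate converts the geometric growth in $\ell$ into arithmetic growth and yields $\E_s[\|W_s(\mu-\psi(\nu))\|_1]\le\O{d^{1+0.5z}\log^{z-1}\Delta}\cdot\WASSZZ(\mu,\nu)$. Markov's inequality, applied to a constant-factor rescaled bound, then upgrades this to the $0.99$-probability guarantee.

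The main obstacle I anticipate is formalizing this coarse-level interpolation for $z>1$: the pair-by-pair telescoping alone produces an unacceptable $\Delta^{z-1}$ factor, and the counterexamples discussed in \appref{app:quadtree:means:bad} confirm that a single quadtree without the bicriteria hypothesis cannot do better. The essential gain from snapping $\nu$ to the small set $\phi(C)$ is that at each coarse level only $O(k)$ cells carry $\psi(\nu)$-mass, making the cancellation bound $\sum_v|(\mu-\psi(\nu))(v)|\le 2$ available and cutting in exactly where the per-pair bound would otherwise diverge. Carrying out this trade-off carefully, and simultaneously verifying that the snapping loss in $\WASSZZ$ caused by the $\sqrt d$-perturbation of each atom of $\nu$ is absorbed into the stated distortion, is where most of the technical work lies.
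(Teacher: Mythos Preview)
Your proposal has a genuine gap at the central step, and it stems from a misreading of what the mappings $\phi,\psi$ are supposed to do.

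You define $\phi(c)$ as a \emph{single} snapped lattice point, so $|\phi(C)|\le k$. But the whole content of the theorem is a \emph{bicriteria} guarantee: $\phi(C)$ is allowed to (and in the paper does) contain \emph{more} than $k$ points, namely $\O{k}$ in expectation. In the paper's construction, for each center $c\in C$ and each level $i$ at which $c$ lies within distance $\frac{2^i}{d\log\Delta}$ of a grid hyperplane $H$, one \emph{adds} to $\phi(C)$ the projection of $c$ onto $H$; the mass of $\psi(\nu)$ is then routed to whichever of these copies lies on the same side of $H$ as the $\mu$-mass it serves. This extra-center trick is exactly what kills the coarse-level blow-up you identify: if $c$ and $x$ are separated at level $i$ but $\dist(\Proj(c),\Proj(x))=\alpha\cdot 2^i$ with $\alpha\lesssim\frac{1}{d\log\Delta}$, the added copy of $c$ across $H$ means they are \emph{not} separated in $\psi(\nu)$, so that regime contributes nothing. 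The remaining regime $\alpha\gtrsim\frac{1}{d\log\Delta}$ then gives expected per-pair distortion $\alpha\cdot(\sqrt d/\alpha)^z\le d^{0.5z}(d\log\Delta)^{z-1}$ per axis, which after summing over $d$ axes yields the stated $\O{d^{1+0.5z}\log^{z-1}\Delta}$.

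Your substitute argument for the coarse levels does not work. The bound $\sum_v|(\mu-\psi(\nu))(v)|\le 2$ holds for \emph{any} pair of probability measures at \emph{every} level; it has nothing to do with $\psi(\nu)$ being $k$-sparse. Capping level $\ell$ by $2\alpha_\ell=2(\sqrt d\cdot 2^\ell)^z$ still leaves a geometric sum in $\ell$ with ratio $2^z>1$, so you recover an $\Omega(\Delta^{z-1})$ factor, not $\log^{z-1}\Delta$. This is precisely the obstruction illustrated in \appref{app:quadtree:means:bad}: with a single quadtree and only $k$ centers, a pair at distance $1$ can incur expected squared-distance distortion $\Omega(\Delta)$. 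No amount of ``interleaving'' a trivial total-variation cap with the per-pair transport bound can convert this geometric growth to arithmetic growth; you need a mechanism that \emph{prevents} the bad separations from occurring, and that mechanism is the added projection-centers.

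A secondary issue: your snapping introduces an additive $\O{d^{z/2}}$ error in $\WASSZZ$, which is \emph{not} absorbed by a multiplicative distortion bound when $\WASSZZ(\mu,\nu)$ is small. The paper avoids this entirely because $\phi(C)\supseteq C$, so $\psi(\nu)$ can always keep mass at the original centers when no problematic hyperplane intervenes.
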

\end{mdframed}

In fact, the set $\phi(C)$ includes the set $C$, so $\psi(\nu)$ is a probability measure that includes the support of $\nu$. 
Thus, our embedding can be viewed as a sketch that produces a bicriteria approximation for the Wasserstein-$z$ distance. 

\subsection{Technical Overview: Initial Challenges}
In this section, we describe the intuition behind our main algorithm and how it overcomes significant barriers for previous techniques. 

\paragraph{Merge-and-reduce does not work.}
A standard approach for $(k,z)$-clustering on datasets of $\mathbb{R}^d$ on insertion-only streams is the merge-and-reduce framework, due to \cite{BentleyS80,Har-PeledM04}.  
Given a coreset construction algorithm for $(k,z)$-clustering with size $S(n,d,k,\eps,\delta)$ where $\eps$ is the desired accuracy and $\delta$ is the failure probability, the merge-and-reduce framework first partitions the stream into consecutive blocks of size $S(n,d,k,\eps',\delta')$, where $\eps'=\frac{\eps}{\O{\log n}}$ and $\delta'=\frac{\delta}{\poly(n)}$. 
A coreset with accuracy $(1+\eps')$ and failure probability $\delta'$ is then computed for each block, so that each coreset also uses space $S(n,d,k,\eps',\delta')$. 
We can then view these coresets as the leaves of a binary tree of height $\O{\log n}$, where each node in the tree at depth $t$ denotes a coreset computed from two coresets at depth $t+1$. 
See \figref{fig:merge:reduce} for an example of the merge-and-reduce framework. 

The coreset at the root node represents a coreset for the entire stream, and since each node only requires the coresets of its children nodes, this process can be done on-the-fly. 
Because there is a $(1+\eps')$ multiplicative loss of accuracy at each level, then the coreset at the root node has accuracy $(1+\eps')^{\O{\log n}}=(1+\eps)$, as desired. 
However, all known coreset constructions use $\Omega\left(\frac{1}{\eps'}\right)$ space, and thus for $\eps'=\frac{\eps}{\O{\log n}}$, the merge-and-reduce approach would not work for our goal because it incurs extraneous $\log n$ factors. 

\begin{figure*}[tb]
\centering
\begin{tikzpicture}[scale=0.4]

\node at (-2.3,0.7){Stream:};
\draw (0,0.25) rectangle+(8,0.5);
\draw (8,0.25) rectangle+(8,0.5);
\draw (16,0.25) rectangle+(8,0.5);
\draw (24,0.25) rectangle+(8,0.5);

\draw (4,0.8) -- (4,1.4);
\draw (12,0.8) -- (12,1.4);
\draw (20,0.8) -- (20,1.4);
\draw (28,0.8) -- (28,1.4);

\node at (-2.3,0.7+1.3){Depth 3:};
\draw (0+0.1,1.5) rectangle+(7.8,1);
\draw (8+0.1,1.5) rectangle+(7.8,1);
\draw (16+0.1,1.5) rectangle+(7.8,1);
\draw (24+0.1,1.5) rectangle+(7.8,1);
\node at (4,2){\tiny{$C_{3,1}$}};
\node at (12,2){\tiny{$C_{3,2}$}};
\node at (20,2){\tiny{$C_{3,3}$}};
\node at (28,2){\tiny{$C_{3,4}$}};

\draw[dashed] (4,1.1+1.5) -- (7,1.4+1.5);
\draw[dashed] (12,1.1+1.5) -- (9,1.4+1.5);
\draw[dashed] (20,1.1+1.5) -- (23,1.4+1.5);
\draw[dashed] (28,1.1+1.5) -- (25,1.4+1.5);

\node at (-2.3,0.7+1.3+1.5){Depth 2:};
\draw (4+0.1,3) rectangle+(7.8,1);
\draw (20+0.1,3) rectangle+(7.8,1);
\node at (8,3.5){\tiny{$C_{2,1}$}};
\node at (24,3.5){\tiny{$C_{2,2}$}};

\draw[dashed] (8,1.1+1.5*2) -- (15,1.4+1.5*2);
\draw[dashed] (24,1.1+1.5*2) -- (17,1.4+1.5*2);

\node at (-2.3,0.7+1.3+1.5*2){Depth 1:};
\draw (12+0.1,4.5) rectangle+(7.8,1);
\node at (16,5){\tiny{$C_{1,1}$}};

\end{tikzpicture}
\caption{Merge and reduce framework. Each coreset $C_{t,i}$ at depth $t$ is a coreset of the merger of the two children coresets $C_{t+1,2i-1}$ and $C_{t+1,2i}$ at depth $t+1$.}
\figlab{fig:merge:reduce}
\end{figure*}
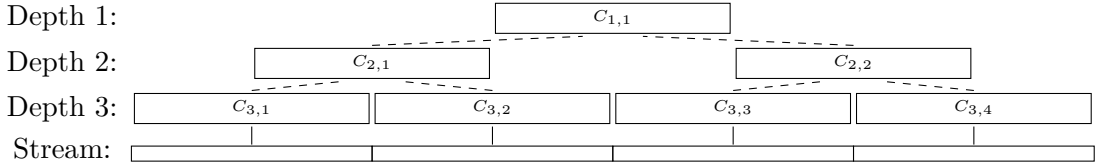

\paragraph{Quadtrees are not enough.}
Another approach for $(k,z)$-clustering on dynamic streams is placing the input space into a quadtree~\cite{Chen09}, which is a tree of randomly shifted grids on $[\Delta]^d$. 
In the quadtree, the coarsest grid contains a single cell representing the entire input space of $[\Delta]^d$ and each subsequent level of the quadtree corresponds to a refinement of the previous grid with smaller grid cells, e.g., by splitting each grid cell in the previous level into $2^d$ smaller grid cells. 
It then suffices to identify the finest level that contains a number of non-empty cells below a certain threshold, as well as the non-empty cells and the number of points in those non-empty cells at that level, e.g., using sparse recovery~\cite{BravermanFLSY17,SongYZ18}. 
However, because the number of cells in the quadtree is $2^{\O{d}}$, then the frequency vector of the number of points in each cell has dimension $2^{\O{d}}$ and thus recovery of the non-empty cells and the number of points in those cells often requires $\O{d}$ space, which, combined with subsequent union bounds over nets with size $\left(\frac{n\Delta}{\eps}\right)^{\O{kd}}$, results in $\poly(d,\log(n\Delta))$ dependencies prohibitive for our goal and in fact even the $\poly(d)$ dependencies cannot be removed through known dimensionality reduction techniques~\cite{MakarychevMR19,IzzoSZ21,CharikarW22,CharikarW22b} without incurring extraneous $\poly\left(\frac{1}{\eps}\right)$ factors. 

\paragraph{Offline sensitivity sampling is not enough.}
Another approach would be to adapt recent methods for offline coreset constructions to the streaming model. 
Unfortunately, many of the optimal or near-optimal constructions~\cite{HuangV20,Cohen-AddadSS21,Cohen-AddadLSS22} use multiple stages of sampling, which seems inherently infeasible for one-pass algorithms in the streaming model. 

Luckily, the sensitivity framework~\cite{FeldmanL11,FeldmanS12,BravermanFLSZ21} for coreset construction seems amenable to adaptation to data streams. 
The (offline) \emph{sensitivity} of each point $x_1,\ldots,x_n\in\mathbb{R}^d$ of a set $X$ is a quantity that informally measures the ``importance'' of that point. 
Formally, the sensitivity of $x_t$ for the $(k,z)$-clustering problem is defined as
\[\max_{C\subset\mathbb{R}^d: |C|\le k}\frac{\Cost(x_t,C)}{\Cost(X,C)}=\max_{C\subset\mathbb{R}^d: |C|\le k}\frac{\dist(x_t,C)^z}{\sum_{t=1}^n\dist(x_t,C)^z}.\]
Traditionally, the sensitivity sampling framework independently samples a fixed number of points with replacement, so that each point is sampled with probability proportional to (some approximation to) its sensitivity. 
However, it can be shown that sampling each point independently without replacement, so that the total number of sampled points is a random variable, is also a valid coreset construction algorithm. 
Recent works have considered online variants of sensitivity sampling for other problems such as subspace embedding and low-rank approximation~\cite{CohenMP20,BravermanDMMUWZ20,WoodruffY23}. 

\paragraph{Online sensitivity sampling.}
We thus consider online sensitivity sampling for $(k,z)$-clustering. 
The argument of correctness is relatively straightforward. 
We first fix a set $C$ of $k$ centers and show that the expectation of the cost of clustering with $C$ for the sampled points is an unbiased estimator of the clustering cost of $C$ with respect to the input set $X$. 
We then upper bound the variance of the cost of clustering the coreset with $C$, which allows us to apply a standard martingale argument that shows concentration, i.e., the coreset approximately preserves the clustering cost with respect to $C$.   
Although there is an arbitrary number of subsets of $\mathbb{R}^d$ of size $k$, it is well-known that to achieve an approximately optimal clustering, it suffices to only show correctness on a net of size $\left(\frac{n}{\eps}\right)^{\O{kd}}$, and to adjust the probability of failure and apply a union bound. 

Since each point is sampled with probability proportional to its online sensitivity, the total number of points sampled is proportional to the sum of the online sensitivities of the points. 
We thus upper bound the total online sensitivity as follows. 
We first break the stream into $\O{\log n}$ blocks where the cost of the optimal $(k,z)$-clustering doubles. 
For each block, we show that the online sensitivity of the points that arrive in a block that ends at time $t$ can be charged to either the cost of an optimal clustering $K_t$ for the stream up to $t$ or to the number of points in some cluster of $K_t$. 

More specifically, because the online sensitivity of a point $x_t$ is defined as
\[\max_{C:\,C\subset\mathbb{R}^d,\, |C|=k}\frac{\dist^z(x_t,C)}{\sum_{i=1}^t\dist^z(x_i,C)},\]
then we can use the generalized triangle inequality to upper bound the online sensitivity of $x_t$ by 
\[\frac{\dist^z(x_t,C)}{\sum_{i=1}^t\dist^z(x_i,C)}\le \frac{2^{z-1}\dist^z(x_t,\pi(x_t))}{\sum_{i=1}^t\dist^z(x_i,C)}+\frac{2^{z-1}\dist^z(\pi(x_t),C)}{\sum_{i=1}^t\dist^z(x_i,C)},\]
where $\pi$ is the mapping to the closest point in $K_t$. 
We can further lower bound the denominator $\sum_{i=1}^t\dist^z(x_i,C)$ by $2\sum_{i=1}^t\dist^z(x_i,\pi(x_i))$ using the optimality of $K_t$, therefore upper bounding the sum of the first term across all $t$. 
We can also use a charging argument to upper bound the second term by $\O{\frac{1}{|S_t|}}$, where $S_t$ is the subset of $\{x_1,\ldots,x_t\}$ mapped to $\pi(x_t)$ at time $t$. 
Since there are $k$ possible clusters for $\pi(x_t)$, $\sum_{i=1}^n\frac{1}{i}=\O{\log n}$, and the stream is partitioned into $\O{\log n}$ blocks, this informally gives our $\O{k\log^2 n}$ upper bound for the total online sensitivity, i.e., \thmref{thm:total:online:sens}. 

\subsection{Technical Overview: Algorithmic Intuition for Insertion-Only Streams}
Despite these efforts, online sensitivity sampling is insufficient for our ultimate goal because it has an  $\O{dk^2\log^2(n\Delta)}$ dependency, which is prohibitive in both the $k$ and $\log(nd\Delta)$ factors. 
However, a crucial observation is that the points sampled by online sensitivity sampling form a stream $\calS'$ of length $\poly\left(k,d,\log(nd\Delta),\frac{1}{\eps}\right)$ of weighted points that forms a coreset of the input points. 
Thus if we could somehow access $\calS'$ in a single pass without storing all the points of $\calS'$, then the length of $\calS'$ is now small enough for us to  simply run a merge-and-reduce algorithm on $\calS'$. 

Unfortunately, online sensitivity sampling requires the storage of the entire set of sampled points to compute the online sensitivities of future points. 
A natural idea would be to use a data structure that could give ``good'' approximations to the online sensitivities without using prohibitively large space. 
However, if we require correctness at all times, it seems likely that any analysis that essentially requires a union bound over all times $t\in[n]$ would result in prohibitive $\log(n\Delta)$ factors. 
Thus, we instead settle for a data structure that could give ``good'' approximations to the online sensitivities ``most'' of the time, without using prohibitively large space. 

Our main insight is that any coreset to the underlying dataset at time $t-1$ precisely gives this guarantee at time $t$. 
That is, suppose we have a $(1+\eps)$-coreset at time $t-1$. 
We can use the coreset to compute approximations to the online sensitivity of $x_t$, which we can then use to sample $x_t$ into a conceptual stream $\calS'$. 
Because $\calS'$ is formed by online sensitivity sampling, then at all times $t\in[n]$, the optimal clustering to the set of weighted points in $\calS'$ at time $t$ is a $(1+\eps)$-approximation to the optimal clustering for the first $t$ points of the original stream. 
We can then apply a standard merge-and-reduce algorithm on $\calS'$ to obtain a $(1+\eps)$-coreset to $\calS'$, which translates to a $(1+\O{\eps})$-coreset for the original stream and guarantees that we now have a coreset for time $t$. 
We can then iterate on this process to perform online sensitivity sampling at time $t+1$. 

Our analysis for online sensitivity sampling shows that with high probability, the number of sampled points into $\calS'$ will be $\poly\left(k,d,\log(nd\Delta),\frac{1}{\eps}\right)$. 
Crucially, since merge-and-reduce requires space that is polylogarithmic in the length of the input stream, then running merge-and-reduce on $\calS'$ will use $o(\log n)$ words of space. 
We summarize our approach in \figref{fig:flowchart}. 

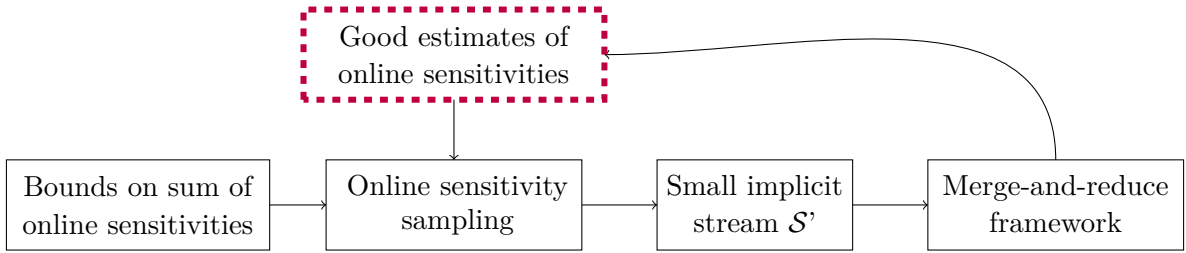
\begin{figure*}[!htb]
\centering
\begin{tikzpicture}[scale=1]
\draw (-0.25,0) rectangle+(3.5,1.2); 
\node at (3/2,0.85){Bounds on sum of};
\node at (3/2,0.35){online sensitivities};
\draw[->] (3.25,1.2/2) -- (4,1.2/2);

\draw (4,0) rectangle+(3.4,1.2); 
\node at (4+3.5/2,0.83){Online sensitivity};
\node at (4+3.5/2,0.37){sampling};
\draw[->] (7.4,1.2/2) -- (8.4,1.2/2);

\draw (8.4,0) rectangle+(2.6,1.2); 
\node at (8.4+2.6/2,0.83){Small implicit};
\node at (8.4+2.6/2,0.37){stream $\calS$'};
\draw[->] (11,1.2/2) -- (12,1.2/2);

\draw (12,0) rectangle+(3.4,1.2); 
\node at (12+3.4/2,0.83){Merge-and-reduce};
\node at (12+3.4/2,0.37){framework};
\draw[->] (13.7,1.2) to[out=90,in=0] (7.7,2.6);

\draw[dashed,line width=0.8mm,color=purple] (3.7,0+2) rectangle+(4,1.2); 
\node at (4+4/2-0.3,0.85+2){Good estimates of};
\node at (4+4/2-0.3,0.35+2){online sensitivities};
\draw[->] (5.7,2) -- (5.7,1.2);
\end{tikzpicture}
\caption{Flowchart of our simple algorithm and analysis for $(k,z)$-clustering on insertion-only streams. .}
\figlab{fig:flowchart}
\end{figure*}

\subsection{Technical Overview: Algorithmic Intuition for Dynamic Streams}
Unfortunately, merge-and-reduce approaches generally do not seem to work for insertion-deletion, i.e., dynamic streams. 
Similarly, online sensitivity sampling and its variants do not seem to immediately work for dynamic streams. 
Consider the following definition for the sensitivity of a point $x\in[\Delta]^d$ with respect to a dataset $X=\{x_1,\ldots,x_n\}$, defined as
\[\max_{C:\,C\subset\mathbb{R}^d,\, |C|=k}\frac{\dist^z(x,C)}{\sum_{i=1}^n\dist^z(x_i,C)}.\]
The issue for dynamic streams is twofold, both related to the possible subsequent deletion of points: 1) sampled points can later be removed from the stream, 2) the online sensitivity of a point $x$ at some time in the stream may be less than the sensitivity of the point $x$ with respect to $X$, because the data stream may later remove additional points. 
That is, the online sensitivity of a point at the time it arrives in the data stream can be either significantly higher or significantly lower than its sensitivity at the end of the stream. 
For example, let $X$ be the dataset at the end of the stream and let $Y$ be an additional set of points so that $X\cup Y$ is the set of points when $x$ arrives in the stream. 
Then the sensitivity of $x$ is defined with respect to $X$, while the online sensitivity of $x$ is by definition, the sensitivity of $x$ with respect to $X\cup Y$, and thus less than the true sensitivity of $x$ with respect to $X$. 
Although we do not see how to overcome these issues in a single pass over the data stream and indeed we give impossibility results for $o(\log n)$ space (in words of memory) algorithms for one-pass dynamic streams, fortunately it seems these issues can be resolved using a second pass. 

We first show that an approximately optimal clustering can be used to compute approximations of the sensitivities of the points.
Therefore, our hope is to use a data structure that could give ``good'' approximations to the optimal clustering, without using prohibitively large space. 
To that end, our algorithm uses two main steps: 1) we first embed the Wasserstein-$z$ distance into $L_1$ and then 2) we use a streaming algorithm to estimate the $L_1$ distance. 
For the ease of discussion, we first consider $z=1$, i.e., $k$-median clustering, so that the Wasserstein-$z$ distance corresponds to the Earth Mover Distance (EMD). 

\paragraph{EMD sketch for approximate $k$-median clustering cost.}
To embed $\EMD$ on $[\Delta]^d$ into $L_1$, we first generalize a standard quadtree consisting of refinements of a randomly shifted grid, i.e.,~\cite{indyk120oct,BackursIRW16}. 
Suppose without loss of generality that $\Delta=2^\ell$ for some non-negative integer $\ell$. 
Given a shift parameter $s=(s_1,\ldots,s_d)\in\mathbb{Z}^d$ and $t\in\{0,1,\ldots,\ell+1\}$, we define the linear map $G_{s,t}$ from $\mathbb{R}^{[\Delta]^d}$ into $L_1$ by first defining the grid $\calG_{s,t}$ over $\mathbb{Z}^d$ with side length $2^t$, so that $s=(s_1,\ldots,s_d)$ lies on one of the corners of the grid. 
For $\mu\in\mathbb{R}^{[\Delta]^d}$, we define $G_{s,t}\mu$ as the frequency vector whose coordinates correspond to the total mass in each cell/hypercube of the grid $\calG_{s,t}$. 
Then we define the mapping $G_s\mu$ to be the concatenation of the vectors 
\[G_s\mu=(G_{s,0}\mu)\circ(2\cdot G_{s,1}\mu)\circ\ldots\circ(2^t\cdot G_{s,t}\mu)\circ\ldots\circ(2^\ell\cdot G_{s,\ell}\mu).\]

The main intuition is that for a mass vector $\mu$ and an assignment vector $\nu$ on $[\Delta]^d$, the cost to the optimal transport induced by some mass in $\mu$ and $\nu$ corresponds to the finest grid the mass does not appear in the same cell. 
Due to the scaling of each grid, the $L_1$ mass attributed to the frequency vector $G_s(\mu-\nu)$ is proportional to the cost needed to move the mass from $\mu$ to $\nu$. 
Thus the grid embedding not only gives a good approximation to EMD, but also enjoys the property that it can only overestimate the EMD, which we shall utilize in the analysis. 

We also remark that compared to other grid embeddings, our data structure achieves worse approximation guarantees but crucially uses less space. 
In fact, the worse approximation guarantees ultimately only means that we sample more points into the implicit stream $\calS'$ in such a way that the downstream guarantees of running a merge-and-reduce algorithm on $\calS'$ will not be affected. 
Given the embedding of $\EMD$ into $L_1$, we then use a streaming algorithm to estimate the $L_1$ distance. 
The guarantee of the $L_1$ streaming algorithm ensures that after the first pass, the accuracy of the estimated costs of the optimal clusterings ensures that we have a good approximation to the sensitivity of each point. 


\paragraph{Putting things together for $k$-median.}
Given two passes over the stream, we can use the first pass over the data stream to maintain the sketch of the EMD embedding. 
Importantly, the EMD embedding and the corresponding $L_1$ sketch are both linear sketches, and so they can handle both insertions and deletions.

We then use the second pass over the data stream to perform sensitivity sampling, since we can now compute the sensitivity of each point with respect to the final data set $X$ at the end of the stream. 
Unfortunately, sensitivity sampling is still  incompatible with the dynamic setting, since sampled points could be subsequently deleted. 
Thus, we instead subsample from the universe with probability proportional to each point's sensitivity, so that if a universe element is subsampled, all updates to the element throughout the stream are reported. 
We then use a sparse recovery scheme to track the updates to the sampled universe elements. 
Since there are known linear sketches for sparse recovery, then our algorithm can handle both insertions and deletions while simulating sensitivity sampling. 
Thus at the end of the stream, our algorithm outputs a weighted subset of the input points that forms a $(1+\eps)$-coreset of the underlying point set for $k$-median clustering. 

\subsection{Technical Overview: Embedding for \texorpdfstring{$(k,z)$}{(k,z)}-Clustering}
To extend our approach to $(k,z)$-clustering, it seems crucial to develop an analog for EMD sketching for the Wasserstein-$z$ distance. 
Unfortunately despite substantial efforts, such an embedding is not known for a single quadtree, because the distortion incurred by a splitting hyperplane in the quadtree is exponential in $z$, while the probability that two points are split by a hyperplane is inversely linear in their distance. 
In fact, there are simple examples on a line that show that the expected distortion between the squared distances of a set of $n$ points, i.e., $z=2$, and the estimated distance by a quadtree is $\Omega(n)$. 
See \appref{app:quadtree:means:bad} for one such example. 
Hence while $k$-median, i.e., $z=1$ is ideal for the quadtree approach, larger values of $z$ can have unbounded distortion. 

Recently, \cite{Cohen-AddadLNSS20} overcame this barrier in the offline setting by considering multiple quadtrees and taking the minimum estimated distance for pairs of points across the quadtrees. 
This approach does not seem to work for the streaming setting, because we can only separately embed each of the quadtrees into $L_1$ and then we will only have access to the estimated clustering costs by each of the quadtrees. 
However, the minimum estimate of the sum of the squared distances is not equal to the sum of the minimum estimated squared distances; the former is what the embedding into $L_1$ would give, but the latter is what the approach of \cite{Cohen-AddadLNSS20} requires. 

\paragraph{Wasserstein-$z$ embedding preliminaries.}
We instead develop a quadtree embedding technique that achieves a bicriteria approximation for $(k,z)$-clustering. 
As in the EMD sketch, we select a shift $s=(s_1,\ldots,s_d)\in[\Delta]^d$ uniformly at random and then add $s$ to each of the input points $x_1,\ldots,x_n\in[\Delta]^d$. 
We create a quadtree so that the root of the tree represents a grid with side length $2\Delta$. 
We then partition the grid into $2^d$ smaller hypercubes with side length $\Delta$, so that for each of these hypercubes that contains a shifted input point, we create a node representing the hypercube and add the node as a child of the root note in the tree, using an edge with weight $\sqrt{d}\Delta$. 
We repeat this procedure until every node representing a hypercube contains at most a single point, resulting in a tree where all leaves contain a single point and are all at the same height, which is at most $\O{\log\Delta}$. 

We embed the input points $X$ into the quadtree. 
As it will be useful to index from the leaves of the tree, we define the $t$-th level of the quadtree to correspond with the hypergrid with cells of length $2^t$. 
For $\mu\in\mathbb{R}^{[\Delta]^d}$, we define $W_{s,t}\mu$ to be the frequency vector over the hypercubes of the hypergrid $\calG_{s,t}$ at level $t$ that counts the total mass in each hypercube. 
We define the mapping $W_s\mu$ to be the concatenation of the vectors 
\[W_s\mu=(W_{s,0}\mu)\circ((2\sqrt{d})^z\cdot W_{s,1}\mu)\circ\ldots\circ((2^z\sqrt{d})^{z}\cdot W_{s,t}\mu)\circ\ldots\circ((2^\ell\sqrt{d})^{z}\cdot W_{s,\ell}\mu).\]
We remark that up to this point, the approach is the same as previous quadtree embeddings. 

\paragraph{Wasserstein-$z$ embedding through bicriteria approximation.}
Now, for a query point $q$, we say that $q$ is \emph{bad at level $i$} if there exists a hyperplane of the quadtree decomposition of length $2^i$ that has distance less than $\frac{2^i}{d\log\Delta}$ from $q$. 
Otherwise, we say that $q$ is good at level $i$. 
Observe that if $q$ is good at level $i$, but $q$ and $x\in X$ are separated at level $i$ but not $i+1$, then $\|q-x\|_2^z>\frac{2^{iz}}{d^{z}\log^{z}\Delta}$, since all hyperplanes at level $i$ are at least distance $\frac{2^i}{d^2\log^2\Delta}$ from $q$. 

Moreover since $q$ and $x$ are not separated at level $i+1$, then the incurred estimated cost for $q$ and $x$ in the quadtree is most $(2^i\sqrt{d})^z$, so the distortion will be at most $d^{0.5z}\log^{z}\Delta$. 

On the other hand, if $q$ is bad at level $i$, then by definition, it has distance less than $\frac{2^i}{d\log\Delta}$ from a hyperplane of the quadtree decomposition of length $2^i$ and so $q$ and $x$ incur estimated cost $2^{iz}d$ by the quadtree, then the distortion could be significantly larger. 
To address this issue, we define a mapping $\phi$ to subsets $S_q$ of $\mathbb{R}^d$ for a query point $q$ as follows. 
We first add $q$ to $S_q$. 
If $q$ is bad at level $i$, then consider each hyperplane $H$ of the quadtree of length $2^i$ that is too close to $q$, i.e., $\dist(q,H)\le\frac{2^i}{d\log\Delta}$. 
We create a copy $q^{(H)}_i$ corresponding to the projection of $q$ onto $H$, so that $q^{(H)}_i$ is responsible for serving the points assigned to $q$ that are on the other side of $H$. 
We add $q^{(H)}_i$ to $S_q$ and proceed top-down, repeatedly adding points to $S_q$ as necessary. 
See \figref{fig:wassz:quadtree} for an example of this process. 

Note that if $q$ is bad at level $i$, it could be too close to multiple hyperplanes of the quadtree with length $2^i$; in this case, we add a point $q^{(H)}_i$ to $S_q$ for each hyperplane $H$ for which $\dist(q,H)\le\frac{2^i}{d^2\log^2\Delta}$. 
That is, $q$ could induce multiple points $q^{(H)}_i$ to be added to $S_q$. 

\setlength{\columnsep}{1cm}
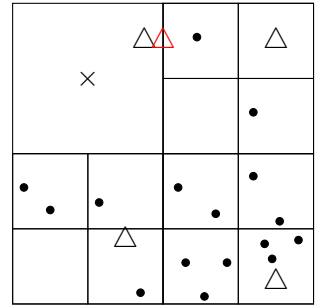
\begin{wrapfigure}{r}{5.5cm}
\centering
\begin{tikzpicture}[scale=0.5]
\draw (0,0) rectangle+(4,4); 
\draw (4,0) rectangle+(4,4); 
\draw (0,4) rectangle+(4,4); 
\draw (4,4) rectangle+(4,4);

\draw (0,0) rectangle+(2,2); 
\draw (2,0) rectangle+(2,2); 
\draw (0,2) rectangle+(2,2); 
\draw (2,2) rectangle+(2,2);

\draw (4,4) rectangle+(2,2); 
\draw (6,4) rectangle+(2,2); 
\draw (4,6) rectangle+(2,2); 
\draw (6,6) rectangle+(2,2);

\draw (4,0) rectangle+(2,2); 
\draw (6,0) rectangle+(2,2); 
\draw (4,2) rectangle+(2,2); 
\draw (6,2) rectangle+(2,2);

\filldraw (0.3,3.1) circle (0.1);
\filldraw (1,2.5) circle (0.1);
\filldraw (2.3,2.7) circle (0.1);
\filldraw (3.4,0.3) circle (0.1);
\node at (3,1.7){$\triangle$};

\node at (2,6){$\times$};
\node at (3.5,7){$\triangle$};
\node at (4,7){\color{red}{$\triangle$}};

\filldraw (4.9,7.1) circle (0.1);
\filldraw (6.4,5.1) circle (0.1);
\node at (7,7){$\triangle$};

\filldraw (4.4,3.1) circle (0.1);
\filldraw (4.6,1.1) circle (0.1);
\filldraw (5.1,0.2) circle (0.1);
\filldraw (5.4,2.4) circle (0.1);
\filldraw (5.7,1.1) circle (0.1);
\filldraw (6.4,3.4) circle (0.1);
\filldraw (6.7,1.6) circle (0.1);
\filldraw (6.9,1.2) circle (0.1);
\filldraw (7.1,2.2) circle (0.1);
\filldraw (7.6,1.7) circle (0.1);
\node at (7,0.6){$\triangle$};
\end{tikzpicture}
\caption{Example of quadtree for Wasserstein-$z$ embedding. 
Black circles are the input points and black triangles denote the query centers. 
One of the four queries (top left cell) is too close to splitting hyperplanes, which could cause too much distortion, since the query is mapped to the cell centers denoted by the black X. 
Hence, we create an additional center on the cell boundary, marked by the red triangle.}
\figlab{fig:wassz:quadtree}
\vspace{-0.5in}
\end{wrapfigure}

By construction, the resulting mapping $\phi(C):=\cup_{j=1}^k\phi(C_j)$ for a set of $k$ centers $C=\{C_1,\ldots,C_k\}$ will induce an estimated cost by the quadtree that has small distortion, i.e., $d^{0.5z}\log^{z}\Delta$, from the actual cost. 
With more careful fine-grained analysis, we show that the expected distortion is at most $\O{d^{1+0.5z}\log^{z-1}\Delta}$. 
To ensure the bicriteria guarantee, it remains to show that $\phi(C)$ contains at most $\O{k}$ centers in expectation. 

Consider a level $i$ and a center $C_j$ with $j\in[k]$. 
Let $\calH$ be the set of hyperplanes of the hypergrid at level $i$. 
By a union bound over the $d$ dimensions, the probability that $\dist(C_j,\calH)<\frac{2^i}{d\log\Delta}$ is at most $\O{\frac{d}{d\log\Delta}}=\O{\frac{1}{\log\Delta}}$. 
Hence, summing up over all the $\O{\log\Delta}$ levels, the expected number of center copies that a center $C_j$ can generate into $\phi(C)$ is at most $\O{1}$. 
Finally, summing up over $j\in[k]$, the expected number of generated points is at most $\O{k}$ and thus the expected size of $\phi(C)$ is at most $k+\O{k}=\O{k}$. 

\paragraph{Putting things together for $(k,z)$-clustering.}
To adapt our two-pass dynamic streaming $k$-median clustering algorithm for $k$-means clustering, it is instructive to consider which steps break down. 
Recall that the first pass of our $k$-median algorithm was used to set up the EMD embedding, which subsequently gave a $\O{\log k+\log\log\Delta}$ approximation to the sensitivity of each point in the second pass, so that we could perform sensitivity sampling through sparse recovery in the second pass. 

For $k$-means clustering, the natural approach would be to replace the EMD embedding with our Wasserstein-$z$ embedding. 
However, our Wasserstein-$z$ embedding may distort the estimated sensitivity by a factor of $\O{d^{1+0.5z}\log^{z-1}\Delta}$, in which case sensitivity sampling in the second pass could require sampling $\Omega(\log^{z-1}\Delta)$ points. 
Unfortunately, even for $z=2$, this no longer gives $\O{\log(n\Delta)}$ total bits of space if we store each sampled point using $\O{\log(n\Delta)}$ bits of space. 

To overcome this issue, we store an approximation of its \emph{offset} from each of the centers instead of storing an explicit representation of each point. 
That is, for a point $x$ and a set $C'$ of $\O{k}$ centers, suppose $c'(x)$ is the closest center of $C'$ to $x$. 
Let the offset from $x$ to $c'(x)$ be defined by $y'=x-c'(x)$. 
Suppose we round each coordinate of $y'$ to a power of $\left(1+\poly\left(\eps,\frac{1}{d},\frac{1}{\log\Delta}\right)\right)$ to form a vector $y$. 

Let $Y$ be the set of all points of $X$ rounded in this manner. 
We show that the cost of any clustering $C$ on $Y$ is a $(1+\eps)$-approximation to the cost of the clustering $C$ and $X$, so it suffices to instead consider $(k,z)$-clustering on $Y$. 
The main insight from considering $Y$ is that due to the points all being rounded to a power of $\left(1+\poly\left(\eps,\frac{1}{d},\frac{1}{\log\Delta}\right)\right)$ away from a point in $C'$, then there is a compact representation of each point in $Y$ that does not require $\Omega(d\log\Delta)$ bits to represent. 
In fact, to represent each point $y$, it suffices to store the identity of $c'(x)$ as well as the \emph{exponents} of the offsets, which only requires $\O{d\log\frac{\log\Delta}{\eps}}$ bits per sample. 
Hence for $k$-means clustering, the algorithm still uses $o(\log^2(n\Delta))$ total bits of space. 

\subsection{Preliminaries}
\seclab{sec:prelims}
For an integer $n>0$, we use $[n]$ to denote the set $\{1,\ldots,n\}$. 
We use $\poly(n)$ to denote a fixed polynomial in $n$ and $\polylog(n)$ to denote $\poly(\log n)$. 
If an event occurs with probability at least $1-\frac{1}{\poly(n)}$, we say the event occurs with high probability. 

For vectors $x,y\in\mathbb{R}^d$, we use $\dist(x,y)$ to denote the Euclidean distance $\|x-y\|_2$, so that $\|x-y\|_2^2=\sum_{i=1}^d(x_i-y_i)^2$. 
More generally, we define the $L_z$ norm of $x$ by $\|x\|_z$ so that $\|x\|_z^z=\sum_{i=1}^d x_i^z$. 
For a set $S$, we use $\dist(x,S)$ to denote $\min_{y\in S}\dist(x,y)$ and similarly, for sets $X$ and $S$, we define $\dist(X,S)=\min_{x\in X,y\in S}\dist(x,y)$. 
For a fixed $z\ge 1$ and sets $X,C\subset\mathbb{R}^d$ with $X=\{x_1,\ldots,x_n\}$ we use $\Cost(X,C)$ to denote $\sum_{i=1}^n\dist(x_i,C)^z$. 

\begin{fact}[Generalized triangle inequality]
\factlab{fact:triangle}
For any $z\ge 1$ and $x,y,z\in\mathbb{R}^d$, we have
\[\dist(x,y)^z\le 2^{z-1}(\dist(x,w)^z+\dist(w,y)^z).\]
\end{fact}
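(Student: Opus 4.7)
The plan is to derive this by combining the ordinary Euclidean triangle inequality with the convexity of the map $t\mapsto t^z$ on $[0,\infty)$ for $z\ge 1$. Note first that the statement as written uses the letter $z$ both for the exponent and as a point variable; reading $w$ in place of the point $z$, the claim is exactly a power-of-$z$ relaxation of the standard triangle inequality.

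First I would invoke the usual triangle inequality in $\mathbb{R}^d$ to write
\[
\dist(x,y) \;\le\; \dist(x,w) + \dist(w,y).
\]
Raising both sides to the $z$-th power (both sides are nonnegative, and $t\mapsto t^z$ is monotone nondecreasing on $[0,\infty)$), it suffices to show
\[
\bigl(\dist(x,w) + \dist(w,y)\bigr)^z \;\le\; 2^{z-1}\bigl(\dist(x,w)^z + \dist(w,y)^z\bigr).
\]
This reduces the problem to the purely scalar inequality $(a+b)^z \le 2^{z-1}(a^z+b^z)$ for all $a,b \ge 0$ and $z \ge 1$.

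Next I would prove the scalar inequality by convexity. Since $z\ge 1$, the function $f(t)=t^z$ is convex on $[0,\infty)$, so Jensen's inequality applied to the two-point average gives
\[
\left(\frac{a+b}{2}\right)^{z} \;\le\; \frac{a^z + b^z}{2}.
\]
Multiplying both sides by $2^z$ yields $(a+b)^z \le 2^{z-1}(a^z+b^z)$, which is exactly what we need. Substituting $a=\dist(x,w)$ and $b=\dist(w,y)$ and chaining with the triangle inequality above completes the proof.

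There is essentially no obstacle here; the only subtlety is making sure $z\ge 1$ (so that convexity holds — for $z<1$ the inequality reverses in direction) and that the monotonicity step is legitimate, which it is because distances are nonnegative. The bound is tight when $\dist(x,w)=\dist(w,y)$ and $w$ lies on the segment from $x$ to $y$, so the constant $2^{z-1}$ cannot be improved in general.
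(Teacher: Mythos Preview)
Your proof is correct and is the standard argument; the paper states this as a fact without providing a proof of its own, so there is nothing to compare against.
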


Let $x,y\in\mathbb{R}^n$ with $n=[\Delta]^d$ be two vectors such that $\|x\|_1=\|y\|_1$. 
Let $F(x,y)$ be the family of functions $f:[\Delta]^d\times[\Delta]^d\to\mathbb{R}^{\ge 0}$ so that for any $i\in[\Delta]^d$, we have $\sum_{j\in[\Delta]^d} f(i,j)=x_i$ and for any $j\in[\Delta]^d$, we have $\sum_{i\in[\Delta]_d} f(i,j)=y_j$. 
Then we define the earth mover distance (EMD) between $x$ and $y$ by
\[\EMD(x,y)=\min_{f\in F}\sum_{i,j\in[\Delta]^d}f(i,j)\|i-j\|_2.\]
For general vectors $x,y\in\mathbb{R}^n$, we define
\[\EMD(x,y)=\inf_{x'\preceq x, y'\preceq y, \|x'\|_1=\|y'\|_=1}\EMD(x',y')+d\Delta(\|x-x'\|_1+\|y-y'\|_1).\]
More generally, we define the quantity $\WASSZZ(x,y)$ for $x,y\in\mathbb{R}^n$ with $n=[\Delta]^d$ and $\|x\|_1=\|y\|_1$ by
\[\WASSZZ(x,y)=\min_{f\in F}\sum_{i,j\in[\Delta]^d}f(i,j)\|i-j\|_2^z\]
and for general vectors $x,y\in\mathbb{R}^n$ by
\[\WASSZZ(x,y)=\inf_{x'\preceq x, y'\preceq y, \|x'\|_1=\|y'\|_=1}\WASSZZ(x',y')+d^z\Delta^z(\|x-x'\|_1+\|y-y'\|_1).\]
We remark that the quantity $\WASSZZ(x,y)$ corresponds to the $z$-th power of the Wasserstein-$z$ distance between $x$ and $y$; we use $\WASSD(x,y)$ to denote the Wasserstein-$z$ distance between $x$ and $y$, in general omitting the dependency on $z$ when it is clear from context. 

We also define $x^+=\frac{|x|+x}{2}$ to be the vector containing the positive entries of $x$ and $x^-=x-x^+$ to be the vector containing the negative entries of $x$. 
Then we define 
\[\|x\|_{\EMD}=\EMD(x^+,x^-).\]

We now formally define the notion of a (strong) coreset. 
\begin{definition}[Coreset]
\deflab{def:coreset}
Given an approximation parameter $\eps>0$, and a set $X$ of points $x_1,\ldots,x_n\in\mathbb{R}^d$ with distance function $\dist$, a \emph{coreset} for $(k,z)$ clustering is a subset $S$ of weighted points of $X$ with weight function $w$ such that for any set $C$ of $k$ points, we have
\[(1-\eps)\sum_{t=1}^n\dist(x_t,C)^z\le\sum_{q\in S}w(q)\dist(q,S)^z\le(1+\eps)\sum_{t=1}^n\dist(x_t,C)^z.\]
\end{definition}

Currently, the state-of-the-art coreset construction for $(k,z)$-clustering is the following:
\begin{theorem}\cite{Cohen-AddadLSS22}
\thmlab{thm:offline:coreset:size}
Given an accuracy parameter $\eps\in(0,1)$, there exists a coreset construction for $(k,z)$-clustering that samples $\tO{\frac{k}{\eps^2}2^{z\log z}\cdot\min\left(\frac{1}{\eps^z},k\right)}$ weighted points. 
\end{theorem}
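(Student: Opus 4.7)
The plan is to combine the classical Feldman--Langberg sensitivity sampling framework with a two-stage refinement: a sensitivity-based upper-bound computation and a chaining argument over a carefully constructed net of candidate center sets. First, I would obtain an $O(1)$-approximate solution $A=\{a_1,\ldots,a_k\}$ to $(k,z)$-clustering on $X$ via any constant-factor offline algorithm (for instance, local search). Using $A$, I would derive explicit upper bounds on the sensitivities: for each $x_t$, by the generalized triangle inequality (\factref{fact:triangle}) we have $\Cost(x_t,C)\le 2^{z-1}\dist(x_t,a(x_t))^z+2^{z-1}\dist(a(x_t),C)^z$ where $a(x_t)$ is $x_t$'s nearest center in $A$. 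Dividing by $\Cost(X,C)\gtrsim \Cost(X,A)/\alpha$ and using a charging argument against the cluster sizes of $A$ yields a sensitivity upper bound $s_t$ with total mass $\sum_t s_t=O(2^z k)$.

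The second stage is the key technical step. I would partition the points into \emph{rings} $R_{j,i}$ grouping $x_t$ by both its distance band to $a(x_t)$ (parameter $j$) and its cluster identity (parameter $i$), and then into \emph{groups} that aggregate rings contributing a comparable amount to the total cost. Within each group, I would sample points proportionally to $s_t$ and use a uniform convergence argument over the restricted class of cost functions on that group. The sample size per group is controlled by (i) the shattering / pseudo-dimension of the $(k,z)$-clustering cost class (which is $\tilde O(k)$ in general metrics, and can be exploited further in Euclidean space) and (ii) a chaining argument that trades off additive and multiplicative error across a geometrically-growing sequence of $\eps$-nets in the space of $k$-tuples of centers. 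This chaining is what produces the $\min(1/\eps^z,k)$ factor: one can either bound the deviation by a Bernstein/Bennett tail giving $1/\eps^z$, or by a direct union bound over a net of size $\exp(\tilde O(k))$ giving the $k$ factor, and one takes the better of the two regimes.

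The main obstacle, and the source of the $2^{z\log z}$ overhead, is carefully handling the non-Lipschitz behavior of $d\mapsto d^z$: small changes in the center set can cause multiplicatively large changes in individual point costs when $z$ is large, so the standard Haussler-style covering argument loses a factor exponential in $z$ per level of chaining. Managing this requires splitting each ring into an ``inner'' part (where the distance to the assigned center dominates) and an ``outer'' part (where the distance to the center set is small compared to its shift), and then applying a separate tail bound in each regime so that the per-level error is $O(\eps\cdot 2^{O(z\log z)})$ rather than $O(\eps\cdot 2^{O(z^2)})$.

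Finally, a union bound across the $\tilde O(k)$ groups and the levels of the chain, combined with the guarantee that each group independently preserves its contribution to $\Cost(X,C)$ within a $(1\pm\eps)$ factor for every $C$ in the net, yields the claimed coreset. Extending the net guarantee from the net to arbitrary $C\subset\mathbb{R}^d$ follows by a standard Lipschitz continuity argument, paying only lower-order factors which are absorbed into the $\tilde O(\cdot)$. This produces the stated sample size of $\tilde O\!\bigl(\tfrac{k}{\eps^2}\cdot 2^{z\log z}\cdot\min(\tfrac{1}{\eps^z},k)\bigr)$.
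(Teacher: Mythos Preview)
The paper does not prove this theorem. \thmref{thm:offline:coreset:size} is stated with the citation \cite{Cohen-AddadLSS22} and is used throughout the paper purely as a black box (in particular to instantiate the merge-and-reduce subroutine in \lemref{lem:space:mr} and in the final space accounting of \thmref{thm:main}). There is no proof in the paper to compare your proposal against.

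Your sketch is a plausible high-level outline of the kind of argument that appears in the cited work---a constant-factor seed solution, sensitivity bounds via the generalized triangle inequality, a ring/group decomposition around the approximate centers, and a chaining argument to get the $\min(1/\eps^z,k)$ factor---but none of this belongs in the present paper, and several of the steps you describe (e.g., the precise mechanism producing $2^{z\log z}$ rather than $2^{O(z^2)}$, or why chaining yields exactly the $\min(1/\eps^z,k)$ dichotomy) are asserted rather than argued and would require substantial technical work to make rigorous. For the purposes of this paper, the correct ``proof'' is simply the citation.
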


Thus, merge-and-reduce on a stream of length $n$ using the coreset construction of \cite{Cohen-AddadLSS22} in \thmref{thm:offline:coreset:size} offers the following guarantee:
\begin{lemma}
\lemlab{lem:space:mr}
There exists a one-pass streaming algorithm on insertion-only data streams of length $n$ that, with probability at least $0.99$, simultaneously outputs a $(1+\eps)$-approximation to $(k,z)$-clustering at all times of the stream, using $\tO{\frac{k}{\eps^2}2^{z\log z}\cdot\left(\frac{1}{\eps^z},k\right)}\cdot\polylog(n)$ bits of space. 
\end{lemma}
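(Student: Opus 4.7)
\begin{proofof}{Lemma \ref{lem:space:mr}}
The plan is to instantiate the standard merge-and-reduce framework of Bentley--Saxe (see also \cite{Har-PeledM04}) with the offline coreset construction of \thmref{thm:offline:coreset:size} as the base primitive.

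First, I would set an internal accuracy parameter $\eps'=\Theta(\eps/\log n)$ and an internal failure probability $\delta'=1/\poly(n)$, and partition the stream into consecutive blocks whose length equals the coreset size $S=\tO{\tfrac{k}{\eps'^{2}}2^{z\log z}\min(\tfrac{1}{\eps'^z},k)}$ promised by \thmref{thm:offline:coreset:size} (boosted to success probability $1-\delta'$ by repetition/median arguments). View the blocks as the leaves of a complete binary tree of height $h=O(\log n)$. At every internal node, upon receiving coresets $C_L$ and $C_R$ from its two children, compute a $(1+\eps')$-coreset of the weighted multiset $C_L \cup C_R$ using \thmref{thm:offline:coreset:size} treating the multiplicities as point weights, then discard $C_L$ and $C_R$.

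Next, I would argue correctness at all times. Since coresets compose multiplicatively, any coreset that is a merged sum of $h$-deep compositions of $(1+\eps')$-coresets is a $(1+\eps')^h$-coreset of the associated prefix; with $\eps'=\Theta(\eps/\log n)$ we have $(1+\eps')^{O(\log n)}\le 1+\eps$. At any time $t$ in the stream, the prefix read so far is covered by at most $O(\log n)$ currently active coresets (one per level), and their weighted union is a $(1+\eps)$-coreset of the prefix; this yields the ``at all times'' clustering guarantee by running any offline $(1+\eps)$-approximation $(k,z)$-clustering algorithm on the union. A union bound over the at most $O(n)$ coreset constructions performed throughout the stream, each failing with probability $\delta'=1/\poly(n)$, gives overall success probability at least $0.99$.

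Finally, I would bound the space. At any moment we store $O(\log n)$ coresets, each of size $S=\tO{\tfrac{k}{\eps'^{2}}2^{z\log z}\min(\tfrac{1}{\eps'^z},k)}$, where each weighted point occupies $\O{d\log(n\Delta)}$ bits; substituting $\eps'=\Theta(\eps/\log n)$ absorbs the $\log n$ factors into the $\polylog(n)$ term and yields the claimed bit complexity $\tO{\tfrac{k}{\eps^{2}}2^{z\log z}\min(\tfrac{1}{\eps^z},k)}\cdot\polylog(n)$. The only delicate point is making sure that the base coreset construction of \thmref{thm:offline:coreset:size} remains valid when applied to weighted inputs produced by prior coreset steps, which is standard since its guarantees are stated for arbitrary weighted point sets; everything else is routine arithmetic.
\end{proofof}
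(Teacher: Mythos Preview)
Your proposal is correct and follows exactly the approach the paper intends: the paper does not give a standalone proof of this lemma but simply presents it as the immediate consequence of plugging the offline coreset of \thmref{thm:offline:coreset:size} into the merge-and-reduce framework (as described in the introduction and \figref{fig:merge:reduce}). Your write-up spells out precisely those details---the $\eps'=\Theta(\eps/\log n)$ rescaling, the $O(\log n)$-height binary tree, multiplicative composition of coreset errors, and the union bound over $\poly(n)$ coreset calls---so there is nothing to add.
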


We recall the following property of the Johnson-Lindenstrauss (JL) transformation. 
\begin{theorem}[Johnson-Lindenstrauss lemma]
\cite{johnson1984extensions}
\thmlab{thm:jl}
Let $X\subset\mathbb{R}^d$ be a set of $n$ points and $m=\O{\frac{1}{\eps^2}\log n}$. 
There exists a family of random linear maps $\frakM:\mathbb{R}^d\to\mathbb{R}^m$ such that with probability at least $0.99$ over the choice of $\frakM$, we simultaneously have for all $x,y\in X$,
\[(1-\eps)\|x-y\|_2\le\|\frakM x-\frakM y\|_2\le(1+\eps)\|x-y\|_2.\]
\end{theorem}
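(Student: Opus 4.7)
The plan is to follow the classical proof of the Johnson--Lindenstrauss lemma, instantiated with a concrete random linear map and a union bound over all pairs of points in $X$. Specifically, I would let $\frakM$ be the $m \times d$ matrix whose entries are i.i.d.\ $\calN(0, 1/m)$ random variables (a scaled Gaussian projection); a Rademacher $\pm 1/\sqrt{m}$ variant would work equally well, but the Gaussian case admits the cleanest chi-squared tail bound. Linearity of $\frakM$ then reduces the statement to showing that for each fixed nonzero vector $v = x - y$ with $x, y \in X$, the length $\|\frakM v\|_2$ lies in $[(1-\eps)\|v\|_2, (1+\eps)\|v\|_2]$ with high probability.

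First I would prove the single-vector concentration statement: for any fixed $v \in \mathbb{R}^d$,
\[
\Pr\!\left[\,\bigl|\,\|\frakM v\|_2^2 - \|v\|_2^2\,\bigr| > \eps \|v\|_2^2\,\right] \;\le\; 2 e^{-c \eps^2 m}
\]
for an absolute constant $c > 0$. By rotational invariance of the Gaussian and by homogeneity, I can assume $v = e_1$, so that $\|\frakM v\|_2^2 = \frac{1}{m}\sum_{i=1}^m g_i^2$ for i.i.d.\ standard Gaussians $g_i$. This is a scaled chi-squared random variable with $m$ degrees of freedom, and the stated tail bound follows from the standard Laurent--Massart inequality, or alternatively by computing the moment generating function of $g_i^2 - 1$ and applying a Chernoff argument.

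Next I would apply this concentration simultaneously over all pairs $(x,y)$ of distinct points in $X$. There are at most $\binom{n}{2} \le n^2$ such pairs, so by a union bound the probability that there exists any pair with $\|\frakM(x-y)\|_2 \notin [(1-\eps)\|x-y\|_2, (1+\eps)\|x-y\|_2]$ is at most $2 n^2 \, e^{-c \eps^2 m}$. Choosing $m = C \eps^{-2} \log n$ for a sufficiently large absolute constant $C$ (depending only on $c$) makes this probability at most $0.01$, which yields the claim with success probability at least $0.99$. To pass from squared-norm approximation to norm approximation, I would use that $|\sqrt{1+t} - 1| \le |t|$ for small $t$, so that multiplicative $(1 \pm \eps)$ approximation of squared norms translates to $(1 \pm \O{\eps})$ approximation of norms; rescaling $\eps$ absorbs this constant.

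The only mildly technical step is the chi-squared concentration inequality itself, but this is completely standard and amounts to bounding $\Ex{e^{\lambda (g^2-1)}}$ for $g \sim \calN(0,1)$. Everything else --- linearity of $\frakM$, homogeneity, and the union bound --- is routine. I do not foresee any real obstacles; the proof essentially writes itself once the random map and the concentration inequality are fixed.
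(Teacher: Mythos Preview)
Your proposal is correct and is the standard proof of the Johnson--Lindenstrauss lemma. The paper does not actually prove this statement; it merely states it as a known result with a citation to \cite{johnson1984extensions}, so there is no paper proof to compare against.
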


We also recall the following concentration inequalities:
\begin{theorem}[Bernstein's concentration inequality]
\cite{bernstein1927theory}
\thmlab{thm:bernstein}
Let $X_1,\ldots,X_m$ be independent random variables with $\Ex{X_i}<\infty$ and $X_i\ge0$ for all $i\in[m]$. 
Let $X=\sum_{i=1}^m X_i$ and let $\gamma>0$. 
Then
\[\PPr{X\le\Ex{X}-\gamma}\le\exp\left(\frac{-\gamma^2}{2\sum_i\Ex{X_i^2}}\right).\]
Moreover, if $X_i-\Ex{X_i}\le\Delta$ for all $i\in[m]$, then for $\sigma_i^2:=\Ex{X_i^2}-\Ex{X_i}^2$,
\[\PPr{X\ge\Ex{X}+\gamma}\le\exp\left(\frac{-\gamma^2}{2\sum_i\sigma_i^2+2\gamma\Delta/3}\right).\]
\end{theorem}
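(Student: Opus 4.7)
The plan is to prove both tails via the standard Cram\'er--Chernoff method: exponentiate, apply Markov, factorize the moment generating function (MGF) by independence, bound each single-variable MGF, and finally optimize the free parameter $\lambda>0$. The only nontrivial ingredient is the MGF bound for a bounded centered random variable, and a slightly different (and simpler) MGF bound that is available for the lower tail because the $X_i$ are nonnegative.

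For the \emph{lower tail}, I would write
\[
\PPr{X\le\Ex{X}-\gamma}=\PPr{e^{-\lambda(X-\Ex{X})}\ge e^{\lambda\gamma}}\le e^{-\lambda\gamma}\prod_{i=1}^{m}\Ex{e^{-\lambda(X_i-\Ex{X_i})}},
\]
using Markov and independence. Since $X_i\ge 0$, for any $\lambda>0$ one has $e^{-\lambda X_i}\le 1-\lambda X_i+\tfrac{\lambda^2}{2}X_i^2$ (an elementary calculus check: the difference is $0$ at $0$ with nonnegative derivative for $x\ge0$). Taking expectation and applying $1+u\le e^u$ yields $\Ex{e^{-\lambda X_i}}\le\exp(-\lambda\Ex{X_i}+\tfrac{\lambda^2}{2}\Ex{X_i^2})$. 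Multiplying by $e^{\lambda\Ex{X_i}}$ gives $\Ex{e^{-\lambda(X_i-\Ex{X_i})}}\le\exp(\tfrac{\lambda^2}{2}\Ex{X_i^2})$, so the product is at most $\exp\!\bigl(\tfrac{\lambda^2}{2}\sum_i\Ex{X_i^2}\bigr)$. Choosing $\lambda=\gamma/\sum_i\Ex{X_i^2}$ to minimize $-\lambda\gamma+\tfrac{\lambda^2}{2}\sum_i\Ex{X_i^2}$ yields the claimed bound $\exp\!\bigl(-\gamma^2/(2\sum_i\Ex{X_i^2})\bigr)$.

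For the \emph{upper tail}, the MGF bound is the technical heart. Let $Y_i=X_i-\Ex{X_i}$, so $Y_i\le\Delta$ and $\Ex{Y_i^2}=\sigma_i^2$. Expanding the exponential and using $|Y_i|^k\le\Delta^{k-2}Y_i^2$ for $k\ge 2$ (which holds for $Y_i\le\Delta$ because $Y_i^-$ is unbounded only in the negative direction where we can still bound $|Y_i|^k=|Y_i|^{k-2}Y_i^2$; here we only need $Y_i\le\Delta$ combined with a separate argument or the standard Bennett-style bound), one obtains
\[
\Ex{e^{\lambda Y_i}}\le 1+\sum_{k\ge 2}\frac{\lambda^k\Delta^{k-2}\sigma_i^2}{k!}\le 1+\frac{\lambda^2\sigma_i^2/2}{1-\lambda\Delta/3}\le\exp\!\left(\frac{\lambda^2\sigma_i^2/2}{1-\lambda\Delta/3}\right),
\]
valid for $0<\lambda<3/\Delta$, where the middle inequality uses $k!\ge 2\cdot 3^{k-2}$. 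Taking the product over $i$, applying Markov, and then choosing $\lambda=\gamma/(\sum_i\sigma_i^2+\gamma\Delta/3)$ (which is the standard Bernstein optimum) delivers $\exp\!\bigl(-\gamma^2/(2\sum_i\sigma_i^2+2\gamma\Delta/3)\bigr)$.

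The main obstacle is the one-sided tail bound on $Y_i^k$ in the upper-tail argument, since the reverse inequality $|Y_i|^k\le\Delta^{k-2}Y_i^2$ requires the one-sided bound $Y_i\le\Delta$ to be leveraged carefully (one really bounds $\Ex{Y_i^k}$ by separating $Y_i\ge 0$ and $Y_i<0$ contributions, or invokes the classical Bennett lemma). Everything else is a routine Chernoff optimization, and the two bounds then combine to yield the theorem.
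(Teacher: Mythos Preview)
The paper does not supply a proof of this theorem; it is quoted as a classical result with a citation, so there is no in-paper argument to compare against. I therefore comment only on the correctness of your sketch.

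Your lower-tail argument is complete and correct: the pointwise inequality $e^{-\lambda x}\le 1-\lambda x+\tfrac{\lambda^2}{2}x^2$ for $x\ge 0$ is exactly the right tool, and the Chernoff optimization is standard.

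Your upper-tail argument has a real gap, which you identify but do not actually close. The pointwise inequality $|Y_i|^k\le\Delta^{k-2}Y_i^2$ is simply false under the one-sided hypothesis $Y_i\le\Delta$: if $Y_i=-M$ with $M\gg\Delta$ (which is permitted, since $X_i\ge 0$ only gives $Y_i\ge -\Ex{X_i}$ and $\Ex{X_i}$ is not controlled by $\Delta$), then for $k\ge 3$ the left side is $M^k$ while the right side is $\Delta^{k-2}M^2$. The even moments are the obstruction, and a sign-splitting argument cannot rescue them. The clean remedy is the Bennett lemma you allude to but do not carry out: the function $g(x)=(e^{\lambda x}-1-\lambda x)/x^2$, extended by $g(0)=\lambda^2/2$, is nondecreasing on all of $\mathbb{R}$, so $Y_i\le\Delta$ yields the pointwise bound $e^{\lambda Y_i}\le 1+\lambda Y_i+Y_i^2\,g(\Delta)$, whence $\Ex{e^{\lambda Y_i}}\le 1+\sigma_i^2 g(\Delta)\le\exp\bigl(\sigma_i^2 g(\Delta)\bigr)$. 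At this point your estimate $g(\Delta)=\sum_{k\ge 2}\lambda^k\Delta^{k-2}/k!\le(\lambda^2/2)/(1-\lambda\Delta/3)$ via $k!\ge 2\cdot 3^{k-2}$ is valid, and the choice $\lambda=\gamma/\bigl(\sum_i\sigma_i^2+\gamma\Delta/3\bigr)$ finishes the proof exactly as you wrote. With this substitution in place of the faulty moment bound, your argument is complete.
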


\begin{theorem}[Freedman's inequality]
\cite{Freedman75}
\thmlab{thm:scalar:freedman}
Suppose $Y_0,Y_1,\ldots,Y_n$ is a scalar martingale with difference sequence $X_1,\ldots,X_n$. 
Specifically, we initiate $Y_0=0$ and set $Y_i=Y_{i-1}+X_i$ for all $i\in[n[$
Let $R\ge|X_t|$ for all $t\in[n]$ with high probability. 
We define the predictable quadratic variation process of the martingale by $w_k:= \sum_{t=1}^k\underset{t-1}{\mathbb{E}}\left[X_t^2\right]$, for $k\in[n]$. 
Then for all $\eps\ge 0$ and $\sigma^2 > 0$, and every $k \in [n]$, 
\[\PPr{\max_{t \in [k]} |Y_t|>\eps\text{ and } w_k \le \sigma^2}\le 2\exp\left(-\frac{\eps^2/2}{\sigma^2 + R\eps/3} \right).\]
\end{theorem}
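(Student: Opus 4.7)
The plan is to follow the classical exponential supermartingale approach underlying Bernstein-type bounds, with a stopping time argument to control the running maximum $\max_{t \in [k]} |Y_t|$ simultaneously with the variance constraint $w_k \le \sigma^2$.

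First I would prove the one-sided tail and then symmetrize. Fix $\lambda \in (0, 3/R)$. The key moment bound is that for any $\mathcal{F}_{t-1}$-conditionally mean-zero random variable $X_t$ with $|X_t| \le R$,
\[
\mathbb{E}\bigl[e^{\lambda X_t} \,\bigm|\, \mathcal{F}_{t-1}\bigr] \;\le\; \exp\!\bigl(g(\lambda)\,\mathbb{E}_{t-1}[X_t^2]\bigr), \qquad g(\lambda) := \frac{\lambda^2}{2(1 - \lambda R/3)}.
\]
This follows from the Taylor expansion $e^u - 1 - u = \sum_{j \ge 2} u^j/j! \le \tfrac{u^2}{2} \sum_{j \ge 0} (|u|/3)^j = u^2/(2(1-|u|/3))$ whenever $|u| < 3$, applied with $u = \lambda X_t$ and then taking conditional expectation using $\mathbb{E}[X_t \mid \mathcal{F}_{t-1}] = 0$.

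Next I would define the process $M_t := \exp(\lambda Y_t - g(\lambda)\, w_t)$, with $M_0 = 1$. Since $w_t - w_{t-1} = \mathbb{E}_{t-1}[X_t^2]$ is $\mathcal{F}_{t-1}$-measurable, the moment bound above gives
\[
\mathbb{E}[M_t \mid \mathcal{F}_{t-1}] = M_{t-1}\,\mathbb{E}[e^{\lambda X_t}\mid \mathcal{F}_{t-1}]\,e^{-g(\lambda)\,\mathbb{E}_{t-1}[X_t^2]} \;\le\; M_{t-1},
\]
so $(M_t)$ is a nonnegative supermartingale with $\mathbb{E}[M_t] \le 1$. Now let $\tau$ be the first time $t \le k$ at which $Y_t > \epsilon$, setting $\tau = k+1$ if no such time exists, so $\tau \wedge k$ is a bounded stopping time. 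On the event $E := \{\max_{t \in [k]} Y_t > \epsilon\} \cap \{w_k \le \sigma^2\}$ we have $\tau \le k$, $Y_\tau > \epsilon$, and $w_\tau \le w_k \le \sigma^2$; thus $M_\tau \ge \exp(\lambda \epsilon - g(\lambda) \sigma^2)$ on $E$. By optional stopping and Markov's inequality,
\[
\mathbb{P}(E) \;\le\; \exp(-\lambda \epsilon + g(\lambda)\sigma^2).
\]
Optimizing by choosing $\lambda = \epsilon/(\sigma^2 + R\epsilon/3)$ (which lies in $(0, 3/R)$) and simplifying $g(\lambda)\sigma^2 = \lambda^2\sigma^2/(2(1-\lambda R/3))$ yields the one-sided bound
\[
\mathbb{P}\!\left[\max_{t \in [k]} Y_t > \epsilon \text{ and } w_k \le \sigma^2\right] \;\le\; \exp\!\left(-\frac{\epsilon^2/2}{\sigma^2 + R\epsilon/3}\right).
\]
Applying the identical argument to the martingale $(-Y_t)$, whose difference sequence $(-X_t)$ has the same quadratic variation process $w_k$ and the same uniform bound $R$, and taking a union bound over the two tails multiplies the right-hand side by $2$, giving the claimed inequality.

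The main obstacle, in my view, is the delicate coupling between the supermartingale argument and the event $\{w_k \le \sigma^2\}$: one cannot simply condition on this event, so the correct maneuver is to lower bound $M_\tau$ using the predictability of $w_\tau$ (which is $\mathcal{F}_{\tau-1}$-measurable since each $\mathbb{E}_{t-1}[X_t^2]$ is) and the monotonicity $w_\tau \le w_k$ on $E$, and then to apply optional stopping only to the bounded stopping time $\tau \wedge k$ rather than to $\tau$ directly. Once this bookkeeping is in place, the rest is a standard Bernstein-type optimization.
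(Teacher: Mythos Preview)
Your proof is correct and is essentially the classical argument for Freedman's inequality via the exponential supermartingale $M_t=\exp(\lambda Y_t-g(\lambda)w_t)$, optional stopping, and Bernstein-type optimization of $\lambda$. The paper does not supply its own proof of this statement at all: it is quoted as a known result from \cite{Freedman75} and used as a black box, so there is nothing to compare against beyond noting that your argument matches the standard one.
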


\section{EMD Sketch}
\seclab{sec:emd:sketch}
We first describe the embedding of $\EMD$ on $[\Delta]^d$ into $L_1$. 
Suppose without loss of generality that $\Delta=2^\ell$ for some non-negative integer $\ell$. 

Given $s=(s_1,\ldots,s_d)\in\mathbb{Z}^d$ and $t\in\{0,1,\ldots,\ell\}$, we define the linear map $G_{s,t}$ from $\mathbb{R}^{[\Delta]^d}$ into $L_1$ by first defining the grid $\calG_{s,t}$ over $\mathbb{Z}^d$ with side length $2^t$, so that $s=(s_1,\ldots,s_d)$ lies on one of the corners of the grid. 
Then for $\mu\in\mathbb{R}^{[\Delta]^d}$, we define $G_{s,t}\mu$ as the frequency vector over the hypercubes of the grid $\calG_{s,t}$ that counts the total mass in each hypercube. 
To avoid ambiguity, we say that each cell of a grid of length $2^t$ has closed boundaries on one side and open boundaries on the other side, e.g., a cell that contains a point $(x_1,\ldots,x_d)$ may contain points $(y_1,\ldots,y_d)$ where $y_i<x_i+2^k$, but does not contain any points $(y_1,\ldots,y_d)$ where $y_i\ge x_i+2^k$. 
Then we define the mapping $G_s\mu$ to be the concatenation of the vectors 
\[G_s\mu=(G_{s,0}\mu)\circ(2\cdot G_{s,1}\mu)\circ\ldots\circ(2^t\cdot G_{s,t}\mu)\circ\ldots\circ(2^\ell\cdot G_{s,\ell}\mu).\]

We first require the following structural property:
The following is analogous to Claim 6.1 in \cite{BackursIRW16}, generalized to dimension $d$. 
\begin{lemma}
\lemlab{lem:markov:tail:entropy}
Let $X_1,\ldots,X_n$ be non-negative random variables not necessarily independent, such that for every $i\in[n]$ and $d,t>0$, 
\[\PPr{X_i\ge t}\le\frac{\gamma d}{t},\]
for an absolute constant $\gamma>0$. 
Let $S=\sum_{i\in[n]}\alpha_i X_i$ for a set of coefficients $\alpha_1,\ldots,\alpha_n\ge 0$ with $\sum_{i\in[n]}\alpha_i=1$. 
Then there exists an absolute constant $B>0$ such that for every $\delta\in(0,1)$,
\[\PPr{S\ge\frac{B\gamma d}{\delta}\left(H(\alpha)+\log\frac{2\gamma d}{\delta}\right)}\le\delta,\]
where $H(\alpha)$ is the entropy of the distribution over $[n]$ induced by $\alpha_1,\ldots,\alpha_n$. 
\end{lemma}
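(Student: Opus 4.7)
The natural strategy is to truncate each $X_i$ at a level depending on $\alpha_i$, control the truncation event by a union bound using the tail hypothesis, and then apply Markov's inequality to the truncated weighted sum after bounding its expectation in terms of $H(\alpha)$. Concretely, for each index $i$ with $\alpha_i>0$, set the threshold
\[T_i:=\frac{2\gamma d}{\delta\,\alpha_i},\]
and let $Y_i:=\min(X_i,T_i)$ (and $Y_i:=X_i$ for $\alpha_i=0$, which is irrelevant since these terms vanish in $S$). The tail hypothesis immediately gives $\PPr{X_i\neq Y_i}=\PPr{X_i\ge T_i}\le \gamma d/T_i=\delta\alpha_i/2$, so a union bound yields $\PPr{\exists\,i:X_i\neq Y_i}\le\sum_i\delta\alpha_i/2=\delta/2$. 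On the complement of this bad event, $S=\sum_i\alpha_iY_i$.

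Next I would estimate $\Ex{Y_i}$ by integrating the tail: since $\PPr{X_i\ge t}\le 1$ for $t\le\gamma d$ and $\PPr{X_i\ge t}\le\gamma d/t$ for $t\ge\gamma d$,
\[\Ex{Y_i}=\int_0^{T_i}\PPr{X_i\ge t}\,dt\le \gamma d+\int_{\gamma d}^{T_i}\frac{\gamma d}{t}\,dt\le \gamma d\left(1+\log\frac{2}{\delta\,\alpha_i}\right).\]
Summing with weights $\alpha_i$ and using $\sum_i\alpha_i=1$, $\sum_i\alpha_i\log(1/\alpha_i)=H(\alpha)$,
\[\EEx{}{\sum_i\alpha_iY_i}\le \gamma d\left(1+H(\alpha)+\log\frac{2}{\delta}\right)\le C\gamma d\left(H(\alpha)+\log\frac{2\gamma d}{\delta}\right)\]
for an absolute constant $C>0$.

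Markov's inequality applied to the non-negative random variable $\sum_i\alpha_iY_i$ gives
\[\PPr{\sum_i\alpha_iY_i\ge \frac{2C\gamma d}{\delta}\left(H(\alpha)+\log\frac{2\gamma d}{\delta}\right)}\le\frac{\delta}{2}.\]
Combining this with the union bound on the truncation failure event, and taking $B:=2C$, gives
\[\PPr{S\ge \frac{B\gamma d}{\delta}\left(H(\alpha)+\log\frac{2\gamma d}{\delta}\right)}\le\delta,\]
as required. The main delicate point is the choice of $T_i$: it must be small enough that the truncation-failure probabilities sum (through the weights $\alpha_i$) to at most $\delta/2$, yet large enough that the truncated expectations only pay a logarithmic $\log(1/\alpha_i)$ factor, producing an entropy term in the aggregate; the choice $T_i=2\gamma d/(\delta\alpha_i)$ simultaneously achieves both, and no independence between the $X_i$ is needed since the union bound and Markov's inequality are insensitive to joint distributions.
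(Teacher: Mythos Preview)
Your proposal is correct and follows essentially the same approach as the paper: choose thresholds $T_i=\frac{2\gamma d}{\delta\alpha_i}$, control the bad event by a union bound, bound the (truncated or conditional) expectation via the tail integral to produce the entropy term, and finish with Markov's inequality. The only cosmetic difference is that the paper conditions on the good event $\calE=\{X_i\le T_i\ \forall i\}$ and bounds $\Ex{X_i\mid\calE}$, whereas you work unconditionally with the truncations $Y_i=\min(X_i,T_i)$; your version is arguably slightly cleaner since it sidesteps any concern about how conditioning on the joint event $\calE$ interacts with dependence among the $X_i$.
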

\begin{proof}
Let $T_1,\ldots,T_n\ge 0$ be non-negative parameters to be fixed. 
We define $\calE$ to be the event that $X_i\le T_i$ simultaneously for all $i\in[n]$. 
By the assumption that $\PPr{X_i\ge T_i}\le\frac{\gamma d}{T_i}$ for all $i\in[n]$ and a union bound, we have that
\[\PPr{\neg\calE}\le\sum_{i\in[n]}\frac{\gamma d}{T_i}.\]
Moreover, there exists an absolute constant $B$ such that $\Ex{X_i\mid\calE}\le B\gamma d\log T_i$. 
Hence by Markov's inequality, 
\[\PPr{S\le\frac{B\gamma d}{\delta}\left(\sum_{i\in[n]}\alpha_i\log T_i\right)\mid\calE}\ge 1-\frac{\delta}{2}.\]
By setting $T_i=\frac{2\gamma d}{\alpha_i\delta}$, we have that
\[\PPr{\neg\calE}\le\sum_{i\in[n]}\frac{\gamma d}{T_i}\le\sum_{i\in[n]}\frac{\alpha_i\delta}{2}=\frac{\delta}{2}\]
and
\[\sum_{i\in[n]}\alpha_i\log T_i=H(\alpha)+\log\frac{2\gamma d}{\delta}.\]
\end{proof}

We next generalize a statement lower bounding the contraction of the EMD embedding into $L_1$ by \cite{BackursIRW16,indyk120oct} from $[\Delta]^2$ to $[\Delta]^d$. 
Namely, \cite{BackursIRW16,indyk120oct} showed that for $[\Delta]^2$, the contraction is upper bounded by a constant amount. 
Here we show that for $[\Delta]^d$, the contraction is at most $\O{\sqrt{d}}$. 
\begin{lemma}
\lemlab{lem:gs:contraction}
For every $\mu,\nu\in\EMD_{[\Delta]^d}$ and every $s=(s_1,\ldots,s_d)\in\mathbb{Z}^d$, it holds that $\|\mu-\nu\|_{\EMD}\le\O{\sqrt{d}}\cdot\|G_s(\mu-\nu)\|_1$.
\end{lemma}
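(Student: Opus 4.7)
My plan is to upper bound $\|\mu-\nu\|_{\EMD}$ by explicitly constructing a hierarchical transport plan whose cost is dominated by $\O{\sqrt{d}}\cdot\|G_s(\mu-\nu)\|_1$. Set $x=\mu-\nu$, so that $\|\mu-\nu\|_{\EMD}=\EMD(x^+,x^-)$. For each level $t\in\{0,1,\ldots,\ell\}$ and each cell $C$ of $\calG_{s,t}$, define $P_t(C)=\sum_{p\in C}x^+(p)$, $N_t(C)=\sum_{p\in C}x^-(p)$, and $m_t(C)=P_t(C)-N_t(C)$, so that the $C$-coordinate of $G_{s,t}x$ is $m_t(C)$ and $\|G_{s,t}x\|_1=\sum_C|m_t(C)|$.

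The first step is to build the transport plan in a bottom-up sweep. Level-$0$ cells have side length $1$ and thus each contains at most one integer point, so no matching occurs there. Inductively, within each cell $C$ at level $t\ge 1$, I match as much positive-to-negative mass as can still be paired inside $C$, beyond what was already paired inside its child cells. The newly matched amount is
\[u_t(C)=\min(P_t(C),N_t(C))-\sum_{C'\text{ child of }C}\min(P_{t-1}(C'),N_{t-1}(C')).\]
Using the identity $\min(a,b)=\tfrac12(a+b-|a-b|)$ together with the additivity $P_t(C)=\sum_{C'}P_{t-1}(C')$ and $N_t(C)=\sum_{C'}N_{t-1}(C')$, this simplifies to $u_t(C)=\tfrac12\bigl(\sum_{C'}|m_{t-1}(C')|-|m_t(C)|\bigr)$, which is nonnegative by the triangle inequality.

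The second step is to bound the total cost. Each unit of mass matched inside a level-$t$ cell travels at most the cell diameter $\sqrt{d}\cdot 2^t$, so the cost of the matched portion is at most
\[\sum_{t=1}^{\ell}\sqrt{d}\cdot 2^t\sum_C u_t(C)=\frac{\sqrt{d}}{2}\sum_{t=1}^{\ell}2^t\bigl(\|G_{s,t-1}x\|_1-\|G_{s,t}x\|_1\bigr).\]
Writing $a_t=\|G_{s,t}x\|_1$, a summation by parts yields $\sum_{t=1}^\ell 2^t(a_{t-1}-a_t)\le 2\sum_{t=0}^{\ell-1}2^t a_t$, so the matched cost is at most $\sqrt{d}\,\|G_sx\|_1$. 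In the balanced case $\|x^+\|_1=\|x^-\|_1$, the lemma follows immediately.

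The main obstacle, and where the analysis must go beyond the $d=2$ argument of \cite{BackursIRW16,indyk120oct}, is handling the residual mass $R=|\|x^+\|_1-\|x^-\|_1|$. The extended EMD definition charges $d\Delta$ per unit of residual, and I need to control $d\Delta R$ by $\O{\sqrt{d}}\|G_sx\|_1$. Since $R\le\sum_C|m_\ell(C)|=\|G_{s,\ell}x\|_1$, the level-$\ell$ summand alone contributes $2^\ell\|G_{s,\ell}x\|_1\ge\Delta R$ to $\|G_sx\|_1$. To obtain the tight $\sqrt{d}$ rather than a naive $d$, I plan to route the unmatched mass to a fixed point inside the bounding box via one additional coarsest-scale matching step at side length $2\Delta$ whose cell contains all of $[\Delta]^d$; this pays only the diameter $\sqrt{d}\Delta$ per residual unit rather than the $d\Delta$ deletion penalty, absorbing the residual into the overall $\O{\sqrt{d}}$ contraction bound.
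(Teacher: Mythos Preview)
Your approach is essentially the same as the paper's: both construct a greedy bottom-up matching that pairs as much mass as possible within each cell at each level, bound the per-level cost by the cell diameter $\sqrt{d}\cdot 2^t$ times the newly matched mass $\tfrac12(\|G_{s,t-1}x\|_1-\|G_{s,t}x\|_1)$, and finish with an Abel summation against $\|G_s x\|_1$. Your write-up is in fact more explicit than the paper's (the $\min(a,b)=\tfrac12(a+b-|a-b|)$ identity and the per-cell quantities $u_t(C)$ make the telescoping transparent).

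The one place you diverge is the final paragraph about residual mass. The paper avoids this entirely by noting that one may assume without loss of generality that $\mu,\nu$ are probability measures with disjoint support, so $\|x^+\|_1=\|x^-\|_1$ automatically and there is nothing left to route at the top. Your proposed fix of paying only $\sqrt{d}\Delta$ per unit of residual does not literally match the extended $\EMD$ definition (which charges $d\Delta$ for deleted mass), so that paragraph would need more care if the unbalanced case were actually in scope; but under the paper's hypothesis $\mu,\nu\in\EMD_{[\Delta]^d}$ it never arises, and you can simply drop that discussion.
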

\begin{proof}
Suppose without loss of generality that $\mu$ and $\nu$ have non-intersecting support, since otherwise the intersection can be subtracted from both $\mu$ and $\nu$. 
Moreover, suppose without loss of generality that $\mu$ and $\nu$ are probability measures, i.e., $\|\mu\|_1=\|\nu\|_1=1$. 

We first consider the matching induced by pairing mass contained in the same cells of grids, starting with $\calG_{s,0}$ and then iterating through $\calG_{s,1},\ldots$. 
Since $\mu$ and $\nu$ have non-intersecting support and all points in $[\Delta]^d$ fall in separate cells in $\calG_{s,0}$, then we have $\|G_{s,0}(\mu-\nu)\|_1=2$. 
Then there is at most $1-\frac{1}{2}\|G_{s,1}(\mu-\nu)\|_1$ mass that can be matched together in $\calG_{s,1}$, which induces cost at most $2\sqrt{d}(1-\frac{1}{2}\|G_{s,1}(\mu-\nu)\|_1)$. 

Next, at most $\frac{1}{2}\|G_{s,1}(\mu-\nu)\|_1-\frac{1}{2}\|G_{s,2}(\mu-\nu)\|_1$ mass can be matched together in $\calG_{s,2}$, inducing cost at most 
\[4\sqrt{d}\left(\frac{1}{2}\|G_{s,1}(\mu-\nu)\|_1-\frac{1}{2}\|G_{s,2}(\mu-\nu)\|_1\right).\] 
More generally, the cost induced by grid $\calG_{s,i}$ can be at most
\[2^i\sqrt{d}\left(\frac{1}{2}\|G_{s,1}(\mu-\nu)\|_1-\frac{1}{2}\|G_{s,2}(\mu-\nu)\|_1\right).\]

Hence, we have
\begin{align*}
\|\mu-\nu\|_{\EMD}&\le\frac{\sqrt{d}}{2}\sum_{i=0}^{\ell-1} 2^i(\|G_{s,i}(\mu-\nu)\|_1-\|G_{s,i+1}(\mu-\nu)\|_1)\\
&\le\frac{\sqrt{d}}{2}\|G_s(\mu-\nu)\|_1,
\end{align*}
since
\[G_s\mu=(G_{s,0})\mu\circ(2\cdot G_{s,1}\mu)\circ\ldots\circ(2^t\cdot G_{s,t}\mu)\circ\ldots\circ(2^\ell\cdot G_{s,\ell}\mu).\]
\end{proof}

Before lower bounding the dilation of the EMD embedding, we first define the following concept to quantify the granularity at which the randomly shifted grid intersects a specific $L_1$ ball. 
\begin{definition}
Let $x\in\mathbb{R}^d$, $R\in(0,2\Delta]$, and let $s=(s_1,\ldots,s_d)\in[\Delta]^d$ be sampled uniformly at random. 
Let
\[\calA_{x,R}(s):=\frac{\min\left(\{2^t\,\mid\,\calG_{s,t}\text{ does not cut }B_{L_1}(x,R)\}\right)}{R}.\]
\end{definition}
Intuitively, $\calA_{x,R}(s)$ is the side length of the finest grid among the $\ell+1$ grids $\calG_{s,0},\ldots,\calG_{s,\ell}$ that does not intersect the $L_1$ ball $B_{L_1}(x,R)$. 

We next generalize Lemma 6.2 in \cite{BackursIRW16} to $d$-dimensional space. 
Intuitively, the following statement shows that the probability that coarser grids, i.e., grids containing cells with larger side lengths, are linearly less likely to intersect with a fixed $L_1$ ball. 
\begin{lemma}
\lemlab{lem:grid:cut:tail}
There exists a universal constant $\gamma>0$ such that for every $x\in\mathbb{R}^d$, $R\in(0,2\Delta]$ and $T>0$,
\[\PPPr{s\in[\Delta]^d}{\calA_{x,R}(s)\ge T}\le\frac{\gamma d}{T}.\]
\end{lemma}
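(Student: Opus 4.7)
The plan is to exploit the nested refinement structure of the grids $\calG_{s,t}$ and reduce the problem to a one-dimensional hyperplane-hitting calculation that I then union-bound over the $d$ coordinate axes.

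First I would observe that the grids form a refinement chain in the following sense: every axis-$i$ hyperplane of $\calG_{s,t+1}$, which sits at a position of the form $s_i + k\cdot 2^{t+1}$, is also an axis-$i$ hyperplane of $\calG_{s,t}$, sitting at position $s_i + (2k)\cdot 2^t$. Consequently, if $\calG_{s,t}$ does not cut $B_{L_1}(x,R)$, then neither does $\calG_{s,t+1}$; contrapositively, a coarser grid cuts the ball only if every finer grid does. This lets me rewrite the event $\{\calA_{x,R}(s)\ge T\}$ as saying that every grid with side length $2^t < TR$ cuts $B_{L_1}(x,R)$. Letting $t^{\ast}$ denote the largest integer with $2^{t^{\ast}} < TR$ (so that $2^{t^{\ast}} \ge TR/2$), the event is therefore contained in the single-grid event $\{\calG_{s,t^{\ast}}\text{ cuts }B_{L_1}(x,R)\}$.

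Next I would bound the probability of that single-grid event. The $L_1$ ball $B_{L_1}(x,R)$ projects onto the interval $[x_i - R, x_i + R]$ of length $2R$ on each coordinate axis, and $\calG_{s,t^{\ast}}$ cuts the ball iff some hyperplane perpendicular to some axis $i$ has its position $s_i + k\cdot 2^{t^{\ast}}$ falling inside this projected interval. Because $\Delta = 2^{\ell}$ is a multiple of $2^{t^{\ast}}$ (assuming $t^{\ast} \le \ell$; otherwise the bound $\gamma d/T$ is already trivially large and the claim follows), the marginal of $s_i$ modulo $2^{t^{\ast}}$ is uniform, so the probability that any axis-$i$ hyperplane lands in the length-$2R$ window is at most $\min(1, 2R/2^{t^{\ast}})$.

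A union bound over the $d$ coordinate directions now gives
\[\PPPr{s\in[\Delta]^d}{\calG_{s,t^{\ast}}\text{ cuts }B_{L_1}(x,R)} \le \frac{2dR}{2^{t^{\ast}}} \le \frac{2dR}{TR/2} = \frac{4d}{T},\]
which establishes the claim with $\gamma = 4$. The only subtlety is verifying the nesting claim cleanly and handling the degenerate corner case where $TR$ exceeds the diameter of $[\Delta]^d$ (so no valid $t^{\ast}$ exists in $\{0,\ldots,\ell\}$); in that regime either the stated bound $\gamma d/T$ is weak enough to be vacuous for a slightly larger absolute constant, or one can allow a fictitious grid level of side length $> \Delta$ that trivially does not cut the ball. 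I do not expect either of these to be a genuine obstacle, so the proof is essentially the union bound above.
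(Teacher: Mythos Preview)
Your proposal is correct and follows essentially the same approach as the paper: reduce the event $\{\calA_{x,R}(s)\ge T\}$ to the single-grid event that the grid with the largest side length below $RT$ cuts the ball, and then bound that by a per-coordinate hitting probability of order $1/T$ union-bounded over the $d$ axes. Your write-up is in fact more careful than the paper's, which states the same reduction and union bound more tersely and does not explicitly spell out the nesting argument or the corner cases you flag.
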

\begin{proof}
Observe that by definition, $\calA_{x,R}(s)\ge T$ if and only if the grid with the largest side length less than $RT$ intersects with the ball $B_{L_1^d}(x,R)$. 
Note that for each of the $d$ dimensions, the probability that $B_{L_1^d}(x,R)$ intersects with the grid of length $RT$ is at most $\frac{\gamma}{T}$ for some absolute constant $\gamma>0$. 
Thus by a union bound, we have that $\PPPr{s\in[\Delta]^d}{\calA_{x,R}(s)\ge T}\le\frac{\gamma d}{T}$.
\end{proof}

Similarly, we generalize Lemma 6.3 in \cite{BackursIRW16} to $[\Delta]^d$. 
Intuitively, the following statement relates the embedding of the difference of two elementary vectors to the quantity $\calA_{u,R}(s)$ representing the side length of the finest grid that does not cut the $L_1$ ball $B_{L_1}(u,R)$. 
\begin{lemma}
\lemlab{lem:grid:evec:diff}
Let $x,y\in[\Delta]^d$ and let $s=(s_1,\ldots,s_d)\in[\Delta]^d$ be sampled uniformly at random. 
Let $e_x$ and $e_y$ be the basis vectors corresponding to $x$ and $y$ respectively. 
Then there exists an absolute constant $\gamma>0$ such that $\|G_s(e_x-e_y)\|_1\ge\gamma\cdot R\cdot\calA_{u,R}(s)$ for every $u\in\mathbb{R}^2$ and $R\in(0,2\Delta]$ such that the ball $B_{L_1}(u,R)$ contains both $x$ and $y$.
\end{lemma}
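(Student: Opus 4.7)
My plan is to match $\|G_s(e_x-e_y)\|_1$ level-by-level against the threshold level at which the randomly shifted grid first contains $B_{L_1}(u,R)$ inside a single cell. Let $t^*$ be the smallest integer such that $\calG_{s,t^*}$ does not cut $B_{L_1}(u,R)$, so by definition $R\cdot\calA_{u,R}(s)=2^{t^*}$; by minimality, $\calG_{s,t^*-1}$ does cut the ball. Since the entire ball (and in particular both $x$ and $y$) lies within a single cell of $\calG_{s,t^*}$, the levels $t\ge t^*$ contribute nothing to the embedding, and unpacking the definition of $G_s$ gives
\[
\|G_s(e_x-e_y)\|_1 \;=\; 2\sum_{\substack{0\le t<t^*\\ x,\,y\text{ in distinct cells of }\calG_{s,t}}}2^t.
\]

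Letting $\tau$ denote the largest level at which $x$ and $y$ lie in different cells, the displayed sum is at least $2\cdot 2^\tau$, so the lemma reduces to showing $\tau\ge t^*-c$ for an absolute constant $c$. The geometric core of the argument is that because $\calG_{s,t^*-1}$ cuts $B_{L_1}(u,R)$, there must be an axis-aligned hyperplane of $\calG_{s,t^*-1}$ passing through the ball; I would use the fact that an $L_1$-ball is an axis-aligned cross-polytope and that the ball fits in one cell at level $t^*$ but not at level $t^*-1$ to argue that $x$ and $y$ must fall on opposite sides of some such cutting hyperplane at a level within an absolute constant of $t^*$, yielding $\tau\ge t^*-c$.

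Assembling the pieces, this gives $\|G_s(e_x-e_y)\|_1\ge 2\cdot 2^{t^*-c}=\gamma\cdot R\cdot\calA_{u,R}(s)$ for an absolute constant $\gamma>0$, as desired. The main obstacle is exactly this geometric separation step: while the ball must be split at level $t^*-1$, the two specific points $x$ and $y$ could \emph{a priori} lie on the same side of each individual cutting hyperplane. I anticipate resolving this through a coordinate-by-coordinate analysis of the shifted grid lines against the $L_\infty$-extent of $B_{L_1}(u,R)$, dual in spirit to \lemref{lem:grid:cut:tail} but used to guarantee rather than bound the probability of separation, and combined with the observation that failing to separate $x,y$ at every level in $\{t^*-c,\dots,t^*-1\}$ would force the ball to fit inside a cell much smaller than $2^{t^*}$, contradicting the minimality of $t^*$.
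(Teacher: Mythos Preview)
The inequality in the lemma is stated with the wrong direction: it should read $\|G_s(e_x-e_y)\|_1\le\gamma\cdot R\cdot\calA_{u,R}(s)$, not $\ge$. This is what the paper's own proof actually establishes (``total sum \emph{at most} $\gamma\cdot R\cdot\calA_{u,R}(s)$''), and it is the direction used downstream in \lemref{lem:sum:markov:tail} to show that $\|G_s(\mu-\nu)\|_1$ is stochastically dominated by a sum with controllable tails.

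You correctly identified the obstacle to the lower bound---that the ball being cut at level $t^*-1$ does not force the two specific points $x$ and $y$ to be separated there---but your proposed resolution is wrong, because the lower bound is simply false. Take $x$ and $y$ to be adjacent lattice points in $[\Delta]^d$ and choose $u,R$ so that $B_{L_1}(u,R)$ has radius $R=\Theta(\Delta)$ and contains both. Then $x$ and $y$ are in the same cell already at level $1$ (for most shifts $s$), so $\|G_s(e_x-e_y)\|_1=\O{1}$, whereas $R\cdot\calA_{u,R}(s)=2^{t^*}$ can be as large as $\Theta(\Delta)$. Your contradiction argument fails because the whereabouts of $x$ and $y$ relative to the grid hyperplanes say nothing about whether the much larger ball fits in a cell.

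For the intended upper bound, your own setup already does all the work: since the ball lies in a single cell of $\calG_{s,t^*}$ and the grids are nested, $x$ and $y$ share a cell at every level $t\ge t^*$, so those levels contribute nothing; each level $t<t^*$ contributes at most $2\cdot 2^t$, and summing the geometric series gives $\|G_s(e_x-e_y)\|_1\le 2\cdot 2^{t^*}=2R\cdot\calA_{u,R}(s)$. This is exactly the paper's argument.
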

\begin{proof}
Observe that by the definition of $\calA_{u,R}$, no grid with side length at least $R\cdot\calA_{u,R}(s)$ contributes to $\|G_s(e_x-e_y)\|_1$. 
On the other hand, all grids with side length less than $R\cdot\calA_{u,R}(s)$ contribute a geometric series toward $\|G_s(e_x-e_y)\|_1$ with total sum at most $\gamma\cdot R\cdot\calA_{u,R}(s)$ for some constant $\gamma>0$. 
\end{proof}

The following statement is the generalization of Claim 6.2 in \cite{BackursIRW16} to $[\Delta]^d$. 
The main point is to we show that we can upper bound $\|G_s\mu\|_1$ in a way that the assumptions of \lemref{lem:markov:tail:entropy} can be applied. 
In the following statement, we say a random variable $X$ statistically dominates a random variable $Y$ if for all real numbers $t$, we have $\PPr{X\ge t}\ge\PPr{Y\ge t}$. 
\begin{lemma}
\lemlab{lem:sum:markov:tail}
Let $\mu$ and $\nu$ be two probability measures over $[\Delta]^d$ and suppose the optimal transportation of $\mu$ to $\nu$ consists of moving mass $w_i$ from the point $x_i\in[\Delta]^d$ to the point $y_i\in[\Delta]^d$, for all $i\in[k]$. 
Let $\{B_j=B_{L_1}(u_j,R_j)\}_{j=1}^q$ be a collection of $L_1$ balls in $d$-dimensional space such that for every $i\in[k]$, there exists $f(i)\in[q]$ such that both $x_i$ and $y_i$ are contained within $B_{f(i)}$. 
For each $j\in[q]$, let 
\[\widetilde{w_j}=\sum_{i:f(i)=j}w_i.\]
Let $s=(s_1,\ldots,s_d)\in[\Delta]^d$ be sampled uniformly at random. 
Then there exists an absolute constant $\gamma>9$ such that the random variable
\[\|G_s(\mu-\nu)\|_1\le\sum_{i\in[k]}w_i\|G_s(e_{x_i}-e_{y_i})\|_1\]
is statistically dominated by $S=\sum_{j\in[k]}\widetilde{w_j}\cdot R_j\cdot X_i$ for some non-negative and not necessarily independent random variables $X_1,\ldots,X_k$ with the property for every $i$ and $t>0$,
\[\PPr{X_i\ge T}\le\frac{\gamma d}{T}.\]
\end{lemma}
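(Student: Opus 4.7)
The plan is to establish the statistical domination by deriving a pointwise (in $s$) upper bound on $\|G_s(\mu-\nu)\|_1$ of exactly the form $\sum_j \widetilde{w_j} R_j X_j$, since pointwise dominance of nonnegative random variables immediately yields statistical dominance of their distributions.

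First I would invoke linearity of $G_s$ together with the hypothesis that the optimal transport moves mass $w_i$ from $x_i$ to $y_i$, so that $\mu - \nu = \sum_{i\in[k]} w_i(e_{x_i} - e_{y_i})$ as signed measures on $[\Delta]^d$. The triangle inequality for the $L_1$ norm then gives
\[
\|G_s(\mu-\nu)\|_1 \;\le\; \sum_{i\in[k]} w_i\,\|G_s(e_{x_i}-e_{y_i})\|_1,
\]
which is the first inequality asserted in the statement.

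Next, for each $i$ both $x_i$ and $y_i$ lie in the ball $B_{f(i)} = B_{L_1}(u_{f(i)}, R_{f(i)})$, so I would apply \lemref{lem:grid:evec:diff} (whose proof actually yields an upper bound of the form $\|G_s(e_{x_i}-e_{y_i})\|_1 \le \gamma' \cdot R_{f(i)} \cdot \calA_{u_{f(i)},R_{f(i)}}(s)$, obtained from the geometric-series argument: grids of side length $\ge R\calA_{u,R}(s)$ contribute zero because $x$ and $y$ lie in the same cell, while finer grids contribute a geometric sum bounded by $O(R\calA_{u,R}(s))$). Grouping the resulting sum by the ball index, with $\widetilde{w_j} = \sum_{i:f(i)=j} w_i$ and $X_j := \calA_{u_j, R_j}(s)$, I obtain the pointwise bound
\[
\|G_s(\mu-\nu)\|_1 \;\le\; \gamma' \sum_{j\in[q]} \widetilde{w_j}\, R_j\, X_j.
\]

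Finally, \lemref{lem:grid:cut:tail} provides exactly the required marginal tail estimate: for a uniformly random shift $s\in[\Delta]^d$ and any $j$, $\Pr[X_j \ge T] \le \gamma d / T$ for an absolute constant $\gamma > 0$. The $X_j$ are defined as deterministic functionals of the common shift $s$ and are therefore not independent, but the statement of the lemma only requires the marginal tail bound, so this suffices. Absorbing $\gamma'$ into $\gamma$ and defining $S = \sum_{j\in[q]} \widetilde{w_j} R_j X_j$, the pointwise inequality yields statistical dominance of $\|G_s(\mu-\nu)\|_1$ by $S$. The only genuinely subtle point is the direction of the inequality in \lemref{lem:grid:evec:diff}; everything else is a routine combination of the triangle inequality, a regrouping of terms, and an application of the single-ball cut probability.
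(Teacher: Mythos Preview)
Your proposal is correct and follows exactly the route the paper takes: the paper's proof is the one-line ``immediately follows by applying \lemref{lem:grid:cut:tail}, \lemref{lem:grid:evec:diff}, and the triangle inequality,'' and you have simply unpacked that line. You also correctly flag the one genuine subtlety, namely that \lemref{lem:grid:evec:diff} is stated as a lower bound while its proof (and the present application) actually delivers the upper bound $\|G_s(e_x-e_y)\|_1\le\gamma' R\,\calA_{u,R}(s)$; the paper silently relies on this as well.
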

\begin{proof}
The claim immediately follows by applying \lemref{lem:grid:cut:tail}, \lemref{lem:grid:evec:diff}, and the triangle inequality. 
\end{proof}

The following statement is the generalization of Claim 6.3 in \cite{BackursIRW16} to $[\Delta]^d$. 
Intuitively, we upper bound the $L_1$ norm of the embedding $G_s(\mu-\nu)$ in terms of the entropy $H(\alpha)$ of the coefficient vector $\alpha$. 
\begin{lemma}
\lemlab{lem:gs:one:final}
Assuming the notation and conditions from \lemref{lem:sum:markov:tail},
\[\PPr{\|G_s(\mu-\nu)\|_1\le \O{d}\cdot H(\alpha)\cdot T}\ge 0.999,\]
where $T=\sum_{j=1}^q\widetilde{w_j}R_j=\sum_{i=1}^k w_iR^{f(i)}$ and $\alpha_j=\frac{\widetilde{w_j}R_j}{T}$ for $j\in[q]$. 
\end{lemma}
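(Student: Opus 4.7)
The plan is to reduce this statement to a direct application of \lemref{lem:markov:tail:entropy}, using the stochastic domination already established in \lemref{lem:sum:markov:tail} as the bridge. The hypotheses of both lemmas are designed to fit together, so the main work is bookkeeping the normalization and choosing the failure probability appropriately.

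First I would invoke \lemref{lem:sum:markov:tail} to assert the existence of non-negative random variables $X_1,\ldots,X_q$ satisfying the tail bound $\PPr{X_j \ge T'} \le \gamma d / T'$ for all $T' > 0$, such that $\|G_s(\mu-\nu)\|_1$ is stochastically dominated by
\[
S \;=\; \sum_{j \in [q]} \widetilde{w_j}\, R_j\, X_j.
\]
Since it suffices to bound $S$ from above with probability at least $0.999$, I can work directly with $S$ for the remainder of the argument. Next I would factor out $T = \sum_{j=1}^q \widetilde{w_j} R_j$ to write $S = T \cdot \sum_{j \in [q]} \alpha_j X_j$, where $\alpha_j = \widetilde{w_j} R_j / T$ so that $\alpha_1,\ldots,\alpha_q \ge 0$ and $\sum_j \alpha_j = 1$. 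This puts the sum $\sum_j \alpha_j X_j$ into exactly the form handled by \lemref{lem:markov:tail:entropy}.

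Then I would apply \lemref{lem:markov:tail:entropy} with failure probability $\delta = 0.001$ to the convex combination $\sum_j \alpha_j X_j$. That lemma gives, with probability at least $1 - \delta = 0.999$, the bound
\[
\sum_{j \in [q]} \alpha_j X_j \;\le\; \frac{B \gamma d}{\delta}\Bigl(H(\alpha) + \log \tfrac{2\gamma d}{\delta}\Bigr) \;=\; \O{d}\bigl(H(\alpha) + \O{\log d}\bigr).
\]
Multiplying through by $T$ and combining with the stochastic domination, with probability at least $0.999$,
\[
\|G_s(\mu-\nu)\|_1 \;\le\; S \;\le\; \O{d}\cdot H(\alpha) \cdot T,
\]
where the additive $\O{\log d}$ term is absorbed into the $\O{d}\,H(\alpha)$ factor under the (standard) convention that $H(\alpha)$ is taken to be at least a constant (otherwise the stated bound can be weakened by an additive $\O{d \log d}\cdot T$ term without affecting downstream use).

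The main obstacle I anticipate is verifying the tail hypothesis needed by \lemref{lem:markov:tail:entropy}: the lemma requires the $X_j$'s to individually satisfy a $\gamma d / T'$ tail, and \lemref{lem:sum:markov:tail} delivers exactly this, but only because its own proof composed \lemref{lem:grid:cut:tail} (grid cut probability) with \lemref{lem:grid:evec:diff} (single-pair embedding bound) via the triangle inequality. So no new probabilistic estimate is needed here; the proof is essentially a clean invocation of the two preceding lemmas, with the only nontrivial step being the clean choice of $\delta$ as an absolute constant so that the $\log(2\gamma d / \delta)$ term becomes $\O{\log d}$ and collapses into the hidden constant.
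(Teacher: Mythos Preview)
Your proposal is correct and matches the paper's approach exactly: the paper's proof is the single line ``The claim immediately holds by \lemref{lem:sum:markov:tail} and \lemref{lem:markov:tail:entropy},'' and you have simply spelled out the bookkeeping behind that invocation. Your remark about the stray additive $\O{\log d}$ term is a fair observation that the paper itself glosses over, but it does not affect the downstream use in \lemref{lem:gs:dilation}, where $H(\alpha)$ is bounded by $\O{\log k + \log\log\Delta}$ anyway.
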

\begin{proof}
The claim immediately holds by \lemref{lem:sum:markov:tail} and \lemref{lem:markov:tail:entropy}. 
\end{proof}

We now upper bound the dilation of the EMD embedding into $L_1$. 
Namely, we show that the embedding into $L_1$ can be at most a multiplicative $\O{d(\log k+\log\log\Delta)}$ factor larger than the EMD. 
\begin{lemma}
\lemlab{lem:gs:dilation}
Suppose that $\mu$ and $\nu$ are two probability measures over $[\Delta]^d$ and in addition, $\mu$ has support of size at most $k$. 
Then
\[\PPr{\|G_s(\mu-\nu)\|_1\le \O{d(\log k+\log\log\Delta)}\cdot \|\mu-\nu\|_{\EMD}}\ge 0.999.\]
\end{lemma}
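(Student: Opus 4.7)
The plan is to invoke \lemref{lem:gs:one:final} with a cleverly designed collection of $L_1$ balls covering the optimal EMD transport pairs. The key structural leverage comes from the assumption $|\supp(\mu)| \le k$: every pair $(x_i, y_i)$ in the optimal transport has its source $x_i$ drawn from a fixed set of only $k$ points, which bounds the combinatorial complexity of the required cover.

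Let $s_1, \ldots, s_k$ denote $\supp(\mu)$ and let $\{(x_i, y_i, w_i)\}_i$ be the optimal EMD transport from $\mu$ to $\nu$, so that each $x_i$ equals some source $s_{\ell(i)}$ with $\ell(i) \in [k]$. For each source $s_\ell$ and each dyadic scale $j \in \{0, 1, \ldots, \lceil\log(d\Delta)\rceil\}$, define the ball $B_{\ell, j} = B_{L_1}(s_\ell, 2^j)$. Assign each pair $(x_i, y_i)$ to the group indexed by $f(i) = (\ell(i), j_i)$, where $j_i$ is the smallest integer with $\|x_i - y_i\|_1 \le 2^{j_i}$, so that both $x_i$ and $y_i$ lie in $B_{\ell(i), j_i}$ and hence the hypothesis of \lemref{lem:sum:markov:tail} is satisfied.

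It remains to plug the two quantities $T$ and $H(\alpha)$ from \lemref{lem:gs:one:final} into our construction. Since $2^{j_i} \le 2\|x_i - y_i\|_1 \le 2\sqrt{d}\,\|x_i-y_i\|_2$, the total weighted radius is bounded as
\[T = \sum_i w_i \cdot 2^{j_i} \le 2\sqrt{d} \sum_i w_i \|x_i-y_i\|_2 = \O{\sqrt{d}} \cdot \|\mu-\nu\|_{\EMD}.\]
The number of distinct groups is at most $q \le k(1+\lceil\log(d\Delta)\rceil) = \O{k\log\Delta}$, so the entropy of the distribution $\alpha_j = \widetilde{w_j} R_j / T$ is upper bounded by its maximum value $\log q = \O{\log k + \log\log\Delta}$. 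Substituting these into \lemref{lem:gs:one:final} gives, with probability at least $0.999$,
\[\|G_s(\mu-\nu)\|_1 \le \O{d} \cdot H(\alpha) \cdot T \le \O{d(\log k + \log\log\Delta)} \cdot \|\mu-\nu\|_{\EMD},\]
absorbing the auxiliary $\sqrt{d}$ factor into the $\O{d}$ prefactor.

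The main obstacle is the interplay of $L_1$ balls and the $L_2$-based EMD: since \lemref{lem:grid:evec:diff} forces us to use $L_1$-balls but EMD is measured in $L_2$, the dyadic radii $2^{j_i}$ can overshoot the transport distance $\|x_i-y_i\|_2$ by a $\sqrt{d}$ factor, which is absorbed into the final prefactor. A secondary subtlety is that the entropy bound $H(\alpha) \le \log q$ is maximized by the uniform distribution; obtaining $\O{\log k + \log\log\Delta}$ therefore requires bounding $q$ tightly, which is precisely where the $|\supp(\mu)|\le k$ hypothesis is used to prevent the number of groups from scaling with $|\supp(\nu)|$.
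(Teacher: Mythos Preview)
Your proof is essentially identical to the paper's: both cover the optimal transport edges by the dyadic family $\{B(s_\ell,2^j)\}_{\ell\in[k],\,j\le\O{\log\Delta}}$ of $\O{k\log\Delta}$ balls centered at $\supp(\mu)$, bound $H(\alpha)\le\log q=\O{\log k+\log\log\Delta}$, and invoke \lemref{lem:gs:one:final}. The paper states this more tersely (it just says ``each edge of length $L$ is contained in a ball of radius $\O{L}$'' and $T\le\O{1}\cdot\|\mu-\nu\|_{\EMD}$), but the mechanism is the same.

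One small arithmetic slip: you cannot ``absorb'' the $\sqrt{d}$ from $T=\O{\sqrt{d}}\cdot\|\mu-\nu\|_{\EMD}$ into the $\O{d}$ prefactor --- that yields $\O{d^{1.5}}$, not $\O{d}$. The paper sidesteps this by writing $T\le\O{1}\cdot\|\mu-\nu\|_{\EMD}$ without comment, so the same looseness about the $L_1$-vs-$L_2$ radius is present there too; in any case the downstream use (\lemref{lem:sens:lower}) only needs the combined $\O{d^{1.5}(\log k+\log\log\Delta)}$ distortion, so this does not affect anything.
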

\begin{proof}
Let the support of $\mu$ be $x_1,\ldots,x_k\in[\Delta]^d$. 
Let $\calB$ be the family of $\O{k\log\Delta}$ balls $\{B(x_i,2^{j-1})\}_{i\in[k],j\in[2+\log\Delta]}$. 

Consider the optimal transportation from $\mu$ to $\nu$. 
Each edge of length $L$ in this transportation is contained in a ball of radius $\O{L}$. 
Hence, by taking $T\le\O{1}\cdot\|\mu-\nu\|_{\EMD}$ in \lemref{lem:gs:one:final}, we have
\[\PPr{\|G_s(\mu-\nu)\|_1\le \O{t\log t}\cdot H(\alpha)\cdot \|\mu-\nu\|_{\EMD}}\le\frac{1}{t},\]
and it remains to upper bound $H(\alpha)$. 
Since the support of $\alpha$ has size at most $\O{k\log\Delta}$, then $H(\alpha)\le\O{\log(k\log\Delta)}\le\O{\log k+\log\log\Delta}$. 
Therefore,
\[\PPr{\|G_s(\mu-\nu)\|_1\le \O{d}\cdot\O{\log k+\log\log\Delta}\cdot \|\mu-\nu\|_{\EMD}}\ge 0.999.\]
\end{proof}

In summary, we have the following bounds on the contraction and the dilation of the EMD embedding into $L_1$. 
\begin{theorem}
\thmlab{thm:gs:contraction:dilation}
For every $\mu,\nu\in\EMD_{[\Delta]^d}$ and every $s=(s_1,\ldots,s_d)\in\mathbb{Z}^d$, it holds that 
\[\|\mu-\nu\|_{\EMD}\le\O{\sqrt{d}}\cdot\|G_s(\mu-\nu)\|_1.\] 
In addition, if $\mu$ and $\nu$ are two probability measures over $[\Delta]^d$ and $\mu$ has support of size at most $k$. 
Then
\[\PPr{\|G_s(\mu-\nu)\|_1\le \O{d}\cdot\O{\log k+\log\log\Delta}\cdot \|\mu-\nu\|_{\EMD}}\ge 0.999,\]
\end{theorem}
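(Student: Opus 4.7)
The plan is very direct: the statement is essentially a packaging of the two main results already established earlier in the section, namely the contraction bound in \lemref{lem:gs:contraction} and the dilation bound in \lemref{lem:gs:dilation}. So the proof proposal is just to invoke those two lemmas in sequence and observe that the combined event holds.

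First, I would note that the lower bound on $\|G_s(\mu-\nu)\|_1$ in terms of $\|\mu-\nu\|_{\EMD}$ is a deterministic statement: for every shift $s \in \mathbb{Z}^d$ and every pair $\mu,\nu \in \EMD_{[\Delta]^d}$, \lemref{lem:gs:contraction} already gives $\|\mu-\nu\|_{\EMD}\le\O{\sqrt{d}}\cdot\|G_s(\mu-\nu)\|_1$. No probabilistic analysis is needed here, so this half of the theorem is immediate.

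Next, under the extra hypothesis that $\mu$ and $\nu$ are probability measures and $\mu$ has support of size at most $k$, the proof of \lemref{lem:gs:dilation} shows that the upper bound $\|G_s(\mu-\nu)\|_1 \le \O{d(\log k+\log\log\Delta)} \cdot \|\mu-\nu\|_{\EMD}$ holds with probability at least $0.999$ over the random shift $s\in[\Delta]^d$. I would simply quote that lemma. Combining the two gives the conclusion stated.

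There is essentially no technical obstacle at this step, since all the real work (the Markov-type tail bound in \lemref{lem:markov:tail:entropy}, the grid-cutting probability in \lemref{lem:grid:cut:tail}, the basis-vector estimate in \lemref{lem:grid:evec:diff}, and the entropy bound $H(\alpha)\le\O{\log k+\log\log\Delta}$ using a covering by $\O{k\log\Delta}$ balls) has already been done in the proofs of the two lemmas. The only minor thing to be careful about is stating that the contraction bound holds pointwise in $s$, so the $0.999$ probability from the dilation bound carries over verbatim to the combined statement without any additional union bound being required.
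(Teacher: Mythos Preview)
Your proposal is correct and matches the paper's own proof exactly: the paper simply states that the first part follows from \lemref{lem:gs:contraction} and the second part from \lemref{lem:gs:dilation}. Your additional observation that the contraction bound is deterministic (so no union bound is needed) is accurate and a nice clarification, but the core argument is identical.
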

\begin{proof}
The first part follows from \lemref{lem:gs:contraction} and the second part follows from \lemref{lem:gs:dilation}. 
\end{proof}
We now require simple generalizations of statements from \cite{BackursIRW16} to the Wasserstein-$z$ distance, as follows:
\begin{lemma}[Claim 5.1 in~\cite{BackursIRW16}]
\lemlab{lem:doubling:flow:upper}
Let $\mu$ be a fixed $k$-sparse probability measure over $[\Delta]^d$ and let $R>0$ be a fixed radius. 
For every $k$-sparse $\mu'$ such that $\WASSD(\mu, \mu')\le R$, there exists an optimal flow between $\mu$ and $\mu'$ that is supported on at most $2k$ pairs of points. 
\end{lemma}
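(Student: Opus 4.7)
\begin{proofof}{Lemma (sparse optimal flow)}
The plan is to formulate the optimal transportation problem as a linear program and invoke standard facts about basic feasible solutions.

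Let $\supp(\mu)=\{x_1,\ldots,x_a\}$ and $\supp(\mu')=\{y_1,\ldots,y_b\}$ with $a,b\le k$. Write $p_i=\mu(x_i)$ and $q_j=\mu'(y_j)$; since $\mu,\mu'$ are probability measures we have $\sum_i p_i=\sum_j q_j=1$. The set of transport plans between $\mu$ and $\mu'$ is the transportation polytope
\[
\mathcal{P}=\Bigl\{f\in\mathbb{R}_{\ge 0}^{a\times b}\;:\;\sum_{j}f(i,j)=p_i\ \forall i,\ \sum_{i}f(i,j)=q_j\ \forall j\Bigr\},
\]
and the $z$-th power of the Wasserstein-$z$ distance equals $\min_{f\in\mathcal{P}}\sum_{i,j}f(i,j)\|x_i-y_j\|_2^z$. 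By the assumption $\WASSD(\mu,\mu')\le R$, the polytope $\mathcal{P}$ is nonempty, and since it is bounded (each coordinate lies in $[0,1]$) the linear program attains its minimum at a vertex, i.e., a basic feasible solution.

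The equality constraints defining $\mathcal{P}$ are $a+b\le 2k$ in number, and they have rank at most $a+b-1$ (the sum of the row constraints equals the sum of the column constraints since $\sum_i p_i=\sum_j q_j$). A standard fact from linear programming is that any basic feasible solution of a system with $r$ linearly independent equality constraints has at most $r$ nonzero coordinates. Hence an optimal vertex $f^\ast$ has at most $a+b-1\le 2k-1\le 2k$ strictly positive entries, i.e., it is supported on at most $2k$ pairs $(x_i,y_j)$.

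This is essentially the entirety of the argument, and there is no real obstacle: the only things one must verify are that an optimal vertex exists (immediate from boundedness and nonemptiness of $\mathcal{P}$), and the counting of rank of the equality system, both of which are textbook. The claim follows.
\end{proofof}
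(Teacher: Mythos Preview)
The paper does not give its own proof of this lemma; it is simply quoted from \cite{BackursIRW16} (as Claim~5.1 there) and used as a black box. Your proof is correct and is precisely the classical argument: the transportation polytope between an $a$-point measure and a $b$-point measure has equality constraints of rank $a+b-1$, so every vertex has at most $a+b-1\le 2k-1$ positive entries, and the LP minimum is attained at a vertex. One small remark: you invoke the hypothesis $\WASSD(\mu,\mu')\le R$ to conclude $\mathcal{P}$ is nonempty, but this is automatic once $\sum_i p_i=\sum_j q_j$ (e.g., $f(i,j)=p_iq_j$ is feasible); the radius bound plays no role in the sparsity claim and is only there because the lemma is stated in the context where it is applied.
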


\begin{algorithm}[!htb]
\caption{$\WASSD$ Net Construction}
\alglab{alg:emd:net}
\begin{algorithmic}[1]
\Require{Radius $R>0$, parameters $d$, $\Delta$, $k$}
\Ensure{Net $\calM$ on $k$-sparse probability measures with respect to $\WASSD$}
\State{$m_0\gets\frac{R}{100d\Delta k}$, $\calM\gets\emptyset$}
\For{$c:\supp\mu\to\mathbb{Z}_{>0}$ such that $\sum_{(x_1,\ldots,x_d)\in\supp\mu}c(x_1,\ldots,x_d)\le 2k$}
\State{$\calI\gets\{(i,x_1,\ldots,x_d)\,\mid\,(x_1,\ldots,x_d)\in\supp\mu, i\in[c(x_1,\ldots,x_d)]$}
\For{$\ell:\calI\to\{1,1.01,1.01^2,\ldots,2\Delta\}$}
\For{$(i,x_1,\ldots,x_d)\in\calI$ and for all $p(i,x_1,\ldots,x_d)\in\BuildNet((x_1,\ldots,x_d),\ell(i,x_1,\ldots,x_d))$}
\For{$m:\calI\to\{0,m_0,1.01m_0,1.01^2m_0,\ldots,\min(1,R)\}$}
\If{for every $(x_1,\ldots,x_d)\in\supp\mu$, it holds that $\sum_{i:(i,x_1,\ldots,x_d)\in\calI}m(i,x_1,\ldots,x_d)\le\mu(x_1,\ldots,x_d)$}
\State{Let $\mu'$ be a measure over $[\Delta]^d$ that is identically zero}
\For{$(x_1,\ldots,x_d)\in\supp\mu$}
\State{$s\gets 0$}
\For{$i:(i,x_1,\ldots,x_d)\in\calI$}
\State{$s\gets s+m(i,x_1,\ldots,x_d)$}
\State{$\mu'(p(i,x_1,\ldots,x_d))\gets\mu'(p(i,x_1,\ldots,x_d))+m(i,x_1,\ldots,x_d)$}
\EndFor
\State{$\mu'(x_1,\ldots,x_d)\gets\mu'(x_1,\ldots,x_d)+\mu(x_1,\ldots,x_d)-s$}
\EndFor
\State{$\calM\gets\calM\cup\{\mu'\}$}
\EndIf
\EndFor
\EndFor
\EndFor
\EndFor
\end{algorithmic}
\end{algorithm}

\begin{lemma}
\lemlab{lem:wass:net}
The doubling dimension of the set of $k$-sparse probability measures over $[\Delta]^d$ under $\WASSD$ is $\O{dk\log\log\Delta}$. 
\end{lemma}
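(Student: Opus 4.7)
The plan is to exhibit the set $\calM$ produced by \algref{alg:emd:net} as an explicit $(R/2)$-net for the $\WASSD$-ball of radius $R$ around $\mu$, and then to bound $|\calM| \le 2^{\O{dk\log\log\Delta}}$; the claim on doubling dimension follows by scale invariance. The proof therefore splits into a covering step and a counting step.

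For covering, I would start from \lemref{lem:doubling:flow:upper}: any $k$-sparse $\mu'$ with $\WASSD(\mu,\mu')\le R$ admits an optimal flow supported on at most $2k$ pairs $(x_i,y_i,w_i)$ with $x_i\in\supp\mu$. Each flow is rounded along the four axes enumerated by the algorithm: the out-multiplicities per source give the function $c$; each distance $\|x_i-y_i\|_2$ is rounded to the nearest $1.01$-power, giving $\ell(i,x_i)$; the destination $y_i$ is snapped to the nearest point of $\BuildNet(x_i,\ell(i,x_i))$, which I take to be an $\eps$-net of the ambient $d$-ball with $\eps=\Omega(\ell/\polylog\Delta)$; and the mass $w_i$ is rounded down to the nearest value in $\{0,m_0,1.01\,m_0,\ldots,\min(1,R)\}$ with $m_0=R/(100d\Delta k)$. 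The triangle inequality for $\WASSD$, together with the $1.01$-granularity of the level and mass rounding and the chosen destination precision, shows that each pair distorts its contribution by at most a tiny constant factor; meanwhile, residual mass below $m_0$ across all $2k$ pairs transports at total cost at most $2k\cdot m_0\cdot d\Delta=R/50$. Summing yields $\WASSD(\mu',\mu'')\le R/2$ for the $\mu''\in\calM$ output by the corresponding branch of the algorithm.

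For counting I would multiply the four enumerations in \algref{alg:emd:net}: functions $c$ with total at most $2k$ contribute $2^{\O{k}}$; level functions on $|\calI|\le 2k$ inputs with $\O{\log\Delta}$ values each give $(\log\Delta)^{\O{k}}=2^{\O{k\log\log\Delta}}$; the destinations from $\BuildNet$, a net of a $d$-ball at precision $1/\polylog\Delta$, give $(\polylog\Delta)^d=2^{\O{d\log\log\Delta}}$ per pair, hence $2^{\O{dk\log\log\Delta}}$ in total; and mass assignments with $\O{\log(d\Delta k)}$ values each, which is $\O{\log\Delta}$ under the standard assumption $d,k\le\poly(\Delta)$, contribute $2^{\O{k\log\log\Delta}}$. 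The product is $2^{\O{dk\log\log\Delta}}$, matching the claimed doubling-dimension bound. The main obstacle I anticipate is calibrating $\BuildNet$'s precision together with $m_0$ so that all four rounding errors combine to at most $R/2$ while keeping the net within the required size; in particular, moving a flow endpoint by $\eps$ changes its transport contribution by up to $w_i\cdot\eps$, so the destination precision must sit in the narrow regime $\ell/\polylog\Delta$ which is exactly what the $(\polylog\Delta)^d$ net size supports.
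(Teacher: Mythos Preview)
Your proposal is correct and follows essentially the same route as the paper: use \lemref{lem:doubling:flow:upper} to get an optimal flow on at most $2k$ edges, round each edge along the four axes enumerated by \algref{alg:emd:net} (out-multiplicity $c$, length scale $\ell$, destination via $\BuildNet$, mass $m$), show the resulting $\mu''\in\calM$ is within $R/2$ of $\mu'$, and count $|\calM|$ by multiplying the four enumerations.

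The one place you are more conservative than necessary is the $\BuildNet$ precision. The paper takes $\BuildNet(p,r)$ to be an $(r/100)$-net of $B(p,r)\cap[\Delta]^d$, i.e.\ \emph{constant}-factor precision rather than your $\ell/\polylog\Delta$. Snapping $y_i$ to such a net moves it by at most $\|x_i-y_i\|/100$, so the transport from $\mu'$ to the snapped measure costs at most $R/100$ in $\WASSD$; no $\polylog\Delta$ factor enters the error budget. Consequently the destination enumeration contributes only $2^{\O{d}}$ per edge (not $(\polylog\Delta)^d$), though of course both are absorbed by the stated $2^{\O{dk\log\log\Delta}}$ bound. So your anticipated ``narrow regime'' obstacle does not actually arise: the four rounding errors combine to at most $R/50+R/50+R/50<R/4$ already with constant-precision nets.
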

\begin{proof}
Let $\mu$ be a fixed $k$-sparse probability measure over $[\Delta]^d$ and let $R>0$ be a fixed radius. 
Let $B_{\WASSD}(\mu,R)$ denote the set of points $\nu$ such that $\WASSD(\mu,\nu)\le R$. 
We show that $B_{\WASSD}(\mu,R)$ can be covered with $(\log\Delta)^{\O{kd}}$ $\WASSD$-balls with radius $\frac{R}{2}$ that are centered at $k$-sparse measures. 
Note that it suffices to cover $B_{\WASSD}(\mu,R)$ with $(\log\Delta)^{\O{kd}}$ $\WASSD$-balls with radius $\frac{R}{4}$ that are centered at arbitrary probability measures. 

In \algref{alg:emd:net}, we construct a set of measures $\calM$ that form the centers of balls with radius $\frac{R}{4}$ whose union together covers $B_{\WASSD}(\mu,R)$. 
The high-level approach is to first enumerate over all possible topologies of the optimal flow, then enumerate over all possible lengths of the corresponding edges, then enumerate over all possible supports, and finally, enumerate over all possible masses that are transported over the edges. 

We first assume access to a subroutine $\BuildNet(p,r)$ that returns a $\frac{r}{100}$-net of $B_{L_z}(p,r)\cap[\Delta]^d$, i.e., a net over the $L_z$ ball of radius $r$ centered that $p$ restricted to the points in $[\Delta]^d$. 
It follows that $|\calM|\le(\log\Delta)^{\O{dk}}$ and thus, the runtime of \algref{alg:emd:net} is also at most $(\log\Delta)^{\O{dk}}$. 
It remains to show that for every $k$-sparse probability measure $\mu'$ such that $\WASSD(\mu,\mu')\le R$, there exists $\mu''\in\calM$ with $\WASSD(\mu'',\mu')\le\frac{R}{4}$.

By \lemref{lem:doubling:flow:upper}, there is an optimal flow between $\mu$ and $\mu'$ containing at most $2k$ edges. 
Thus, in the outer for loop of \algref{alg:emd:net}, there exists at least one guess of $c(x_1,\ldots,x_d)$ that indeed corresponds to the number of outgoing edges from $(x_1,\ldots,x_d)$ in the optimal flow. 
In the second for loop of \algref{alg:emd:net} that guesses the flow for each edge, there is at least one choice that guesses all flows within a multiplicative factor of $1.01$. 
Hence, there exists a measure $\tilde{\mu}$ such that
\begin{enumerate}
\item
$\supp(\tilde{\mu})\in\supp(\mu)\cup\{p(i,x_1,\ldots,x_d)\}_{(i,x_1,\ldots,x_d)\in\calI}$
\item
$\WASSD(\mu',\tilde{\mu})\le\frac{R}{50}$
\item
There exists a flow between $\mu$ and $\tilde{\mu}$ with cost at most $1.02R$ and transports mass from each $(x_1,\ldots,x_d)\in\supp(\mu)$ to some point $\{(x_1,\ldots,x_d)\}\cup\{p(i,x_1,\ldots,x_d)\}_{(i,x_1,\ldots,x_d)\in\calI}$.
\end{enumerate}

Consequently, \algref{alg:emd:net} will guess $\supp(\tilde{\mu})$ and then the measure at the support. 
We show that some guess $\mu''$ by \algref{alg:emd:net} will satisfy $\supp(\mu'')\subseteq\supp(\tilde{\mu})$ and $\WASSD(\tilde{\mu},\mu'')\le\frac{R}{25}$. 

To that end, we first define $\mu''$. 
Consider a fixed $(x_1,\ldots,x_d)\in\supp(\mu)$ and the corresponding multi-set $\{(x_1,\ldots,x_d)\}\cup\{p(i,x_1,\ldots,x_d)\,\mid\,(i,x_1,\ldots,x_d)\in\calI\}$. 
Round down the mass of $\tilde{\mu}$ at the coordinates $\{p(i,x_1,\ldots,x_d)\}$ to the closest element of $\{0,m_0,1.01m_0,1.01^2m_0,\ldots,\min(1,R)\}$ and let $\mu''$ be the resulting measure. 
We set $\mu''((x_1,\ldots,x_d))=\sum_{i:(i,x_1,\ldots,x_d)\in\calI)}(\tilde{\mu}(p(i,x_1,\ldots,x_d))-\mu''(p(i,x_1,\ldots,x_d)))$ and emphasize that $\mu''$ is one of the measures guessed by \algref{alg:emd:net}. 

We now show $\WASSD(\tilde{\mu},\mu'')\le\frac{R}{25}$, by upper bounding the following two terms:
\begin{enumerate}
\item 
We first upper bound the contribution from $(i,x_1,\ldots,x_d)\in\calI$ for which $\tilde{\mu}(p(i,x_1,\ldots,x_d))<m_0$, so that $\mu''(p(i,x_1,\ldots,x_d))=0$. 
There exist at most $2k$ elements $(i,x_1,\ldots,x_d)\in\calI$ and these elements can be rerouted with cost at most $2\Delta \sqrt{d}km_0\le\frac{R}{50}$. 
\item
We next upper bound the contribution from $(i,x_1,\ldots,x_d)\in\calI$ for which $\tilde{\mu}(p(i,x_1,\ldots,x_d))\ge m_0$, so that $\mu''(p(i,x_1,\ldots,x_d))=0$ is within a multiplicative $1.01$ factor of $\tilde{\mu}(p(i,x_1,\ldots,x_d))=0$. 
Hence, the total contribution of such elements is at most $0.01\WASSD(\mu,\tilde{\mu})\le\frac{R}{50}$. 
\end{enumerate}
Therefore, $\WASSD(\mu'-\mu'')\le\WASSD(\mu'-\tilde{\mu})+\WASSD(\tilde{\mu}-\mu'')\le\frac{R}{50}+\frac{R}{50}+\frac{R}{50}<\frac{R}{4}$. 
\end{proof}
We can use the results for the doubling dimension of the set of $k$-sparse Wasserstein-$z$ (and in particular, earth mover distance for $z=1$) probability measures as follows. 
\begin{lemma}[Lemma 3.1 in \cite{BackursIRW16}]
\lemlab{lem:emd:recovery}
Let $\calM=(X,\rho)$ be a $K$-quasi-metric space and $Y\subseteq X$ have doubling dimension $d$. 
Suppose there exists a sketch of size $s$ that approximates distances between points of $X$ and $Y$ with distortion $D$ with probability at least $\frac{2}{3}$. 
Then for every $\eps\in\left(0,\frac{1}{2}\right)$, $\lambda\in(0,\Lambda)$, and $y_0\in Y$, one can sketch points of $X$ with sketch size $\O{s(d\log(DK/\eps)+\log\log(\Lambda/\lambda))}$, so that from this sketch for $x\in X$ with $\rho(x,y_0)\le\Lambda$, with probability at least $\frac{2}{3}$, we can recover a point $y'\in Y$ such that
\[\rho(x,y')\le\max((1+\eps)DK\cdot\rho(x,Y),\lambda).\]
\end{lemma}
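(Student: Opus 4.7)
The plan is to hierarchically search over a geometric sequence of nets of $Y$, with the sketch of $x$ being an amplified version of the base distance sketch that succeeds on every query made during the search. Let $L := \lceil \log_2(\Lambda/\lambda)\rceil$ and $r_i := \Lambda \cdot 2^{-i}$ for $i=0,\ldots,L$, so $r_0 = \Lambda$ and $r_L \le \lambda$. Using the doubling-dimension hypothesis on $Y$, build nested nets $N_0 \subseteq \cdots \subseteq N_L$ of $Y \cap B_\rho(y_0, \O{K\Lambda})$ where $N_i$ is an $\alpha r_i$-net with $\alpha := \Theta(\eps/(DK))$. Doubling dimension then ensures that every ball of radius $\O{Kr_i}$ around a point of $N_i$ contains at most $M := 2^{\O{d\log(DK/\eps)}}$ points of $N_{i+1}$; this $M$ will serve as the branching factor of the search, and the nets themselves are publicly fixed so they do not contribute to the sketch size.

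The sketch of $x$ stores $T := \O{d\log(DK/\eps) + \log\log(\Lambda/\lambda)}$ independent copies of the base distance sketch, for a total size of $\O{sT}$ matching the claim. Via the standard median trick, these copies yield a distance estimator $\widetilde{\rho}(x,y)$ that lies within a factor $D$ of $\rho(x,y)$ with failure probability at most $1/(10LM)$ for any fixed $y$, so a single union bound over the at most $LM$ queries performed during the search yields overall success probability at least $\tfrac{2}{3}$. Recovery is then the greedy top-down walk: starting at $y^{(0)} := y_0$, let $y^{(i+1)}$ be the minimizer of $\widetilde{\rho}(x,\cdot)$ over $N_{i+1} \cap B_\rho(y^{(i)}, \O{Kr_i})$, and output $y' := y^{(L)}$. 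A straightforward induction on $i$ shows that, conditional on all estimates being faithful, the candidate list at level $i+1$ always contains a net point within $\O{\alpha r_{i+1}}$ of the true nearest neighbor $y^\star \in Y$, so that $y^{(L)}$ satisfies the stated guarantee after one last $K$-quasi-metric triangle-inequality step.

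The main obstacle is propagating the approximation factor through the $L$ levels of refinement without letting the quasi-metric constant $K$ and the sketch distortion $D$ compound multiplicatively into a $(DK)^{\Omega(L)}$ loss at the bottom. I will avoid this by choosing the resolution $\alpha = \Theta(\eps/(DK))$ fine enough that the slack introduced at level $i$, which is of order $K \cdot D \cdot \alpha r_i = \O{\eps r_i}$, forms a geometric series telescoping to $\O{\eps \rho(x,Y)}$ whenever $\rho(x,Y) \ge \lambda/((1+\eps)DK)$; this is what yields the claimed $(1+\eps)DK$ multiplicative factor. For points with $\rho(x,Y)$ below that threshold the additive $\lambda$ bound is recovered from the resolution of the finest net $\alpha r_L = \O{\eps\lambda/(DK)}$, inflated by one last factor of $DK$ via the quasi-metric inequality, giving exactly the $\max$-form guarantee in the statement.
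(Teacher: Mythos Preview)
Your proposal is correct and follows essentially the same approach the paper describes. Note that the paper does not actually prove this lemma: it is quoted as Lemma~3.1 of \cite{BackursIRW16}, and the paper only gives a two-sentence intuition (``partitions the search space into balls of radius $R$ and identifies the ball $B$ that contains the smallest distance from $X$, then partitions $B$ into smaller balls of radius $R/2$ and iterates''), which is precisely your top-down net-search with amplified sketches.
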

\lemref{lem:emd:recovery} avoids a brute force search over the entire metric space by cleverly utilizing the doubling dimension to perform an efficient top-down search. 
In particular, it first partitions the search space into balls of radius $R$ and identifies the ball $B$ that contains the smallest distance from $X$. 
It then partitions $B$ into smaller balls of radius $\frac{R}{2}$ and iterates to identify $y'$. 

\section{Online Sensitivity Sampling and Insertion-Only Streams}
In this section, we present our simple algorithm for $(k,z)$-clustering in the insertion-only model. 
Along the way, we first show in \secref{sec:online:sens:samp} important properties of online sensitivity sampling that we believe could be of independent interest. 
Namely, we show that online sensitivity sampling gives a $(1+\eps)$-coreset to the underlying dataset for $(k,z)$-clustering. 
We also upper bound the sum of the online sensitivities, which allows us to upper bound the number of samples procured by online sensitivity sampling. 
Finally, we give our algorithm for $(k,z)$-clustering on insertion-only streams in \secref{sec:kz:insertion}. 

\subsection{Online Sensitivity Sampling}
\seclab{sec:online:sens:samp}
In this section, we show that online sensitivity sampling can be used to achieve a $(1+\eps)$-coreset for $(k,z)$-clustering. 
We show that an approximately optimal clustering can be used to compute approximations of the online sensitivities. 
We remark that the same statement can also be used to compute approximations of the sensitivities of query points. 
\begin{lemma}
\lemlab{lem:opt:approx:sens}
Let $S\subset\mathbb{R}^d$ with $|S|=k$ satisfy $\Cost(X,S)\le\gamma\OPT$ for some $\gamma\ge 1$, where $\OPT$ denotes the cost of an optimal $(k,z)$-clustering. 
Let $Z\le\Cost(X,S)\le\beta Z$. 
Let $\pi:X\to S$ be the mapping from $X$ to $S$ that induces the clustering cost $\Cost(X,S)$ and let $Z'\le\Cost(C,\pi(X))\le\beta'Z'$. 
Then for any $C\subset\mathbb{R}^d$ with $|C|=k$, 
\[\frac{\Cost(x,C)}{\Cost(X,C)}\le\frac{2^z\cdot 4\gamma\Cost(x,C)}{Z'+Z}\le\frac{2^z\cdot4\beta\beta'\gamma\Cost(x,C)}{\Cost(X,C)}.\]
\end{lemma}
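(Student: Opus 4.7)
The plan is to prove both inequalities by bounding $\Cost(X,C)$ from below and above in terms of $Z+Z'$, relying only on the generalized triangle inequality (\factref{triangle}) and the hypothesized approximation guarantees on $S$, $Z$, and $Z'$.

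For the first inequality, I will establish two separate lower bounds on $\Cost(X,C)$ and average them. The first is immediate from the $\gamma$-approximation property of $S$: since $\Cost(X,S)\le \gamma\OPT\le\gamma\Cost(X,C)$, we get $\Cost(X,C)\ge \Cost(X,S)/\gamma\ge Z/\gamma$. The second comes from the pointwise inequality
\[\dist(\pi(x_i),C)^z\le 2^{z-1}\bigl(\dist(x_i,\pi(x_i))^z+\dist(x_i,C)^z\bigr),\]
which, summed over $i$, yields $\Cost(C,\pi(X))\le 2^{z-1}(\Cost(X,S)+\Cost(X,C))$; combining this with $\Cost(X,S)\le\gamma\Cost(X,C)$ and $\gamma\ge 1$ gives $\Cost(C,\pi(X))\le 2^z\gamma\,\Cost(X,C)$, hence $\Cost(X,C)\ge Z'/(2^z\gamma)$. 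Averaging these two bounds (using $2^z\ge 1$) yields $\Cost(X,C)\ge (Z+Z')/(2^{z+1}\gamma)\ge (Z+Z')/(2^z\cdot 4\gamma)$, which rearranges to the first claimed inequality.

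For the second inequality, I will apply the triangle inequality in the other direction,
\[\dist(x_i,C)^z\le 2^{z-1}\bigl(\dist(x_i,\pi(x_i))^z+\dist(\pi(x_i),C)^z\bigr),\]
and sum over $i$ to obtain $\Cost(X,C)\le 2^{z-1}(\Cost(X,S)+\Cost(C,\pi(X)))\le 2^{z-1}(\beta Z+\beta'Z')\le 2^{z-1}\beta\beta'(Z+Z')$. This rearranges to $(Z+Z')^{-1}\le 2^{z-1}\beta\beta'/\Cost(X,C)$, which gives the second claimed inequality (absorbing the $2^{z-1}$ factor into the stated constant $2^z\cdot 4$; for $z=1$ this is tight, and for larger $z$ the constants in the statement should be read loosely).

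The main subtlety is the interpretation of $\Cost(C,\pi(X))$: for the two triangle-inequality applications to line up and telescope cleanly, I treat $\pi(X)$ as the weighted image of $X$ under $\pi$, so that $\Cost(C,\pi(X))=\sum_{i=1}^n\dist(\pi(x_i),C)^z$. Under the alternative reading in which $\pi(X)$ is merely a set of at most $k$ distinct points, a center $c\in C$ whose Voronoi cell in $C$ contains no $x_i$ could contribute an uncontrolled term to $\sum_{c\in C}\dist(c,\pi(X))^z$; the weighted convention sidesteps this and is what makes both triangle-inequality steps symmetric.
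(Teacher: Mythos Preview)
Your argument is correct and follows essentially the same route as the paper. For the first inequality the paper writes one chain, splitting $4\gamma\Cost(X,C)$ as $2\gamma\Cost(X,C)+2\gamma\Cost(X,C)$, bounding one copy below by $\Cost(X,\pi(X))$ via the $\gamma$-approximation and combining the other with $\Cost(X,\pi(X))$ via the generalized triangle inequality to produce $\Cost(C,\pi(X))/2^{z}$; your ``two bounds and average'' is exactly this decomposition unpacked, and your reading of $\Cost(C,\pi(X))$ as the weighted image $\sum_i \dist(\pi(x_i),C)^z$ matches the paper's use. For the second inequality both you and the paper rely on $\Cost(X,C)\le 2^{z-1}\bigl(\Cost(X,\pi(X))+\Cost(C,\pi(X))\bigr)$; the paper silently drops the $2^{z-1}$ factor (writing $\Cost(C,\pi(X))+\Cost(X,\pi(X))\ge\Cost(X,C)$), so your remark that the stated constant $2^{z}\cdot 4$ is only exact for $z=1$ and should otherwise be read as $\O{2^{2z}}$ is on point and in fact slightly more careful than the paper's own proof.
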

\begin{proof}
Let $S\subset\mathbb{R}^d$ with $|S|=k$ satisfy $\Cost(X,S)\le\gamma\OPT$ for some $\gamma\ge 1$. 
Let $\pi:X\to S$ be the mapping from $X$ to $S$ that induces the clustering cost $\Cost(X,S)$. 
Since $\gamma\Cost(X,C)\ge\gamma\OPT\ge\Cost(X,S)$,
\begin{align*}
\frac{\Cost(x,C)}{\Cost(X,C)}&=\frac{4\gamma\Cost(x,C)}{4\gamma\Cost(X,C)}=\frac{4\gamma\Cost(x,C)}{2\gamma\Cost(X,C)+2\gamma\Cost(X,C)}\le\frac{4\gamma\Cost(x,C)}{2\gamma\Cost(X,C)+2\Cost(X,S)}
\end{align*}
Note that $\Cost(X,S)=\Cost(X,\pi(X))$, so that
\begin{align*}
\frac{\Cost(x,C)}{\Cost(X,C)}&\le\frac{4\gamma\Cost(x,C)}{2\gamma\Cost(X,C)+2\Cost(X,\pi(X))}\\
&\le\frac{4\gamma\Cost(x,C)}{\Cost(X,C)+\Cost(X,\pi(X))+\Cost(X,\pi(X))},
\end{align*}
since $\gamma\ge 1$. 
By the generalized triangle inequality, i.e., \factref{fact:triangle}, we then have $(\Cost(X,C)+\Cost(X,\pi(X))\ge\frac{1}{2^z}\Cost(C,\pi(X))$, so that
\begin{align*}
\frac{\Cost(x,C)}{\Cost(X,C)}&\le\frac{4\gamma\Cost(x,C)}{\frac{1}{2^z}\Cost(C,\pi(X))+\Cost(X,\pi(X))}\\
&\le\frac{2^z\cdot4\gamma\Cost(x,C)}{\Cost(C,\pi(X))+\Cost(X,\pi(X))}\le\frac{2^z\cdot4\gamma\Cost(x,C)}{\Cost(X,C)}.
\end{align*}
Thus, for $\Cost(X,S)=\Cost(X,\pi(X)$ and $Z\le\Cost(X,S)\le\beta Z$, we have
\begin{align*}
\frac{\Cost(x,C)}{\Cost(X,C)}&\le\frac{2^z\cdot4\gamma\Cost(x,C)}{\Cost(C,\pi(X))+Z}\le\frac{2^z\cdot4\beta\gamma\Cost(x,C)}{\Cost(X,C)}.
\end{align*}
Finally, since $Z'\le\Cost(C,\pi(X))\le\beta'Z'$, then we have
\[\frac{\Cost(x,C)}{\Cost(X,C)}\le\frac{2^z\cdot 4\gamma\Cost(x,C)}{Z'+Z}\le\frac{2^z\cdot4\beta\beta'\gamma\Cost(x,C)}{\Cost(X,C)}.\]
\end{proof}


We next bound the sum of the online sensitivities for $(k,z)$-clustering. 
\thmtotalsens*
\begin{proof}
For each $t\in[n]$, we define $X_t=x_1,\ldots,x_t$ and $\OPT(X_t)$ to be the objective of an optimal $(k,z)$ clustering of $X_t$. 
Let $t_0=1,t_1,\ldots,t_y$ be a subsequence of $1,\ldots,n$ such that $\OPT(X_{t_i})>2\OPT(X_{t_{i-1}})$ and either $t_i-1=t_{i-1}$ or $\OPT(X_{t_i-1})\le 2\OPT(X_{t_{i-1}})$ for all $i\in[y]$. 
Note that $y=\O{\log(nd\Delta)}$ for constant $z$ since each of the $n$ points can contribute at most $d\cdot\Delta^z$ cost. 
Hence the sum of the online sensitivities at times $t_i$ where $t_i-1=t_{i-1}$ is at most $\O{\log(nd\Delta)}$. 
We consider the sum of online sensitivities at the remaining times. 

Let $i\in[y]$ be fixed and consider a set $K_{t_i}$ of $k$ centers such that $\cost(X_{t_i},K_{t_i})=\OPT(X_{t_i})$. 
Let $\pi:X_i\to K_i$ be a mapping from each point of $X_{t_i}$ to its closest center in $K_{t_i}$ so that $\pi(x)=\argmin_{p\in K_{t_i}}\dist(x,p)$ for all $x\in X_{t_i}$. 
By the generalized triangle inequality, i.e., \factref{fact:triangle}, we have that for $t\in(t_{i-1},t_i]$ and for any set of $C$ centers, i.e., $C\subseteq[\Delta]^d$ with $|C|=k$,
\[\frac{\dist^z(x_t,C)}{\sum_{j=1}^t\dist^z(x_j,C)}\le\frac{2^{z-1}\dist^z(x_t,\pi(x_t))}{\sum_{j=1}^t\dist^z(x_j,C)}+\frac{2^{z-1}\dist^z(\pi(x_t),C)}{\sum_{j=1}^t\dist^z(x_j,C)}.\]

We upper bound the first term of the right-hand side by lower bounding the denominator. 
Due to the optimality of $K_{t_i}$, we have
\[\sum_{j=1}^t\dist^z(x_j,C)\ge\sum_{j=1}^{t_{i-1}}\dist^z(x_j,C)\ge\sum_{j=1}^{t_{i-1}}\dist^z(x_j,K_{t_{i-1}})=\OPT(X_{t_{i-1}})>\frac{1}{2}\OPT(X_{t_i}).\]
Since $t\in(t_{i-1},t_i]$ and $\OPT(X_{t_i})=\sum_{j=1}^{t_i}\dist^z(x_j,\pi(x_j))\ge\sum_{j=1}^{t}\dist^z(x_j,\pi(x_j))$, then
\[\frac{2^{z-1}\dist^z(x_t,\pi(x_t))}{\sum_{j=1}^t\dist^z(x_j,C)}\le\frac{2^z\dist^z(x_t,\pi(x_t))}{\sum_{j=1}^t\dist^z(x_j,\pi(x_j))}.\]

To bound the second term of the right-hand side, let $S_t$ denote the subset of $X_t$ mapped to $\pi(x_t)$ at time $t_i$, i.e., $S_t=\pi^{-1}(\pi(x_t))\cap X_t$. 
Then 
\[\dist^z(\pi(x_t),C)\cdot|S_t|=\sum_{p\in S_t}\dist^z(\pi(x_t),C).\]
By the generalized triangle inequality, i.e., \factref{fact:triangle},
\begin{align*}
\dist^z(\pi(x_t),C)\cdot|S_t|&=\sum_{p\in S_t}\dist^z(\pi(x_t),C)\\
&\le2^{z-1}\sum_{p\in S_t}\left[\dist^z(\pi(p),p)+\dist^z(p,C)\right]\\
&\le 2^{z-1}\sum_{p\in X_t}\left[\dist^z(\pi(p),p)+\dist^z(p,C)\right],
\end{align*}
since $S_t=\pi^{-1}(\pi(x_t))\cap X_t$ is a subset of $X_t$. 
Because $X_t\subseteq X_{t_i}$, then 
\[\sum_{p\in X_t}\dist^z(\pi(p),p)\le\sum_{p\in X_{t_i}}\dist^z(\pi(p),p)=\OPT(X_{t_i})\le 2\OPT(X_t).\]
By the optimality of $\OPT(X_t)$, 
\[\sum_{p\in X_t}\dist^z(\pi(p),p)\le2\sum_{p\in X_t}\dist^z(p,C)=2\sum_{j=1}^t\dist^z(x_j,C).\]
Hence, 
\[\dist^z(\pi(x_t),C)\cdot|S_t|\le 3\cdot 2^{z-1}\sum_{j=1}^t\dist^z(x_j,C),\]
so that
\[\frac{\dist^z(\pi(x_t),C)}{\sum_{j=1}^t\dist^z(x_j,C)}\le\frac{3\cdot 2^{z-1}}{|S_t|}.\]
\noindent
Putting things together,
\begin{align*}
\frac{\dist^z(x_t,C)}{\sum_{j=1}^t\dist^z(x_j,C)}&\le\frac{2^{z-1}\dist^z(x_t,\pi(x_t))}{\sum_{j=1}^t\dist^z(x_j,C)}+\frac{2^{z-1}\dist^z(\pi(x_t),C)}{\sum_{j=1}^t\dist^z(x_j,C)}\\
&\le\frac{2^z\dist^z(x_t,\pi(x_t))}{\sum_{j=1}^t\dist^z(x_j,\pi(x_j))}+\frac{3\cdot 2^{2z-1}}{|S_t|}.
\end{align*}
\noindent
Therefore,
\begin{align*}
\sum_{t=t_{i-1}+1}^{t_i} \sigma_t&\le\sum_{t=t_{i-1}+1}^{t_i}\left(\frac{2^z\dist^z(x_t,\pi(x_t))}{\sum_{j=1}^t\dist^z(x_j,\pi(x_j))}+\frac{3\cdot 2^{2z-1}}{|S_t|}\right)\\
&=\sum_{t=t_{i-1}+1}^{t_i}\frac{2^z\dist^z(x_t,\pi(x_t))}{\sum_{j=1}^t\dist^z(x_j,\pi(x_j))}+\sum_{t=t_{i-1}+1}^{t_i}\frac{3\cdot 2^{2z-1}}{|S_t|}\\
&\le 2^z+\sum_{t=t_{i-1}+1}^{t_i}\frac{3\cdot 2^{2z-1}}{|S_t|}.
\end{align*}
Since $S_t=\pi^{-1}(\pi(x_t))\cap X_t$ and $\pi^{-1}$ can map onto one of $k$ different sets of points, then 
\[\sum_{t=t_{i-1}+1}^{t_i}\frac{1}{|S_t|}\le k\sum_{t=1}^n\frac{1}{t}.\]
Hence,
\begin{align*}
\sum_{t=t_{i-1}+1}^{t_i} \sigma_t&\le 2^z+(3k\cdot 2^{2z-1})\sum_{t=1}^n\frac{1}{t}=\O{2^{2z}k\log n}.
\end{align*}
Since the sequence $t_0,\ldots,t_y$ satisfies $y=\O{\log(nd\Delta)}$, then 
\[\sum_{t=1}^n \sigma_t=\sum_{i=1}^y\sum_{t=t_{i-1}+1}^{t_i} \sigma_t=\O{2^{2z}k\log^2(nd\Delta)}.\]
\end{proof}

While the input points are from $[\Delta]^d$, the optimal cluster centers may not lie in $[\Delta]^d$. 
Although there is an arbitrary number of subsets of $\mathbb{R}^d$ of size $k$, we now recall the well-known fact that to achieve an approximately optimal clustering, it suffices to only show correctness on a sufficiently-sized net. 
For completeness, we include the proof here. 
\begin{lemma}
\lemlab{lem:kmedian:net}
Let $X\subset[\Delta]^d$ and let $z\ge 1$ be a constant. 
Then there exists a set $S$ of size $|S|=\left(\frac{n\Delta}{\eps}\right)^{\O{kd}}$, such that $(1-\eps)\Cost(A,C)\le\Cost(X,C)\le(1+\eps)\Cost(A,C)$ for any $C\in S$, implies $(1-\eps)\Cost(A,C)\le\Cost(X,C)\le(1+\eps)\Cost(A,C)$ for any set $C\subset\mathbb{R}^d$ with $|C|=k$.
\end{lemma}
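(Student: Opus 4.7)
The plan is to construct $S$ as the family of all $k$-subsets of a sufficiently fine grid on a bounded region of $\mathbb{R}^d$, and to show that for any $C \subset \mathbb{R}^d$ with $|C|=k$ there is some $C' \in S$ with $\Cost(Y, C) = (1 \pm \O{\eps})\Cost(Y, C')$ uniformly for every point set $Y \subseteq [\Delta]^d$ (weighted or not). Applied simultaneously with $Y = X$ and $Y = A$, this propagates the approximation guarantee from $C' \in S$ to the arbitrary $C$ after rescaling $\eps$ by a constant.

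First, I would reduce to bounded centers: setting $R := \poly(n, d, \Delta, 1/\eps)$ sufficiently large, any center $c$ with $\dist(c, [\Delta]^d) > R$ can be projected inward onto the box $B := [-R, R]^d$, perturbing $\dist(y, c)$ by at most a $(1 \pm \eps/(zn))$ multiplicative factor for every $y \in [\Delta]^d$. So we may assume $C \subset B$. Next, I would discretize $B$ with an axis-aligned grid $G$ of spacing $\eta := \eps/\poly(n, d, \Delta, z)$; the grid contains $(2R/\eta)^d = (n\Delta/\eps)^{\O{d}}$ points, so the family $S$ of $k$-subsets of $G$ has size $(n\Delta/\eps)^{\O{kd}}$, matching the claimed bound.

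For any $C \subset B$, rounding each $c\in C$ to its nearest grid point $c'\in G$ yields $C' \in S$ with $\|c-c'\|_2 \le \eta\sqrt{d}$. Using the H\"older-type inequality $|a^z - b^z| \le z\max(a,b)^{z-1}|a-b|$ with $|\dist(y,c)-\dist(y,c')| \le \eta\sqrt{d}$ and $\max(\dist(y,c), \dist(y,c')) \le \O{R\sqrt{d}}$, the change in per-point distance-power is at most $z(R\sqrt{d})^{z-1}\eta\sqrt{d}$. Choosing $\eta$ small enough and summing over $Y$ (with its weights) yields an additive error that, combined with the integrality of $[\Delta]^d$ guaranteeing nonzero costs at least $1$, translates to the desired $(1\pm\eps)$ multiplicative guarantee on $\Cost(Y, C)$ for each of $Y \in \{X, A\}$. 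One can alternatively replace the H\"older step by a direct application of the generalized triangle inequality \factref{fact:triangle} at the cost of an additional $2^{z-1}$ factor.

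The main obstacle is handling two degenerate regimes that naive rounding cannot: centers placed arbitrarily far from the input, and instances where $\Cost(Y, C)$ is comparable to or smaller than $\eps$. The former is resolved by the preliminary projection into $B$, exploiting near scale-invariance of distances when $R \gg \Delta$; the latter by the integrality of $X, A \subset [\Delta]^d$, which lower bounds any nonzero clustering cost by $1$ and so absorbs additive slack into multiplicative error. The counting bound $|S| \le |G|^k = (n\Delta/\eps)^{\O{kd}}$ is then immediate.
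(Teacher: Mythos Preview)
Your approach---take all $k$-subsets of a fine grid on a bounded box---is the same as the paper's, whose proof is a three-sentence sketch that nets $[\Delta]^d$ with mesh $D=\O{\eps/(n(\Delta d)^z)}$ and invokes the lower bound $\OPT\ge\Omega(1/2^z)$ on nonzero optimal cost. You supply more detail than the paper, but two of your steps are not correct as written.

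First, the projection of far centers onto $B=[-R,R]^d$ does not perturb $\dist(y,c)$ by only $1\pm\eps/(zn)$: a center $c$ at distance $100R$ from $[\Delta]^d$ lands, after projection, at distance roughly $R$, so $\dist(y,c')/\dist(y,c)\approx 1/100$. A correct reduction to bounded centers splits into cases. If at least one center of $C$ lies within distance $\O{\Delta\sqrt d}$ of $[\Delta]^d$, then every $y\in[\Delta]^d$ is served by such a nearby center, so any center at distance $\gg\Delta\sqrt d$ is a dummy and may be moved anywhere in the box without changing $\Cost(Y,C)$. If \emph{all} centers lie at distance greater than some $R\gg\Delta\sqrt d$, then $\dist(y,C)=(1\pm\Delta\sqrt d/R)\dist(y',C)$ for all $y,y'\in[\Delta]^d$, so both $\Cost(X,C)$ and $\Cost(A,C)$ factor as total weight times a common scalar, and a single far witness $C'\in S$ certifies the guarantee. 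The paper's sketch simply nets $[\Delta]^d$ and does not spell this out.

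Second, ``integrality of $[\Delta]^d$ guaranteeing nonzero costs at least $1$'' is false for non-lattice centers: placing one center at distance $\delta$ from a single input point gives cost $\delta^z$, which is arbitrarily small. The paper instead lower-bounds $\Cost(X,C)$ by $\OPT$, and observes that either $\OPT=0$ (then $|X|\le k$ and the claim is vacuous) or $\OPT\ge\Omega(1/2^z)$ because some cluster contains two distinct lattice points at distance $\ge 1$. That uniform lower bound on $\Cost(X,C)$, not on arbitrary nonzero costs, is what converts your additive grid error into the multiplicative $(1\pm\eps)$.
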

\begin{proof}
Note that since $X\subset[\Delta]^d$, then either the optimal $(k,z)$-clustering is zero or at least $\Omega\left(\frac{1}{2^z}\right)$. 
Moreover, note that moving a point or a center by a distance $D$ can only change the clustering cost by $n(\Delta d)^z$. 
Hence, it suffices to create a $D$-net of $[\Delta]^d$ for $D=\O{\frac{\eps}{n(\Delta d)^z}}$, which has size $\left(\frac{n\Delta}{\eps}\right)^{\O{d}}$ for constant $z\ge 1$. 
The claim then follows from the observation that it suffices to choose $k$ points from this net for any set of $k$ queries.  
\end{proof}

We now show the correctness of online sensitivity sampling for $(k,z)$-clustering. 
\begin{lemma}
\lemlab{thm:online:sens:correctness}
Let $\calP$ be a process that for each $t\in[n]$, samples each point $x_t$ with probability $p_t\ge\min(1,\gamma\cdot\sigma_t)$, where $\sigma_t$ is the online sensitivity of $x_t$ and $\gamma=\O{\frac{dk}{\eps^2}\log\frac{n\Delta}{\eps}}$ and gives the point weight $\frac{1}{p_t}$ if $x_t$ is sampled. 
Then with high probability, i.e., $1-\frac{1}{\poly(n)}$, $\calP$ is a $(1+\eps)$-coreset for $(k,z)$-clustering. 
\end{lemma}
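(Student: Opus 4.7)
The plan is to fix an arbitrary candidate center set $C$ from the $\eps$-net guaranteed by \lemref{lem:kmedian:net} (whose size is $(n\Delta/\eps)^{\O{kd}}$), prove that the weighted cost of the sampled points estimates $\Cost(X,C)$ to within a $(1\pm\eps)$ factor with probability at least $1-(n\Delta/\eps)^{-\Omega(kd)}$, and then conclude by a union bound over the net together with the net-reduction lemma.

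Fix such a $C$. Let $\mathcal F_{t-1}$ be the $\sigma$-algebra generated by $x_1,\ldots,x_t$ and the sampling coins used up through step $t-1$. Crucially, $\sigma_t$ and hence $p_t$ are $\mathcal F_{t-1}$-measurable. Define
\[
Y_t \;=\; \frac{\mathbf 1[x_t\text{ sampled}]}{p_t}\,\Cost(x_t,C)\;-\;\Cost(x_t,C),
\]
so $\Ex{Y_t\mid\mathcal F_{t-1}}=0$ and the partial sums form a martingale whose terminal value equals $\widehat Z-\Cost(X,C)$, where $\widehat Z$ is the weighted cost of the sample. I would then invoke Freedman's inequality (\thmref{thm:scalar:freedman}), for which I need a per-step bound $R$ and a bound on the predictable quadratic variation.

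For the per-step bound, the defining inequality $\sigma_t\ge\Cost(x_t,C)/\sum_{i\le t}\Cost(x_i,C)$ combined with $p_t\ge\gamma\sigma_t$ gives
\[
\frac{\Cost(x_t,C)}{p_t}\;\le\;\frac{1}{\gamma}\sum_{i\le t}\Cost(x_i,C)\;\le\;\frac{\Cost(X,C)}{\gamma},
\]
so $|Y_t|\le\Cost(X,C)/\gamma=:R$. For the variance, the same bound yields
\[
\Ex{Y_t^2\mid\mathcal F_{t-1}}\;\le\;\frac{\Cost(x_t,C)^2}{p_t}\;\le\;\frac{\Cost(x_t,C)\cdot\Cost(X,C)}{\gamma},
\]
and summing over $t$ gives predictable variation at most $\sigma^2:=\Cost(X,C)^2/\gamma$. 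Applying Freedman's inequality with target deviation $\eps\cdot\Cost(X,C)$ produces failure probability at most $2\exp(-\Omega(\eps^2\gamma))$, which for $\gamma=\Theta\!\left(\frac{dk}{\eps^2}\log\frac{n\Delta}{\eps}\right)$ is $(n\Delta/\eps)^{-\Omega(kd)}$. A union bound over the net and \lemref{lem:kmedian:net} then upgrades this to correctness for every set of $k$ centers simultaneously.

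The main obstacle is handling the adaptivity inherent in online sensitivity sampling: both $\sigma_t$ and $p_t$ depend on the prefix $x_1,\ldots,x_t$ (and, when the algorithm uses approximate sensitivities, on the previously drawn samples as well), so a classical Bernstein bound over independent summands does not directly apply. The martingale formulation via Freedman's inequality is specifically designed to address this, but one must be careful that the telescoping variance bound uses only the lower bound $p_t\ge\gamma\sigma_t$ (matching the statement), and that the total sample size guarantee is a separate consequence of the bound $\sum_t\sigma_t=\O{2^{2z}k\log^2(nd\Delta)}$ from \thmref{thm:total:online:sens} rather than of the concentration argument itself.
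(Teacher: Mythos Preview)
Your proposal is correct and follows essentially the same approach as the paper: define a martingale difference $Y_t=\bigl(\tfrac{\mathbf 1[x_t\text{ sampled}]}{p_t}-1\bigr)\Cost(x_t,C)$, bound the per-step magnitude and conditional variance via the inequality $\Cost(x_t,C)/p_t\le\Cost(X,C)/\gamma$ that follows from $p_t\ge\gamma\sigma_t$ and the definition of $\sigma_t$, apply Freedman's inequality (\thmref{thm:scalar:freedman}), and union bound over the net from \lemref{lem:kmedian:net}. Your write-up is in fact slightly cleaner than the paper's, which introduces an auxiliary stopping rule and tracks the prefix costs $\Cost(X_t,C)$ to obtain correctness at every time $t$, whereas you bound directly against the terminal $\Cost(X,C)$, which suffices for the lemma as stated.
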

\begin{proof}
Let $C$ be a set of $k$ centers. 
Let $x_1,\ldots,x_n\in\mathbb{R}^d$ be a sequence of points and for $t\in[n]$, let $X_t=\{x_1,\ldots,x_t\}$. 
Let $Y_0=0,Y_1,\ldots,Y_n$ be a martingale with difference sequence $Z_0,\ldots,Z_n$, where for $t\ge 1$, we define $Z_t=0$ if $|Y_{t-1}|>\eps\Cost(C,X_{t-1})$. 
Otherwise, for $|Y_{t-1}|\le\eps\Cost(C,X_{t-1})$, we define
\[X_t=\begin{cases}\left(\frac{1}{p_t}-1\right)\Cost(C,x_t)\qquad&\text{if }x_t\text{ is sampled}\\
-\Cost(C,x_t)\qquad&\text{otherwise}
\end{cases}.\]
Note that $\Ex{X_t}=0$ and thus $\Ex{Y_t\mid Y_1,\ldots,Y_{t-1}}=Y_{t-1}$, so that $Y_0,\ldots,Y_t$ is a valid martingale sequence. 
Furthermore, by the construction of $Y_t$, we have that $Y_t=\Cost(C,A_t)-\Cost(C,X_t)$, where $A_t$ is the coreset at time $t$. 

Observe that if $p_t=1$, then $X_t=0$. 
Otherwise, we have that $\Ex{X_t^2}\le\frac{1}{p_t}\Cost(C,x_t)^2$. 
Let $\gamma=\frac{100}{\eps^2}\sigma_t\log\frac{1}{\delta}$ for some parameter $\delta$ that we will set, which will ultimately give the claimed setting for $\gamma$. 
Since $p_t\ge\gamma\cdot\sigma_t$, then $p_t\ge\frac{100\log\frac{1}{\delta}}{\eps^2}\sigma_t$, then $p_t\ge\frac{100\log\frac{1}{\delta}}{\eps^2}\frac{\Cost(C,x_t)}{\Cost(C,X_t)}$, so that $\Ex{X_t^2}\le\frac{\eps^2}{100\log\frac{1}{\delta}}\Cost(C,x_t)\cdot\Cost(C,X_t)$. 
Hence, 
\[\sum_{i=1}^t\Ex{X_i^2}\le\sum_{i=1}^t\frac{\eps^2}{100}\Cost(C,x_i)\cdot\Cost(C,X_i)\le\frac{\eps^2}{100\log\frac{1}{\delta}}\Cost(C,X_t)^2.\]
Furthermore, we have $X_i\le\frac{\eps^2}{100\log\frac{1}{\delta}}\Cost(C,X_i)$ for all $i\in[t]$. 
Then by Freedman's inequality, i.e., \thmref{thm:scalar:freedman},
\[\PPr{\max_{i\in[t]}|Y_i|\ge\eps\Cost(C,X_t)}\le2\exp\left(-\frac{\eps\Cost(C,X_t)^2/2}{\frac{\eps^2}{100\log\frac{1}{\delta}}\Cost(C,X_t)^2+\frac{\eps^3}{300\log\frac{1}{\delta}}\Cost(C,X_t)^2}\right),\]
and so in particular,
\[\PPr{|\Cost(C,A_t)-\Cost(C,X_t)|\le\eps\Cost(C,X_t)}\ge1-\delta.\]
By \lemref{lem:kmedian:net} for each time $t\in[n]$, it suffices to union bound over a net of size $\left(\frac{n\Delta}{\eps}\right)^{\O{kd}}$ to ensure that 
\[|\Cost(C,A_t)-\Cost(C,X_t)|\le2\eps\Cost(C,X_t),\]
for any set $C\subset\mathbb{R}^d$ with $|C|=k$. 
We further union bound over all $t\in[n]$ to ensure that 
\[|\Cost(C,A_t)-\Cost(C,X_t)|\le2\eps\Cost(C,X_t),\]
for any set $C\subset\mathbb{R}^d$ with $|C|=k$ and for any $t\in[n]$. 
Thus by setting $\log\frac{1}{\delta}=\O{kd\log\frac{n\Delta}{\eps}}$, we have that with high probability, $\calP$ is a coreset for $(k,z)$-clustering at all times in the stream. 
\end{proof}

\begin{lemma}
\lemlab{thm:online:sens:space}
Let $\gamma=\O{\frac{dk}{\eps^2}\log\frac{n\Delta}{\eps}}$ and $\tau>1$ be some parameter. 
Let $\calP$ be a process that for each $t\in[n]$, samples each point $x_t$ with probability $p_t$, where 
\[p_t\le\min(1,\tau\gamma\sigma_t),\]
for the online sensitivity $\sigma_t$ of $x_t$. 
Then with high probability, $\calP$ contains at most $\O{\frac{\tau dk^2}{\eps^2}\log\frac{n\Delta}{\eps}}$ points. 
\end{lemma}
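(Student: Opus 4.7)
The plan is to first control the expected size of $\calP$ via the total online sensitivity bound already established, and then pass from expectation to a high-probability statement using a standard martingale concentration inequality.

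Let $I_t \in \{0,1\}$ denote the indicator random variable for the event that $x_t$ is sampled by $\calP$, so that $|\calP| = \sum_{t=1}^n I_t$. By linearity of expectation and the hypothesis $p_t \le \tau\gamma\sigma_t$, I obtain
\[\Ex{|\calP|} \;=\; \sum_{t=1}^n p_t \;\le\; \tau\gamma \sum_{t=1}^n \sigma_t.\]
Invoking \thmref{thm:total:online:sens} to bound $\sum_{t=1}^n \sigma_t = \O{2^{2z}k\log^2(nd\Delta)}$ and substituting the value $\gamma = \O{\frac{dk}{\eps^2}\log\frac{n\Delta}{\eps}}$ immediately yields an expected sample count matching the claimed bound (with any extra logarithmic factors absorbed into the $\tO{}$ convention used throughout the paper).

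To move from expectation to a high-probability statement, I would model the sampling process as a martingale. In the cleanest setting the $p_t$ are deterministic functions of $x_1,\dots,x_t$ and the $I_t$ are genuinely independent Bernoullis, so a multiplicative Chernoff bound gives $|\calP| \le 2\sum_t p_t$ except with probability $\exp(-\Omega(\sum_t p_t))$. More generally, $p_t$ may depend on the sampled history (e.g., if online sensitivities are themselves approximated from previously stored points), in which case I instead set $Z_t := I_t - p_t$ and observe that this is a martingale difference sequence with $|Z_t| \le 1$ and conditional second moment $\underset{t-1}{\Ex{Z_t^2}} \le p_t$. Freedman's inequality (\thmref{thm:scalar:freedman}) applied with $R = 1$ and $\sigma^2 = \sum_{t\le n} p_t$, taking deviation parameter equal to $\sum_t p_t$, yields
\[\PPr{|\calP| \ge 2\sum_{t=1}^n p_t} \;\le\; 2\exp\!\left(-\Omega\!\left(\textstyle\sum_{t=1}^n p_t\right)\right),\]
which is $1/\poly(n)$ as soon as $\sum_t p_t = \Omega(\log n)$, a condition that trivially holds in the relevant parameter regime.

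The only conceptual subtlety is recognizing that the $I_t$ need not be assumed independent; the martingale formulation handles adaptive sampling probabilities uniformly. The rest of the argument is routine once \thmref{thm:total:online:sens} is in hand, since everything else reduces to plugging in the chosen value of $\gamma$ and collecting the polylogarithmic factors.
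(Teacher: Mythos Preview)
Your proposal is correct and follows essentially the same approach as the paper's proof: bound the expected number of samples by $\tau\gamma\sum_t\sigma_t$ via \thmref{thm:total:online:sens}, then apply Freedman's inequality (\thmref{thm:scalar:freedman}) to the indicator variables, using that each has conditional second moment at most $p_t$. Your discussion of why the martingale formulation is needed (to handle the case where $p_t$ depends on the sampled history) is actually more explicit than the paper's, but the argument is the same.
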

\begin{proof}
For each $t\in[n]$, let $Y_t$ denote the indicator random variable for whether $x_t$ is sampled by $\calP$, i.e., $Y_t=1$ if $x_t$ is sampled by $\calP$ and $Y_t=0$ otherwise, so that $Y:=\sum_{t=1}^n Y_t$ is the total number of sampled points. 
We have that $\Ex{Y_t}=p_t$ for each $t\in[n]$ and thus 
\[\Ex{Y}=\sum_{t=1}^n\Ex{Y_t}=\sum_{t=1}^n p_t\le \tau\gamma\cdot\sum_{t=1}^n\sigma_t.\]
By \thmref{thm:total:online:sens}, we have
\[\sum_{t=1}^n\sigma_t=\O{2^{2z}k\log^2(nd\Delta)},\]
so that 
\[\Ex{Y}\le\tau\gamma\cdot\O{2^{2z}k\log^2(nd\Delta)}.\]
Since $p_t\in[0,1]$, we have $\Ex{Y_t^2}\le p_t$ and so $\sum_{t=1}^n\Ex{Y_t^2}\le\tau\gamma\cdot\sum_{t=1}^n\sigma_t$. 
Hence by Freedman's inequality, i.e., \thmref{thm:scalar:freedman}, we have that for $\gamma=\O{\frac{dk}{\eps^2}\log\frac{n\Delta}{\eps}}$, $Y=\O{\frac{\tau dk^2}{\eps^2}\log\frac{n\Delta}{\eps}}$ with high probability. 
\end{proof}

\begin{theorem}[Online sensitivity sampling]
\thmlab{thm:online:sens}
Let $\calP$ be a process that for each $t\in[n]$, samples each point $x_t$ with probability $p_t\ge\min(1,\gamma\cdot\sigma_t)$, where $\sigma_t$ is the online sensitivity of $x_t$ and $\gamma=\O{\frac{dk}{\eps^2}\log\frac{n\Delta}{\eps}}$ and gives the point weight $\frac{1}{p_t}$ if $x_t$ is sampled. 
Then with high probability, $\calP$ is a coreset for $(k,z)$-clustering that contains at most $\O{\frac{dk^2}{\eps^2}\log\frac{n\Delta}{\eps}}$ points. 
\end{theorem}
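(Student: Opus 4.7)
The plan is to assemble the theorem by combining the two preceding lemmas, since the statement is really a packaging of their guarantees. First, I would appeal to \lemref{thm:online:sens:correctness}, which already establishes that sampling each point $x_t$ independently with probability $p_t \ge \min(1, \gamma \sigma_t)$ for $\gamma = \O{\frac{dk}{\eps^2}\log\frac{n\Delta}{\eps}}$ and reweighting by $1/p_t$ produces, with probability $1 - 1/\poly(n)$, a weighted set that satisfies the coreset inequality of \defref{def:coreset} against every set of $k$ centers simultaneously at every prefix $X_t$ of the stream. This takes care of the correctness half of the theorem.

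Next, I would invoke \lemref{thm:online:sens:space} with the parameter $\tau = 1$, which gives that the same sampling process produces at most $\O{\frac{dk^2}{\eps^2}\log\frac{n\Delta}{\eps}}$ sampled points with high probability. The proof of that lemma in turn rests on \thmref{thm:total:online:sens} to bound $\sum_t \sigma_t = \O{2^{2z} k \log^2(nd\Delta)}$ and on Freedman's inequality (\thmref{thm:scalar:freedman}) applied to the $\{0,1\}$-indicator martingale $Y_t$ of whether $x_t$ is sampled, so both the expectation and the high-probability concentration follow directly.

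A union bound over the two failure events---that the sampled set fails to be a coreset at some time, or that its size exceeds the stated bound---concludes the proof, both events having probability at most $1/\poly(n)$. The only mildly nontrivial step is verifying that the ``high probability'' quantifications in the two component lemmas can be taken with respect to the same sample; since each lemma's analysis holds pointwise over the same random sampling process $\calP$ (the coreset argument uses a martingale on the sampling indicators, and the size argument uses the sum of those same indicators), the union bound is immediate. There is no genuine obstacle at this stage: all the work has been front-loaded into \thmref{thm:total:online:sens}, \lemref{lem:kmedian:net}, and the two lemmas above, so this final theorem is essentially a corollary.
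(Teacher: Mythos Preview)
Your plan correctly identifies the two lemmas that carry the weight, but there is a gap in how you invoke \lemref{thm:online:sens:space}. That lemma requires an \emph{upper} bound $p_t \le \min(1,\tau\gamma\sigma_t)$, while the theorem's hypothesis only supplies the \emph{lower} bound $p_t \ge \min(1,\gamma\sigma_t)$. Taken literally, the hypothesis permits $p_t = 1$ for all $t$, in which case every point is sampled and the size claim is false. So you cannot simply set $\tau=1$ and cite the space lemma; you must first explain where the matching upper bound on $p_t$ comes from.

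The paper closes this gap with an inductive argument that is really the algorithmic content of the theorem: assuming $\calP$ is a $(1+\eps)$-coreset of $X_t$, one can use $\calP$ itself to compute a $(1+\eps)$-approximation $q'_{t+1}$ to $\sigma_{t+1}$, and then set $p_{t+1} = \gamma \cdot 2q'_{t+1}$, which yields the two-sided bound $\gamma\sigma_{t+1} < p_{t+1} < 2\gamma\sigma_{t+1}$. The lower side feeds \lemref{thm:online:sens:correctness} to propagate the coreset property to time $t+1$, and the upper side (with $\tau=2$) feeds \lemref{thm:online:sens:space} for the size bound. In other words, the theorem is not about an abstract process with oracle access to $\sigma_t$; it is about the concrete online algorithm that bootstraps its own sensitivity estimates from the running coreset. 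Your write-up treats the $p_t$ as externally given and thereby omits exactly the step that makes the size bound hold.
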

\begin{proof}
We argue by induction at each time $t\in[n]$. 
At time $t=1$, the online sensitivity of the first point is $1$ and thus the first point is sampled, and so $\calP$ is a coreset of $x_1$. 

Now suppose that at some time $t\in[n]$, $\calP$ is a coreset of $X_t=\{x_1,\ldots,x_t\}$. 
Then by the definition of coreset, the cost of any set $C$ of $k$ centers will be preserved by $\calP$. 
Hence, the coreset will output a $(1+\eps)$-approximation $q'_{t+1}$ to the online sensitivity of the next point $x_{x+1}$. 
Then by setting $q_{t+1}=2q'_{t+1}$, we have that $2\sigma_{t+1}>q_{t+1}>\sigma_{t+1}$ and so by setting $p_{t+1}=\gamma\cdot q_{t+1}$ for $\gamma=\O{\frac{dk}{\eps^2}\log\frac{n}{\eps}}$, the conditions of \lemref{thm:online:sens:correctness} are satisfied, so that with high probability, $\calP$ remains a coreset at time $t+1$. 
It follows by induction that $\calP$ is a coreset at all times. 

Moreover, it follows that each point $x_t$ at time $t\in[n]$ is sampled with probability at most $2\gamma\sigma_t$. 
Hence by \lemref{thm:online:sens:space}, we have that with high probability, $\calP$ contains $\O{\frac{dk^2}{\eps^2}\log\frac{n\Delta}{\eps}}$ points. 
\end{proof}

\subsection{\texorpdfstring{$(k,z)$}{(k,z)}-Clustering on Insertion-Only Streams}
\seclab{sec:kz:insertion}
In this section, we present a simple algorithm for $(k,z)$-clustering on insertion-only streams using $o(\log n)$ words of space. 
The algorithm proceeds as follows. 
At all times, we maintain a data structure that with high probability, produces a $(1+\eps)$-coreset $C_t$ to the point sets $X_t=\{x_1,\ldots,x_t\}$ at all times $t\in[n]$. 
Conditioned on the correctness of the coreset $C_{t-1}$ at time $t-1$, we use $C_{t-1}$ to approximate the online sensitivity of point $x_t$, which in turn, produces a probability $p_t$ of sampling $x_t$ into a secondary stream $\calS'$. 
We do not maintain $\calS'$, but rather feed it into a coreset construction algorithm, which then produces $C_t$, completing the process for time $t$. 

\thmmain*
\begin{proof}
For $t\in[n]$, let $\calE_t$ be the event that the data structure $C_t$ at time $t$ is a $(1+\eps)$-coreset of the data set $X_t=\{x_1,\ldots,x_t\}$. 
For all $t\in[n]$, let $\calS'_t$ be the stream consisting of the weighted points sampled by sensitivity sampling. 
Observe that conditioned on $\calE_{t-1}$, we can compute a $(1+\eps)$-approximation $q_t$ to the online sensitivity $\sigma_t$ of the point $x_t$. 
Thus by setting $p_t=\min(1,2\gamma q_t)$ for $\gamma=\O{\frac{dk}{\eps^2}\log\frac{n\Delta}{\eps}}$, then by \lemref{thm:online:sens:correctness}, we have that $\calS'_t$ is a $(1+\eps)$-coreset of $X_t$. 
Let $\calF_t$ be the event that $\calS'_t$ is a $(1+\eps)$-coreset of $X_t$. 
By applying a coreset construction with failure probability $\frac{1}{\poly(n)}$ and a rescaling of $\eps$, it follows that $C_t$ is a $(1+\eps)$-coreset of $X_t$, conditioned on $\calE_{t-1}$ and $\calF_t$. 
By \lemref{thm:online:sens:space}, we have that $|\calS'_t|\le\poly\left(d,k,\frac{1}{\eps},\log n,\log\Delta\right)$. 
Hence, we have that $\sum_{t=1}^n \PPr{\not\calF_t\,\mid\,\calE_{t-1}}\le\frac{1}{\poly(n)}$ and $\sum_{t=1}^n \PPr{\not\calE_t\,\mid\,\calF_t,\calE_{t-1}}\le\frac{1}{\poly\left(d,k,\frac{1}{\eps},\log n,\log\Delta\right)}$.
Thus by a union bound, we have that with high probability, $C_t$ is a $(1+\eps)$-coreset for $X_t$ at all times $t\in[n]$. 
Finally by \thmref{thm:offline:coreset:size}, we have that our algorithm uses $\tO{\frac{dk}{\varepsilon^2}}\cdot(2^{z\log z})\cdot\min\left(\frac{1}{\varepsilon^z},k\right)\cdot\poly(\log\log(n\Delta))$ words of memory. 
\end{proof}

\section{Two-Pass Dynamic Streaming Algorithm for \texorpdfstring{$k$}{k}-Median Clustering}
\seclab{sec:twopass:kmed}
In this section, we present a two-pass algorithm that uses $o(\log n)$ words of space for $k$-median clustering, using the techniques that we have built up over the previous sections. 
Our algorithm outputs a weighted subset of the input points, which forms a $(1+\eps)$-coreset of the underlying point set. 
In \secref{sec:lb}, we show it is impossible to achieve this in $o(\log n)$ words of space using a single pass over the data. 

We first require preliminaries on $L_1$ norm estimation in an insertion-only stream and on the notion of consistent clustering. 
We then incorporate these ideas into our analysis along with structural results regarding sensitivity sampling similar to those that were given in the previous section. 

\subsection{\texorpdfstring{$L_1$}{L1} Sketch}
We start by describing a streaming algorithm for $L_1$-approximation. 
In this setting, the input is a stream of elements $u_1,\ldots,u_m\in[n]$ that implicitly defines an underlying frequency vector $f\in\mathbb{R}^n$, so that each update $u_t=(c_t,\Delta_t)$ changes coordinate $c_t\in[n]$ by $\Delta_t\in[-M,M]$ for $M=\poly(n)$. 
More formally, for all $i\in[n]$, 
\[f_i=|\sum_{t\in[m]: c_t=i}\Delta_t.\]
The problem is then to output a $(1+\eps)$-approximation to 
\[\|f\|_1=\sum_{i=1}^n|f_i|=|f_1|+\ldots+|f_n|.\]

We define the general family of $p$-stable distributions, though we shall ultimately only require Cauchy random variables, i.e., the $p$-stable distribution with $p=1$. 

\begin{definition}[$p$-stable distribution]
\cite{Zolotarev89}
\deflab{def:pstable}
For $0<p\le 2$, there exists a distribution $\calD_p$ called the $p$-stable distribution $\calD_p$, such that for any positive integer $n$ with $Z_1,\ldots,Z_n\sim\calD_p$ and vector $x\in\mathbb{R}^n$, then $\sum_{i=1}^n Z_ix_i\sim Z\cdot\|x\|_p$ for $Z\sim\calD_p$. 
The $p$-stable distribution has probability density function $f(x)=\Theta\left(\frac{1}{1+|x|^{1+p}}\right)$ for $p\le 2$ and is equal to the Cauchy distribution for $p=1$ (and also the normal distribution for $p=2$).  
\end{definition}
For completeness, \cite{Nolan03} presented a now standardized method for sampling a random variable $X$ from the $p$-stable distribution, by generating $\theta$ uniformly at random from the interval $\left[-\frac{\pi}{2},\frac{\pi}{2}\right]$, $r$ uniformly at random from the interval $[0,1]$, and outputting
\[X=f(r,\theta)=\frac{\sin(p\theta)}{\cos^{1/p}(\theta)}\cdot\left(\frac{\cos(\theta(1-p))}{\log\frac{1}{r}}\right)^{\frac{1}{p}-1}.\]

The $p$-stable distribution is utilized in an $L_1$ estimation algorithm of \cite{Indyk06}. 
The algorithm creates $\ell=\O{\frac{1}{\eps^2}\left(\log\frac{1}{\eps\delta}+\log\log m\right)}$ vectors $v^{(1)},\ldots,v^{(\ell)}\in\mathbb{R}^n$. 
For each $i\in[\ell]$, the vector $v^{(i)}$ consists of independently generated Cauchy random variables and the algorithm maintains $Z_i=\langle v^{(i)},f\rangle$ throughout the stream. 
At the end of the stream, the algorithm outputs $\median_{i\in[\ell]} Z_i$ as the estimation for $\|f\|_1$. 

We have the following guarantees for the algorithm of $p$-stable sketch:
\begin{theorem}
\thmlab{thm:lp:strong:tracking:bdn}
\cite{Indyk06}
For $p\in(0,2]$ and any dynamic stream of length $m$ on a universe of size $n$, there exists an algorithm that with probability at least $1-\delta$, provides a $(1+\eps)$-approximation to the $L_p$ norm of the stream, using $\O{\frac{1}{\eps^2}\log nm\log\frac{1}{\delta}}$ bits of space. 
\end{theorem}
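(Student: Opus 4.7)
The plan is to analyze the $p$-stable sketch described immediately before the theorem: maintain $\ell = \O{\frac{1}{\eps^2}\log\frac{1}{\delta}}$ independent counters $Z_i = \langle v^{(i)}, f\rangle$, where each $v^{(i)} \in \mathbb{R}^n$ has i.i.d.\ entries drawn from $\calD_p$, and output $\median_{i\in[\ell]} |Z_i|/c_p$ as the estimate, where $c_p$ is the median of $|\calD_p|$ (a fixed constant depending only on $p$). Since each update $(c_t,\Delta_t)$ only requires adding $v^{(i)}_{c_t}\cdot\Delta_t$ to $Z_i$, the sketch is linear and handles insertions and deletions uniformly, which matches the dynamic stream model in the statement.

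For correctness, \defref{def:pstable} gives $Z_i \sim X_i \cdot \|f\|_p$ with $X_i \sim \calD_p$, so $|Z_i|/c_p$ is a $(1\pm\eps)$-approximation of $\|f\|_p$ exactly when $|X_i| \in [(1-\eps)c_p, (1+\eps)c_p]$. The CDF of $|\calD_p|$ is continuous and strictly increasing at $c_p$ (this follows from the explicit density $\Theta(1/(1+|x|^{1+p}))$), so there is a constant $\alpha_p > 0$ such that for all sufficiently small $\eps$, both $\PPr{|X_i| < (1-\eps)c_p} \le \tfrac{1}{2} - \alpha_p\eps$ and $\PPr{|X_i| > (1+\eps)c_p} \le \tfrac{1}{2} - \alpha_p\eps$. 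Applying a Chernoff bound to the $\ell$ independent indicators on each side and choosing $\ell = \O{\frac{1}{\eps^2}\log\frac{1}{\delta}}$ ensures that, with probability at least $1-\delta$, fewer than $\ell/2$ of the $|Z_i|/c_p$ fall on either side of the target interval, so that the median must land inside it.

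For the space bound, each counter $Z_i$ is bounded by $\poly(nm)$ throughout the stream since $\|f\|_1 \le mM = \poly(nm)$ and the Cauchy/$p$-stable entries can be truncated at $\poly(nm)$ without affecting the median estimator; hence each counter fits in $\O{\log(nm)}$ bits, giving $\O{\frac{1}{\eps^2}\log\frac{1}{\delta}\log(nm)}$ bits across all counters. The main obstacle — and the only delicate part — is that the $\ell$ vectors $v^{(i)}$ naively require $\Omega(n\ell)$ truly random coordinates, which would blow the space budget. I would resolve this by applying Nisan's pseudorandom generator to the streaming algorithm viewed as a small-space finite-state machine: the seed length is $\O{\log(nm))}$ bits, each needed entry of $v^{(i)}$ is regenerated on demand from the seed via the explicit $f(r,\theta)$ formula, and the PRG fools the Chernoff-style analysis above up to a negligible additive error. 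This follows Indyk's original derandomization, and combining it with the correctness argument yields the claimed $\O{\frac{1}{\eps^2}\log(nm)\log\frac{1}{\delta}}$-bit total.
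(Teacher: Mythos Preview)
The paper does not prove this theorem; it is stated as a citation to \cite{Indyk06}, preceded only by a short description of the algorithm. Your proposal is exactly the standard argument behind that citation (the $p$-stable sketch, median-of-estimates via CDF continuity plus Chernoff, and Nisan's PRG for derandomizing the random vectors), so in that sense it matches what the cited result relies on.

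One concrete slip: you write that Nisan's PRG has seed length $\O{\log(nm)}$ bits. Nisan's generator needs seed length $\O{S\log R}$, where $S$ is the workspace of the branching program and $R$ is the number of random bits consumed; here $S=\Theta\!\left(\frac{1}{\eps^2}\log\frac{1}{\delta}\log(nm)\right)$ already, so the seed is certainly not $\O{\log(nm)}$. What you actually need is that the seed is $\O{S\log(nm)}$, which is at most a $\log(nm)$ factor above the counter storage; getting the bound exactly as stated (with a single $\log(nm)$) requires the more refined derandomizations that came after Indyk's original paper. This does not affect the high-level plan, but the sentence as written is incorrect.
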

Putting the $L_1$ sketch and the EMD embedding together, we have:
\begin{lemma}
\lemlab{lem:sens:lower}
There exists an algorithm that uses $\O{kd(\log k+\log\log\Delta)}$ words of space and outputs a clustering with estimated cost $Z$ such that with probability at least $0.98$, 
\[\OPT\le Z\le\O{d^{1.5}(\log k+\log\log\Delta)}\OPT,\]
where $\OPT$ is the optimal $k$-median clustering cost at the end of the stream. 
\end{lemma}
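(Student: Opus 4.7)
The plan is to compose the EMD-into-$L_1$ embedding from \secref{sec:emd:sketch} with Indyk's linear $L_1$ sketch, and then run the doubling-dimension recovery procedure to extract both a set of $k$ centers and a valid cost estimate $Z$. During the pass, I would maintain the linear sketch obtained by applying the $L_1$ sketch of \thmref{thm:lp:strong:tracking:bdn} (with small constant accuracy $\eps_0$ and constant failure probability) to the embedded vector $G_s\mu$, where $\mu$ is the dynamic frequency vector of the stream and $s\in[\Delta]^d$ is a uniformly random shift. Because both $G_s$ and the $p$-stable sketch are linear, the composition is a single linear sketch handling both insertions and deletions.

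At the end of the pass, I would apply \lemref{lem:emd:recovery} with metric $\rho=\EMD$, set $Y$ equal to the $k$-sparse probability measures over $[\Delta]^d$ (whose doubling dimension is $\O{dk\log\log\Delta}$ by \lemref{lem:wass:net}), and distortion $D=\O{d^{1.5}(\log k+\log\log\Delta)}$ given by \thmref{thm:gs:contraction:dilation}. The recovery returns a $k$-sparse probability measure $\hat\nu$ whose support provides the $k$ centers of the output clustering. For the estimated cost I set
\[
Z := c\sqrt{d}\cdot \widetilde{\|G_s(\mu-\hat\nu)\|_1},
\]
where $c$ is the constant hidden in the contraction bound of \lemref{lem:gs:contraction} and $\widetilde{\|G_s(\mu-\hat\nu)\|_1}$ is the $(1\pm\eps_0)$ estimate produced by the sketch (computable in post-processing since $G_s\hat\nu$ is explicit from the $k$-sparse $\hat\nu$ and combines linearly with the sketch of $G_s\mu$).

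For the lower bound, \lemref{lem:gs:contraction} gives $Z\ge(1-\eps_0)\|\mu-\hat\nu\|_{\EMD}\ge(1-\eps_0)\OPT$ since $\hat\nu$ is $k$-sparse. For the upper bound, the key point is that the recovery approximately minimizes sketched distance, so $\widetilde{\|G_s(\mu-\hat\nu)\|_1}\le \O{1}\cdot\min_{\nu^*}\|G_s(\mu-\nu^*)\|_1$ where the minimum is taken over $k$-sparse $\nu^*$; combining this with the dilation bound \lemref{lem:gs:dilation} applied to the true optimum gives $\min_{\nu^*}\|G_s(\mu-\nu^*)\|_1\le \O{d(\log k+\log\log\Delta)}\OPT$. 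Multiplying by $\sqrt{d}$ yields $Z\le \O{d^{1.5}(\log k+\log\log\Delta)}\OPT$. The space bound of $\O{kd(\log k+\log\log\Delta)}$ words follows by plugging the constant-error $L_1$ sketch size, the doubling dimension, and the distortion into the recovery sketch size of \lemref{lem:emd:recovery}, together with a union bound bringing the total failure probability down to $0.02$.

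The main obstacle is that \lemref{lem:emd:recovery} is stated in terms of the true distance $\rho(\mu,\hat\nu)$ rather than the sketched distance $\|G_s(\mu-\hat\nu)\|_1$; a naive black-box application combined with the dilation bound would blow the ratio up to $\O{d^3(\log k+\log\log\Delta)^2}$. The tighter sketched-distance control needed for $Z$ must be read off the internals of the top-down search in the recovery algorithm, which already selects among candidate balls using the sketched distance at each level, so the final $\hat\nu$ carries a bound on sketched distance in addition to its true-distance guarantee. An alternative route I would fall back on if this refinement is awkward is to replace the black-box recovery by an explicit enumeration over a $\lambda$-net of $k$-sparse measures of size $(\log\Delta)^{\O{dk}}$ promised by \lemref{lem:wass:net}, choosing $\hat\nu$ to directly minimize the $L_1$-sketch estimate while boosting the per-query accuracy of the sketch to absorb a union bound over the net.
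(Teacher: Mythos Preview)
Your proposal is essentially correct and uses exactly the ingredients the paper intends: the EMD embedding $G_s$, an $L_1$ sketch of $G_s\mu$, and the doubling-dimension search of \lemref{lem:emd:recovery}. The paper gives no proof of \lemref{lem:sens:lower} beyond the sentence ``Putting the $L_1$ sketch and the EMD embedding together,'' but the way the lemma is invoked in the proof of \lemref{lem:sens:approx} makes the intended argument explicit: there $\widetilde{Z}$ is defined as ``the smallest estimated cost by the EMD sketch scaled by $\O{\sqrt d}$, across all sets of $k$ centers in the doubling dimension net.'' That is precisely your fallback route (b), not your primary route through a black-box call to \lemref{lem:emd:recovery} followed by a post-hoc computation of $Z$ from the recovered $\hat\nu$.

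On your primary route, the obstacle you flag is real. The recovery lemma only controls $\rho(\mu,\hat\nu)$, and since $\hat\nu$ depends on the random shift $s$ you cannot simply invoke \lemref{lem:gs:dilation} for the pair $(\mu,\hat\nu)$; doing so naively gives the $\O{d^{3}(\log k+\log\log\Delta)^2}$ blowup you compute. Your proposed fix of extracting a sketched-distance bound from the internals of the top-down search is delicate: at each level the procedure compares sketched distances only among candidates in the \emph{current} ball, which need not contain anything close to $\nu^*$, so the per-level sketched minimum can drift. The argument can be made to work, but it is not a one-line observation.

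By contrast, your fallback (b) is clean and is what the paper does: take $Z=c\sqrt d\cdot\min_{\nu\in\calN}\widetilde{\|G_s(\mu-\nu)\|_1}$ over the doubling-dimension net $\calN$. The lower bound $Z\ge\OPT$ uses only the deterministic contraction of \lemref{lem:gs:contraction}, valid for every $\nu$; the upper bound needs the dilation of \lemref{lem:gs:dilation} only once, for the fixed pair $(\mu,\nu^*)$, which holds with probability $0.999$ over $s$. The $L_1$ sketch is then boosted to survive a union bound over the $(\log\Delta)^{\O{dk}}$ candidates queried by the top-down search, which is where the $\O{kd(\log k+\log\log\Delta)}$ word count comes from. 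I would drop route (a) and present (b) directly.
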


\subsection{Dynamic Streaming Algorithm}
We now describe our two pass streaming algorithm for insertion-deletion streams. 
In the first pass over the data stream, our algorithm maintains a sketch of the EMD embedding, as in \algref{alg:emd:sketch}. 

\begin{algorithm}[!htb]
\caption{EMD Sketch}
\alglab{alg:emd:sketch}
\begin{algorithmic}[1]
\Require{Input points $x_1,\ldots,x_n\in[\Delta]^d$}
\Ensure{Data structure to support EMD queries}
\State{Initiate random shift $s$}
\State{Create grid embedding $G_s$}
\State{Let $\mu_t$ be a probability distribution and $\nu$ be a $k$-sparse probability distribution}
\State{\Return a constant-factor approximation to $\|G_s(\mu_t-\nu)\|_1$ using an $L_1$ sketch, i.e., \thmref{thm:lp:strong:tracking:bdn}}
\end{algorithmic}
\end{algorithm}

Because the data structure is a linear sketch, it is amenable to insertions and deletions. 
We would like to use the EMD sketch to perform sensitivity sampling over the points, where we define the sensitivity of a point $x\in\mathbb{R}^d$ with respect to a set $X\subset\mathbb{R}^d$ by
\[\varphi_X(x):=\max_{C\subset\mathbb{R}^d: |C|\le k}\frac{\Cost(x,C)}{\Cost(X,C)}=\max_{C\subset\mathbb{R}^d: |C|\le k}\frac{\dist(x,C)^z}{\sum_{t=1}^n\dist(x_t,C)^z}.\]
Throughout the section, the set $X$ will be the set of points at the end of the stream and thus we omit the subscript $X$ for clarity. 
We show that that an approximation to the sensitivity of each point can be computed in \algref{alg:approx:sens}. 

\begin{algorithm}[!htb]
\caption{Approximation of Sensitivity for $k$-Median Clustering}
\alglab{alg:approx:sens}
\begin{algorithmic}[1]
\Require{Input dataset $X\subset[\Delta]^d$}
\Ensure{Approximate sensitivity for $x$ for any query $x\in[\Delta]^d$}
\State{$m\gets\tO{kd\log\log\Delta}$}
\State{Initialize an EMD sketch, i.e., $m$ instances of $L_1$ sketches on random EMD embeddings}
\State{Update the EMD sketch with $X$}
\Comment{First pass}
\State{Use the EMD sketch to find a near-optimal set $S$ of $k$ centers}
\State{Let $\widetilde{\Cost(X,S)}$ be the estimated cost of $S$ on $X$}
\Comment{\lemref{lem:sens:lower}}
\State{Let $\calM$ be the net from \lemref{lem:kmedian:net}}
\State{$q(x)\gets0$}
\For{$C\in\calM$}
\State{$q(x)\gets\max\left(q(x),\frac{8\gamma\Cost(x,C)}{\Cost(S,C)+\widetilde{\Cost(X,S)}}\right)$}
\Comment{\lemref{lem:opt:approx:sens}}
\EndFor
\State{\Return $q(x)$}
\end{algorithmic}
\end{algorithm}

Unfortunately, sensitivity sampling is incompatible with the dynamic setting, where sampled points could be subsequently deleted. 
Instead, in the second pass over the data stream, our algorithm uses the EMD sketch to subsample from the universe with probability proportional to each point's sensitivity.  
Thus if a universe element is subsampled, all updates to the element throughout the stream are reported. 
The universe element is then fed into a sparse recovery scheme, which is a linear sketch that can handle both insertions and deletions. 
We present the algorithm in full in \algref{alg:two:pass:dynamic:kmed}. 

\begin{algorithm}[!htb]
\caption{Two-pass Dynamic Streaming Algorithm for $k$-Median}
\alglab{alg:two:pass:dynamic:kmed}
\begin{algorithmic}[1]
\Require{Stream of updates of length $m=\poly(n)$ to coordinates of $[\Delta]^d$, approximation parameter $\eps\in(0,1)$, number of clusters $k$, parameter $z\ge 1$}
\Ensure{$(1+\eps)$-coreset for $(k,z)$-median clustering}
\State{First pass: implicitly compute an approximate sensitivity $q(x)$ for all $x\in[\Delta]^d$}
\Comment{\algref{alg:approx:sens}}
\State{$s\gets\poly\left(k,d,\frac{1}{\eps^2},\log\log n,\log\log\Delta\right)$}
\State{$p(x)=\min\left(1,\frac{k^2d}{\eps^2}\log k\cdot q(x)\right)$}
\State{Initialize an $100s$-sparse recovery algorithm $A$ (see \lemref{lem:sparse:recovery})}
\State{Draw hash function $h$ from a family of hash functions where for all $x\in\mathbb{R}^d$, $\PPr{h(x)=1}=p(x)$ and $h(x)=0$ otherwise}
\For{each update in the second pass}
\If{the update is to $x\in[\Delta]^d$ with $h(x)=1$}
\State{Update $A$ with the corresponding update to $x$}
\EndIf
\EndFor
\State{\Return the points $x$ output by $A$ weighted by $\frac{1}{q(x)}$}
\end{algorithmic}
\end{algorithm}

We first claim that with constant probability, we can obtain a bounded approximation to the sensitivity of each point. 
\begin{lemma}
\lemlab{lem:sens:approx}
There exists an algorithm that uses $\O{kd^2\log\log\Delta(\log k+\log\log n)}$ words of space and outputs $q(x)$ for all points $x\in\mathbb{R}^d$ in the stream, such that with probability at least $0.98$, 
\[\varphi(x)\le q(x)\le\O{d^{1.5}(\log k+\log\log\Delta)}\cdot\varphi(x),\]
where $\varphi(x)$ is the sensitivity of $x$ with respect to a point set $X\subset\mathbb{R}^d$ for the $k$-median problem.  
\end{lemma}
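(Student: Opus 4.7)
The plan is to verify that \algref{alg:approx:sens} achieves the advertised approximation to the sensitivity. The strategy has three ingredients: use the EMD sketch from the first pass to produce a coarse approximation of the optimal $k$-clustering, use \lemref{lem:opt:approx:sens} to convert this coarse clustering into pointwise sensitivity estimates, and use \lemref{lem:kmedian:net} to reduce the quantifier over all $k$-tuples of centers to a finite net.

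First, I would invoke \lemref{lem:sens:lower} on the first pass, which, in $\O{kd(\log k+\log\log\Delta)}$ words, produces a set $S$ of $k$ centers together with an estimate $\widetilde{\Cost(X,S)}$ satisfying $\OPT\le\widetilde{\Cost(X,S)}\le\O{d^{1.5}(\log k+\log\log\Delta)}\cdot\OPT$ with probability at least $0.98$. Since this sandwiches $\Cost(X,S)$ in turn, we are in position to apply \lemref{lem:opt:approx:sens} with parameters $\gamma$ and $\beta$ both of order $\O{d^{1.5}(\log k+\log\log\Delta)}$ and with $Z:=\widetilde{\Cost(X,S)}$. For the parameter $Z'$ we interpret $\Cost(S,C)$ appearing in the pseudocode as the cost of the weighted representation of $S$, i.e., $\Cost(\pi(X),C)$ with the multiplicities $|\pi^{-1}(s)|$ recovered from the same first-pass sketch (either by reading the quadtree leaves supporting $S$, or by running $k$ additional $L_1$ counters indexed by the recovered centers); this yields $\beta'=\O{1}$.

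With these parameter choices, \lemref{lem:opt:approx:sens} gives, pointwise for every fixed set $C\subset\mathbb{R}^d$ with $|C|\le k$,
\[\frac{\Cost(x,C)}{\Cost(X,C)}\le\frac{8\gamma\,\Cost(x,C)}{\Cost(\pi(X),C)+\widetilde{\Cost(X,S)}}\le\O{d^{1.5}(\log k+\log\log\Delta)}\cdot\frac{\Cost(x,C)}{\Cost(X,C)}.\]
These inequalities survive taking a maximum over $C$ on both sides, so taking the max over the net $\calM$ from \lemref{lem:kmedian:net} (which agrees with the max over all valid $C$ up to a $(1\pm\eps)$-factor that is absorbed into the $\O{\cdot}$) yields the claimed sandwich $\varphi(x)\le q(x)\le\O{d^{1.5}(\log k+\log\log\Delta)}\cdot\varphi(x)$ for every query $x$.

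The main obstacle is that the guarantee must hold for every $x\in\mathbb{R}^d$ simultaneously while the EMD sketch is randomized. Crucially, however, once the randomness of the EMD sketch is revealed (so that $S$, $\widetilde{\Cost(X,S)}$, and the multiplicities are fixed), the map $x\mapsto q(x)$ is a deterministic function of the sketch output, so a single invocation of the $0.98$-probability event from \lemref{lem:sens:lower} suffices and no union bound over queries is needed. The extra $d\log\log\Delta(\log k+\log\log n)$ factor in the space over \lemref{lem:sens:lower} comes from the additional sketch components required to recover the multiplicities $|\pi^{-1}(s)|$ within the same pass, from storing the $k$ centers of $S$ explicitly in $\O{kd\log\Delta}$ bits, and from the modest boost in failure probability so that the recovery of $S$ via \lemref{lem:emd:recovery} succeeds jointly with the $L_1$ estimator.
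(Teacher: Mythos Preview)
Your proposal is correct and follows essentially the same approach as the paper: invoke \lemref{lem:sens:lower} to obtain an approximately optimal clustering $S$ with an estimated cost, plug this into \lemref{lem:opt:approx:sens} to sandwich the ratio $\Cost(x,C)/\Cost(X,C)$ for every $C$, and maximize over the net $\calM$ from \lemref{lem:kmedian:net}. You supply more detail than the paper on two points the paper leaves implicit---that $\Cost(S,C)$ in the pseudocode must be the weighted cost $\Cost(\pi(X),C)$ (hence the need to recover the cluster masses), and that no union bound over query points $x$ is required because $q(\cdot)$ is deterministic once the sketch randomness is fixed---but the skeleton of the argument is the same.
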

\begin{proof}
Let $\widetilde{Z}$ be the smallest estimated cost by the EMD sketch scaled by $\O{\sqrt{d}}$, across all sets of $k$ centers in the doubling dimension net. 
Let $\OPT$ be the optimal $k$-median clustering cost of the dataset at the end of the stream. 
Let $\calE$ be the event that 
\[\OPT\le\widetilde{Z}\le\O{d^{1.5}(\log k+\log\log\Delta)}\OPT,\]
so that $\PPr{\calE}$ with probability at least $0.98$, by \lemref{lem:sens:lower}. 

Recall that for a point $x$, its sensitivity is defined by 
\[\varphi(x)=\max_{C:\,C\subset\mathbb{R}^d,\, |C|=k}\frac{\Cost(x,C)}{\Cost(X,C)},\]
where $X$ is the set of the points remaining at the end of the stream.

By \lemref{lem:opt:approx:sens}, a constant factor approximation to a clustering that achieves a clustering cost that is an $\O{d^{1.5}(\log k+\log\log\Delta)}$-approximation to $\OPT$ can be used to compute a $\O{d^{1.5}(\log k+\log\log\Delta)}$-approximation to $\frac{\Cost(x,C)}{\Cost(X,C)}$ for any set $C$ of $k$ centers. 
Thus conditioned on $\calE$ the estimate $\widetilde{Z}$ of the EMD sketch can be used to compute a $\O{d^{1.5}(\log k+\log\log\Delta)}$-approximation to $\varphi(x)$ for all $x\in[\Delta]^d$. 
Hence, the EMD sketch can only be used to compute $q(x)$ such that with probability at least $0.98$, 
\[\varphi(x)\le q(x)\le\O{d^{1.5}(\log k+\log\log\Delta)}\cdot\varphi(x),\]
simultaneously for all $t\in[n]$. 

The space complexity results from maintaining $\tO{kd\log\log\Delta}$ instances of the EMD sketch, i.e., the $L_1$ sketch on the EMD embedding. 
Note that since the grid of the EMD embedding has size $\Delta^{\O{d}}$, then each $L_1$ sketch uses $\O{d\log\Delta}$ bits of space. 
Hence, the total space is $\tO{kd^2\log\log\Delta}$ words of space. 
\end{proof}

We then recall a well-known deterministic algorithm for sparse recovery.
\begin{lemma}
\lemlab{lem:sparse:recovery}
There exists a deterministic algorithm that recovers the non-zero coordinates of a $k$-sparse vector of length $n$ in a dynamic stream, using $\O{k\log n}$ bits of space. 
\end{lemma}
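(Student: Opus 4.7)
The plan is to use a deterministic linear sketch based on power-sum moments, following the classical Prony / Berlekamp-Massey approach to $k$-sparse recovery. I would work in a finite field $\mathbb{F}_p$, where $p$ is a prime of size $\poly(n)$ chosen so that $p > n$ and $p$ exceeds twice the largest entry magnitude that any $k$-sparse final vector can take (since updates are polynomially bounded, $p = \poly(n)$ suffices). Maintain exactly $2k$ counters $s_0, s_1, \ldots, s_{2k-1} \in \mathbb{F}_p$, each using $\O{\log n}$ bits, for a total of $\O{k \log n}$ bits. The $j$-th counter tracks
\[s_j \;=\; \sum_{i=1}^n i^j f_i \pmod p,\]
which is maintained linearly by processing each update $(c_t,\Delta_t)$ as $s_j \gets s_j + c_t^j \Delta_t \pmod p$ for every $j$; because the sketch is linear it handles insertions and deletions uniformly.

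To recover at the end of the stream, I would use the fact that if $f$ has support $\{i_1,\ldots,i_k\}\subseteq[n]$ with values $v_\ell$, then $s_j = \sum_{\ell=1}^k v_\ell\, i_\ell^j$. This is precisely the moment sequence of a length-$k$ linear recurrence whose characteristic polynomial is $\prod_\ell (x - i_\ell)$. I would feed $(s_0,\ldots,s_{2k-1})$ to Berlekamp-Massey to recover this polynomial in $\mathbb{F}_p[x]$, extract its roots by evaluating the polynomial at every $i \in [n]$ to identify the support, and then solve the resulting $k \times k$ Vandermonde system in the $v_\ell$'s to recover the values. All three subroutines are deterministic and require no additional stored state beyond the $2k$ counters.

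The one place care is needed, and the main (mild) obstacle, is correctness of the modular arithmetic. The choice $p > n$ forces the support indices to remain distinct in $\mathbb{F}_p$, which makes the Vandermonde system nonsingular and ensures Berlekamp-Massey returns the minimal polynomial of degree exactly equal to the support size. The choice that $p$ exceeds twice the maximum entry magnitude lets me read off each $v_\ell$ as the unique signed residue in $\{-(p-1)/2,\ldots,(p-1)/2\}$, reproducing the true integer value. No probabilistic bucket-isolation argument or hashing is required, which is precisely why the bit count comes out at $\O{k \log n}$ rather than the $\O{k \log^2 n}$ typical of CountSketch-style schemes.
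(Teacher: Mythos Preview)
Your proposal is correct and is precisely the standard Prony/Reed--Solomon-syndrome approach to deterministic $k$-sparse recovery. The paper itself does not actually supply a proof for this lemma: it merely states ``We then recall a well-known deterministic algorithm for sparse recovery'' and cites the result as folklore. So there is nothing to compare against; you have filled in a proof where the paper left none.

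One small remark: your claim that $p=\poly(n)$ suffices to bound the final entry magnitudes is relying on an assumption about the stream (polynomially many updates of polynomially bounded magnitude) that is standard and is indeed the regime the paper works in (stream length $m=\poly(n)$), but it is an assumption external to the lemma statement as written. If one wanted to be fully self-contained, one could either add that hypothesis explicitly or note that the bit complexity is $\O{k\log(nM)}$ where $M$ bounds the final coordinate magnitudes, which collapses to $\O{k\log n}$ in the paper's model.
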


We now show that if we perform sensitivity sampling on the universe, then the number of sampled elements in the stream is upper bounded by $100s$, so that the sparse recovery scheme used in \algref{alg:two:pass:dynamic:kmed} will properly recover all sampled points. 
\begin{lemma}
\lemlab{lem:sample:sparse}
Let $\calE$ be the event that for all $x\in X$, we have
\[\varphi(x)\le q(x)\le\O{d^{1.5}(\log k+\log\log\Delta)}\cdot\varphi(x).\]
Then conditioned on $\calE$, the set of nonzero points induced by the input to $A$ has sparsity at most $100s$ with probability at least $0.99$. 
\end{lemma}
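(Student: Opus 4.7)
The plan is to bound the sparsity of the input to $A$ by counting the number of points from the final dataset $X$ that get selected by the hash function $h$. Since any coordinate $x \notin X$ has a net stream update of zero, the sparsity equals $Y := |\{x \in X : h(x) = 1\}|$. We will control $Y$ by linearity of expectation together with the sensitivity bound supplied by $\calE$, and then invoke Markov's inequality.

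First, since $h$ is drawn so that $h(x) = 1$ independently across $x$ with probability $p(x)$, we have by linearity
\[\mathbb{E}[Y] \;=\; \sum_{x \in X} p(x) \;\le\; \frac{k^2 d \log k}{\eps^2}\sum_{x\in X} q(x).\]
Conditioning on $\calE$, we substitute $q(x) \le \O{d^{1.5}(\log k + \log\log\Delta)}\cdot \varphi(x)$ and obtain
\[\mathbb{E}[Y \mid \calE] \;\le\; \O{\frac{k^{2} d^{2.5}\log k\,(\log k + \log\log\Delta)}{\eps^2}}\cdot\sum_{x \in X}\varphi(x).\]

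Next, I would invoke the classical bound on the total (offline) sensitivity of $k$-median, namely $\sum_{x \in X}\varphi(x) = \O{k}$. This bound can be recovered directly from the argument in \thmref{thm:total:online:sens} applied to the final dataset (with the online denominator replaced by $\Cost(X,C)$), which avoids the $\log(n\Delta)$ factor that appears in the online setting because no partition of the stream into doubling blocks is needed. Combining the two inequalities gives
\[\mathbb{E}[Y \mid \calE] \;=\; \O{\frac{k^{3}d^{2.5}\log k\,(\log k + \log\log\Delta)}{\eps^2}},\]
which is a fixed polynomial in $k$, $d$, $1/\eps^2$, and $\log\log\Delta$, and is therefore at most $s$ by the choice of $s$ as a sufficiently large polynomial $\poly(k,d,1/\eps^2,\log\log n,\log\log\Delta)$. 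Finally, Markov's inequality, applied conditionally on $\calE$, yields
\[\Pr[Y \ge 100 s \mid \calE] \;\le\; \frac{\mathbb{E}[Y \mid \calE]}{100 s} \;\le\; \frac{1}{100},\]
so the sparsity is at most $100s$ with conditional probability at least $0.99$, as claimed.

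The main obstacle is ensuring the total sensitivity bound does not introduce a $\log n$ or $\log \Delta$ factor, since any such factor would exceed the allowed $\log\log$ dependence built into $s$. This is handled precisely by using the offline (rather than online) sensitivity argument together with the sharp $\O{k}$ bound for $k$-median; the remainder of the proof is just linearity of expectation plus Markov, and the independence of the $h(x)$'s across the universe trivially suffices since we only need a first-moment bound.
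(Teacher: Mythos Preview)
Your proposal is correct and follows essentially the same approach as the paper: reduce to counting the points $x\in X$ with $h(x)=1$, bound $\sum_{x\in X}p(x)$ via the total-sensitivity estimate $\sum_{x\in X}\varphi(x)=\O{k}$ (which the paper invokes implicitly through the definition of $s$), and apply Markov's inequality. The paper's proof is simply a terser version of what you wrote; your explicit calculation of $\mathbb{E}[Y\mid\calE]$ and the remark that only a first-moment bound is needed (so the precise independence structure of $h$ is irrelevant) are both on point.
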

\begin{proof}
Although as many as $n$ points may be input to $A$, the nonzero points at the end of the stream form exactly $X$. 
Because $A$ is a sparse recovery algorithm, it suffices to consider the points $x$ of $X$ with $h(x)=1$. 
Observe that conditioned on $\calE$, the expected number of sampled points is $\sum_{x\in X}p(x)\le s$. 
Thus by Markov's inequality, the set of nonzero points induced by the input to $A$ has sparsity at most $100s$ with probability at least $0.99$. 
\end{proof}

We next recall the following statement showing that sensitivity sampling results in a $(1+\eps)$-coreset of the underlying dataset. 
\begin{theorem}[Theorem B.9 in \cite{WoodruffY23a}]
\thmlab{thm:sens:sample}
For $X\subset[\Delta]^d$ of $n$ points, let $\varphi(x)$ be the sensitivity of $x\in X$, let $q(x)\ge\varphi(x)$, and let $p(x)\ge\min\left(1,\frac{k^2d}{\eps^2}\log k\cdot q(x)\right)$. 
Then with probability at least $0.99$, independently sampling each point $x\in X$ with probability $p(x)$ and reweighting each sampled point by $\frac{1}{p(x)}$ gives a $(1+\eps)$-coreset.
\end{theorem}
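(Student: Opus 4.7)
The plan is to follow the standard sensitivity-sampling proof template. First I would fix a single set $C \subset \mathbb{R}^d$ with $|C| = k$, and for each $x \in X$ define the random variable $Y_x$ that equals $\Cost(x,C)/p(x)$ when $x$ is sampled (which happens independently with probability $p(x)$) and equals $0$ otherwise. Setting $\hat Z(C) := \sum_{x \in X} Y_x$, we have $\Ex{Y_x} = \Cost(x,C)$ and thus $\Ex{\hat Z(C)} = \Cost(X,C)$, so $\hat Z(C)$ is an unbiased estimator of the clustering cost; points with $p(x) = 1$ simply contribute deterministically and are dropped from the random part of the analysis.

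Next I would bound the variance of $\hat Z(C)$ using the hypothesis on $p(x)$. Since $q(x) \ge \varphi(x) \ge \Cost(x,C)/\Cost(X,C)$ for every $k$-set $C$, the assumption $p(x) \ge \frac{k^2 d \log k}{\eps^2}\, q(x)$ gives
\[
\Var(Y_x) \;\le\; \frac{\Cost(x,C)^2}{p(x)} \;\le\; \frac{\eps^2}{k^2 d \log k} \cdot \Cost(x,C) \cdot \Cost(X,C),
\]
so summing over $x \in X$ yields $\sum_x \Var(Y_x) \le \frac{\eps^2}{k^2 d \log k}\, \Cost(X,C)^2$, while the maximum individual contribution is bounded by the same order of $\frac{\eps^2}{k^2 d \log k}\, \Cost(X,C)$. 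Applying Bernstein's inequality (\thmref{bernstein}) then produces a per-configuration concentration bound of the form $\PPr{|\hat Z(C) - \Cost(X,C)| \ge \eps \cdot \Cost(X,C)} \le 2\exp(-\Omega(k^2 d \log k))$.

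To lift this pointwise statement to a uniform one over all $C$, I would appeal to the net reduction in \lemref{kmedian:net}: establishing the $(1\pm \eps)$ guarantee on a discretization $\calN$ of size $(n\Delta/\eps)^{\O{kd}}$ suffices, since shifting any cluster center within the net accuracy perturbs each cost by only a $(1+\eps)$ factor. A union bound over $\calN$ preserves the $0.99$ success probability provided the per-set Bernstein failure probability is much smaller than $1/|\calN|$, which is exactly what the $\log k$ slack in the $p(x)$ lower bound (together with the freedom to calibrate the hidden constants, or equivalently to replace the naive coordinate net by the standard VC/pseudo-dimension bound of order $kd$ for the $k$-median range space) is designed to afford.

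The main obstacle will be the careful reconciliation of the Bernstein tail with the net cardinality: the exponent $k^2 d \log k$ coming from the variance bound must dominate the exponent $kd\log(n\Delta/\eps)$ coming from the net, and this is the step where one either tightens the net via pseudo-dimension arguments or absorbs the mismatch into the constants hidden in the hypothesis $p(x) \ge \frac{k^2 d}{\eps^2}\log k\cdot q(x)$. A secondary concern is the implicit requirement that the set of sampled points is indeed recoverable, but that boundedness is supplied separately by \lemref{sample:sparse}, so the coreset guarantee itself rests purely on the unbiasedness, variance, and net argument described above.
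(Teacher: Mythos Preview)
Your proposal is essentially correct and matches what the paper describes: this theorem is quoted from \cite{WoodruffY23a} rather than proved in the paper, and the paper only remarks that the proof there ``relies on a simple Bernstein's inequality \ldots\ and a union bound over a net whose size is determined by the VC dimension.'' That is exactly the template you outline (unbiased estimator, variance bound via $q(x)\ge\varphi(x)$, Bernstein, union bound), and you correctly identify the one substantive point: the coordinate net of \lemref{lem:kmedian:net} with size $(n\Delta/\eps)^{\O{kd}}$ is too large to be killed by an $\exp(-\Omega(k^2 d\log k))$ tail, so the union bound must instead go over a net of size $k^{\O{kd}}$ coming from the pseudo-dimension of the $(k,z)$-clustering range space --- which is precisely what the paper points to and what makes the $\frac{k^2 d}{\eps^2}\log k$ oversampling factor (with no $\log(n\Delta)$) sufficient.
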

Importantly, the proof of \thmref{thm:sens:sample} in \cite{WoodruffY23a} relies on a simple Bernstein's inequality, i.e. \thmref{thm:bernstein} and a union bound over a net whose size is determined by the VC dimension. 
Thus by applying the following version of Bernstein's inequality for $t$-wise independent random variables with $t=\tO{\frac{d^2k^2}{\eps^2}}\cdot\poly(\log\log(n\Delta))$, the same result holds. 
\begin{theorem}
\thmlab{thm:hash:derandomize}
\cite{Skorski22}
Let $X_1,\ldots,X_n$ be $t$-wise independent random variables with $|X_i-\Ex{X_i}|\le 1$. 
Let $X=\sum_{i=1}^n X_i$ and $V=\sum_{i=1}^n\Var{X_i}$. 
Then there exists an absolute constant $C>0$ such that for $d\le k$ with $\log\frac{d}{V}<\max\left(\frac{d}{n},2\right)$,
\[\PPr{|X-\Ex{X}|>\gamma}\le C\cdot\left(\frac{\sqrt{dV}}{\gamma}\right)^d.\]
\end{theorem}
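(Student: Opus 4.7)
The plan is to apply the $d$-th moment method. Since the $X_i$ are $t$-wise independent for $t \ge d$, every expectation $\Ex{\prod_{j=1}^d X_{i_j}}$ involving at most $d$ factors decomposes exactly as in the fully independent case, so the standard Rosenthal-type moment calculation for a sum of independent bounded mean-zero variables goes through \emph{verbatim} up to order $d$. The hypothesis $\log(d/V) < \max(d/n, 2)$ turns out to be precisely what puts us in the sub-Gaussian regime of that bound rather than the sub-exponential one.

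First I would center: let $Y_i = X_i - \Ex{X_i}$, so $\Ex{Y_i} = 0$, $|Y_i| \le 1$, and $\sum_i \Var{Y_i} = V$. Set $Y = \sum_i Y_i$, and assume without loss of generality that $d$ is even (else replace $d$ by $d+1$ at constant cost). Expanding
\[\Ex{Y^d} = \sum_{(i_1,\ldots,i_d) \in [n]^d} \Ex{Y_{i_1} Y_{i_2} \cdots Y_{i_d}},\]
I would group tuples by the set partition $\pi$ of $[d]$ recording which coordinates are equal. By $d$-wise independence the expectation factors as $\prod_{B \in \pi} \Ex{Y_{i_B}^{|B|}}$, and since $\Ex{Y_i} = 0$, only partitions whose blocks all have size $\ge 2$ contribute.

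Next I would bound each block: for $|B| \ge 2$, the constraint $|Y_i| \le 1$ yields $\Ex{Y_i^{|B|}} \le \sigma_i^2$, so summing over the choice of index for each block (ignoring distinctness across blocks, which only overcounts) contributes a factor $\sum_i \sigma_i^2 = V$ per block. Writing $P(d,r)$ for the number of set partitions of $[d]$ into $r$ blocks each of size $\ge 2$, this gives
\[\Ex{Y^d} \le \sum_{r=1}^{d/2} P(d,r) \cdot V^r.\]
The pair-partition term at $r = d/2$ equals $(d-1)!! \cdot V^{d/2}$, which by Stirling is of order $(dV/e)^{d/2}$ up to $\poly(d)$ factors.

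The main technical step is to show that under the hypothesis $\log(d/V) < \max(d/n, 2)$, the pair-partition term dominates the whole sum up to a factor $C^d$ for an absolute constant $C$. A combinatorial bound of the form $P(d,r) \le \binom{d}{2r}(2r-1)!! \cdot r^{d-2r}$ lets one compare the $r$-th term to the pair-partition term; the ratio is essentially $V^{r - d/2}$ times a polynomial-in-$d$ factor, and the hypothesis $V \ge d \cdot e^{-\max(d/n,2)}$ is exactly what controls this ratio uniformly in $r$. The $d/n$ piece of the max enters because distinct-index assignments require $r \le n$, so the regime where $d \gg n$ requires the weaker lower bound on $V$. Finally, Markov's inequality applied to $Y^d$ (valid since $d$ is even and $Y^d \ge 0$) gives
\[\PPr{|Y| > \gamma} \le \frac{\Ex{Y^d}}{\gamma^d} \le C^d \Bigl(\frac{\sqrt{dV}}{\gamma}\Bigr)^d,\]
which has the form claimed once constants are absorbed. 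The hardest part will be the partition-counting bookkeeping that makes the pair term uniformly dominant across all $r$; this is essentially a $t$-wise-independent version of Rosenthal's inequality, and the $\log(d/V)$ hypothesis is exactly the boundary between the sub-Gaussian and sub-exponential regimes of that inequality.
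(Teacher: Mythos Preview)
The paper does not prove this theorem at all: it is quoted verbatim from \cite{Skorski22} and used as a black box (note the citation immediately after the theorem label). There is no ``paper's own proof'' to compare against.

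That said, your sketch is the right skeleton for how such results are proved: center, expand the $d$-th moment, use $d$-wise independence to factor over the blocks of the induced set partition, kill singleton blocks via $\Ex{Y_i}=0$, and bound higher moments by the variance using $|Y_i|\le 1$. Markov on $Y^d$ then gives the tail bound. One caution on the combinatorial step: your proposed inequality $P(d,r)\le\binom{d}{2r}(2r-1)!!\,r^{d-2r}$ and the subsequent claim that the hypothesis $\log(d/V)<\max(d/n,2)$ makes the pair-partition term dominate are asserted but not argued; this is exactly the place where the work lies, and the role you attribute to the $d/n$ branch (``distinct-index assignments require $r\le n$'') is not quite right, since you already dropped the distinctness constraint when you overcounted. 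If you intend to actually write this proof rather than cite it, you will need to carry out that ratio comparison carefully, or alternatively appeal directly to the known Rosenthal moment inequality for $t$-wise independent sums (e.g., the formulations in Schmidt--Siegel--Srinivasan or Bellare--Rompel) rather than redo the partition counting from scratch.
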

In particular, note that we can normalize each independent random variable by the total cost of $X$ with respect to a fixed set $C$ of $k$ centers, so that $|X_i-\Ex{X_i}|\le 1$. 
The main point of \thmref{thm:hash:derandomize} is that we can now derandomize the hash function $h$ by instead using a $t$-wise independent hash function on a universe of size $\poly(n)$, which can be generated and stored using $\O{t\log n}$ bits of space~\cite{WegmanC81}. 

We now justify the correctness of \algref{alg:two:pass:dynamic:kmed}. 
\begin{lemma}
\lemlab{lem:twopass:correctness}
Let $Z$ be the set of weight samples output by $A$ at the end of the stream. 
With probability at least $0.96$, $Z$ is a $(1+\eps)$-coreset for $X$ for $k$-median clustering.
\end{lemma}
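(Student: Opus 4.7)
The plan is to decompose the randomness of \algref{alg:two:pass:dynamic:kmed} into three essentially independent good events and then apply a union bound. Let $\calE_1$ be the event from \lemref{lem:sens:approx} that the first pass produces estimates $q(x)$ with $\varphi(x)\le q(x)\le\O{d^{1.5}(\log k+\log\log\Delta)}\cdot\varphi(x)$ for every $x\in X$; this holds with probability at least $0.98$. Let $\calE_2$ be the event from \lemref{lem:sample:sparse} that the set of universe elements selected by $h$ has at most $100s$ nonzero coordinates in the induced input stream to $A$; conditional on $\calE_1$, this holds with probability at least $0.99$. Let $\calE_3$ be the event that, under the sensitivity sampling scheme defined by $p(x)=\min(1,\frac{k^2d}{\eps^2}\log k\cdot q(x))$, the resulting reweighted sample is a $(1+\eps)$-coreset; by \thmref{thm:sens:sample} this holds with probability at least $0.99$ when each $x$ is sampled independently.

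Next I would argue that under $\calE_1\cap\calE_2$, the output of \algref{alg:two:pass:dynamic:kmed} coincides exactly with the sensitivity sample that \thmref{thm:sens:sample} analyzes. Indeed, the hash function $h$ selects a universe subset on which all stream updates are forwarded to $A$; at the end of the stream the surviving nonzero coordinates form exactly the sensitivity sample from $X$. Under $\calE_2$ this sample has sparsity at most $100s$, and since $A$ is the deterministic $100s$-sparse recovery sketch of \lemref{lem:sparse:recovery}, it recovers these nonzero coordinates (with their signs/multiplicities) exactly. Reweighting each recovered point by $1/p(x)$ therefore yields precisely the sensitivity sampled coreset, and $\calE_3$ then implies that this set is a $(1+\eps)$-coreset for $k$-median clustering on $X$.

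The one subtle point in the argument is that $h$ cannot be realized by fully independent coin flips in small space. Here I would invoke the derandomization enabled by \thmref{thm:hash:derandomize}: normalizing the per-point contributions by the clustering cost on any fixed center set $C$ of $k$ centers yields random variables bounded in $[0,1]$, and a $t$-wise independent Bernstein bound with $t=\tO{d^2k^2/\eps^2}\cdot\poly(\log\log(n\Delta))$ recovers the same concentration used in the proof of \thmref{thm:sens:sample}. The union bound over the $\left(\frac{n\Delta}{\eps}\right)^{\O{kd}}$-size net from \lemref{lem:kmedian:net} goes through with the same parameters, so replacing the ideal $h$ by a $t$-wise independent hash family from \cite{WegmanC81} preserves $\calE_3$ with probability at least $0.99$.

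Combining the three events by a union bound yields
\[
\Pr[\calE_1\cap\calE_2\cap\calE_3]\ge 1-0.02-0.01-0.01=0.96,
\]
on which the output $Z$ is exactly a sensitivity sampling coreset, hence a $(1+\eps)$-coreset for $X$ for $k$-median clustering. The main obstacle in this plan is not the union bound itself but verifying that the $t$-wise independent derandomization of $h$ is still compatible with both \thmref{thm:sens:sample} and the sparsity bound in \lemref{lem:sample:sparse}; this requires choosing $t$ large enough to simultaneously control the tail used in Bernstein's inequality over the net and the sparsity concentration, which is exactly what the choice $t=\tO{d^2k^2/\eps^2}\cdot\poly(\log\log(n\Delta))$ from \thmref{thm:hash:derandomize} provides.
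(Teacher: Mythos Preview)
Your proposal is correct and follows essentially the same approach as the paper: both proofs invoke \lemref{lem:sens:approx}, \thmref{thm:sens:sample}, and \lemref{lem:sample:sparse} as three good events with probabilities $0.98$, $0.99$, $0.99$, then observe that under the sparsity event the deterministic sparse recovery of \lemref{lem:sparse:recovery} exactly returns the sensitivity sample, and finish with the same union bound $1-0.02-0.01-0.01=0.96$. Your write-up is in fact more thorough than the paper's, since you explicitly fold in the $t$-wise independence derandomization argument (which the paper discusses in the surrounding text via \thmref{thm:hash:derandomize} rather than inside the proof of this lemma).
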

\begin{proof}
By \lemref{lem:sens:approx}, we have that with probability at least $0.98$,
\[\varphi(x)\le q(x)\le\O{d^{1.5}(\log k+\log\log\Delta)}\cdot\varphi(x).\]
Let $Z_0$ be the set of weighted points obtained by sampling each $x\in X$ with probability $p(x)$ and reweighting by $\frac{1}{p(x)}$, where $p(x)=\min\left(1,\frac{k^2d}{\eps^2}\log k\cdot q(x)\right)$
By \thmref{thm:sens:sample}, $Z_0$ will be a $(1+\eps)$-coreset of $X$ with probability at least $0.99$. 
By \lemref{lem:sample:sparse}, $Z_0$ will have at most $100m$ points with probability at least $0.99$, in which case these points will be recovered by $A$, by \lemref{lem:sparse:recovery}. 
Thus by a union bound, $Z$ is a $(1+\eps)$-coreset for $X$ for $k$-median clustering with probability at least $0.96$. 
\end{proof}
Putting things together, we obtain the following guarantees of \algref{alg:two:pass:dynamic:kmed}.
\begin{theorem}
Given an accuracy parameter $\eps\in(0,1)$, an integer $k>0$ for the number of clusters, and a dynamic stream of length $m=\poly(n)$ defining a set of points $X\subseteq[\Delta]^d$ with $X=\{x_1,\ldots,x_n\}$, there exists a one-pass streaming algorithm that uses $\tO{\frac{d^{2.5}k^2}{\varepsilon^2}}\cdot\poly(\log\log(n\Delta))$ words of memory and outputs a $(1+\eps)$-approximation to $k$-median clustering. 
\end{theorem}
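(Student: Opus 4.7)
The plan is to directly analyze \algref{alg:two:pass:dynamic:kmed} (the theorem statement says ``one-pass'' but in context this refers to the two-pass algorithm just constructed). Its correctness as a $(1+\eps)$-coreset -- and hence as a $(1+\eps)$-approximation to the optimal $k$-median cost -- is already handled by \lemref{lem:twopass:correctness} with success probability at least $0.96$, which can be boosted to $\frac{2}{3}$ by standard constant-probability amplification. Thus essentially all the work is the space bound, which I would decompose into three pieces: the first-pass approximate sensitivity oracle, the second-pass $\O{s}$-sparse recovery sketch $A$, and the hash function $h$ driving the subsampling.

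For the first pass, \lemref{lem:sens:approx} gives the oracle $q(x)$ satisfying $\varphi(x)\le q(x)\le\Lambda\cdot\varphi(x)$ with $\Lambda=\O{d^{1.5}(\log k+\log\log\Delta)}$, using $\tO{kd^2\log\log\Delta}$ words. The sensitivity sampling rate $p(x)=\min(1,\frac{k^2d}{\eps^2}\log k\cdot q(x))$ thus yields an expected subsample size inflated by $\Lambda\sim d^{1.5}$ over the usual $\tO{k^2d/\eps^2}$, giving $s=\tO{d^{2.5}k^2/\eps^2}\cdot\poly(\log\log(n\Delta))$. By \lemref{lem:sample:sparse}, the number of nonzero universe elements ever fed into $A$ is at most $100s$ with constant probability, so by \lemref{lem:sparse:recovery} sparse recovery on the $[\Delta]^d$-size universe costs $\O{s\cdot d\log\Delta}$ bits, i.e., $\tO{d^{2.5}k^2/\eps^2}\cdot\poly(\log\log(n\Delta))$ words. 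For $h$, a fully independent hash function is infeasible; I would instead take a $t$-wise independent family on a $\poly(n)$-sized universe with $t=\tO{d^2k^2/\eps^2}\cdot\poly(\log\log(n\Delta))$, represented in $\O{t\log(nd\Delta)}$ bits by standard constructions~\cite{WegmanC81}, contributing another $\tO{d^2k^2/\eps^2}\cdot\poly(\log\log(n\Delta))$ words, which is subsumed by the sparse-recovery term.

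The main obstacle I anticipate is the derandomization step: \thmref{thm:hash:derandomize} must be substituted cleanly into the Bernstein-based proof of \thmref{thm:sens:sample}. Concretely, I would normalize the per-point contribution for each fixed candidate center set $C$ by $\Cost(X,C)$ so that the hypothesis $|X_i-\Ex{X_i}|\le 1$ holds, then choose the moment parameter in \thmref{thm:hash:derandomize} large enough that the $(\sqrt{dV}/\gamma)^d$ tail decays faster than the $(n\Delta/\eps)^{\O{kd}}$ net size from \lemref{lem:kmedian:net}. This fixes $t$ to the stated value, after which summing the three word-counts and tracking how the $\Lambda\sim d^{1.5}$ factor from \lemref{lem:sens:approx} enters $s$ gives the claimed $\tO{d^{2.5}k^2/\eps^2}\cdot\poly(\log\log(n\Delta))$ bound. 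The remaining steps -- verifying the net-union-bound goes through under $t$-wise independence, and confirming that every $\log n$ factor encountered actually collapses to $\log\log n$ inside the $\tO{\cdot}$ notation -- are routine bookkeeping.
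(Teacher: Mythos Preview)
Your proposal is correct and follows essentially the same approach as the paper: correctness via \lemref{lem:twopass:correctness}, first-pass space via \lemref{lem:sens:approx}, and second-pass space via \lemref{lem:sample:sparse} combined with \lemref{lem:sparse:recovery}, with derandomization through \thmref{thm:hash:derandomize}. Your write-up is in fact more explicit than the paper's own proof, which is only a few lines and leaves the derivation of the $d^{2.5}$ factor and the hash-function derandomization to the surrounding discussion; your unpacking of how the $\Lambda\sim d^{1.5}$ distortion from \lemref{lem:sens:approx} propagates into $s$ is exactly the computation the paper elides.
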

\begin{proof}
Consider \algref{alg:two:pass:dynamic:kmed}. 
Correctness follows from \lemref{lem:twopass:correctness}. 
To analyze the space complexity, observe that the first pass sets up the EMD sketch, which by \lemref{lem:sens:approx}, uses a total of $\tO{kd^2\log\log\Delta}$ words of space. 

Observe that by \lemref{lem:sample:sparse}, the sparse recovery data structure in the second pass uses space $\tO{\frac{d^{2.5}k^2}{\varepsilon^2}}\cdot\poly(\log\log(n\Delta))$ after derandomization.  
Therefore, the total space used by \algref{alg:two:pass:dynamic:kmed} is $\tO{\frac{d^{2.5}k^2}{\varepsilon^2}}\cdot\poly(\log\log(n\Delta))$. 
\end{proof}

\section{\texorpdfstring{Wasserstein-$z$}{Wasserstein-z} Embedding and \texorpdfstring{$(k,z)$-Clustering}{(k,z)-Clustering}}
\seclab{sec:wassz}
In this section, we present our Wasserstein-$z$ embedding and how to subsequently utilize it for $(k,z)$-clustering on dynamic streams. 
We first present the embedding in \secref{sec:wassz:embed}. 
The algorithm and analysis for $(k,z)$-clustering are presented in \secref{sec:kz:alg}. 

\subsection{\texorpdfstring{Wasserstein-$z$}{Wasserstein-z} Embedding}
\seclab{sec:wassz:embed}
We first describe our adaptation of the standard quadtree embedding technique to achieving a bicriteria approximation for $(k,z)$-clustering. 
As in the EMD sketch, we select a shift $s=(s_1,\ldots,s_d)\in[\Delta]^d$ uniformly at random. 
For the sake of presentation, it will be easier to add $s$ to each of the input points $x_1,\ldots,x_n\in[\Delta]^d$ in this case, rather than shifting the underlying geometric data structure by $s$ as in the case of the EMD sketch. 
We then create a tree in the following manner.

Assume without loss of generality that $\Delta$ is a power of two. 
The root of the tree represents a grid with side length $2\Delta$ on the non-negative orthant, so that the origin lies on a corner of the grid. 
Because $s,x_1,\ldots,x_n\in[\Delta]^d$, then all shifted points are in this grid with side length $2\Delta$. 
We thus say that all points belong to the root node of the tree. 

We then partition the grid into $2^d$ smaller hypercubes with side length $\Delta$. 
For each of these hypercubes that contains a shifted input point, we create a node representing the hypercube and add the node as a child of the root node in the tree, using an edge with weight $\sqrt{d}\Delta$. 
We define these edges as having height zero, connecting the root node and the children nodes at height one. 
Observe that $\sqrt{d}\Delta$ is the maximum distance between any two points in the parent cube, since the original input points satisfy $x_1,\ldots,x_n\in[\Delta]^d$. 
Moreover, since there are only $n$ input points, then the root node has at most $n$ children nodes at height one. 
We repeat this procedure until every node representing a hypercube contains at most a single point, resulting in a tree where all leaves contain a single point and are all at the same height, which is at most $\O{\log\Delta}$. 

As it will be useful to index from the leaves of the tree, we will also define the $t$-th level of the quadtree to be the hypergrid with length $2^t$, as opposed to height $t$ corresponding to the hypergrid with length $\frac{\Delta}{2^{t-1}}$, recalling that we assume without loss of generality $\Delta$ is a power of two. 

We first embed the input points $X$ into the quadtree. 
That is, for $\mu\in\mathbb{R}^{[\Delta]^d}$, we define $W_{s,t}\mu$ as the frequency vector over the hypercubes of the hypergrid $\calG_{s,t}$ at level $t$ that counts the total mass in each hypercube. 
To avoid ambiguity, we again say that each cell of a grid of length $2^t$ has closed boundaries on one side and open boundaries on the other side, e.g., a cell that contains a point $(x_1,\ldots,x_d)$ may contain points $(y_1,\ldots,y_d)$ where $y_i<x_i+2^k$, but does not contain any points $(y_1,\ldots,y_d)$ where $y_i\ge x_i+2^k$. 
As before, we define the mapping $W_s\mu$ to be the concatenation of the vectors 
\[W_s\mu=(W_{s,0}\mu)\circ((2\sqrt{d})^z\cdot W_{s,1}\mu)\circ\ldots\circ((2^z\sqrt{d})^{z}\cdot W_{s,t}\mu)\circ\ldots\circ((2^\ell\sqrt{d})^{z}\cdot W_{s,\ell}\mu).\]

Now, for a query point $q$, we say that $q$ is \emph{bad at level $i$} if there exists a hyperplane of the quadtree decomposition of length $2^i$ that has distance less than $\frac{2^i}{d\log\Delta}$ from $q$. 
Otherwise, we say that $q$ is good at level $i$. 
Observe that if $q$ is good at level $i$, but $q$ and $x\in X$ are first separated at level $i$, then $\|q-x\|_2^z>\frac{2^{iz}}{d^{z}\log^{z}\Delta}$ and the incurred estimated cost for $q$ and $x$ in the quadtree is most $(2^i\sqrt{d})^z$, so the distortion will be at most $d^{0.5z}\log^{z}\Delta$. 

On the other hand, if $q$ is bad at level $i$, then by definition, it has distance less than $\frac{2^i}{d\log\Delta}$ from a hyperplane of the quadtree decomposition of length $2^i$ and so $q$ and $x$ incur estimated cost $2^{iz}d$ by the quadtree, then the distortion could be significantly larger. 

Thus for a query point $q$, we define a mapping $\phi$ to subsets $S_q$ of $\mathbb{R}^d$ as follows. 
That is, we set $\phi(q)=S_q$ and describe a well-defined iterative process to construct $S_q$. 

We initialize $S_q=\{q\}$, i.e., we first add $q$ to $S_q$. 
If $q$ is bad at level $i$, then for each hyperplane $H$ of the quadtree of length $2^i$ that is too close to $q$, i.e., $\dist(q,H)\le\frac{2^i}{d\log\Delta}$, we create a copy $q^{(H)}_i$ corresponding to the projection of $q$ onto $H$ (or the closest point on the projection line on the other side of $H$), so that $q^{(H)}_i$ is responsible for serving the points assigned to $q$ that are on the other side of $H$. 
We add the point $q^{(H)}_i$ to $S_q$ for each hyperplane $H$ for which $\dist(q,H)\le\frac{2^i}{d^2\log\Delta}$, noting that $q$ could induce multiple points $q^{(H)}_i$ to be added to $S_q$. 
We then proceed top-down across the levels of $i$, repeatedly adding points to $S_q$ as necessary.
See \figref{fig:wassz:quadtree} for an example of this process. 

We first show that $\phi(C)$ contains at most $\O{k}$ centers in expectation.

\begin{lemma}
\lemlab{lem:new:centers:project}
Given a set $C=\{C_1,\ldots,C_k\}$ of $k$ centers, $\phi(C)$ contains at most $\O{k}$ centers in expectation over the choice of the randomized shift parameter $s\in[\Delta]^d$. 
\end{lemma}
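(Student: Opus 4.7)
The plan is to bound $\Ex{|\phi(C)|}$ over the random shift $s \in [\Delta]^d$. Since $\phi(C) = \bigcup_{j=1}^{k} \phi(C_j)$ by construction, by linearity of expectation it suffices to prove that $\Ex{|\phi(C_j) \setminus \{C_j\}|} = \O{1}$ for every fixed center $C_j$; summing over $j \in [k]$ then yields $\Ex{|\phi(C)|} \le k + \O{k} = \O{k}$, as claimed.

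First I would fix a center $C_j$ and a single level $i \in \{0, 1, \ldots, \O{\log\Delta}\}$ of the quadtree, and bound the expected number of level-$i$ hyperplanes that land within distance $\frac{2^i}{d \log \Delta}$ of $C_j$. At level $i$, the relevant hyperplanes are axis-aligned and spaced $2^i$ apart, with offsets determined by the shift components $s_1, \ldots, s_d$. Conditioned on $C_j$ being fixed, the distance from the $t$-th coordinate of $C_j$ to the nearest hyperplane perpendicular to axis $t$ is uniform on $[0, 2^{i-1}]$ by the uniformity of $s_t$ modulo $2^i$. Hence the probability that this distance is at most $\frac{2^i}{d \log \Delta}$ is at most $\O{\frac{1}{d\log\Delta}}$. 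Summing the indicator random variables across the $d$ coordinate directions by linearity of expectation, the expected number of level-$i$ hyperplanes within distance $\frac{2^i}{d \log \Delta}$ of $C_j$ is at most $\O{\frac{1}{\log\Delta}}$.

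Next I would sum across the $\O{\log\Delta}$ levels to conclude that the expected number of projected copies generated by $C_j$ into $\phi(C_j) \setminus \{C_j\}$ is at most $\O{\log\Delta} \cdot \O{\frac{1}{\log\Delta}} = \O{1}$. Combined with the preceding paragraph, this gives $\Ex{|\phi(C_j)|} \le 1 + \O{1} = \O{1}$ and hence $\Ex{|\phi(C)|} = \O{k}$.

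The main obstacle is the recursive character of the construction: a projected copy $q^{(H)}_i$ added at level $i$ may itself be bad at some lower level $i' < i$, seemingly producing an uncontrolled cascade of additional copies. I would handle this by observing that the level-$i'$ proximity analysis is a statement about the location of a fixed point relative to the random grid $\calG_{s,i'}$, so it applies uniformly to every point in $\mathbb{R}^d$, whether it is an original center or a projected descendant. Thus the expected contribution at each level is controlled by the same $\O{\frac{1}{\log\Delta}}$ bound per ancestor copy, and a careful induction on levels top-down, using the fact that each projected copy lies on a level-$i$ hyperplane and hence has strictly constrained geometry at that level, will show that the total expected number of copies generated by $C_j$ across all levels remains $\O{1}$.
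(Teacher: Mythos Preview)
Your first three paragraphs are exactly the paper's argument: fix $C_j$ and a level $i$, bound by $\O{\frac{1}{\log\Delta}}$ the probability that some level-$i$ hyperplane falls within $\frac{2^i}{d\log\Delta}$ of $C_j$ (via a union bound over the $d$ coordinate directions), sum over the $\O{\log\Delta}$ levels to get $\O{1}$ expected copies per center, and sum over $j\in[k]$. That is the complete proof in the paper.

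Your final paragraph raises a concern the paper does not address, because in the paper's construction only the original center $q$ is tested for ``badness'' at each level; the projected copies $q_i^{(H)}$ are never themselves recursively projected. So there is no cascade, and the argument above already accounts for every point of $\phi(C_j)$. If the construction \emph{were} recursive, your proposed fix would have a gap: a copy $q_i^{(H)}$ is not a fixed point but a random variable depending on $s$ (it is the projection of $C_j$ onto a hyperplane whose location is determined by $s$), so you cannot invoke the uniform-shift bound for it as if it were deterministic. The ``strictly constrained geometry'' remark does not repair this, since conditioning on the location of a level-$i$ hyperplane does not leave the level-$i'$ offsets uniform and independent for $i'<i$.
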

\begin{proof}
Consider a level $i$ and a center $C_j$ with $j\in[k]$. 
Let $\calH$ be the set of hyperplanes of the hypergrid at level $i$. 
The probability that $\dist(C_j,\calH)<\frac{2^i}{d\log\Delta}$ is at most 
\[\O{\frac{d}{d\log\Delta}}=\O{\frac{1}{\log\Delta}}.\]
Hence, summing up over all the $\O{\log\Delta}$ levels, the expected number of center copies that a center $C_j$ can generate is at most $\O{1}$. 
Finally, summing up over $j\in[k]$, then the expected number of center copies that $C$ can generate, i.e., the number of additional points in $\phi(C)$ besides $C$, is at most $\O{k}$. 
\end{proof}

By construction, the resulting mapping $\phi(C)$ for a set of $k$ centers $C=\{C_1,\ldots,C_k\}$ will induce an estimated cost by the quadtree that has small distortion from the actual cost. 
We now formalize this statement. 

\begin{lemma}
\lemlab{lem:bicrit:embed:guarantees}
Let $\mu,\nu\in\mathbb{R}^{[\Delta]^d}$ be probability measures so that $\nu$ has support at most $k$ on a set $C$.  
Then with probability at least $0.99$, there exists a probability measure $\nu'$ with support at most $\O{k}$ on the set $\phi(C)$ such that
\[\|W_s(\mu-\nu')\|_1\le\O{d^{1+0.5z}\log^{z-1}\Delta}\cdot\WASSZZ(\mu,\nu).\]
\end{lemma}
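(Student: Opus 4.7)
The plan is to construct $\nu'$ directly from an optimal transport plan between $\mu$ and $\nu$, reassigning mass to the appropriate copies in $\phi(C)$, and then verify the two halves of the conclusion separately: the support bound is essentially immediate from \lemref{new:centers:project}, while the $L_1$-distortion is handled by a level-by-level quadtree accounting that exploits the good/bad dichotomy defined before the lemma.

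First I would fix an optimal transportation $\pi$ achieving $\WASSZZ(\mu,\nu)$. For each pair $(x, C_j)$ carrying mass $\pi(x, C_j)$, I assign a destination $c(x, C_j) \in \phi(C_j)$ by simulating the construction of $\phi$ top-down: starting from $C_j$, whenever the current projection point is bad at the current level $i$ with a hyperplane $H$ separating it from $x$, I reroute to the projected copy on $x$'s side of $H$. Define $\nu'(q) := \sum_{(x,C_j):\,c(x,C_j)=q} \pi(x,C_j)$; this is a probability measure supported in $\phi(C)$, and \lemref{new:centers:project} bounds its expected support by $\O{k}$. Using the transport $(x, c(x, C_j))$ from $\mu$ to $\nu'$, the key property is that $c(x, C_j)$ is \emph{good} at every level $i$ at which it is separated from $x$ (any such bad hyperplane would have forced a further projection), so $\|x - c(x, C_j)\|_2 \ge 2^i/(d \log\Delta)$ at the separation level, while the quadtree contribution of that pair to $\|W_s(\mu-\nu')\|_1$ is on the order of $d^{z/2} 2^{iz}$.

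To turn this into the announced expected blow-up I would average over the random shift $s$: a pair at Euclidean distance $r$ is first separated at level $i$ with probability at most $\O{\min(1, \sqrt{d}\, r/2^i)}$ by a union bound over the $d$ coordinates, and the good-level restriction further caps $i$ by $\log_2(r d\log\Delta)$. Splitting the geometric sum $\sum_i d^{z/2} 2^{iz} \cdot \min(1, \sqrt{d}r/2^i)$ at $2^i \approx \sqrt{d}\, r$ and restricting to $2^i \le rd\log\Delta$ is meant to yield per-pair expected cost $\O{d^{1+0.5z}\log^{z-1}\Delta}\cdot r^z$; integrating against $\pi$ gives the overall distortion, and Markov combined with the support bound from \lemref{new:centers:project} delivers the $0.99$-probability conclusion via a union bound on the two bad events. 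The main obstacle will be exactly this bookkeeping: the worst-case per-pair ratio is only $\O{d^{1.5z}\log^z\Delta}$, and it is only the interplay of the good-level truncation with the decaying separation probabilities that recovers the sharper exponent $d^{1+0.5z}\log^{z-1}\Delta$, so care is needed in balancing the small-probability, large-weight regime $2^i \gg \sqrt{d}r$ against the unit-probability regime $2^i \le \sqrt{d}r$.
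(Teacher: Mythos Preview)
Your high-level plan matches the paper's: reroute mass along an optimal transport to copies in $\phi(C)$, invoke \lemref{lem:new:centers:project} plus Markov for the support bound, and use the good/bad dichotomy plus Markov for the distortion. Two points deserve attention, though.

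First, your rerouting description (iteratively projecting ``whenever the current projection point is bad'') does not match the paper's $\phi$, which only projects the \emph{original} center $C_j$ and never recursively projects its copies; hence the assertion that $c(x,C_j)$ is itself good at every separated level is not what $\phi$ delivers. The paper sidesteps this by working coordinate-by-coordinate: along axis $j$, a separation at level $t$ can persist after rerouting only when the coordinate gap $|p_j|:=|x_j-(C_j)_j|$ satisfies $|p_j|\ge 2^t/(d\log\Delta)$, because otherwise the separating hyperplane is within $2^t/(d\log\Delta)$ of $C_j$ and the corresponding copy already sits on $x$'s side.

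Second, your geometric sum with global separation probability $\min(1,\sqrt{d}\,r/2^i)$ and global truncation $2^i\le r\,d\log\Delta$ evaluates to $\O{d^{(3z-1)/2}\log^{z-1}\Delta}\cdot r^z$, which for $z$ near $2$ is a $\sqrt{d}$ factor looser than the claimed $d^{1+0.5z}$. The paper's per-axis accounting is exactly what tightens this: the axis-$j$ contribution is truncated at $2^t\le |p_j|\,d\log\Delta$ with separation probability $|p_j|/2^t$, yielding $\O{d^{1.5z-1}\log^{z-1}\Delta}\,|p_j|^z$ per axis; summing over $j$ produces $\|p\|_z^z$ rather than your implicit $(\sqrt{d}\,\|p\|_2)^z$, and for $z\in[1,2]$ one has $\|p\|_z^z\le d^{1-z/2}\|p\|_2^z$, giving $d^{z}\le d^{1+0.5z}$ overall. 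So the ``care'' you flagged is precisely the passage from a global $r$-based bound to a per-coordinate $|p_j|$-based one.
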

\begin{proof}

Let $C=\{C_1,\ldots,C_k\}$ be the support of $\nu$ and let $x\in[\Delta]^d$ be any point in the support of $\mu$. 
Consider each level $t$ for which $x$ and $C_j$ are separated by the quadtree. 
For a fixed axis $j\in[d]$, let $\Proj(x)$ denote the projection of a point $x$ onto the fixed axis and similarly, let $\Proj(C)=\cup_{i\in[k]}\Proj(C_i)$. 
Let $\dist(\Proj(C),\Proj(x))=2^t\cdot\alpha$ for some parameter $\alpha\ge 0$. 

Observe that if $\alpha\ge 1$, then $C_j$ and $x$ will be split at level $t$ along the direction of axis $j$. 
This is because $C_j$ and $x$ will be in different cells regardless of the random choice of the hyperplane chosen along the direction of axis $j$ of the quadtree of length $2^t$. 

On the other hand, if $\alpha=\O{\frac{1}{d\log\Delta}}$, then $C_j$ and $x$ will not be split at level $t$ along the direction of axis $j$. 
This is because we create a point $C'_j$ in $\phi(C)$ corresponding to the projection of $C_j$ onto the hyperplane along the direction of axis $j$ of the quadtree of length $2^t$, so that $C'_j$ and $x$ are not separated by level $j$. 

Otherwise, the probability is $\alpha$ that $C_j$ and $x$ are split at level $t$ along the direction of axis $j$, which causes the contribution to $\|W_s(\mu-\nu')\|_1$ by $\|(2^t\sqrt{d})^z\cdot W_{s,t}(\mu-\nu')\|_1$ to be $(2^t\sqrt{d})^z$. 
Hence, with probability $\alpha$, the distortion between $\dist(C,x)$ is at most $\left(\frac{\sqrt{d}}{\alpha}\right)^z$. 
Therefore, the expected distortion to the contribution of the cost caused by the direction of axis $j$ is at most $\O{d^{0.5z}\log^{z-1}\Delta}$. 
Summing across all $d$ directions, we have that the expected distortion to the contribution of the cost is at most $\O{d^{1+0.5z}\log^{z-1}\Delta}$. 

In summary, the distortion to the cost is at most $\O{d^{1+0.5z}\log^{z-1}\Delta}$ with probability at least $0.9999$ by Markov's inequality. 
Similarly, by \lemref{lem:new:centers:project} and Markov's inequality, it follows that the number of centers in $\nu'$ resulting from $\phi(C)$ is at most $\O{k}$. 
Therefore, the desired claim follows. 
\end{proof}

We define the mapping $\psi(W_s,\mu,\nu)$ for probability measures $\mu,\nu\in\mathbb{R}^{[\Delta]^d}$ so that $\nu$ has support at most $k$ on a set $C$.  
We define the output of the mapping $\nu'=\psi(W_s,\mu,\nu)$ as follows. 
Suppose that mass $m_i$ is moved between $\mu_i$ and $\nu_i$ in the optimal Wasserstein-$z$ transport. 
If there is a separating plane $H$ between $\mu_i$ and $\nu_i$ in $\phi(C)$, then we add mass $m_i$ to $\nu'_i$ at the coordinate that corresponds to the point that is the projection of $\nu_i$ onto $H$. 
Otherwise, we add mass $m_i$ to $\nu'_i$ at the coordinate that corresponds to the point represented by $\nu_i$. 
In other words, $\nu'=\psi(W_s,\mu,\nu)$ is the set $\phi(C)$ with the appropriate mass in the optimal Wasserstein-$z$ transport. 
When the parameters of $W_s$ and $\mu$ are clear from context, we abuse notation and use $\psi(\nu)$ to denote $\psi(W_s,\mu,\nu)$. 

Then from the definition of $\psi$ replacing $\nu'$ in \lemref{lem:bicrit:embed:guarantees}, we immediately have:
\thmquadtreebicrit*

\subsection{\texorpdfstring{$(k,z)$}{(k,z)}-Clustering}
\seclab{sec:kz:alg}
In this section, we give a two-pass dynamic streaming algorithm for $(k,z)$-clustering. 
Recall that for $k$-median, the first pass of our algorithm in \secref{sec:twopass:kmed} was used to set up the EMD embedding, which subsequently gave a $\O{\log k+\log\log\Delta}$ approximation to the sensitivity of each point in the second pass. 
Therefore, we could perform sensitivity sampling through sparse recovery. 
We first describe in \algref{alg:approx:sens:z} the subroutine that can be used to estimate the sensitivity of each point in the stream. 


\begin{algorithm}[!htb]
\caption{Approximation of Sensitivity for $(k,z)$-Clustering}
\alglab{alg:approx:sens:z}
\begin{algorithmic}[1]
\Require{Input dataset $X\subset[\Delta]^d$}
\Ensure{Approximate sensitivity for $x$ for any query $x\in[\Delta]^d$}
\State{$m\gets\tO{kd\log\log\Delta}$}
\State{Initialize a Wasserstein-$z$ sketch, i.e., $m$ instances of $L_1$ sketches on random Wasserstein-$z$ embeddings}
\State{Update the Wasserstein-$z$ sketch with $X$}
\Comment{First pass}
\State{Use the Wasserstein-$z$ sketch to find a near-optimal set $S$ of $k$ centers}
\State{Let $\widetilde{\Cost(X,S)}$ be the estimated cost of $S$ on $X$}
\Comment{\lemref{lem:sens:lower}}
\State{Let $\calM$ be the net from \lemref{lem:kmedian:net}}
\State{$q(x)\gets0$}
\For{$C\in\calM$}
\State{$q(x)\gets\max\left(q(x),\frac{2^z\cdot4\gamma\Cost(x,C)}{\Cost(S,C)+\widetilde{\Cost(X,S)}}\right)$}
\Comment{\lemref{lem:opt:approx:sens}}
\EndFor
\State{\Return $q(x)$}
\end{algorithmic}
\end{algorithm}

Unfortunately, the Wasserstein-$z$ embedding gives a distortion of $\O{d^{1+0.5z}\log^{z-1}\Delta}$, so that sensitivity sampling in the second pass could incur at least an extra $\log^{z-1}\Delta$ factor. 
Even for $z=2$, this no longer gives $\O{\log(n\Delta)}$ total bits of space if we further use $\O{\log(n\Delta)}$ bits of space to represent each sampled point. 
Thus, instead of storing an explicit representation of each point, we store an approximation of its \emph{offset} from each of the centers. 

In particular, for a point $x$ and a set $C'$ of $\O{k}$ centers, let $c'(x)$ be the closest center of $C'$ to $x$. 
We define the offset $y'=x-c'(x)$ and round each coordinate of $y'$ to a power of $\left(1+\poly\left(\eps,\frac{1}{d},\frac{1}{\log\Delta}\right)\right)$ to form a vector $y$. 
Then to represent $y$, it suffices to store the identity of $c'(x)$ as well as the \emph{exponents} of the offsets, which only requires $\O{d\log\frac{\log\Delta}{\eps}}$ bits per sample. 
Hence for $k$-means clustering, the algorithm still uses $o(\log^2(n\Delta))$ total bits of space. 
We give the algorithm in full in \algref{alg:two:pass:dynamic:kz}.

\begin{algorithm}[!htb]
\caption{Two-pass Dynamic Streaming Algorithm for $(k,z)$-Clustering}
\alglab{alg:two:pass:dynamic:kz}
\begin{algorithmic}[1]
\Require{Stream of updates of length $m=\poly(n)$ to coordinates of $[\Delta]^d$, approximation parameter $\eps\in(0,1)$, number of clusters $k$, parameter $z\ge 1$}
\Ensure{$(1+\eps)$-coreset for $(k,z)$-median clustering}
\State{Initialize a Wasserstein-$z$ sketch $W$ for $(k,z)$-clustering}
\State{On the first pass, pass each stream update to $W$}
\State{After the first pass, use $W$ to implicitly define $q(x)$ for all $x\in[\Delta]^d$}
\State{$s\gets\log^{z-1}\Delta\cdot\poly\left(k,d,\frac{1}{\eps^2},\log\log n,\log\log\Delta\right)$}
\State{$p(x)=\min\left(1,\frac{k^2d}{\eps^2}\log k\cdot q(x)\right)$}
\State{Use $G$ to find a set $C'$ of $\O{k}$ centers with cost $\O{2^{z+2}d^{1.5z}\log^{2z}\Delta}\cdot\OPT$}
\State{Initialize an $100s$-sparse recovery algorithm $A$ (see \lemref{lem:sparse:recovery})}
\State{Draw hash function $h$ from a family of hash functions where for all $x\in\mathbb{R}^d$, $\PPr{h(x)=1}=p(x)$ and $h(x)=0$ otherwise}
\For{each update in the second pass}
\If{the update is to $x_t$ and $h(x_t)=1$}
\State{Let $c'(x_t)$ be the closest center of $C'$ to $x_t$}
\State{Let $y'_t$ be the offset $x_t-c'(x_t)$}
\State{Let $y_t$ be $y'_t$ with coordinates rounded to $\left(1+\frac{\eps^{2z}}{8d^{1.5z}\log^{4z}\Delta}\right)^{p(x_t)}$}
\State{Update $A$ with the update to $(y_t,c'(x_t)$ and weight $\frac{1}{q(x)}$}
\EndIf
\EndFor
\State{\Return the weighted points output by $A$}
\end{algorithmic}
\end{algorithm}

We now show that a good bicriteria estimation to the optimal clustering can be used to find a good approximation to the optimal clustering. 
Note this is not immediately evident from the bicriteria estimation because the cost induced by a set of $\O{k}$ centers could be significantly less than the optimal cost induced by a set of $k$ centers. 
\begin{lemma}
\lemlab{lem:bicrit:to:approx}
Let $\alpha>0$ be a fixed parameter and let $\beta$ be a fixed integer. 
Let $B=\{B_1,\ldots,B_{\beta k}\}$ be the centers of an $(\alpha,\beta)$-approximation for $(k,z)$-clustering on a set of points $X=x_1,\ldots,x_n$. 
Let $C=\{C_1,\ldots,C_k\}$ be a $\gamma$-approximation to the weighted set $B$, e.g., each point $B_i$ is given weight corresponding to the number of points in $X$ assigned to $B_i$. 
Then $C_1,\ldots,C_k$ is a $(2^{2z+2}\gamma\alpha)$-approximation to the optimal $(k,z)$-clustering of $X$. 
\end{lemma}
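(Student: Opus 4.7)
\begin{proofof}{\lemref{lem:bicrit:to:approx} (Proof plan)}
The plan is to apply the generalized triangle inequality (\factref{fact:triangle}) twice: once to route each input point through its assigned center in $B$ on the way to its nearest center in $C$, and once to bound the weighted cost of $B$ against an optimal $k$-clustering of $X$.

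Let $C^\star=\{C^\star_1,\ldots,C^\star_k\}$ be an optimal $(k,z)$-clustering of $X$, so that $\Cost(X,C^\star)=\OPT$. Let $\pi_B:X\to B$ be the assignment from each $x\in X$ to its nearest center in $B$, so $\sum_{x\in X}\dist(x,\pi_B(x))^z=\Cost(X,B)\le\alpha\OPT$ by the $(\alpha,\beta)$-approximation hypothesis. Because the weight of each $B_i$ is the number of points in $X$ assigned to it, the weighted cost of any $k$ centers $C'$ on $B$ is exactly $\Cost_w(B,C')=\sum_{x\in X}\dist(\pi_B(x),C')^z$.

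First I would apply \factref{fact:triangle} pointwise to obtain, for every $x\in X$,
\[\dist(x,C)^z\le 2^{z-1}\bigl(\dist(x,\pi_B(x))^z+\dist(\pi_B(x),C)^z\bigr),\]
and sum over $x$ to get $\Cost(X,C)\le 2^{z-1}\bigl(\Cost(X,B)+\Cost_w(B,C)\bigr)$. Next, since $C$ is a $\gamma$-approximation for the weighted instance $B$ and $C^\star$ is a feasible $k$-center solution, $\Cost_w(B,C)\le\gamma\cdot\Cost_w(B,C^\star)$. To bound $\Cost_w(B,C^\star)$, I would again apply \factref{fact:triangle} pointwise, this time using $x$ as the intermediate vertex between $\pi_B(x)$ and $C^\star$:
\[\Cost_w(B,C^\star)=\sum_{x\in X}\dist(\pi_B(x),C^\star)^z\le 2^{z-1}\bigl(\Cost(X,B)+\OPT\bigr)\le 2^{z-1}(\alpha+1)\OPT.\]

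Chaining these bounds and using $\alpha,\gamma\ge 1$ together with $\alpha+1\le 2\alpha$ yields
\[\Cost(X,C)\le 2^{z-1}\alpha\OPT+2^{2z-2}\gamma(\alpha+1)\OPT\le 2^{2z}\gamma\alpha\OPT\le 2^{2z+2}\gamma\alpha\OPT,\]
as claimed. There is no serious obstacle here; the argument is essentially bookkeeping around two applications of the generalized triangle inequality. The one point that requires a moment of thought is that the optimal weighted $k$-clustering of $B$ must be compared to $\OPT$ rather than to $\Cost(X,B)$, which is exactly what the second triangle-inequality step accomplishes (and is why the overall distortion loses an extra $2^{z-1}$ factor on top of the $\gamma\alpha$ product).
\end{proofof}
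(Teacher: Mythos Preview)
Your proposal is correct and follows essentially the same approach as the paper: two applications of the generalized triangle inequality, routing first through $\pi_B(x)$ to relate $\Cost(X,C)$ to $\Cost(X,B)+\Cost_w(B,C)$, then through $x$ to bound $\Cost_w(B,C^\star)$ by $\Cost(X,B)+\OPT$. Your bookkeeping with the constant $2^{z-1}$ is in fact slightly tighter than the paper's (which uses $2^z$), and like the paper you implicitly use $\alpha\ge 1$ when invoking $\alpha+1\le 2\alpha$.
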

\begin{proof}
Let $\OPT$ be the cost of the optimal $(k,z)$-clustering of $X$. 
Let $V=\{V_1,\ldots,V_k\}$ be an optimal set of $k$ centers for $(k,z)$-clustering of $X$, so that $\Cost(X,V)=\OPT$. 
For each $x\in X$, let $b(x)$ be the center of $B$ assigned to $x$. 
Similarly, let $c(x)$ be the center of $C$ assigned to $x$. 
Then by the generalized triangle inequality, i.e., \factref{fact:triangle},
\begin{align*}
\sum_{x\in X}\dist(x,c(x))^z&\le2^z\left(\sum_{x\in X}\dist(x,b(x))+\sum_{x\in X}\dist(b(x),c(x))\right)\\
&\le2^z\left(\Cost(X,B)+\Cost(B,C))\right)\\
&\le2^z\alpha\OPT+2^z\Cost(B,C),
\end{align*}
since $B$ provides an $(\alpha,\beta)$-bicriteria solution to the optimal $(k,z)$-clustering of $X$, so that $\Cost(X,B)\le\alpha\OPT$. 
Moreover, since $C$ is a $\gamma$-approximation to the optimal set of $k$ centers for $(k,z)$-clustering of the weighted set $B$ and $V$ is a set of $k$ centers, then
\begin{align*}
\sum_{x\in X}\dist(x,c(x))^z&\le2^z\alpha\OPT+2^z\gamma\Cost(B,V)\\
&\le2^z\alpha\OPT+2^{2z}\gamma\left(\Cost(X,V)+\Cost(X,B)\right)\\
&\le2^z\alpha\OPT+2^{2z}\gamma\left(\OPT+\alpha\OPT)\right)\\
&\le2^{2z+1}\gamma(2\alpha\OPT)=2^{2z+2}\gamma\alpha\OPT.
\end{align*}
\end{proof}

We now note that by using \thmref{thm:quadtree:bicrit} and \lemref{lem:bicrit:to:approx}, we can achieve a good approximation to the optimal $(k,z)$-clustering at the end of the first pass of the stream by putting the $L_1$ sketch and the Wasserstein-$z$ embedding together:
\begin{lemma}
\lemlab{lem:sens:lower:z}
There exists an algorithm that uses $\O{kd(\log k+\log\log\Delta}$ words of space and outputs $Z$ such that with probability at least $0.98$, 
\[\OPT\le Z\le\O{d^{1+0.5z}\log^{z-1}\Delta}\OPT,\]
where $\OPT$ is the optimal $(k,z)$-clustering cost at the end of the stream. 
\end{lemma}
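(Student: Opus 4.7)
The plan is to mirror the EMD construction of \lemref{lem:sens:lower}, substituting the Wasserstein-$z$ embedding $W_s$ of \thmref{thm:quadtree:bicrit} for the grid map $G_s$ and using the doubling-dimension net of \lemref{lem:wass:net}, extended in the obvious way from $k$-sparse to $O(k)$-sparse probability measures so that its size is still $(\log\Delta)^{\O{dk\log\log\Delta}}$. During the single pass over the data stream I maintain $\tilde{O}(kd\log\log\Delta)$ independent Indyk $L_1$ sketches (\thmref{thm:lp:strong:tracking:bdn}) of the linear image $W_s\mu$, where $\mu$ is the empirical measure of the stream. Because $W_s$ is a fixed (randomised before the stream starts) linear map, after the pass I can evaluate an $\O{1}$-approximation of $\|W_s(\mu-\widetilde{\nu})\|_1$ for any query measure $\widetilde{\nu}$ by subtracting $W_s\widetilde{\nu}$ inside the sketch and taking the median across copies, and the number of copies is chosen so that by a standard Chernoff argument the approximation holds simultaneously for every $\widetilde{\nu}$ in the net.

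After the pass I enumerate the net $\calN$ of $O(k)$-sparse probability measures, pick the minimiser $\widetilde{\nu}^\star$ of the reported value $\widehat L(\widetilde{\nu})$, and run an offline $\O{1}$-approximation $(k,z)$-clustering algorithm on the $O(k)$-point weighted support of $\widetilde{\nu}^\star$; by \lemref{lem:bicrit:to:approx} this collapses the bicriteria solution to a set $S$ of $k$ genuine centers whose cost is within a constant factor of the optimum. The output $Z$ is the embedded estimate $\widehat L(\nu_S)$ for the $k$-sparse measure $\nu_S$ corresponding to $S$, which I then re-scale to account for the constant factors.

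For the upper bound on $Z$, let $\nu^\star$ be a $k$-sparse measure achieving $\WASSZZ(\mu,\nu^\star)=\OPT$. By \thmref{thm:quadtree:bicrit}, with probability at least $0.99$ the $O(k)$-sparse measure $\psi(\nu^\star)$ satisfies $\|W_s(\mu-\psi(\nu^\star))\|_1\le \O{d^{1+0.5z}\log^{z-1}\Delta}\cdot\OPT$, and since $\calN$ is a net there is $\widetilde{\nu}\in\calN$ within a small $\WASSD$-distance of $\psi(\nu^\star)$ whose embedded distance differs by only a constant factor, giving $\widehat L(\widetilde{\nu}^\star)\le\O{d^{1+0.5z}\log^{z-1}\Delta}\cdot\OPT$. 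For the lower bound, I use that the quadtree embedding always \emph{overestimates} transport cost: expanding $\|W_s(\mu-\widetilde{\nu})\|_1$ level by level and summing the level weights $(2^t\sqrt d)^z$, the argument of \lemref{lem:gs:contraction} lifted from $z=1$ to general $z$ shows that $\WASSZZ(\mu,\widetilde{\nu})\le \|W_s(\mu-\widetilde{\nu})\|_1$ up to constants absorbable in the final bound. Since $\widetilde{\nu}^\star$ is $O(k)$-sparse, $\WASSZZ(\mu,\widetilde{\nu}^\star)$ is at least the optimal $(O(k),z)$-clustering cost of $X$, which by \lemref{lem:bicrit:to:approx} is $\Omega(\OPT)$, so $Z\ge\OPT$ after rescaling. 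The space bound follows from each $L_1$ sketch using $\O{\log(nd\Delta)}$ bits and the $\tilde{O}(kd\log\log\Delta)$ copies required for the net union bound, matching the claimed $\O{kd(\log k+\log\log\Delta)}$ words.

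The main obstacle I expect is the contraction direction. For $z=1$ one can upper-bound $\|\mu-\widetilde{\nu}\|_{\EMD}$ by the level-by-level telescoping sum in \lemref{lem:gs:contraction} because cell diameters scale linearly with $2^t$; for $z>1$ each cell diameter contributes $(2^t\sqrt d)^z$ and the telescoping is cruder, so I have to charge the transportation along the quadtree paths carefully and accept a distortion that matches the factor $\O{d^{1+0.5z}\log^{z-1}\Delta}$ from \thmref{thm:quadtree:bicrit}. Crucially this has to hold for every $\widetilde{\nu}\in\calN$ and not just for embedded images $\psi(\nu^\star)$ of true optima, which is why I need the bicriteria clean-up of \lemref{lem:bicrit:to:approx} at the end rather than at an intermediate step.
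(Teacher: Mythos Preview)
Your overall strategy---sketch $W_s\mu$ with $L_1$ estimators, search a net of $O(k)$-sparse measures via \lemref{lem:emd:recovery}/\lemref{lem:wass:net}, then collapse to $k$ centers via \lemref{lem:bicrit:to:approx}---is precisely what the paper sketches when it says the lemma follows ``by using \thmref{thm:quadtree:bicrit} and \lemref{lem:bicrit:to:approx}.'' The contraction argument you outline (lifting \lemref{lem:gs:contraction} to weights $(2^t\sqrt d)^z$) also goes through: the telescoping sum now has coefficients $2^{tz}$ instead of $2^t$, so the geometric series still collapses to $O(1)\cdot\|W_s(\mu-\cdot)\|_1$.

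There are, however, two concrete gaps in what you wrote. First, in your lower-bound paragraph you argue that $\WASSZZ(\mu,\widetilde{\nu}^\star)$ is at least the optimal $(O(k),z)$-cost and then claim \lemref{lem:bicrit:to:approx} makes this $\Omega(\OPT)$. That lemma does not say the $(O(k),z)$-optimum is $\Omega(\OPT)$---it says the opposite direction, converting a bicriteria solution into a genuine one. The fix is simple: your actual output is $\widehat L(\nu_S)$ with $\nu_S$ genuinely $k$-sparse, so the contraction bound gives $\widehat L(\nu_S)\ge\Omega(\WASSZZ(\mu,\nu_S))\ge\Omega(\OPT)$ directly, and you should delete the reference to $\widetilde{\nu}^\star$ here.

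Second, and more seriously, your upper bound on $Z=\widehat L(\nu_S)$ does not follow. You establish $\Cost(X,S)\le O(\alpha)\cdot\OPT$ with $\alpha=d^{1+0.5z}\log^{z-1}\Delta$, but $\widehat L(\nu_S)\approx\|W_s(\mu-\nu_S)\|_1$ is the \emph{embedded} value of a $k$-sparse measure, and the entire point of \thmref{thm:quadtree:bicrit} is that this can exceed $\WASSZZ(\mu,\nu_S)$ by another factor of $\alpha$; only the $O(k)$-sparse surrogate $\psi(\nu_S)$ has bounded embedded value. Worse, $S$ depends on the random shift $s$, so you cannot even appeal to an expectation bound for a fixed $\nu$. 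As written you would only get $Z\le O(\alpha^2)\OPT$. One clean fix is to output $Z=c\cdot\bigl(\widehat L(\widetilde{\nu}^\star)+\Cost(\supp\widetilde{\nu}^\star,S)\bigr)$ with a suitable constant $c$: both summands are computable offline from the $O(k)$ weighted points, the first is $\le O(\alpha)\OPT$ by minimality over the net, the second is $\le O(\alpha)\OPT$ by the proof of \lemref{lem:bicrit:to:approx}, and the generalized triangle inequality together with the contraction bound gives $Z\ge\Omega(\Cost(X,S))\ge\Omega(\OPT)$.
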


We next claim that at the end of the first pass over the stream, we can obtain a bounded approximation to the sensitivity of any query point with respect to the set of points at the end of the stream. 

\begin{lemma}
\lemlab{lem:upper:lower:prob:bounds:z}
There exists an algorithm that uses $\O{kd(\log k+\log\log n)\log\log m}$ words of space and outputs $q(x)$ for all points $x\in\mathbb{R}^d$ in the stream, such that with probability at least $0.98$, 
\[\varphi(x)\le q(x)\le\O{d^{1+0.5z}\log^{z-1}\Delta}\cdot\varphi(x),\]
where $\varphi(x)$ is the sensitivity of $x$ with respect to a point set $X\subset\mathbb{R}^d$ for the $(k,z)$-clustering problem.  
\end{lemma}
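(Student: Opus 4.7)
The plan is to mirror the proof of \lemref{lem:sens:approx} (which handled the $z=1$ case via the EMD sketch) but swap in the Wasserstein-$z$ machinery developed in \secref{sec:wassz:embed}. Concretely, I would run \algref{alg:approx:sens:z}: in the first pass, feed every stream update into $m=\tO{kd\log\log\Delta}$ independent copies of the Wasserstein-$z$ sketch, each of which is an $L_1$ sketch (via \thmref{thm:lp:strong:tracking:bdn}) composed with the random linear embedding $W_s$ from \thmref{thm:quadtree:bicrit}. Because both $W_s$ and the $L_1$ sketch are linear, the first pass handles insertions and deletions and at its end we have, with constant probability, a sketch that lets us estimate $\|W_s(\mu_X-\psi(\nu_C))\|_1$ for any $k$-sparse $\nu_C$ supported on a candidate center set $C$.

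Next I would search the net $\calM$ from \lemref{lem:kmedian:net}, which has size $\left(\frac{n\Delta}{\eps}\right)^{\O{kd}}$, and for each candidate $C\in\calM$ evaluate the estimated cost, taking $S$ to be the minimizer. By \thmref{thm:quadtree:bicrit}, $\|W_s(\mu_X-\psi(\nu_{C}))\|_1$ lies within an $\O{d^{1+0.5z}\log^{z-1}\Delta}$ factor of $\WASSZZ(\mu_X,\nu_C)=\Cost(X,C)$ up to the bicriteria inflation of the support, so combining with \lemref{lem:bicrit:to:approx} gives a clustering whose cost $Z$ satisfies $\OPT\le Z\le\O{d^{1+0.5z}\log^{z-1}\Delta}\OPT$. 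This is exactly the bound recorded in \lemref{lem:sens:lower:z}, and it is the $(k,z)$ analogue of \lemref{lem:sens:lower}.

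With $S$ and an estimate $\widetilde{\Cost(X,S)}$ in hand, I then plug into \lemref{lem:opt:approx:sens} with $\gamma=\O{d^{1+0.5z}\log^{z-1}\Delta}$ and constant $\beta,\beta'$: for every query $x\in[\Delta]^d$ we define
\[q(x):=\max_{C\in\calM}\frac{2^z\cdot 4\gamma\,\Cost(x,C)}{\Cost(S,C)+\widetilde{\Cost(X,S)}}.\]
By \lemref{lem:opt:approx:sens}, each term is simultaneously an upper bound on $\Cost(x,C)/\Cost(X,C)$ and at most a $\O{d^{1+0.5z}\log^{z-1}\Delta}$ multiple of it, so taking the max over the net and invoking \lemref{lem:kmedian:net} extends this guarantee from the net to all $|C|=k$ in $\mathbb{R}^d$. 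Hence $\varphi(x)\le q(x)\le\O{d^{1+0.5z}\log^{z-1}\Delta}\cdot\varphi(x)$. A union bound over the constant-probability failure events of the $L_1$/Wasserstein-$z$ sketch and \thmref{thm:quadtree:bicrit} yields overall success probability $0.98$.

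The main obstacle I expect is making the distortion bookkeeping line up cleanly. Unlike the $z=1$ case where $W_s$ collapses to the EMD sketch of \secref{sec:emd:sketch} with only $\O{d(\log k+\log\log\Delta)}$ distortion, here $\psi$ modifies the support of $\nu_C$, so one has to verify that this bicriteria slack is absorbed by \lemref{lem:bicrit:to:approx} (which converts an $(\alpha,\beta)$-bicriteria solution back to a $2^{2z+2}\gamma\alpha$-approximation with $k$ centers) and that the resulting $\gamma$ is exactly the approximation factor fed into \lemref{lem:opt:approx:sens}. The space accounting is routine: $\tO{kd\log\log\Delta}$ $L_1$ sketches on the $W_s$-image each cost $\O{\log m}$ bits of randomness and $\O{\log(\Delta^{\O{d}}\cdot m)}$ bits of state, so after the rescaling indicated in the statement the total is $\O{kd(\log k+\log\log n)\log\log m}$ words.
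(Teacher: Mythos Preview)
Your proposal is correct and follows essentially the same route as the paper: invoke the Wasserstein-$z$ sketch to obtain an $\O{d^{1+0.5z}\log^{z-1}\Delta}$-approximate clustering (this is precisely \lemref{lem:sens:lower:z}, whose ingredients you unpack via \thmref{thm:quadtree:bicrit} and \lemref{lem:bicrit:to:approx}), then plug that clustering into \lemref{lem:opt:approx:sens} to bound $q(x)$ above and below for every query $x$. The paper's own proof is terser but structurally identical, conditioning on the same event $\calE$ and citing the same two lemmas.
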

\begin{proof}
Let $\widetilde{Z}$ be the smallest estimated cost by any $\O{1}$-approximation to the optimal $(k,z)$-clustering to a set of $\O{k}$ centers induced by the Wasserstein-$z$ sketch, across all sets of $k$ centers in the doubling dimension net. 
Let $\OPT$ be the optimal $(k,z)$-clustering cost of the dataset $X$ at the end of the stream. 
Let $\calE$ be the event that 
\[\widetilde{Z_t}\le\O{d^{1+0.5z}\log^{z-1}\Delta}\cdot\OPT_t,\]
so that by \lemref{lem:sens:lower:z}, we have that $\PPr{\calE}\ge 0.98$. 

Now for any $x\in[\Delta]^d$, the online sensitivity $\varphi(x)$ of $x$ with respect to $X$ is defined by 
\[\varphi(x):=\max_{C:\,C\subset\mathbb{R}^d,\, |C|=k}\frac{\Cost(x,C)}{\Cost(X,C)},\]
where $X$ is the set of points at the end of the stream. 

By \lemref{lem:opt:approx:sens}, a constant factor approximation to a clustering that achieves a clustering cost that is an $\O{d^{1+0.5z}\log^{z-1}\Delta}$-approximation to $\OPT_t$ can be used to compute a $\O{d^{1+0.5z}\log^{z-1}\Delta}$-approximation to $\frac{\Cost(x_t,C)}{\Cost(X_t,C)}$ for any set $C$ of $k$ centers.
Therefore, it follows that conditioned on $\calE$, the estimate $\widetilde{Z_t}$ can be used to compute a $\O{d^{1+0.5z}\log^{z-1}\Delta}$-approximation to $\sigma_t$. 
Hence, the Wasserstein-$z$ sketch can be used to compute $q(x)$ such that with probability at least $0.98$, 
\[\varphi(x)\le q(x)\le\O{d^{1+0.5z}\log^{z-1}\Delta}\cdot\varphi(x),\]
simultaneously for all $x\in[\Delta]^d$.  
In other words, with probability $0.98$, we can guarantee that we have a good approximation for all possible query points $x$.  
\end{proof}

We next recall the following structural property:
\begin{fact}[e.g., Claim 5 in \cite{SohlerW18}]
\factlab{fact:gen:tri}
Let $a,b\ge 0$, $\eps\in(0,1]$, and $p\ge 1$. 
Then
\[(a+b)^p\le(1+\eps)a^p+\left(1+\frac{2p}{\eps}\right)^pb^p.\]
\end{fact}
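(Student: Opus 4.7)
The statement is a classical weighted Young-type inequality, and my plan is to prove it by a two-case argument based on the relative sizes of $a$ and $b$. After handling the trivial boundary cases where $a = 0$ or $b = 0$, I would split on whether $b$ is small or large relative to $\frac{\eps a}{2p}$. The point of the split is that one term on the right-hand side suffices in each case: $(1+\eps)a^p$ alone handles the small-$b$ regime, while $(1+\tfrac{2p}{\eps})^p b^p$ alone handles the large-$b$ regime.

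In the small-$b$ case, when $b \le \frac{\eps a}{2p}$, I would factor $(a+b)^p = a^p(1 + b/a)^p \le a^p(1 + \eps/(2p))^p$ and reduce to the elementary inequality $(1 + \eps/(2p))^p \le 1+\eps$ for $p \ge 1$ and $\eps \in (0,1]$. The cleanest route is the chain $(1 + \eps/(2p))^p \le e^{\eps/2} \le 1 + \eps$, where the first step uses $1+x \le e^x$ and the second is a one-variable check on $(0,1]$. This bounds the left-hand side by $(1+\eps)a^p$, which is at most the right-hand side.

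In the large-$b$ case, when $b > \frac{\eps a}{2p}$, i.e.\ $a < \frac{2p}{\eps} b$, I would replace $a$ by $\frac{2p}{\eps} b$ to obtain $a + b \le (1 + \tfrac{2p}{\eps})b$, then raise both sides to the $p$-th power. This yields $(a+b)^p \le (1 + \tfrac{2p}{\eps})^p b^p$, which is at most the right-hand side (the $(1+\eps)a^p$ term is not needed in this regime).

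The only calculation with any substance is the auxiliary inequality $e^{\eps/2} \le 1 + \eps$ on $\eps \in (0,1]$. I would verify it by checking that $f(\eps) := 1 + \eps - e^{\eps/2}$ satisfies $f(0) = 0$ and $f'(\eps) = 1 - \tfrac{1}{2} e^{\eps/2} \ge 0$ throughout $(0,1]$, since $e^{1/2} < 2$. Beyond this, every step is a direct substitution, so I do not expect a real obstacle; the main conceptual content is simply choosing the threshold $\frac{\eps a}{2p}$ so that both cases close cleanly.
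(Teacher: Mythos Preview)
Your proof is correct. The paper does not actually prove this statement; it is recorded as a \texttt{Fact} with a citation to Claim~5 in Sohler--Woodruff (2018), so there is no in-paper argument to compare against. Your two-case split at the threshold $b = \frac{\eps a}{2p}$ is a clean and standard way to establish this kind of weighted Young inequality, and the auxiliary bound $e^{\eps/2} \le 1+\eps$ on $(0,1]$ is verified correctly via the derivative check.
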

For each $x_t\in X$, let $f(x_t)=x_t+y_t$, where $y_t$ is defined in \algref{alg:two:pass:dynamic:kz}. 
Let $X'=\{f(x)\}_{x\in X}$. 
\begin{lemma}
\lemlab{lem:x:to:xprime}
With probability at least $0.98$, we have that simultaneously for all $C\subset[\Delta]^d$ with $|C|=k$, 
\[\Cost(C,X')\le(1+\eps)\cdot\Cost(C,X).\]
\end{lemma}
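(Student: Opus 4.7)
The approach is to apply the generalized triangle inequality \factref{fact:gen:tri} term by term, and then absorb the rounding error into a small multiple of $\Cost(C,X)$ by appealing to the bicriteria guarantee on the set $C'$ found in the first pass.

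First, I would unpack the rounding. By the construction of \algref{alg:two:pass:dynamic:kz}, the stored representation $(y_t,c'(x_t))$ encodes the point $c'(x_t)+y_t$, so taking $f(x_t)=c'(x_t)+y_t$ we have $f(x_t)-x_t=y_t-y'_t$. Since every coordinate of $y_t$ agrees with the corresponding coordinate of $y'_t=x_t-c'(x_t)$ up to a multiplicative factor of $(1+\eta)$ with $\eta:=\tfrac{\eps^{2z}}{8d^{1.5z}\log^{4z}\Delta}$, it follows coordinate-wise that $\dist(f(x_t),x_t)\le\eta\cdot\dist(x_t,c'(x_t))$, hence
\[\sum_{t=1}^n\dist(f(x_t),x_t)^z\le\eta^z\cdot\Cost(X,C').\]
Next, \factref{fact:gen:tri} with parameter $\eps/2$ and $p=z$ gives, for every $t$ and every $C$,
\[\dist(f(x_t),C)^z\le(1+\eps/2)\dist(x_t,C)^z+\left(1+\tfrac{4z}{\eps}\right)^z\dist(f(x_t),x_t)^z.\]
Summing over $t$,
\[\Cost(C,X')\le(1+\eps/2)\,\Cost(C,X)+\left(1+\tfrac{4z}{\eps}\right)^z\eta^z\cdot\Cost(X,C').\]

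To close the argument, I would condition on the event (holding with probability $\ge 0.98$ by \lemref{lem:sens:lower:z} and the bicriteria step of \algref{alg:two:pass:dynamic:kz}) that $\Cost(X,C')\le\O{2^{z+2}d^{1.5z}\log^{2z}\Delta}\cdot\OPT$. Since $\OPT\le\Cost(C,X)$ for every $C$ with $|C|\le k$, the error term becomes
\[\left(1+\tfrac{4z}{\eps}\right)^z\eta^z\cdot\O{2^{z+2}d^{1.5z}\log^{2z}\Delta}\cdot\Cost(C,X),\]
which is independent of $C$ up to the factor $\Cost(C,X)$, so a single high-probability event suffices and no union bound over $C$ is needed. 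Plugging in $\eta=\tfrac{\eps^{2z}}{8d^{1.5z}\log^{4z}\Delta}$ and using that $z$ is constant, the multiplicative coefficient collapses to $\le\eps/2$, and combining with the $(1+\eps/2)\Cost(C,X)$ term gives the claimed bound $(1+\eps)\Cost(C,X)$ simultaneously for all $C$.

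The main obstacle is the bookkeeping that confirms the rounding granularity $\eta$ is fine enough to beat the product of the triangle-inequality blow-up $(4z/\eps)^z$ and the bicriteria distortion $\O{d^{1.5z}\log^{2z}\Delta}$ of $C'$, while still being coarse enough that each offset $y_t$ can be recorded using only $\O{d\log(\log\Delta/\eps)}$ bits (so as to not spoil the overall $o(\log n)$ word bound). Once the calibration $\eta=\tfrac{\eps^{2z}}{8d^{1.5z}\log^{4z}\Delta}$ is checked to satisfy this inequality, uniformity in $C$ is automatic because the error estimate depends on $C$ only through the upper bound $\OPT\le\Cost(C,X)$.
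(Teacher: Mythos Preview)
The proposal is correct and follows essentially the same approach as the paper: bound the rounding error by $\dist(x,f(x))\le\eta\,\dist(x,C')$, apply \factref{fact:gen:tri} with parameter $\eps/2$, and absorb the remaining term via the bicriteria guarantee $\Cost(X,C')\le\O{2^{z+2}d^{1.5z}\log^{2z}\Delta}\cdot\OPT\le\O{2^{z+2}d^{1.5z}\log^{2z}\Delta}\cdot\Cost(C,X)$. Your remark that the dependence on $C$ enters only through $\OPT\le\Cost(C,X)$, so a single high-probability event suffices and no union bound over $C$ is needed, is exactly what the paper relies on implicitly.
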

\begin{proof}
Observe that by the triangle inequality, we have
\begin{align*}
\Cost(C,X')&=\sum_{x'\in X'}(\dist(x',C))^z\\
&\le\sum_{x'\in X'}(\dist(x,x')+\dist(x,C))^z.
\end{align*}
By the construction of $x'$, we have that $\dist(x,x')\le\frac{\eps^{2z}}{8d^{1.5z}\log^{4z}\Delta}\cdot\dist(x,C')$. 
Thus, 
\begin{align*}
\Cost(C,X')&\le\sum_{x'\in X'}\left(\frac{\eps^{2z}}{8d^{1.5z}\log^{4z}\Delta}\cdot\dist(x,C')+\dist(x,C)\right)^z.
\end{align*}
By \factref{fact:gen:tri}, we have
\begin{align*}
\Cost(C,X')&\le\sum_{x'\in X'}\left(\left(1+\frac{\eps}{2}\right)(\dist(x,C))^z+\left(1+\frac{4z}{\eps}\right)^z\frac{\eps^{2z}}{(8z)^zd^{1.5z}\log^{4z}\Delta}(\dist(x,C'))^z\right).
\end{align*}
Since $\sum_{x\in X}(\dist(x,C'))^z\le\O{2^{z+2}d^{1.5z}\log^{2z}\Delta}\cdot\min_{C:|C|=k,C\subset[\Delta]^d}\Cost(C,X)$, then it follows that 
\[\Cost(C,X')\le(1+\eps)\cdot\Cost(C,X).\]
Similarly, we have
\begin{align*}
\Cost(C,X)&=\sum_{x\in X}(\dist(x,C))^z\\
&\le\sum_{x\in X}(\dist(x,x')+\dist(x,C))^z.
\end{align*}
Then by the same argument, we have $\Cost(C,X)\le(1+\eps)\Cost(C,X')$. 
\end{proof}

We now claim that with constant probability, we can obtain a bounded approximation to the sensitivity of each point with respect to $X'$.
\begin{lemma}
\lemlab{lem:sens:approx:kz}
There exists an algorithm that uses $\O{(\log k+\log\log n)\log\log m}$ words of space and outputs $q(x)$ for all points $x\in\mathbb{R}^d$ in the stream, such that with probability at least $0.98$, 
\[\varphi(x)\le q(x)\le\O{d^{1+0.5z}\log^{z-1}\Delta}\cdot\varphi(x),\]
where $\varphi(x)$ is the sensitivity of $x$ with respect to the point set $X'$. 
\end{lemma}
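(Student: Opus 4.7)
The plan is to reduce this statement to \lemref{lem:upper:lower:prob:bounds:z} by leveraging \lemref{lem:x:to:xprime}, which shows that $X$ and $X'$ induce essentially the same clustering cost on every candidate set $C$ of $k$ centers. Concretely, I would reuse the algorithm from \lemref{lem:upper:lower:prob:bounds:z} essentially verbatim: maintain a Wasserstein-$z$ sketch during the first pass, use it to produce an estimate $\widetilde{Z}$ of $\OPT(X)$, and for each query $x$ output
\[q(x)\;=\;\max_{C\in\calM}\frac{2^z\cdot 4\gamma\,\Cost(x,C)}{\Cost(S,C)+\widetilde{\Cost(X,S)}},\]
where $\calM$ is the net from \lemref{lem:kmedian:net}. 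The only thing to check is that this $q(x)$ also sandwiches the sensitivity $\varphi_{X'}(x)$ of $x$ with respect to $X'$, rather than with respect to $X$.

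Condition on the two high-probability events: the event of \lemref{lem:upper:lower:prob:bounds:z}, which gives $\varphi_X(x)\le q(x)\le\O{d^{1+0.5z}\log^{z-1}\Delta}\cdot\varphi_X(x)$ simultaneously for all $x$, and the event of \lemref{lem:x:to:xprime}, which gives $(1-\eps)\Cost(C,X)\le\Cost(C,X')\le(1+\eps)\Cost(C,X)$ simultaneously for every $C$ of size $k$ (the lower bound being the symmetric half of \lemref{lem:x:to:xprime}'s proof). For any candidate $C$,
\[\frac{\Cost(x,C)}{\Cost(X',C)}\;\in\;\left[(1+\eps)^{-1},(1-\eps)^{-1}\right]\cdot\frac{\Cost(x,C)}{\Cost(X,C)},\]
so taking the max over $C$ of size $k$ yields $\varphi_{X'}(x)=(1\pm\O{\eps})\cdot\varphi_X(x)$. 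Combining the two sandwich bounds yields
\[\varphi_{X'}(x)\;\le\;(1+\O{\eps})\,q(x)\;\le\;\O{d^{1+0.5z}\log^{z-1}\Delta}\cdot\varphi_{X'}(x).\]
After rescaling $\eps$ by a constant and absorbing the $(1+\O{\eps})$ factor into the hidden constant, we recover exactly the claimed guarantee. A union bound over the two events yields overall success probability at least $0.98$ (after a further constant-factor amplification of each sketch's failure probability).

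For the space complexity, the algorithm is essentially the algorithm of \lemref{lem:upper:lower:prob:bounds:z}, which uses $\tO{kd\log\log\Delta}$ words; the passage to $X'$ introduces no additional state, since the rounding map $f$ is only applied at sample time in \algref{alg:two:pass:dynamic:kz}, not by the sensitivity-estimation subroutine itself. The one step that requires care—and is the main obstacle in the argument—is verifying that \lemref{lem:x:to:xprime} in fact holds simultaneously for all $C$ in the relevant net $\calM$: the bicriteria center set $C'$ used in the rounding is constructed from the first-pass sketch, and the quality of the rounding is driven by $\Cost(x,C')$. As long as $C'$ provides an $\O{2^{z+2}d^{1.5z}\log^{2z}\Delta}$-approximation (guaranteed by \lemref{lem:sens:lower:z} together with \lemref{lem:bicrit:to:approx}), the same $(1\pm\eps)$ guarantee transfers to every $C$ in the net by the net's own $(1+\eps)$-approximation property, which closes the argument.
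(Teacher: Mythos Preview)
Your proposal is correct and follows essentially the same approach as the paper: use \lemref{lem:x:to:xprime} to conclude that $\Cost(C,X)$ and $\Cost(C,X')$ are within a constant factor for every $C$, hence $\varphi_X(x)$ and $\varphi_{X'}(x)$ are within a constant factor, and then invoke \lemref{lem:upper:lower:prob:bounds:z}. The paper's proof is terser (it just records the factor-$2$ relation and defers to \lemref{lem:upper:lower:prob:bounds:z}), while you spell out the union bound and the constant-rescaling of $q(x)$ needed to absorb the $(1+\O{\eps})$ slack in the lower bound; both arguments are the same in substance.
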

\begin{proof}
With probability $0.98$, we have 
\[\Cost(C,X')\le2\Cost(C,X)\le4\Cost(C,X').\]
Thus the sensitivity of $x$ with respect to the point set $X$ is a $2$-approximation to the sensitivity of $x$ with respect to the point set $X'$. 
The desired claim then follows from \lemref{lem:upper:lower:prob:bounds:z}. 
\end{proof}
We recall the following bounds of the sum of the sensitivities for $(k,z)$-clustering.
\begin{theorem}
\thmlab{thm:total:sens}
\cite{VaradarajanX12}
Let $X=x_1,\ldots,x_n\subset[\Delta]^d$ and for each point $x_t$ with $t\in[n]$, let $\sigma_t$ denote its sensitivity for $(k,z)$-clustering for any $z\ge 1$. 
Then
\[\sum_{t=1}^n \sigma_t=\O{2^{2z}k}.\]
\end{theorem}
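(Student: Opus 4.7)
The plan is to adapt the argument used for Theorem \ref{thm:total:online:sens} (the online version) to the offline setting, where things simplify considerably: we no longer need to partition the stream into doubling blocks, and the harmonic sum $\sum_t 1/t = O(\log n)$ that appeared in the online analysis is replaced by a clean sum $\sum_{x \in S_j} 1/|S_j| = 1$ over each optimal cluster $S_j$. This removes the two logarithmic factors.

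First I would fix an optimal $(k,z)$-clustering with centers $K^*$, let $\pi : X \to K^*$ be the assignment to nearest optimal center, and define $S_j := \pi^{-1}(K^*_j)$ for $j \in [k]$. For any candidate center set $C$ with $|C| \le k$, the generalized triangle inequality (\factref{fact:triangle}) gives
\[
\frac{\dist^z(x_t,C)}{\sum_{i=1}^n \dist^z(x_i,C)} \le \frac{2^{z-1}\dist^z(x_t,\pi(x_t))}{\sum_{i=1}^n \dist^z(x_i,C)} + \frac{2^{z-1}\dist^z(\pi(x_t),C)}{\sum_{i=1}^n \dist^z(x_i,C)}.
\]
Since this bound holds for every $C$, taking the max (i.e., $\sigma_t$) and then summing over $t$ can be done termwise.

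For the first term, I would lower bound the denominator by optimality: $\sum_i \dist^z(x_i,C) \ge \sum_i \dist^z(x_i,\pi(x_i)) = \OPT$, which telescopes the sum to $\sum_t \frac{2^{z-1}\dist^z(x_t,\pi(x_t))}{\OPT} = 2^{z-1}$. For the second term, I would use a charging argument exactly as in the online proof: for $x_t$ with $\pi(x_t) = K^*_j$,
\[
\dist^z(\pi(x_t),C)\cdot|S_j| = \sum_{p \in S_j}\dist^z(\pi(p),C) \le 2^{z-1}\sum_{p \in S_j}\bigl(\dist^z(\pi(p),p) + \dist^z(p,C)\bigr),
\]
which by optimality of $K^*$ is at most $2^{z-1}(\OPT + \Cost(X,C)) \le 2^z \Cost(X,C)$. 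Thus $\frac{\dist^z(\pi(x_t),C)}{\Cost(X,C)} \le \frac{2^z}{|S_{\pi(x_t)}|}$, and summing gives
\[
\sum_{t=1}^n \frac{2^{z-1}\dist^z(\pi(x_t),C)}{\Cost(X,C)} \le 2^{2z-1}\sum_{j=1}^k \sum_{t : \pi(x_t) = K^*_j}\frac{1}{|S_j|} = 2^{2z-1}k.
\]
Combining the two terms yields $\sum_t \sigma_t \le 2^{z-1} + 2^{2z-1}k = O(2^{2z}k)$.

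I do not expect any serious obstacle: the entire argument reduces to a triangle inequality plus optimality, and the main thing to be careful about is that although the $C$ achieving the maximum in $\sigma_t$ depends on $t$, each of the two termwise bounds above is derived against an arbitrary $C$ and so holds uniformly. There is no need here for the $\log(nd\Delta)$ blocks that arose in \thmref{thm:total:online:sens}, which is precisely why the offline total sensitivity is $O(2^{2z}k)$ rather than $O(2^{2z}k\log^2(nd\Delta))$.
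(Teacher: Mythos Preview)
The paper does not prove this theorem; it is simply quoted from \cite{VaradarajanX12} as a known result and used as a black box. Your argument is correct and is essentially the standard proof of this fact: it is precisely the offline simplification of the paper's own proof of \thmref{thm:total:online:sens} that you describe, with the doubling blocks and the harmonic sum collapsing because one can work against the fixed optimal clustering $K^*$ throughout. The only point worth stating a bit more explicitly is the one you already flag at the end: since both the bound $\frac{2^{z-1}\dist^z(x_t,\pi(x_t))}{\OPT}$ on the first term and the bound $\frac{2^{2z-1}}{|S_{\pi(x_t)}|}$ on the second term are independent of $C$, one obtains directly
\[
\sigma_t \;\le\; \frac{2^{z-1}\dist^z(x_t,\pi(x_t))}{\OPT} \;+\; \frac{2^{2z-1}}{|S_{\pi(x_t)}|},
\]
and the sum over $t$ is then immediate. (As usual, the degenerate case $\OPT=0$ should be treated separately, but this is routine.)
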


We now finally justify the guarantees for our two-pass dynamic streaming algorithm for $(k,z)$-clustering. 
\begin{theorem}
There exists a two-pass dynamic streaming algorithm that uses $\tO{\frac{k^3d^{2+0.5z}}{\eps^2}\cdot\log^{z-1}\Delta}$ bits of space and with probability at least $\frac{2}{3}$, outputs a $(1+\eps)$-coreset for $(k,z)$-clustering. 
\end{theorem}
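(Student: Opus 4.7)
The plan is to analyze Algorithm~\ref{alg:two:pass:dynamic:kz}, which adapts the two-pass $k$-median framework of \secref{sec:twopass:kmed} to general $z\ge 1$ by replacing the EMD sketch with the Wasserstein-$z$ embedding of \secref{sec:wassz:embed} and compressing sampled points via offset rounding. Correctness and space each decompose into three pieces: the first-pass sensitivity estimation, the second-pass sensitivity sampling realized by a hash plus sparse recovery, and the compact representation of each recovered point.

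First, I would establish correctness. After the first pass, the Wasserstein-$z$ sketch $W$ produces, via \lemref{lem:sens:lower:z} and \lemref{lem:bicrit:to:approx}, a bicriteria set $C'$ of $\O{k}$ centers whose cost is within an $\O{2^{z+2}d^{1.5z}\log^{2z}\Delta}$ factor of $\OPT$, and by \lemref{lem:upper:lower:prob:bounds:z} the implicit estimates $q(x)$ satisfy $\varphi_{X'}(x)\le q(x)\le \O{d^{1+0.5z}\log^{z-1}\Delta}\cdot\varphi_{X'}(x)$ for all $x$, with probability at least $0.98$ (invoking \lemref{lem:sens:approx:kz} and \lemref{lem:x:to:xprime} so sensitivities may be taken with respect to the rounded set $X'$). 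The second pass simulates independent sampling of each point with probability $p(x)$ by drawing $h$ and feeding sampled updates into the sparse recovery sketch $A$. Theorem~\ref{thm:sens:sample} applied to $X'$ then shows that the recovered weighted points form a $(1+\eps)$-coreset of $X'$ with probability $\ge 0.99$, and \lemref{lem:x:to:xprime} transfers this guarantee to $X$ with $\eps$ rescaled by a constant. A final union bound over the three high-probability events yields overall success probability at least $\tfrac{2}{3}$.

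Next I would bound the sparsity so that $A$ recovers everything. Because $q(x)$ overshoots $\varphi_{X'}(x)$ by at most $\O{d^{1+0.5z}\log^{z-1}\Delta}$, the expected number of sampled points is
\[\sum_{x\in X}p(x)\le\O{\tfrac{k^2 d}{\eps^2}\log k}\cdot\O{d^{1+0.5z}\log^{z-1}\Delta}\cdot\sum_{x\in X}\varphi_{X'}(x),\]
and \thmref{thm:total:sens} gives $\sum_x\varphi_{X'}(x)=\O{2^{2z}k}$. Markov's inequality then bounds the sparsity by $100s$ with probability $0.99$, so by \lemref{lem:sparse:recovery} every nonzero coordinate is recovered. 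To avoid storing $h$ in full, I would derandomize it using a $t$-wise independent family with $t=\tO{\frac{k^2 d^2}{\eps^2}}\cdot\log^{z-1}\Delta$ via \thmref{thm:hash:derandomize}, which preserves the Bernstein-based concentration underlying \thmref{thm:sens:sample} and costs only $\O{t\log n}$ bits.

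Finally I would account for the bits of space. Each recovered item is encoded as the identity of $c'(x_t)\in C'$ (using $\O{\log k}$ bits since $|C'|=\O{k}$) together with the rounded offset $y_t$, which, being coordinate-wise a power of $1+\poly(\eps,1/d,1/\log\Delta)$ bounded by $\poly(\Delta)$, takes $\O{d\log\frac{d\log\Delta}{\eps}}$ bits. Multiplying the per-item cost by the $\tO{\frac{k^3 d^{1+0.5z}}{\eps^2}\log^{z-1}\Delta}$ sparsity bound, and adding the $\tO{kd^2\log\log\Delta}$ bits for $W$, gives the stated $\tO{\frac{k^3 d^{2+0.5z}}{\eps^2}\log^{z-1}\Delta}$ total. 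The main obstacle I anticipate is verifying that the offset rounding, whose granularity is calibrated against the $\O{d^{1.5z}\log^{2z}\Delta}$ bicriteria factor of $C'$, indeed yields the uniform $(1+\eps)$ preservation of $\Cost(C,X')$ over \emph{all} $k$-center queries $C$ rather than only the nearby ones; this is precisely the content of \lemref{lem:x:to:xprime} combined with the generalized triangle inequality (\factref{fact:gen:tri}) applied to $z$-th powers, and care is required so that the $(1+2z/\eps)^z$ blowup absorbs into the chosen rounding parameter.
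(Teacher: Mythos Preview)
Your proposal is correct and follows essentially the same route as the paper: analyze \algref{alg:two:pass:dynamic:kz}, condition on the event that $q(x)$ sandwiches the sensitivity with distortion $\O{d^{1+0.5z}\log^{z-1}\Delta}$ via \lemref{lem:sens:approx:kz}, invoke \thmref{thm:sens:sample} to get a coreset of $X'$, transfer to $X$ through \lemref{lem:x:to:xprime}, bound the expected number of samples using \thmref{thm:total:sens} and Markov, and finish with the compact offset encoding and sparse recovery.

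Two small remarks on bookkeeping. First, your own expected-sample computation gives $\O{\frac{k^2d}{\eps^2}\log k}\cdot\O{d^{1+0.5z}\log^{z-1}\Delta}\cdot\O{2^{2z}k}=\tO{\frac{k^3d^{2+0.5z}}{\eps^2}\log^{z-1}\Delta}$, so the ``$d^{1+0.5z}$'' you later quote for the sparsity is a slip; the paper records it as $d^{2+0.5z}$. Conversely, your per-item cost of $\O{d\log\frac{d\log\Delta}{\eps}}$ bits correctly accounts for storing $d$ rounded coordinates, whereas the paper writes the per-coordinate cost $\O{\log k+z\log\frac{1}{\eps}+z\log d+z\log\log\Delta}$ without the explicit $d$ multiplier. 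Either way the product lands on the stated $\tO{\frac{k^3d^{2+0.5z}}{\eps^2}\log^{z-1}\Delta}$, but you should not combine your $d^{2+0.5z}$ sparsity with your $\O{d\cdot}$ per-item cost or you overshoot by a factor of $d$. Second, you are more explicit than the paper in this section about derandomizing $h$ via \thmref{thm:hash:derandomize}; this is indeed needed (the paper only spells it out in the $k$-median section) and your treatment is fine.
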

\begin{proof}
Consider \algref{alg:two:pass:dynamic:kz}. 
For each $x_t\in X$, let $f(x_t)=x_t+y_t$, where $y_t$ is defined in \algref{alg:two:pass:dynamic:kz}. 
Let $X'=\{f(x)\}_{x\in X}$. 
For $x_t\in X$, let $\sigma_t$ be the sensitivity of $f(x_t)$ with respect to $X'$ and let $q_t$ be the estimate of $\sigma_t$. 
Let $\calE$ be the event that 
\[\sigma_t\le q_t\le\O{d^{1+0.5z}\log^{z-1}\Delta}\cdot\sigma_t,\]
simultaneously for all $t\in[n]$. 
By \lemref{lem:sens:approx:kz}, we have that $\PPr{\calE}\ge 0.97$. 
Conditioned on $\calE$, $q_t\ge\sigma_t$ for all $t\in[n]$ and thus by \thmref{thm:sens:sample}, the output of \algref{alg:two:pass:dynamic:kz} is a $(1+\O{\eps})$-coreset to $X'$ for $(k,z)$-clustering.
By \lemref{lem:x:to:xprime}, we have that a $(1+\O{\eps})$-coreset for $X'$ is also a $(1+\eps)$-coreset to $X$ for $(k,z)$-clustering. 
Thus the output of \algref{alg:two:pass:dynamic:kz} is a $(1+\eps)$-coreset to $X$ for $(k,z)$-clustering, as desired. 

To analyze the space complexity of \algref{alg:two:pass:dynamic:kz}, we define sequences of indicator random variables $Y_1,\ldots,Y_n$ so that $Y_i=1$ if $x_i$ is sampled by \algref{alg:two:pass:dynamic:kz}. 
Conditioned on $\calE$, we have that $q_t\le\O{d^{1+0.5z}\log^{z-1}\Delta}\cdot\sigma_t$ for all $t\in[n]$. 
Each $x_t$ is sampled with probability $p_t=\min\left(1,\frac{k^2d}{\eps^2}\log k\cdot q_t\right)$, so that
\[\Ex{Y_t}\le\O{\frac{k^2d^{2+0.5z}}{\eps^2}\log k\log^{z-1}\Delta}\cdot\sigma_t.\]
By \thmref{thm:total:sens}, we further have 
\[\sum_{t=1}^n\sigma_t=\O{2^{2z}k}.\]
Thus by Markov's inequality, we have that with probability at least $0.999$,
\[\sum_{t=1}^n Y_t\le\O{\frac{2^{2z}k^3d^{2+0.5z}}{\eps^2}\log k\log^{z-1}\Delta}.\]

Let $\calE'$ be the event that $\O{\frac{2^{2z}k^3d^{2+0.5z}}{\eps^2}\log k\log^{z-1}\Delta}$ points are sampled by \algref{alg:two:pass:dynamic:kz}. 
Conditioned on $\calE'$, algorithm $A$ that is a $100s$-sparse recovery algorithm will recover all of the sampled points, for $s=\O{\frac{2^{2z}k^3d^{2+0.5z}}{\eps^2}\log k\log^{z-1}\Delta}$. 

Crucially, we recall that each of the sampled points in $X'$ are stored as rounded offsets $y_t$ to centers $c'(x_t)$, where the coordinates of $y_t$ have been rounded to a power of $\left(1+\frac{\eps^{2z}}{8d^{1.5z}\log^{4z}\Delta}\right)$. 
Thus, each of the sampled points can be represented in $\O{\log k+z\log\frac{1}{\eps}+z\log d+z\log\log\Delta}$ bits of space. 
By \lemref{lem:sparse:recovery}, the total space used by the sparse recovery algorithm is 
\begin{align*}
\mathcal{O}\bigg(\frac{2^{2z}k^3d^{2+0.5z}}{\eps^2}&\log k\log^{z-1}\Delta\bigg)\cdot\O{\log k+z\log\frac{1}{\eps}+z\log d+z\log\log\Delta}\\
&=\O{\frac{2^{2z}k^3d^{2+0.5z}}{\eps^2}\log k\log^{z-1}\Delta\cdot\left(\log k+z\log\frac{1}{\eps}+z\log d+z\log\log\Delta\right)}
\end{align*}
bits of space.
\end{proof}

\section{Lower Bound and Separations for Dynamic Streams}
\seclab{sec:lb}
In this section, we present a lower bound for $(k,z)$-clustering on insertion-deletion streams. 
We first recall a number of preliminaries from information theory and communication complexity. 

\begin{definition}[Entropy, conditional entropy, mutual information]
Given a pair of random variables $X$ and $Y$ with joint distribution $\rho(x,y)$ and marginal distributions $\rho(x)$ and $\rho(y)$, we define the \emph{entropy} of $X$ by
\[H(X):=-\sum_x \rho(x)\log\rho(x),\] 
the \emph{conditional entropy} by
\[H(X|Y):=-\sum_{x,y} \rho(x,y)\log\frac{\rho(y)}{\rho(x,y)},\]
and the \emph{mutual information} by
\[I(X;Y):=H(X)-H(X|Y)=\sum_{x,y}\rho(x,y)\log\frac{\rho(x,y)}{\rho(x)\rho(y)}.\] 
\end{definition}

\begin{definition}[Information cost]
Let $\Pi$ be a randomized protocol that produces a (possibly random) transcript $\Pi(X_1,\ldots,X_n)$ on inputs $X_1,\ldots,X_n$ drawn from a distribution $\mu$. 
The \emph{information cost} of $\Pi$ with respect to $\mu$ is $I(X_1,\ldots,X_n;\Pi(X_1,\ldots,X_n))$.  
\end{definition}

\begin{definition}[Communication cost and communication complexity]
In a one-way communication protocol $\Pi$, the \emph{communication cost} of $\Pi$ is the maximum bit length of the transcript taken over all inputs and all coin tosses of the protocol. 
The communication complexity of a problem with failure probability $\gamma$ is the minimum communication cost of a protocol that fails with probability at most $\gamma$ over the joint distribution of the inputs and any sources of randomness, e.g., through public or private coins. 
\end{definition}

\begin{fact}
\factlab{fact:ic:cc}
For any distribution $\mu$ and failure probability $\delta\in(0,1)$, the communication cost of any randomized protocol for $\mu$ on a problem $f$ that fails with probability $\delta$ is at least the information cost of $f$ under distribution $\mu$ and failure probability $\delta$. 
\end{fact}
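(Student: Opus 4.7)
The plan is to derive the claimed inequality directly from the definition of mutual information together with two elementary properties of Shannon entropy. First I would write the information cost as
\[I(X_1,\ldots,X_n;\Pi(X_1,\ldots,X_n)) = H(\Pi(X_1,\ldots,X_n)) - H(\Pi(X_1,\ldots,X_n) \mid X_1,\ldots,X_n),\]
using the standard identity $I(X;Y) = H(Y) - H(Y \mid X)$ that follows immediately from the definitions given just above the fact.

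Next I would use non-negativity of conditional entropy, namely $H(\Pi(X_1,\ldots,X_n) \mid X_1,\ldots,X_n) \geq 0$, to conclude
\[I(X_1,\ldots,X_n;\Pi(X_1,\ldots,X_n)) \leq H(\Pi(X_1,\ldots,X_n)).\]
It then remains to bound $H(\Pi)$ by the communication cost $C$ of the protocol. The transcript $\Pi(X_1,\ldots,X_n)$ is by definition a binary string whose length is at most $C$ on every input and every outcome of the internal randomness. Hence $\Pi$ is supported on a set of size at most $2^C$, and the classical bound that entropy is at most the logarithm of the support size gives $H(\Pi) \leq C$.

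Combining these two inequalities yields $I(X_1,\ldots,X_n;\Pi(X_1,\ldots,X_n)) \leq C$, which is precisely the claim that the communication cost is at least the information cost. The only subtlety — and the one place where a careful writer has to be cautious rather than grind through calculation — is the handling of variable-length transcripts: one must either fix a prefix-free encoding so that the transcript is well-defined as a random variable on $\{0,1\}^{\leq C}$, or pad all transcripts to exactly $C$ bits so that $\Pi$ takes values in $\{0,1\}^C$. With either convention the support-size bound $H(\Pi) \leq C$ goes through unchanged, so this is the only real bookkeeping step; everything else is an appeal to standard identities already recorded in the preliminaries.
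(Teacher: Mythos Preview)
The paper does not prove this statement at all; it is recorded as a \texttt{Fact} without proof, as a standard preliminary from information theory. Your argument is the canonical one-line proof: $I(X;\Pi) \le H(\Pi) \le \log|\supp(\Pi)| \le C$, and you correctly flag the only technicality (variable-length transcripts and prefix-freeness or padding). There is nothing to compare against in the paper, and your proposal is correct.
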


We now define the following distributional augmented equality problem. 

\begin{definition}[Distributional augmented equality problem]
In the distributional augmented equality problem $\DistAugEq_{N,m}$, with probability $\frac{1}{2}$, the input is generated from a YES distribution and with probability $\frac{1}{2}$, the input is generated from a NO distribution. 
In both cases, Alice receives a vector $A\in[N]^m$ and Bob receives a vector $B\in[N]^m$, an index $i\in[m]$ drawn uniformly at random, as well the suffix $A_{i+1},\ldots,A_m$, and their goal is to determine whether $A_i=B_i$ using the minimal amount of communication from Alice to Bob. 
The problem is a one-way communication problem, so that Bob cannot send any information to Alice.
\begin{itemize}
\item
In the YES instance, $A_i=B_i$ is drawn uniformly at random from $[N]$ and all other coordinates of $A$ and $B$ are drawn independently and uniform at random.
\item
In the NO instance, all coordinates of $A$ and $B$ are drawn independently and uniform at random, conditioned on $A_i\neq B_i$. 
\end{itemize}
\end{definition}

\begin{fact}
\factlab{fact:cc:aug:eq}
\cite{JayramW13,MolinaroWY13}
Any protocol for the distributional augmented equality problem $\DistAugEq_{N,m}$ that succeeds with probability at least $1-\frac{1}{\poly(N)}$ requires $\Omega(m\log N)$ information cost. 
\end{fact}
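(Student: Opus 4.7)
The plan is to prove the lower bound via a direct sum argument on the coordinates of Alice's input. Let $\Pi$ be any one-way protocol for $\DistAugEq_{N,m}$ that succeeds with probability at least $1-\frac{1}{\poly(N)}$, let $M=\Pi(A)$ denote Alice's message, and let $A=(A_1,\dots,A_m)$ be drawn from the marginal distribution on Alice's side (which is simply uniform over $[N]^m$ in both YES and NO cases, up to the negligible conditioning on $A_i\ne B_i$). By the chain rule for mutual information,
\[
I(A;M)\;=\;\sum_{i=1}^{m} I\bigl(A_i;\,M\,\big|\,A_{i+1},\dots,A_m\bigr).
\]
So it suffices to show that for each fixed coordinate $i$, conditioning on the suffix $A_{>i}:=(A_{i+1},\dots,A_m)$, Alice's message carries $\Omega(\log N)$ bits of information about $A_i$.

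First I would fix $i$ and condition on the suffix $A_{>i}$, which Bob receives in full. Under this conditioning the coordinates $A_1,\dots,A_{i-1}$ are still uniform and independent, and $A_i$ is uniform over $[N]$. Bob's additional input is $B$ and the index $i$, where $B_i$ is uniform and (in the NO case) independent of $A_i$ conditioned on $B_i\ne A_i$. The single-coordinate subproblem Bob must solve is exactly equality on $[N]$: decide whether $A_i=B_i$, succeeding with probability at least $1-\frac{1}{\poly(N)}$. Since Bob draws $B_i$ on his own, any correct protocol must allow Bob to essentially recover $A_i$ from $(M,A_{>i},B_i)$: if he could not, then for a uniformly random $B_i\in[N]$ he would be unable to distinguish the YES case ($B_i=A_i$, probability $\frac{1}{N}$) from the NO case with error $o(1/N)$. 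Quantitatively, by Fano's inequality, solving equality on $[N]$ with error $\le\frac{1}{\poly(N)}$ forces $H(A_i\mid M,A_{>i})=o(1)$, so
\[
I\bigl(A_i;M\,\big|\,A_{>i}\bigr)\;=\;H(A_i\mid A_{>i})-H(A_i\mid M,A_{>i})\;\ge\;\log N - o(1).
\]
Summing over $i\in[m]$ and invoking \factref{fact:ic:cc} to turn the information cost bound into a communication cost bound yields the claimed $\Omega(m\log N)$ lower bound.

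The main obstacle, and the reason the high-success-probability regime is crucial, is the reduction from the equality subproblem on a single coordinate to a tight entropy lower bound on $A_i$. A constant success probability would only give an $\Omega(1)$ bit per coordinate via Fano, losing the $\log N$ factor; we genuinely need the $1-\frac{1}{\poly(N)}$ success guarantee so that Bob's posterior distribution on $A_i$ collapses almost entirely onto a single element of $[N]$, yielding $H(A_i\mid M,A_{>i})=o(1)$ and hence an information gain of essentially the full $\log N$ bits per coordinate. A secondary subtlety is that in the NO distribution $A$ is not perfectly uniform (it is conditioned on $A_i\ne B_i$), but since $B_i$ is drawn independently of $A$ by Bob, this conditioning introduces only a $\frac{1}{N}$ distortion in the prior of $A_i$, which is absorbed into the $o(1)$ term. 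Handling the mixture of YES and NO (each with probability $\frac{1}{2}$) is then routine, as both marginals on $A$ are $O(1/N)$-close to uniform.
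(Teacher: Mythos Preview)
The paper does not prove this statement; it is recorded as a fact with citations to \cite{JayramW13,MolinaroWY13}, so there is no in-paper proof to compare against. Your direct-sum argument via the chain rule $I(A;M)=\sum_i I(A_i;M\mid A_{>i})$ is exactly the standard approach from those references, and the overall structure is correct. Two technical points are worth tightening. First, in $\DistAugEq_{N,m}$ the index $i$ is itself uniformly random, so the global $1-1/\poly(N)$ success guarantee only gives, via Markov, that for at least $m/2$ of the indices the conditional success probability is $1-2/\poly(N)$; restricting the chain-rule sum to those indices still yields $\Omega(m\log N)$. Second, Fano's inequality does not apply directly to the one-bit equality predicate: the missing link is that, after fixing public randomness to make the protocol deterministic, Bob's accepting set $T(M,A_{>i})=\{b:\text{Bob outputs ``equal'' on }b\}$ has size at most $1+\delta(N-1)$ by the NO-case error and contains $A_i$ with probability at least $1-\delta$ by the YES-case error; when $\delta<1/(N-1)$ this yields an explicit estimator of $A_i$ from $(M,A_{>i})$ with error $\le\delta$, and \emph{then} Fano gives $H(A_i\mid M,A_{>i})=o(1)$. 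Finally, the fact as stated is about information cost, so your closing appeal to \factref{fact:ic:cc} is unnecessary.
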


We now show our lower bound for estimating the cost of the optimal $(k,z)$-clustering, e.g., the setting of \cite{CharikarW22}, on dynamic streams. 
Our lower bound works by creating $\poly(n)$ instances of the distributional augmented equality game. 
For each instance, Alice and Bob can view their inputs $A$ and $B$ as strings of length $m$ and alphabet size $N$. 
They can then plant the binary representation of each character in the $\{0,1\}^d$ space, with exponentially increasing weight. 
That is, they plant $\xi^i$ copies of the $i$-th character for some parameter $\xi>1$, where $i\in[m]$. 
For a sufficiently value of $\xi$, the $i$-th copy will dominate all the previous copies. 
Bob can then remove all the points corresponding to characters $i+1,\ldots,m$ and look at the remaining cluster cost to distinguish whether the remaining cost is high, i.e., the $i$-th copy has nonzero contribution, or the remaining cost is low, i.e., the $i$-th copy has no contribution and the cost is from characters $1,\ldots,i-1$, thereby allowing the players to solve the distributional augmented equality problem. 

\thmdynamiclb*
\begin{proof}
Consider an instance of distributional augmented equality where Alice and Bob have input vectors from $[N]^m$, where $N=n^{1/10}$ so that $\log N=\O{\log n}$ and $m=\O{\log n}$. 
Then the vectors $A$ and $B$ can be viewed as $m$ points in $\{0,1\}^{\log N}$. 
In particular, for each $j\in[m]$, we have $A_j,B_j\in[N]$, which can be rewritten in binary with length $\log N$. 

Now Alice and Bob create a stream on $\{0,1\}^d$ with $d=\log N$ as follows. 
For each $j\in[m]$, let $X_j$ be the binary representation of $A_j$ as a $d$-dimensional point. 
Then Alice inserts $(100^z\log^2 N)^i$ copies of the point $X_j$ into the stream, runs a $(k,z)$-clustering algorithm on the stream and passes the state of the algorithm to Bob. 
For each $j\in[m]$, let $Y_j$ be the binary representation of $B_j$ as a $d$-dimensional point. 
Bob takes the state of the algorithm from Alice, inserts $(100^z\log^2 N)^j$ copies of the binary representation of $X_j$ into the stream, and continues running the algorithm on the second portion of the stream.  

To determine whether $A_i=B_i$, Bob then deletes $(100^z\log^2 N)^j$ copies of the binary representation of $A_j$ for $j\in\{i+1,\ldots,m\}$, which are known to Bob. 
Bob then updates the algorithm and queries it for a $(k,z)$-clustering cost with $k=1$. 

Observe that if $A_i\neq B_i$, then at least some coordinate differs in the binary representations of $A_i$ and $B_i$, and so the optimal clustering cost is at least $(100^z\log^2 N)^i\cdot\left(\frac{1}{2}\right)^z$. 
On the other hand if $A_i=B_i$, then the optimal clustering cost is at most 
\[\sum_{j=1}^{i-1}(100^z\log^2 N)^j\cdot(\log N)\le 2(100^z\log^2 N)^{i-1}\cdot(\log N),\]
which is at least a factor of $4$ smaller than $(100^z\log^2 N)^i\cdot\left(\frac{1}{2}\right)^z$, so that any $2$-approximation to the cost of the optimal $(k,z)$-clustering will be able to distinguish whether $A_i=B_i$ or $A_i\neq B_i$, thus solving augmented equality. 


For the dynamic stream to have $\O{n}$ updates, we first set 
\[\sum_{j=1}^m (100^z\log^2 N)^j\le\sqrt{n},\]
which suffices for $\log N=\O{\log n}$ and $m=\O{\log\frac{n}{\log n}}$, assuming constant $z\ge 1$. 
Hence, our construction is a valid dynamic stream with at most $\O{\sqrt{n}}$ updates. 
We then repeat this construction for $I:=\O{\sqrt{n}}$ independent instances drawn from the distributional augmented equality problem, so that the stream has $\O{n}$ updates. 

Since each of the $I=\O{\sqrt{n}}$ instances are independently drawn from the distributional augmented equality problem, then by an averaging argument any algorithm that with probability at least $\frac{2}{3}$, outputs a $2$-approximation to the cost of the optimal $(k,z)$-clustering at all times in the dynamic stream must succeed with probability at least $1-\frac{1}{\Omega(\sqrt{n})}$ on a constant fraction of the distributional augmented equality. 
Specifically, for an index $\ell\in[I]$, we call $\ell$ an \emph{informative} index if the algorithm succeeds on the $\ell$-th instance with probability at least $1-\frac{1}{\Omega(\sqrt{n})}$,  conditioned on the previous outputs being correct. 
Note that since there are $I=\O{\sqrt{n}}$ instances, then there must exist an informative index, since with probability at least $\frac{2}{3}$, the clustering algorithm succeeds at all times. 
Formally, let $\calE_i$ be the event that the $i$-th instance is correct and let $\calE$ be the event that the clustering algorithm succeeds at all times. 
Since the probability of the events $\calE_1,\ldots,\calE_m$ jointly occurring is at least $\frac{2}{3}$, then by expanding 
\[\PPr{\calE} = \prod_{i=1}^{\O{\sqrt{n}}}\PPr{\calE_i\,\mid\calE_1, \ldots, \calE_{i-1}},\]
there exists an informative index $i$, i.e., there exists $i$ for which \[\PPr{\calE_i\,\mid\,\calE_1,\ldots,\calE_{i-1}}\ge 1-\frac{1}{\Omega(\sqrt{n})}.\]
By \factref{fact:cc:aug:eq}, the information cost of the algorithm must be at least $\Omega(\log^2 n)$. 
Thus by \factref{fact:ic:cc}, any algorithm that with probability at least $\frac{2}{3}$, simultaneously outputs a $2$-approximation to the optimal $(k,z)$-clustering cost at all times of a dynamic stream of length $n$ for points in $\{0,1\}^d$ with $d=\Omega(\log n)$ must use $\Omega(\log^2 n)$ bits of space.
\end{proof}

Finally, we remark how our algorithms for $(k,z)$-clustering on insertion-only streams can be further optimized if the goal is simply to output an approximation to the cost of the optimal clustering, rather than a set of $k$ near-optimal centers. 
To that end, recall that the Johnson-Lindenstrauss transform has the following guarantee for $(k,z)$-clustering. 
\begin{theorem}[Dimensionality reduction for $(k,z)$-clustering]
\thmlab{thm:jl:mmr}
\cite{MakarychevMR19}
Let $X\subset\mathbb{R}^d$ be a set of $n$ points, $z\ge 1$ and $m=\O{\frac{z^4}{\eps^2}\log\frac{k}{\eps\delta}}$. 
There exists a family of random linear maps $\pi:\mathbb{R}^d\to\mathbb{R}^m$ such that with probability at least $1-\delta$ over the choice of $\pi$, the cost of every $(k,z)$-clustering of $X$ is preserved up to a $(1+\eps)$-factor under projection by $\pi$. 
\end{theorem}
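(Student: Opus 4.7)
The plan is to show that a Gaussian JL map $\pi:\mathbb{R}^d \to \mathbb{R}^m$ with $m=\Theta(z^4\eps^{-2}\log(k/\eps\delta))$ preserves the $(k,z)$-clustering cost of $X$ simultaneously for every $k$-center set $C \subset \mathbb{R}^d$. The strategy is to reduce the statement to concentration of a small family of scalar functionals under $\pi$, and then take a union bound over a discretization of candidate clusterings.

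First, I would fix an arbitrary $C = \{c_1,\ldots,c_k\}$ and write $\Cost(X, C) = \sum_i \dist(x_i, C)^z$. For a standard JL map, each fixed squared distance $\|x_i - c_j\|_2^2$ concentrates around its expectation, and taking the minimum over $j \in [k]$ preserves $\dist(x_i, C)$ up to a $(1\pm\eps)$ factor after a union bound over the $k$ centers. Raising to the $z$-th power pointwise and summing only gives $(1\pm\eps)^z = 1\pm\Theta(z\eps)$ distortion, which is too weak to match the claimed dimension; achieving the bound requires exploiting concentration of the \emph{sum}, not of each term individually.

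To do this I would partition the points into \emph{heavy} contributors, those with $\dist(x_i,C)^z$ exceeding an $\eps/\poly(k)$-fraction of the total cost, and \emph{light} contributors. The heavy set has size $\poly(k/\eps)$, so pointwise JL with a $(1\pm\eps/z)$ factor suffices using $m=O(z^2\eps^{-2}\log(k/\eps\delta))$ dimensions, and raising to the $z$-th power gives a clean $(1\pm\eps)$ factor for each such term. For the light contributors I would apply a Bernstein/Freedman-type concentration inequality to $\sum_{i \in \mathrm{light}} \|\pi(x_i - c_{\sigma(i)})\|_2^z$, where $\sigma$ assigns each point to its nearest center in $C$. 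Because each summand has bounded relative fluctuation and the $z$-th power of a normalized JL-projected norm has subexponential tails once $m=\Omega(z^4\eps^{-2})$, this yields a $(1\pm\eps)$ multiplicative bound on the aggregate contribution of the light points.

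The final step is a union bound over clusterings via a net: by a discretization analogous to \lemref{lem:kmedian:net}, but applied in the post-projection space $\mathbb{R}^m$, only $(k/\eps)^{O(k)}$-many essentially distinct clusterings need be considered, which is absorbed by the $\log(k/\eps\delta)$ factor in $m$. The main obstacle is the interplay between the power $z$ and the concentration of sums of $\chi^2$-like random variables: obtaining the tight $z^4$ dependence, rather than a naive $z^2$ or worse, requires carefully controlling higher-moment tails of $\|\pi v\|_2^z$ via Hanson--Wright-style estimates and choosing the heavy/light threshold so that aggregate concentration on the light side dominates the pointwise error amplification from raising distances to the $z$-th power on the heavy side.
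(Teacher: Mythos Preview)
The paper does not prove this theorem at all: it is simply quoted from \cite{MakarychevMR19} and used as a black box (see the statement immediately preceding \thmref{thm:cluster:cost}). So there is no ``paper's own proof'' to compare against; your proposal is an attempt to reprove a cited external result.

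On the merits of your sketch, the main gap is the final union-bound step. You assert that ``only $(k/\eps)^{O(k)}$-many essentially distinct clusterings need be considered,'' but this is precisely the heart of the matter and is not justified by anything you wrote. The set of candidate centers $C\subset\mathbb{R}^d$ is uncountable, and even restricting to the induced partitions of $X$ gives up to $k^n$ possibilities; a net in the projected space $\mathbb{R}^m$ does not obviously control what happens for arbitrary pre-image center sets in $\mathbb{R}^d$. Your heavy/light split also depends on the particular $C$, so the decomposition itself changes as you range over the net, which complicates any uniform concentration argument. The actual proof in \cite{MakarychevMR19} does not proceed via a net over clusterings: for $z=2$ it exploits that the clustering cost for a fixed partition is determined by projections onto low-dimensional subspaces (one per cluster), and for general $z$ it goes through a Kirszbraun-type extension argument to relate optimal centers before and after projection. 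Neither of these ideas appears in your outline, and without them the claimed $n$- and $d$-independent target dimension is not attainable by your route.
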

We now that our algorithms for $(k,z)$-clustering on insertion-only streams can be further optimized if the goal is simply to output an approximation to the cost of the optimal clustering, rather than a set of $k$ near-optimal centers. 
Combined with \thmref{thm:dynamic:lb}, our results show a separation for the $(k,z)$-clustering problem between insertion-only and insertion-deletion streams. 
\thmclustercost*
\begin{proof}
Recall that the algorithm corresponding to \thmref{thm:main} consists of two layers. 
The first layer is using online sensitivity sampling to sample $\poly\left(k,d,\log(nd\Delta),\frac{1}{\eps}\right)$ points into an implicit stream $\calS'$ and the second layer is running a merge-and-reduce algorithm on $\calS'$. 
Now, we project all points to the dimension $m$ in \thmref{thm:jl:mmr}, setting $\delta=\frac{1}{\poly\left(\frac{1}{\eps},\log(nd\Delta)\right)}$. 
We argue that once all points are projected into dimension $m$, online sensitivity sampling still only samples $\poly\left(k,m,\log(nm\Delta),\frac{1}{\eps}\right)$ points and retains correctness at all times in the stream, which shows that $\calS'$ with the projected points is a faithful representation of $\calS$ with the input points. 
Since the merge-and-reduce procedure is only performed on $\calS'$ and thus oblivious to $\calS$, then the correctness follows. 
We can then use the same argument to show that $\calS'$ has $\poly(m,k,\log n,\log\log\Delta)$ points and hence running merge-and-reduce on $\calS'$ uses space $\frac{km}{\eps^2}\polylog(d,k,\log n,\log\Delta)$. 

Since $m=\O{\frac{z^4}{\eps^2}\log\frac{k}{\eps\delta}}$ and $\delta=\frac{1}{\poly\log(nd\Delta)}$, then it follows by \lemref{lem:space:mr} that our algorithm only uses $\tO{\frac{k}{\varepsilon^2}}\cdot(2^{z\log z})\cdot\min\left(\frac{1}{\varepsilon^z},k\right)\cdot\poly(\log\log(n\Delta))$ words of space to output a $(1+\eps)$-approximation to the cost of the optimal $(k,z)$-clustering of an insertion-only stream at all times. 
\end{proof}
Finally, we show a lower bound for any one-pass dynamic streaming algorithm that uses a weighted sample of the input points to estimate the clustering cost. 

Recall that in the AugmentedIndex one-way communication problem, Alice receives a vector $v\in\left[2^t\right]^m$ and Bob receives an index $j\in[m]$, along with the values $v_{j+1},\ldots,v_m$. 
The goal is for Alice to send Bob a message so that with probability at least $\frac{2}{3}$, Bob can successfully compute $v_j$. 
\begin{theorem}[Theorem 10 in~\cite{JayaramW18}]
\thmlab{thm:aug:ind:large:domain}
Any protocol that succeeds for the AugmentedIndex problem with probability at least $\frac{2}{3}$ requires $\Omega(mt)$ communication.
\end{theorem}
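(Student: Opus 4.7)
The plan is to lower bound the communication cost by lower bounding the information Alice's message carries about her input under a carefully chosen hard distribution. Draw each coordinate $v_i$ of Alice's vector independently and uniformly from $[2^t]$, and draw Bob's index $j$ independently and uniformly from $[m]$. If $M$ denotes Alice's message, viewed as a random variable depending on $v$ and Alice's private coins, then the communication cost is at least $H(M)\ge I(v;M)$, so it suffices to show $I(v;M)=\Omega(mt)$ under this product distribution. Public coins can be handled by the standard averaging-over-randomness argument, so we may as well fix the coins to get a deterministic protocol that still succeeds with probability at least $2/3$ on the joint distribution of $(v,j)$.

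The first ingredient is to apply the chain rule of mutual information in the \emph{reverse} index order that mirrors the suffix Bob receives:
\[I(v_1,\ldots,v_m;M)=\sum_{i=1}^m I\!\left(v_i;M\,\middle|\,v_{i+1},\ldots,v_m\right).\]
The second ingredient is Fano's inequality applied at each index. Let $p_i$ denote the probability that Bob's output is incorrect conditioned on $j=i$. Since $j$ is independent of $v$ and $M$, conditioning on $\{j=i\}$ does not perturb the joint distribution of $(v_i,v_{i+1},\ldots,v_m,M)$, and Bob's guess $\hat{v}_i$ is a function of $(M,v_{i+1},\ldots,v_m)$ together with his coins, so Fano gives
\[H(v_i\mid M,v_{i+1},\ldots,v_m)\le H_2(p_i)+p_i\log(2^t-1).\]
Whenever $p_i\le 2/3$ and $t$ is at least a sufficiently large constant, this yields $I(v_i;M\mid v_{i+1},\ldots,v_m)\ge t-H_2(2/3)-\tfrac{2}{3}t=\Omega(t)$.

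Finally, an averaging step converts the overall $2/3$ success probability into a per-index bound. Since $j$ is uniform on $[m]$, the hypothesis that Bob succeeds with probability at least $2/3$ gives $\mathbb{E}_j[p_j]\le 1/3$, so by Markov's inequality at least $m/2$ of the indices satisfy $p_i\le 2/3$. Retaining only these ``good'' indices in the chain-rule expansion yields
\[I(v;M)\ge\sum_{i:\,p_i\le 2/3}I(v_i;M\mid v_{i+1},\ldots,v_m)\ge\tfrac{m}{2}\cdot\Omega(t)=\Omega(mt),\]
which completes the lower bound. The main obstacle is the bookkeeping around randomness and per-index conditioning: one must verify that the independence of $j$ from $v$ lets Fano be applied cleanly at each index without cross-contamination across indices, and one must carefully choose the reverse chain-rule ordering to match the suffix structure of the problem; every other step is essentially routine. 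For very small $t$ (e.g., $t=1$) the Fano constants need to be tracked more delicately, but the argument still produces the $\Omega(mt)$ bound.
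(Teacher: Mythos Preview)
The paper does not prove this statement; it simply cites it as Theorem 10 of \cite{JayramW13} (actually \cite{JayaramW18} in the numbering here) and uses it as a black box. So there is no ``paper's own proof'' to compare against.

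Your argument is the standard and correct one: fix the randomness by Yao/averaging to get a deterministic protocol good on the product distribution, apply the chain rule for $I(v;M)$ in the reverse coordinate order so that the conditioning matches exactly the suffix $v_{i+1},\ldots,v_m$ that Bob already holds, invoke Fano at each index to lower bound $I(v_i;M\mid v_{>i})$ by $t-H_2(p_i)-p_i\log(2^t-1)$, and use Markov on the average error $\mathbb{E}_j[p_j]\le 1/3$ to exhibit $\Omega(m)$ indices with bounded $p_i$. The only point worth tightening is the small-$t$ caveat you flag at the end: with threshold $2/3$ the Fano residual $t/3-1$ is nonpositive for $t\le 3$, so for those cases you should instead apply Markov with a threshold strictly below $1/2$ (e.g.\ $p_i\le 2/5$), which still leaves a constant fraction of indices and makes the per-index contribution a positive constant, hence $\Omega(t)$. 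With that adjustment the proof is complete.
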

Using the AugmentedIndex communication problem, we show that any one-pass dynamic streaming algorithm that uses a weighted sample of the input points to estimate the clustering cost must use $\Omega(\log^2 n)$ bits of space.
\thmonepassdynlb*
\begin{proof}
Suppose there exists a one-pass dynamic streaming algorithm $A$ for $(k,z)$-clustering that uses $o(\log^2 n)$ space. 
Let $C_1=C_2=\frac{1}{1000}$. 
Let $\Delta=n^C$ for some large constant $C>1$ to be fixed. 
Let $\gamma_1$ be an integer parameter, not necessarily independent of $n$, such that the region with distance between $7^{\gamma_1 d}$ and $9^{\gamma_1 d}$ of the origin contains at least $n^{C_1}$ lattice points of $[\Delta]^d$, which is feasible for sufficiently large constant $C$. 
Similarly, let $C$ be large enough so that $\Delta\ge 9^{\gamma_1}\cdot n^{1000}$. 

Suppose Alice and Bob are given an instance of the AugmentedIndex problem with $t=C_1\log n$ and $m=C_2\log n$. 
Alice takes the input vector $v\in\left[2^t\right]^m$ and using public randomness $R$, maps each coordinate $v_i$ with $i\in[m]$ to a lattice point of $[\Delta]^d$ within the region with distance between $7^{\gamma_1 id}$ and $9^{\gamma_1 id}$ of the origin. 
Alice then adds $k-1$ points at $(n^{999},\ldots,n^{999}),\ldots,((k-1)n^{999},\ldots,(k-1)n^{999})$. 
Alice creates a stream $S$ with these points, runs $A$ on $S$ and passes the state of the algorithm to Bob. 

Suppose Bob receives an index $j$ along with the values of $v_{j+1},\ldots,v_m$. 
Bub uses the same public randomness to remove the lattice point of $[\Delta]^d$ within the region with distance between $7^{\gamma_1 id}$ and $9^{\gamma_1 id}$ of the origin, for $i\in[j+1,m]$. 

Let the remaining points form the set $X$ and observe that the optimal clustering for $X$ consists of the points $(n^{999},\ldots,n^{999}),\ldots,((k-1)n^{999},\ldots,(k-1)n^{999})$ as well as the lattice point of $[\Delta]^d$ within the region with distance between $7^{\gamma_1 jd}$ and $9^{\gamma_1 jd}$ of the origin, corresponding to $v_j$. 
Such a clustering has cost at most $\sum_{i=1}^{j-1} 9^{\gamma_1 zid}\le 2\cdot 9^{\gamma_1 z(j-1)d}$. 
On the other hand, any other clustering with no point in the region with distance between $7^{\gamma_1 jd}$ and $9^{\gamma_1 jd}$ of the origin has cost at least $(7^{\gamma_1 zjd}-9^{\gamma_1 z(j+1)d}>4\cdot 9^{\gamma_1 z(j-1)d}$. 

Thus since $A$ is an algorithm that computes the clustering cost from a weighted sample of the input points, then $A$ must include the unique point of $[\Delta]^d$ within the region with distance between $7^{\gamma_1 jd}$ and $9^{\gamma_1 jd}$ of the origin, corresponding to $v_j$. 
Bob can then use this point to recover $v_j$ and solve the Augmented Index problem. 
Thus by \thmref{thm:aug:ind:large:domain}, $A$ must use $\Omega(\log^2 n)$ bits of space. 
\end{proof}

\section*{Acknowledgements}
We thank Peilin Zhong for helpful discussions on the sum of the online sensitivities. 
We also thank anonymous FOCS reviewers for their suggestions to improve the presentation of the paper. 

\def\shortbib{0}
\bibliographystyle{alpha}
\bibliography{references}

\appendix

\section{Quadtree Distortion for \texorpdfstring{$k$}{k}-means Clustering}
\applab{app:quadtree:means:bad}
In this brief appendix, we give a simple example showing that the expected distortion between the squared distances of a set of $n$ points and the estimated distance by a quadtree is $\Omega(n)$.
Consider $\frac{n}{2}$ pairs of points $(a_i,b_i)$ so that $a_i,b_i\in\mathbb{R}$ for $i\in\left[\frac{n}{2}\right]$. 
For each $i\in\left[\frac{n}{2}\right]$, we place $a_i$ uniformly at random in $[\Delta]$ and then we randomly select one of the adjacent lattice points to be $b_i$, so that $|a_i-b_i|=1$. 
Now for a random grid $\calG_j$ with side length $2^j$, the probability that $a_i$ and $b_i$ are separated by $\calG_j$ is at least $\frac{2}{2^j}$. 
If $a_i$ and $b_i$ are separated by $\calG_j$, then the estimated distance by the quadtree is $2^j$, which is a distortion of $2^j$. 
Since there are $\frac{n}{2}$ pairs of points $(a_i,b_i)$, then with probability $0.99$, the estimated distance of some pair points is distorted by at least $\Omega(n)$. 
\end{document}